\documentclass[11pt, letterpaper]{article}

\usepackage[margin=1in]{geometry}

\usepackage[table]{xcolor}
\definecolor{linkcol}{rgb}{0,0,0.38}
\definecolor{citecol}{rgb}{0.8,0,0}
\definecolor{urlcol}{rgb}{0.1,0.35,0}

\usepackage{graphicx} \usepackage{tikz}

\RequirePackage{silence}
\usepackage{wrapfig}
\usepackage{subcaption}
\usepackage{stackengine}

\usepackage{amsthm}
\usepackage{amsmath,amsfonts,amssymb}
\usepackage{mathtools}
\usepackage{bbm}
\usepackage{aligned-overset}
\usepackage{enumitem}
\usepackage[normalem]{ulem}

\usepackage{thm-restate}

\usepackage{multicol}
\usepackage{multirow}

\usepackage[
backend=biber,
style=alphabetic,
citestyle=alphabetic,
maxalphanames=4,
maxcitenames=99,
mincitenames=98,
maxbibnames=99,
giveninits=true,
url=false,
]{biblatex}

\usepackage[ruled,linesnumbered]{algorithm2e}
\DontPrintSemicolon

\usepackage[pdfpagelabels,plainpages=false,bookmarks=false,hyperfootnotes=false]{hyperref}
\hypersetup{colorlinks, linkcolor=linkcol, citecolor=citecol, urlcolor=urlcol}
\usepackage[nameinlink]{cleveref}

\newtheorem{theorem}{Theorem}[section]
\newtheorem{lemma}[theorem]{Lemma}
\newtheorem{observation}[theorem]{Observation}
\newtheorem{corollary}[theorem]{Corollary}

\newtheorem{invariant}[theorem]{Invariant}
\newtheorem{definition}[theorem]{Definition}
\newtheorem{proposition}[theorem]{Proposition}

\Crefname{algocf}{Algorithm}{Algorithms}
\Crefname{AlgoLine}{Line}{Lines}
\Crefname{observation}{Observation}{Observations}
\Crefname{assumption}{Assumption}{Assumptions}

\newcommand{\lr}[1]{\left(#1\right)}
\newcommand{\lre}[1]{\left[#1\right]}
\newcommand{\lrv}[1]{\left\vert#1\right\vert}
\newcommand{\lrb}[1]{\left\lbrace#1\right\rbrace}

\newcommand{\R}{\mathbb{R}}
\newcommand{\N}{\mathbb{N}}

\newcommand{\cB}{\mathcal{B}}

\newcommand{\cI}{\mathcal{I}}

\newcommand{\cL}{\mathcal{L}}
\newcommand{\cM}{\mathcal{M}}

\newcommand{\cS}{\mathcal{S}}

\newcommand{\cX}{\mathcal{X}}

\renewcommand{\epsilon}{\varepsilon}

\newcommand{\eback}{\reflectbox{$\vec{\reflectbox{$e$}}$}}

\newcommand{\atf}[2]{t^{#1\to#2}}
\newcommand{\dropi}{\textnormal{drop}}
\newcommand{\drop}[2][]{\textnormal{drop}_{#1}\lr{#2}}
\newcommand{\dropGp}[1]{\textnormal{drop}_{G'}\lr{#1}}
\newcommand{\dropG}[1]{\textnormal{drop}_{G}\lr{#1}}
\newcommand{\dist}[2][]{\textnormal{dist}_{#1}\lr{#2}}

\newcommand{\conn}[1]{\textnormal{cost}\lr{#1}}

\newcommand{\opt}{\textnormal{OPT}}
\newcommand{\tmst}{\textnormal{TMST}}
\newcommand{\bcr}{\textnormal{BCR}}
\newcommand{\dual}{\textnormal{dual}}

\newcommand{\hyp}{\textnormal{HYP}}
\newcommand{\mst}{\textnormal{MST}}
\newcommand{\merge}[2]{\textnormal{merge}_{#1}\lr{#2}}

\newcommand{\val}[1]{\textnormal{value}\lr{#1}}

\newcommand{\finalratio}{1.898}

\DeclareMathOperator{\cost}{cost}
\DeclareMathOperator{\depth}{depth}
\DeclareMathOperator{\rt}{root}

\newcommand{\defaulttag}{\stepcounter{equation}\tag{\theequation}}

\newcounter{safeeq}\newcommand{\beforecustomtag}[1]{
    \setcounter{safeeq}{\theequation}\setcounter{equation}{#1}}
\newcommand{\aftercustomtag}{
    \setcounter{equation}{\thesafeeq}}

  \usepackage{tikz}
\usetikzlibrary{calc, arrows.meta, decorations,nfold, decorations.pathreplacing, decorations.pathmorphing, decorations.markings, calligraphy, patterns, patterns.meta, shapes.geometric}

\makeatletter
\let\nfold@orig@tikz@finish\tikz@finish
\def\tikz@finish{\tikz@nfold@do\nfold@orig@tikz@finish}
\let\tikz@nfold@do\relax
\tikzset{offset/.code=\edef\tikz@temp{#1}\ifx\tikz@temp\tikz@nonetext \let\tikz@nfold@do\relax
  \else\def\tikz@nfold@do{\pgfgetpath\tikz@temp\pgfsetpath\pgfutil@empty\pgfoffsetpath\tikz@temp{#1}}\fi}
\makeatother

\tikzstyle{basevertex} =
[minimum size=5mm
, align=center
, text width=5mm
, inner sep=0mm
, outer sep=1mm
, fill
, fill=white
, draw=black
, line width = 0.7mm]
\tikzstyle{steiner} = [basevertex,circle, line width = 0.5mm]
\tikzstyle{smallsteiner} = [steiner, minimum size = 4mm, text width = 4mm]
\tikzstyle{tinysteiner} = [steiner, minimum size = 2mm, text width = 2mm]
\tikzstyle{terminal} = [basevertex,rectangle]
\tikzstyle{smallterminal} = [basevertex,rectangle, minimum size =4mm, text width =4mm]
\tikzstyle{root} = [terminal, fill=black, text=white]

\tikzstyle{activeset} = [line width = 1mm, blue]
\tikzstyle{activerootset} = [activeset, red]

\tikzstyle{directed} = [->,>={Stealth},
offset=-3]
\tikzstyle{edge} = [line width = 0.5mm, black]
\tikzstyle{diredge} = [edge, directed]
\tikzstyle{mstedge} = [line width = 1mm, black!30!green]
\tikzstyle{mstdiredge} = [mstedge,directed]
\tikzstyle{curly} = [decorate,decoration=snake]
\tikzstyle{curlyHalfA} = [decorate,decoration=snake,dashed,dash phase=0pt,dash pattern=on 14.6pt off 14.6pt]
\tikzstyle{curlyHalfB} = [curlyHalfA,dash phase=14.6pt]

\tikzstyle{edgeprogress} = [line width = 0.8mm, blue]
\tikzstyle{thickedgeprogress} = [edgeprogress,line width = 6pt]
\tikzstyle{edgedirprogress} = [line width = 0.8mm, blue, directed]
\tikzstyle{thickedgedirprogress} = [edgedirprogress, line width = 4pt]
\tikzstyle{greendirprogress} = [edgedirprogress, green!70!black]
\tikzstyle{edgerootprogress} = [edgedirprogress, red]
\tikzstyle{edgedirprogresshalfA} = [edgedirprogress,dashed,>={Stealth[black]}]
\tikzstyle{edgedirprogresshalfB} = [edgedirprogresshalfA,dash phase=3pt]

\tikzstyle{edgelabel}   = [fill=white,circle,inner sep=0pt,minimum size=4mm, text = black]
\tikzset{label distance=-3pt}

\tikzstyle{diagramAxis} =[line width = 2pt,->, >={Stealth}]

\title{The Bidirected Cut Relaxation for Steiner Tree:\\
Better Integrality Gap Bounds and the Limits of Moat Growing}

\author{
Paul Paschmanns\thanks{
Department of Computer Science, ETH Zurich, Zurich, Switzerland.
Email: \href{mailto:paul.paschmanns@inf.ethz.ch}{paul.paschmanns@inf.ethz.ch}.
}
\and
Vera Traub\thanks{Department of Computer Science, ETH Zurich, Zurich, Switzerland.
Email: \href{mailto:vtraub@ethz.ch}{vtraub@ethz.ch}.
}
}
\date{}

\begin{document}

\hypersetup{pageanchor=false}

\maketitle

\thispagestyle{empty}
\addtocounter{page}{-1}

\begin{abstract}
The Steiner Tree problem asks for the cheapest way of connecting a given subset of the vertices in an undirected graph.
One of the most prominent linear programming relaxations for Steiner Tree is the Bidirected Cut Relaxation (BCR).
Determining the integrality gap of this relaxation is a long-standing open question. 
For several decades, the best known upper bound was $2$, which is achievable by standard techniques.
Only very recently, Byrka, Grandoni, and Traub [FOCS 2024] showed that the integrality gap of BCR is strictly below $2$.

We prove that the integrality gap of BCR is at most $\finalratio$, improving significantly on the previous bound of $1.9988$.
For the important special case where a terminal minimum spanning tree is an optimal Steiner tree, we show that the integrality gap is at most $\frac{12}{7}$, by providing a tight analysis of the dual-growth procedure by Byrka et al. To obtain the general bound of $\finalratio$ on the integrality gap, we generalize their dual growth procedure to a broad class of moat-growing algorithms.
Moreover, we prove that no such moat-growing algorithm yields dual solutions certifying an integrality gap below $\frac{12}{7}$.

Finally, we observe an interesting connection to the Hypergraphic Relaxation.
\end{abstract}

\clearpage

\tableofcontents

\thispagestyle{empty}
\addtocounter{page}{-1}
\clearpage

\hypersetup{pageanchor=true}

\section{Introduction}\label{sec:intro}

The Steiner Tree problem is one of the most fundamental problems in Network Design and has a multitude of applications (see e.g. \cite{HRW92book, PS02book, BZ20book}). 
We are given an undirected graph $G=(V,E)$ with positive edge costs $c\colon E\to\R_{> 0}$ and a set $R\subseteq V$. The task is to find a tree $T$ in $G$ that connects all vertices from $R$ and has minimal cost $c\lr{T} = \sum_{e\in T} c(e)$. We refer to the vertices in $R$ as \emph{terminals} and to other vertices as \emph{Steiner vertices}.

The Minimum Spanning Tree problem, or MST problem for short, is a special case of Steiner Tree (where $R=V$). 
However, in contrast to the MST problem, approximating the Steiner Tree problem better than a factor of $\frac{96}{95}$ is NP-hard~\cite{CC08}.
Computing a spanning tree on the terminals (in the metric closure of the instance) yields a simple $2$-approximation for Steiner Tree \cite{MSTis2apxSteinerTree}.
The currently best-known approximation algorithm for Steiner Tree is due to Byrka, Grandoni, Rothvoss, and Sanit\'a \cite{ln4directedcomponent} and achieves an approximation factor of $\ln(4) +\epsilon$ (for any fixed $\epsilon > 0$).
See also \cite{ln4LocalSearchJournal} for a different algorithm achieving the same approximation factor.

The natural linear programming relaxation for Steiner Tree, which requires a fractional value of at least one for each cut separating two terminals, has an integrality gap of exactly $2$, even in the special case of minimum spanning trees.
Several stronger relaxations have been proposed, the most prominent ones being the \emph{Bidirected Cut Relaxation} and the \emph{Hypergraphic LP} (also known as the \emph{Directed Component LP}). 
Several other LPs turned out to be equivalent to one of these two relaxations, see e.g.~\cite{ListOfSteinerTreeFormulations, CKP10, BcR4over3onQuasiBipartite}.

The Hypergraphic LP (which we formally define in \Cref{sec:hypergraphic}) has been used in the approximation algorithm from \cite{ln4directedcomponent} and it is known to have an integrality gap of at most $\ln 4$ as shown in \cite{ln4Intgap} (see also \cite{ln4LocalSearchJournal}).
However, one important drawback of this relaxation is that it has an exponential number of variables (and constraints) and it is NP-hard to solve \cite{ln4Intgap}.
One can solve the Hypergraphic LP approximately up to a factor of $1+\epsilon$ in polynomial time for fixed $\epsilon > 0$, but the known algorithms for this
rely on enumerating all subsets of terminals of size up to $2^{1/\epsilon}$  and are thus very slow.

The Bidirected Cut Relaxation in contrast has only a polynomial number of variables. 
It is a natural cut-based relaxation in the bidirected graph. 
More precisely, we replace every undireced edge $\{u,w\}$ of the graph $G$ by two bidirected edges $(u,v)$ and $(v,u)$ of the same cost and we write $\vec{E}$ to denote the resulting directed edge set.
If instead of searching for a Steiner tree in the undirected graph $(V,E)$, we search for a Steiner tree in $(V, \vec{E})$ that is oriented towards some (arbitrarily chosen) root  $r\in R$, we naturally obtain the Bidirected Cut Relaxation:

\beforecustomtag{1234}
\begin{equation}\label{eq:bcr}
\tag{BCR}
\begin{aligned}
\min \sum_{e \in \vec{E}} c(e) \cdot x_e \\
\sum_{e\in \delta^+(S)} x_e \ \geq&\ 1 &\text{ for all } S\subseteq V\setminus\{r\} \text{ with } S\cap R \neq \emptyset\\
x_{e} \ \ge&\ 0 &\text{ for all }e\in \vec{E},
\end{aligned}
\end{equation}
\aftercustomtag
where $\delta^+(S) \coloneqq \{(v,w) \in \vec{E}: v\in S, w\notin S\}$ denotes the outgoing edges of $S$.
The value of \ref{eq:bcr} is independent of the choice of the root $r\in R$ (see e.g.~\cite{ListOfSteinerTreeFormulations}).
Work of Fulkerson and Edmonds on spanning arborescences \cite{F74, EdmondsBranching} implies that in the special case of minimum spanning trees, that is when $R=V$, the value of \ref{eq:bcr} is equal to the cost of a minimum spanning tree in $G$.
For general instances of Steiner Tree, the best known lower bound on the integrality gap of \ref{eq:bcr} is $\frac{6}{5}$ \cite{BcrLowerBound6over5} (improving on earlier bounds of $\frac{36}{31}$ from \cite{ln4directedcomponent} and $\frac{8}{7}$ by Goemans and Skutella; see~\cite{KPT11}).

Despite extensive work on the Bidirected Cut Relaxation (see the discussion in \Cref{sec:related_work}), until very recently it remained open whether the integrality gap of \ref{eq:bcr} is better than $2$ and thus better than the integrality gap of the natural undirected relaxation.
In a recent breakthrough, Byrka, Grandoni, and Traub \cite{BCRlessthan2} proved that the integrality gap of \ref{eq:bcr} is at most $1.9988$.
To obtain this result, they construct a solution to the dual of \ref{eq:bcr} via a moat-growing procedure (as it is used in standard primal-dual algorithms).
However, analyzing their construction in this bidirected setting is much more challenging than the analysis of standard primal-dual algorithms.
While \cite{BCRlessthan2} gave an analysis that shows that the integrality gap of \ref{eq:bcr} is below $2$, they did not provide a tight analysis of their construction and it remained open whether much better upper bounds on the integrality gap can be achieved via such a moat-growing approach.

\subsection{Our Contributions}

We provide a new and much stronger analysis of the construction from \cite{BCRlessthan2}, generalize their construction to a more general class of moat-growing algorithms, and provide a lower bound example for what is achievable via this approach.
Our main result is a significantly improved upper bound on the integrality gap of \ref{eq:bcr}.

\begin{theorem}\label{thm:main_gap}
The integrality gap of the Bidirected Cut Relaxation for the Steiner Tree problem is at most $\finalratio$.
\end{theorem}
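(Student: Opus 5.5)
We would prove \Cref{thm:main_gap} by combining two upper bounds on the cost of an optimal Steiner tree $\opt$, both expressed in terms of a feasible dual solution of \ref{eq:bcr} built by moat growing, and balancing them.

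\textbf{Reductions and primal side.} First we make the standard reductions: pass to the metric closure, assume every Steiner vertex has degree at least three in $\opt$, and recall that a terminal minimum spanning tree $\tmst$ costs at most $2c(\opt)$ minus the savings that full components of $\opt$ obtain over spanning their terminals. We will use the quantity $c(\tmst)$, or more precisely the gain of $\opt$ over $\tmst$, as the parameter that governs the balancing. If $c(\tmst)$ is close to $c(\opt)$, it suffices to certify that $\bcr$ is at least roughly $\frac{7}{12}c(\tmst)$ up to a loss term. This is the $\frac{12}{7}$ analysis of the \cite{BCRlessthan2} dual growth in the case $\tmst=\opt$, made robust.

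\textbf{Dual side.} We generalize the dual growth of \cite{BCRlessthan2}. Active moats are grown around terminals not yet connected to the root. When moats touch, the direction in which they are merged is chosen by a rule from a broader class, for instance a tunable parameter controlling how long a merged moat stays active. We then show feasibility of the resulting dual for \ref{eq:bcr}: each directed edge is overloaded by no set, because growth along an edge stops once it is tight in that direction.

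The value of the dual is bounded by a charging argument. We fix $\opt$ and orient it towards the root. For each edge we account for how much of its cost is "covered" by active moats growing across it in the forward direction, and we compare this with how a spanning tree grown by the same moats (essentially Kruskal/Prim on terminal distances) pays. The key inequality will have the form
\begin{equation*}
\dual \ \ge\ \alpha\, c(\tmst) - \beta\,\bigl(c(\tmst)-c(\opt)\bigr)
\end{equation*}
for suitable constants, obtained from a local analysis of each full component (a star-like or tree-like gadget). Separately, we use the trivial bound $c(\opt)\le c(\tmst)$ and the bound $\dual \ge \frac12 c(\tmst)$ from the standard undirected moat growing. A convex combination of these inequalities, optimized over the algorithm's parameter and over the ratio $c(\opt)/c(\tmst)$, then yields $c(\opt)\le \finalratio\cdot \dual \le \finalratio\cdot\bcr$.

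\textbf{Main obstacle.} The hardest step is the per-component charging lemma: proving that, within each full component of $\opt$, the moats grown in the bidirected setting collect enough dual to pay a $\frac{7}{12}$ fraction of the spanning cost minus a controlled multiple of the component's savings. The difficulty is that moats can grow "backwards" into Steiner vertices and stall.

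Our first attempt will be an induction over merge events. Each merge of two moats at time $t$ corresponds to a $\tmst$ edge of length about $2t$. We track, via a potential function, the dual that accrues while a component's moats remain active, and we handle stalled (inactive) moats by charging their length against the savings of that component. Once that lemma holds, the final constant $\finalratio$ is a numerical optimization.
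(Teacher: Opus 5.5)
Your outline has a genuine gap. Everything that makes the bidirected case hard is deferred to an unproven ``per-component charging lemma'', and the one concrete claim you make about the dual misses the actual issue.

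Edge constraints are never violated in moat growing, so feasibility in that sense is free. What must be proved is that a moat $U_S$ with $r\notin S$ never contains the root while it is being grown. Otherwise its variable contributes nothing to the objective; and if you instead freeze moats once they touch the root, you must account for the value you lose.

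In the bidirected graph, moats do not stay within distance $t$ of their terminals. Two interacting terminals already reach distance $\frac97 t$, and three interacting sets can reach arbitrarily far along tight paths (\Cref{fig:zig_zag_example,fig:k_interaction_example}). The danger is overshooting, not stalling, and no per-full-component accounting of $\opt$ controls it.

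The ingredients you are missing are these:
\begin{itemize}
\item Merge times are fixed in advance by a merge plan, not triggered by moats touching. The plan satisfies $\merge{\cM}{x,y}\le\frac{7-5\gamma}{12}\dist{x,y}$, and every set that is cheap to connect contains a pair merged before $\frac{7-5\gamma}{18-14\gamma}\dist{x,y}$.
\item An induction over time, via meeting points and bifurcation vertices, shows the following. Whenever $S$ actively reaches $v$ at time $t$, some late-merged $X\subseteq S$ is connected to $v$ by a component of cost at most $t+\frac{8-8\gamma}{7-5\gamma}\val{X}$, with matching distance bounds. Alternatively, if $S$ received help, $X$ and a helper set $\bar X\subseteq R\setminus S$ are both close to a bifurcation vertex from which $v$ is reached at amortized speed at most $3$.
\item No edge is ever shared by three disjoint interacting sets. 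Otherwise one exhibits a too-cheap component on a late-merged, hence expensive, terminal set.
\end{itemize}
Together these give $\bcr\ge\val{\cM}$ exactly. Nothing resembling this mechanism appears in your induction over merge events with a potential function.

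The endgame also does not deliver $\finalratio$. Balancing a single inequality $\dual\ge\alpha\,\tmst-\beta(\tmst-\opt)$ against $\bcr\ge\frac12\tmst$ and $\opt\le\tmst$ cannot be checked with unspecified $\alpha,\beta$. Take the best inequality of this shape the paper actually proves: for each $\gamma\in(0,\frac15]$, $\alpha=\frac{7-5\gamma}{12}$ and $\beta=\frac{(7-5\gamma)(3-7\gamma)}{12\gamma(9-7\gamma)}$. Optimizing over $\gamma$ then gives only about $2(1-0.028)\approx 1.94$.

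To reach $\finalratio$, the paper does not summarize the instance by the scalar $\tmst-\opt$. Instead it does the following:
\begin{itemize}
\item It runs an exponential-time relative greedy algorithm and records the whole profile $\rho_\cI(\gamma)$, the fraction of $\tmst$ removed by contracting $\gamma$-improving components.
\item It proves $\opt\le\bigl(1-\int_0^{1/2}\rho_\cI\bigr)\tmst$.
\item For each $\gamma$, it builds a $\gamma$-good merge plan of value at least $\frac{7-5\gamma}{12}\bigl(1-\rho_\cI(\gamma)\frac{3-7\gamma}{9-7\gamma}\bigr)\tmst$. This comes from an upper bound $u_\gamma$ that mixes canonical merge times in $G$ and in the contracted graph.
\end{itemize}
Integrating the resulting pointwise lower bound on $\rho_\cI$ over $[0,\frac15]$ yields $2\bigl(1-\int_0^{1/5}\frac{(1-5\gamma)(9-7\gamma)}{(7-5\gamma)(3-7\gamma)}\,d\gamma\bigr)\approx\finalratio$. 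Your assertion that a convex combination of three scalar inequalities lands exactly on $\finalratio$ is therefore unsupported.
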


As in~\cite{BCRlessthan2}, we bound the integrality gap by the ratio between the cost of some Steiner tree and the value of some solution to the dual of \ref{eq:bcr}.
By taking the metric closure, we may assume that the graph $G$ is complete and the cost function $c$ satisfies the triangle inequality.
Then one possible Steiner tree to consider is a \emph{terminal MST}, which is a minimum cost spanning tree in the complete graph $G[R]$ induced by the terminals.
For a fixed instance of the Steiner Tree problem, we denote the cost of a terminal MST by $\tmst$.
A well-known fact, which was already utilized by \cite{BCRlessthan2}, is that $\tmst$ is always at most twice the value of \ref{eq:bcr} (see \Cref{prop:simple_lb}).
An immediate consequence of this is that the integrality gap of \ref{eq:bcr} is less than $2$ whenever the cost $\opt$ of a cheapest Steiner tree is (significantly) less than $\tmst$.

For this reason, instances where a terminal MST is an optimum Steiner tree are the key case to consider when proving that \ref{eq:bcr} has an integrality gap of less than $2$.

\begin{definition}[MST-optimal instances]
An instance of the Steiner Tree problem is \emph{MST-optimal} if we have $\tmst =\opt$, that is, a terminal MST is a cheapest Steiner tree.
\end{definition}

We also remark that the best-known lower bounds on the integrality gap of \ref{eq:bcr} stem from MST-optimal instances \cite{BcrLowerBound6over5}. 
For this important special case, we prove a better upper bound on the integrality gap of \ref{eq:bcr}.
To this end, we consider the algorithm from \cite{BCRlessthan2} for constructing dual solutions and provide a tight analysis of it.
\begin{theorem}\label{thm:gap_mst_optimal}
 The integrality gap of the Bidirected Cut Relaxation for Steiner Tree restricted to MST-optimal instances is at most $\frac{12}{7}$.   
\end{theorem}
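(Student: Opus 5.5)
Since the instance is MST-optimal, the cost of an optimal Steiner tree equals $\tmst$. It therefore suffices to construct a feasible solution to the dual of \ref{eq:bcr} whose value is at least $\frac{7}{12}\tmst$. The dual of \ref{eq:bcr} assigns a value $y_S\ge 0$ to every $S\subseteq V\setminus\{r\}$ with $S\cap R\neq\emptyset$, subject to $\sum_{S: e\in\delta^+(S)} y_S\le c(e)$ for each $e\in\vec E$. Its objective is $\sum_S y_S$.

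I would build $y$ with the moat-growing procedure of \cite{BCRlessthan2}. Every terminal other than $r$ starts as an active moat. Moats grow at unit rate along \emph{outgoing} directed edges, so a moat $S$ pays only for arcs leaving $S$. This is the key gain over the undirected setting: an undirected edge $\{u,w\}$ can be charged by moats around $u$ via $(u,w)$ and by moats around $w$ via $(w,u)$, and each arc has its own budget $c(e)$. When a moat reaches another moat's territory, the two are merged according to the rules of \cite{BCRlessthan2}. A moat that reaches the root's component becomes inactive.

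For the analysis I would fix a terminal MST $T$ and charge its cost against the dual value. For each edge $\{u,w\}$ of $T$ I would take the time at which the moats of $u$ and $w$ first meet. The MST property, combined with the triangle inequality, relates $c(u,w)$ to the total growth of the moats containing $u$ or $w$ up to that time.

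The core step is to define a charging or token scheme. Each unit of dual growth of an active moat distributes tokens to the MST edges it helps to pay for. The goal is to show that every MST edge $e$ receives at least $\frac{7}{12}c(e)$. Here MST-optimality has to enter beyond the equality $\opt=\tmst$. I would use the following consequence: for every Steiner vertex $v$ and every set of terminals, the star connecting $v$ to these terminals is not cheaper than the corresponding MST edges (Steiner stars are not beneficial). This bounds how much moats can ``waste'' growth on Steiner vertices, where growth along in-arcs to a shared Steiner vertex is used up asymmetrically.

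The main obstacle will be the tight local analysis. I expect the worst case to be configurations where three or four terminal moats meet at a Steiner vertex at roughly equal distances, since the arc constraints there limit total growth. I would first reduce to a weighted local inequality, essentially an LP over the possible ``meeting patterns'' of a constant number of moats, and verify that the ratio $\frac{7}{12}$ is attained. A star with three or four leaves is the natural candidate extremal example. Finally, I would argue by induction over merge events that these local inequalities sum up to $\sum_S y_S\ge\frac{7}{12}\tmst$.
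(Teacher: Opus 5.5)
Your plan puts the difficulty in the wrong place, and it omits the step that actually carries the theorem.

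\textbf{The merge rule and the feasibility step.} The merge rule is left unspecified. Read in the undirected sense, ``merge when a moat reaches another moat's territory'' is the canonical schedule, which only yields $\frac12\tmst$. Deferring to \cite{BCRlessthan2} does not pin down the parameter either: their analysis only supports a small scaling $\delta$. The point here is that the maximum-value merge plan $\cM$ with merge times at most $\frac{7}{12}\dist{x,y}$ (i.e.\ $\delta=\frac16$) works.

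Once the schedule is fixed this way, no charging or token scheme is needed. Dual feasibility is automatic in moat growing, and the dual value is $\val{\cM}=\int_0^\infty(|\cS^t|-1)\,dt$. This equals the MST with respect to the merge-time bounds, i.e.\ $\frac{7}{12}\tmst$, \emph{provided} no grown set $U_S$ with $r\notin S$ contains $r$. That proviso is essentially the whole proof, and your proposal never bounds it. Declaring a moat inactive when it hits the root just moves the loss into an unanalysed term.

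In the bidirected graph, $U_S$ is not a ball of radius $t$. Two interacting moats can push their reach forward at effective speed $3$ (\Cref{fig:zig_zag_example}), and three interacting sets can reach arbitrarily far along tight paths (\Cref{fig:k_interaction_example}, right). The paper controls this with several ingredients:
\begin{itemize}
\item safe paths, on which at most two disjoint sets contribute to an edge;
\item the telescoping speed-$3$ bound of \Cref{lem: length from m_p to v};
\item the component invariant (\Cref{inv: the invariant}), proved by induction over time using bifurcation vertices;
\item a proof by contradiction with MST-optimality that every active path is safe.
\end{itemize}

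\textbf{Your use of MST-optimality is too weak.} The components that must be shown non-improving are assembled recursively. Components for nested subsets $X_1\subseteq S_1$ and $X_2\subseteq S_2$ are glued at a bifurcation vertex and extended by a tight path to $v$. In the safety argument, three such pieces are glued at a common vertex. So you need $\cost(X)\ge\dropG{X}$ for arbitrary $X$ and arbitrary connecting trees. This must be combined with $\dropG{X}=\frac{12}{7}\val{\cM,X}$ and with the superadditivity $\val{\cM,X\cup\bar X}\ge\val{\cM,X}+\val{\cM,\bar X}+t$ for sets separated by an active set at time $t$. A statement about Steiner stars does not cover these trees.

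For the same reason, an LP over constant-size ``meeting patterns'' cannot capture the nested interactions. Already the path from a meeting point to $v$ need not lie on any single tight path (\Cref{fig:bifurcation_example}), which is why the paper argues recursively via bifurcation vertices rather than locally.

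Finally, your candidate extremal instance is inadmissible. A Steiner star with three or four legs that realize the terminal distances is itself an improving component, so such an instance is not MST-optimal. The tight behaviour in the paper comes instead from two-moat zig-zag interactions and from three sets reaching a common vertex exactly at their merge time (\Cref{fig:accident}).
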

To prove \Cref{thm:gap_mst_optimal}, we crucially deviate from the proof strategy from \cite{BCRlessthan2}.
For an overview of the main ideas behind our new analysis, we refer to \Cref{sec:outline_analysis}.

In order to prove our upper bound of $\finalratio$ for the integrality gap of \ref{eq:bcr} for general Steiner tree instances, we generalize the construction from \cite{BCRlessthan2} to what we call \emph{merge plan based} dual growth procedures. 
This is a class of procedures for constructing dual solutions for \ref{eq:bcr} which captures algorithms that can be viewed as a natural directed analogue of the standard moat growing algorithms (as they are used in primal-dual algorithms).
In \Cref{sec:lower_bound}, we show that not only our analysis of the construction from \cite{BCRlessthan2} is tight for MST-optimal instances, but even no procedure from the much broader class of merge plan based dual growth procedures can lead to a better dual solution for \ref{eq:bcr}.

\begin{restatable}{theorem}{thmMergePlanAlgoLowerBound}
\label{thm: merge plan algos cannot be better than 7 over 12 mst}
For every $\varepsilon>0$, there is an MST-optimal instance such that no merge plan based dual solution has value more than $\lr{\frac{7}{12}+\epsilon}\tmst$.
\end{restatable}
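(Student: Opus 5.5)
We build an explicit family of MST-optimal instances, parametrized by an integer $k$, and show that any merge plan based dual solution has value at most $\lr{\frac{7}{12}+O(1/k)}\tmst$. Merge plan based procedures grow moats around terminals. A moat is either active, in which case it grows along outgoing directed edges, or frozen. The procedure merges moats according to a merge plan, which is a rooted structure telling which moat is absorbed into which. The dual value equals the total growth time summed over all active non-root moats. The upper bound on this value must hold for every merge plan, so the core of the argument is a potential or charging argument that works uniformly over all plans, not just an analysis of one algorithm.

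\textbf{The instance.} We use a star-like gadget. Take a Steiner vertex $s$ joined to $k$ terminals $t_1,\dots,t_k$, each by an edge of cost $1$, and make every terminal--terminal distance equal to $2$. This instance is MST-optimal, since the star through $s$ costs $k$ while a terminal MST costs $2(k-1)$. To make the terminal MST optimal we therefore have to add further structure. The plan is to replace $s$ by a layered construction, for instance a complete bipartite arrangement of Steiner vertices, or terminal-to-terminal edges of cost slightly below $2$. The costs should be chosen so that every Steiner vertex saves nothing, and terminal MST $=\opt$ should follow from a direct cut or LP argument. The target parameters come from the $12/7$ analysis of \Cref{thm:gap_mst_optimal}. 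Its tight cases presumably have growth patterns in which a moat pays for its own radius plus part of a neighbour's, in the proportions $1/2$, $1/3$ and $1/4$ that average to $7/12$. So the instance should superimpose about three scales of distances, for example terminal distances $2$, $3$ and $4$ realized through Steiner hubs, so that each scale forces a loss.

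\textbf{Bounding every merge plan.} Fix any merge plan. For each terminal, consider the directed moats containing it. Feasibility of the dual for \ref{eq:bcr} means that on every directed edge $(u,w)$, the total growth of active sets crossing it outward is at most $c(u,w)$. The Steiner hubs are the key: once two moats both reach a hub, they can no longer grow independently, because the outgoing edges of the hub are shared by both. From this we derive a linear inequality for each gadget that bounds its total growth time by $\frac{7}{12}$ of its share of $\tmst$. Summing these inequalities over gadgets, and absorbing boundary effects into the $\epsilon$ term as $k\to\infty$, gives the theorem.

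\textbf{Main obstacle.} The hardest step is making this inequality valid for arbitrary merge plans. A plan may merge moats early or late, and may choose different roots and freezing orders. I would first try to show that, without loss of generality, the merge plan is symmetric on the gadget, by averaging over its automorphisms: the feasible region is convex and the dual value is linear. After that reduction, the growth is described by a few timing parameters, and bounding the value reduces to a small linear program whose optimum I expect to be $7/12$.
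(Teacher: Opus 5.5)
\textbf{There is a genuine gap.} Your proposal contains neither the instance nor the bound, and the bounding method you sketch would not work.

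\textbf{The instance.} Your starting star is not MST-optimal. Since $k<2(k-1)$ for $k\ge 3$, the star is an improving component, so your own inequality shows the opposite of what you claim. The suggested repairs (terminal--terminal edges of cost just below $2$, a bipartite layer of Steiner vertices) are not specified, and you do not say what property they are meant to guarantee. The heuristic about $\frac12,\frac13,\frac14$ is also wrong: $\frac{7}{12}$ exceeds all three, so it is not an average of them under any weights.

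\textbf{The bounding step.} Symmetrization is not available. A merge plan is a nested family of partitions, so plans cannot be averaged over automorphisms. The output of \Cref{algo:dual_growth_with_merge_plan} depends on the plan through discrete tightness events, so merge plan based duals do not form a convex set, and their value is not linear in the plan. Moreover, if your per-gadget inequality followed from dual feasibility alone, it would bound $\bcr$ itself by $(\frac{7}{12}+\varepsilon)\opt$ on an MST-optimal instance. That would prove an integrality gap of $\frac{12}{7}$, far above the best known lower bound of $\frac65$. The argument must instead use that the algorithm grows exactly the tight-reachability closures $U_S$, against an adversarial plan. Your remark that moats sharing a hub share its outgoing edges only helps once you show that every plan is forced into the hub without merging early, and that is the whole difficulty. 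Note also that in the statement's class nothing is frozen: every part of $\cS^t$ grows at unit rate, and the lower bound exploits exactly this.

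\textbf{The missing idea} is a dichotomy over the plan. The paper makes all terminal distances equal to $2$, so $\tmst=2(|R|-1)$. For \emph{every} pair $(s,s')$ it attaches a gadget with a central Steiner vertex $v_{(s,s')}$ at distance at most $4$ from every terminal. The copies are glued only at terminals, so MST-optimality reduces, via full improving components, to checking a single gadget.

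Each gadget guarantees the following: if $s$ and $s'$ are still unmerged at time $\frac76+\varepsilon$, then every active set reaches $v_{(s,s')}$ by tight edges at that time. Two pieces make this work:
\begin{itemize}
\item A $3\times$-gadget: two unmerged terminals meet at time $1$, then push a zig-zag at effective speed $3$.
\item Jump-gadgets: these chain $3\times$-gadgets so that any other unmerged terminal reaches $v$ through a path of length more than $6$.
\end{itemize}

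Hence, for any plan, one of two things happens.
\begin{itemize}
\item All terminals are merged by time $\frac76+\varepsilon$. Then the value is at most $(\frac76+\varepsilon)(|R|-1)$.
\item Some pair is not merged by then. From that time on, every active set not reaching the root crosses a fixed shortest $v$--$r$ path, so the remaining value is at most $\dist{v,r}\le 4$.
\end{itemize}
Taking $|R|\ge\frac4\varepsilon+1$ gives $(\frac{7}{12}+\varepsilon)\tmst$. So $\frac{7}{12}=\frac12\cdot\frac76$, and the $\varepsilon$ comes from the additive constant $4$ against $|R|\to\infty$ (together with the granularity $\frac{1}{6k}\le\varepsilon$), not from symmetry or boundary effects. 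This collapse mechanism is what your plan lacks.
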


For a concise definition of merge plan based dual solutions, we refer to \Cref{sec:lower_bound}.
\Cref{thm: merge plan algos cannot be better than 7 over 12 mst}~implies that significantly different techniques will be needed if one wants to prove that \ref{eq:bcr} has an integrality gap below $\frac{12}{7}$, even for the special case of MST-optimal instances.

Finally, we prove an interesting connection to the integrality gap of the Hypergraphic Relaxation for Steiner tree, whose value we denote by $\hyp$.
(For a definition of the Hypergraphic Relaxation, see \Cref{sec:hypergraphic}.)
We write $\opt(\cI)$ to denote the cost of a cheapest Steiner tree for instance $\cI$, and we will simply write $\opt$ if the instance is fixed and clear from the context.

\begin{restatable}{theorem}{HypRatiosSame}\label{thm:ratios_same}
    We have
    \begin{equation}\label{eq:ratios_same}
       \begin{aligned}
    &\ \sup \left\{ \frac{\hyp(\cI)}{\bcr(\cI)} \colon \cI \text{ instance of Steiner Tree} \right\} \\[2mm]
    \ = &\
    \sup \left\{ \frac{\opt(\cI)}{\bcr(\cI)} \colon \cI \text{ MST-optimal instance of Steiner Tree} \right\}.
    \end{aligned} 
    \end{equation}
\end{restatable}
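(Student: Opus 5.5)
Write $\alpha$ for the left-hand side and $\beta$ for the right-hand side of \eqref{eq:ratios_same}, and show $\beta\le\alpha$ and $\alpha\le\beta$ separately. The tool in both directions is the standard characterization of the Hypergraphic LP via full components. I will use its bidirected form: $\hyp$ equals the minimum of \ref{eq:bcr} with one extra requirement, namely that the fractional solution be a convex-type combination of directed full components rooted towards $r$.

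\textbf{$\beta\le\alpha$.} Take an MST-optimal instance $\cI$. I will argue that $\hyp(\cI)=\opt(\cI)$. Since $\tmst=\opt$, replacing any full component $K$ by a terminal MST on its terminals costs no more than $c(K)$; otherwise swapping it into an optimal tree would give a Steiner tree cheaper than $\tmst$. So we may assume every full component used by the Hypergraphic LP is a single terminal edge. The LP restricted to such components is the spanning tree polytope relaxation (subtour/partition form) on $G[R]$, which is integral with value $\tmst$. Hence $\hyp(\cI)=\tmst=\opt$, so $\opt/\bcr=\hyp/\bcr\le\alpha$. The delicate point is the replacement step. It has to be done inside the partition-type constraints of the Hypergraphic LP: a fractional full component is replaced by its MST edges with the same weight, and the partition constraints are preserved because a spanning tree on $K$ has $|K|-1$ edges and covers every partition that $K$ covers.

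\textbf{$\alpha\le\beta$.} Take an arbitrary instance $\cI$. Build a new instance $\cI'$ by adding, for each full component $K$ (or at least each one in the support of an optimal Hypergraphic solution), a new Steiner-free shortcut: add new terminals, or directly add edges between the terminals of $K$. Choose these so that in $\cI'$ the terminal MST is optimal and costs $\hyp(\cI)$, while $\bcr$ does not increase. A cleaner way to realize this: take an optimal Hypergraphic solution $x$, and for each component $K$ with $x_K>0$ form a star gadget whose Steiner structure matches $K$ in cost. The step I expect to be the main obstacle is finding a construction for which simultaneously (i) $\bcr(\cI')\le\bcr(\cI)$, (ii) $\cI'$ is MST-optimal, and (iii) $\opt(\cI')\ge\hyp(\cI)$.

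For this step I would try LP duality instead of an explicit primal construction. Consider the dual of the Hypergraphic LP: it assigns values $y_{\mathcal P}$ to partitions with $\sum y\cdot(\text{rank drop of }K)\le c(K)$. Define new terminal-to-terminal distances $d$ on $R$ as the "dual-induced" metric, with $d(u,v)$ chosen so that the spanning-tree LP with costs $d$ has value $\hyp(\cI)$. Then contract or replace Steiner vertices so that every full component of $\cI'$ costs at least its MST under $d$; this gives MST-optimality. The requirement $\bcr(\cI')\le\bcr(\cI)$ follows if $\cI'$ is obtained by only adding vertices and edges, with terminal-edge costs at least the original shortest-path costs, since a feasible BCR solution of $\cI$ stays feasible. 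I expect the cost setting to be the crux: the new terminal edges must have cost exactly matching the hypergraphic dual, so that they neither reduce $\bcr$ nor break the $\hyp$ lower bound. Once such costs exist, $\hyp(\cI)/\bcr(\cI)\le\opt(\cI')/\bcr(\cI')\le\beta$, and taking suprema gives equality.
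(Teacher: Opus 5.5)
Both halves of your argument have a genuine gap.

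\textbf{$\beta\le\alpha$.} The replacement step is false. MST-optimality only says that no component is improving, i.e.\ $\cost(X)\ge\dropG{X}$ for all $X\subseteq R$. Swapping a component on $X$ into a terminal MST saves exactly $\dropG{X}$, not the cost of a terminal MST on $X$, and the latter can be strictly larger.

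For example, take terminals $a,b,c,d$, a Steiner vertex $m$, edges $\{m,a\},\{m,b\},\{m,c\}$ of cost $1$, and edges $\{d,a\},\{d,b\},\{d,c\}$ of cost $\frac32$. Here $\tmst=\opt=\frac92$, so the instance is MST-optimal. The star $K$ on $X=\{a,b,c\}$ has $c(K)=3=\dropG{X}$ and is part of the optimal tree $K\cup\{\{a,d\}\}$. A terminal MST on $X$ costs $4$, so your replacement turns this optimal integral Hypergraphic solution into one of cost $\frac{11}{2}$.

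What is missing is the bridge lemma (\Cref{lem:bridge}), $\tmst\le\sum_{(X,v)}x_{(X,v)}\,\dropG{X}$. Combined with $\dropG{X}\le\cost(X)$, it gives $\hyp=\opt$ immediately; this is the paper's argument. You could also repair your replacement by using the bottleneck costs $w_T(u,v)=\max_{f\in T[u,v]}c(f)$ of a terminal MST $T$ instead of distances. A $w_T$-minimum spanning tree on $X$ costs exactly $\dropG{X}$, and $T$ remains a minimum spanning tree for $w_T$. But that is essentially the proof of the bridge lemma.

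\textbf{$\alpha\le\beta$.} You do not give a construction, and the one concrete cost rule you state points the wrong way. New terminal edges costing at least the current shortest-path distances leave the metric closure unchanged, so on their own they cannot make a non-MST-optimal $\cI$ MST-optimal. That rule is also not what protects $\bcr$: adding Steiner vertices and edges of any cost never increases $\bcr$. You must \emph{lower} costs, and the whole difficulty is doing so without lowering $\hyp$.

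Suppose the lowered costs $d$ keep a dual solution $y$ of the Hypergraphic LP feasible. Then every $r$-arborescence $\vec T$ on $R$ satisfies $d(\vec T)\ge\sum_S y_S\,|\vec T\cap\delta^+(S)|\ge\sum_S y_S$. So your requirement that the spanning-tree LP with costs $d$ have value $\hyp(\cI)$ needs an arborescence leaving every support set of $y$ exactly once. For a general, crossing dual no such arborescence need exist.

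The paper supplies exactly the missing pieces:
\begin{itemize}
\item an optimal dual with laminar support, obtained by uncrossing (\Cref{lem:hypergraphic_dual_laminar});
\item an arborescence built bottom-up along the laminar family, with one outgoing arc per set and lengths $\ell(\vec e)=\sum_{L:\vec e\in\delta^+(L)}y_L$, so that $\ell(T)=\hyp(\cI)$;
\item roots chosen as vertices of minimum depth, which makes the dual constraint hold for \emph{both} orientations of every shortened edge.
\end{itemize}
Lowering each $e\in T$ to cost $\ell(e)$ then gives $c'(T)\le\hyp(\cI)\le\hyp(\cI')\le\opt(\cI')$. Hence $\cI'$ is MST-optimal with $\opt(\cI')=\hyp(\cI)$ and $\bcr(\cI')\le\bcr(\cI)$.

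Your target that ``every full component costs at least its MST under $d$'' also repeats the first error. MST-optimality concerns $\dropG{X}$, not the MST cost of $X$.
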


Observe that the right-hand side of~\eqref{eq:ratios_same} is the integrality gap of the Bidirected Cut Relaxation for Steiner Tree restricted to MST-optimal instances, which by \Cref{thm:gap_mst_optimal} is at most $\frac{12}{7}$.

\subsection{Further Related Work}\label{sec:related_work}
The integrality gap of \ref{eq:bcr} has been studied for multiple decades, with \cite{BCRlessthan2} giving the first upper bound strictly below $2$ for the general case.

For the special case of quasi-bipartite instances (instances without any neighboring Steiner vertices), Rajagopalan and  Vazirani proved that the integrality gap is at most $\frac32$ \cite{BcR3over2onQuasiBipartite}, which was then improved to $\frac43$ by Chakrabarty, Devanur, and  Vazirani~\cite{BcR4over3onQuasiBipartite}. 
Later, \citeauthor{BcrClawFreeEquivToHyp} \cite{BcrClawFreeEquivToHyp} proved that for quasi-bipartite instances, \ref{eq:bcr} is equivalent to the Hypergraphic Relaxation, and the same applies even for the more general class of claw-free instances (instances that do not contain a Steiner vertex with three neighboring Steiner vertices).
Together with \cite{ln4Intgap}, this implies an integrality gap of at most $\frac{73}{60}$ on quasi-bipartite instances (for both \ref{eq:bcr} and HYP).
\citeauthor{BcrClawFree991over732} \cite{BcrClawFree991over732} proved an upper bound of $\frac{991}{732}$ on the integrality gap of HYP on claw-free instances, which by \cite{BcrClawFreeEquivToHyp} implies the same upper bound on the integrality of \ref{eq:bcr} on claw-free instances.

\citeauthor{BcrSteinerForest} \cite{BcrSteinerForest} considered a generalization of \ref{eq:bcr} to the Steiner Forest problem and showed that this relaxation has some promising properties. In particular, they proved that its integrality gap is at most $\frac{16}{9}$ on instances with half-integral optimal LP solutions.
While very recently approximation factors strictly below $2$ have been achieved for Steiner Forest \cite{SteinerForest2minusEpsilon, Steinerforest1point994}, no LP relaxation solvable in polynomial-time is known to have an integrality gap strictly below $2$ for Steiner Forest.

While the proof from \cite{BCRlessthan2} can be turned into a polynomial-time algorithm for Steiner Tree with an approximation factor of $1.9988$ and thus strictly better than $2$, significantly better approximation algorithms for Steiner Tree were known before.\footnote{Our proof implies that the relative greedy algorithm by \citeauthor{ImprovedRelGreedy} \cite{ImprovedRelGreedy} computes a Steiner tree of cost at most $\finalratio$ times the value of \ref{eq:bcr}.
(See \Cref{sec:existence_merge_plan}.)}
The first better-than-two approximation \cite{RelGreedy11over6} uses a relative approach. 
This has been improved multiple times (\cite{ImprovedRelGreedy,RelGreedy1point664,IteratedRelGreedyWithTerminalInsertion,RelGreedy5over3}) culminating in a $1.55$-approximation \cite{BestRelGreedy}. 
The currently best known approximation algorithms have an approximation factor of $\ln4+\varepsilon<1.4$. 
This was first achieved by \citeauthor{ln4directedcomponent} \cite{ln4directedcomponent} using an iterative rounding technique. 
Later, \citeauthor{ln4LocalSearchJournal} \cite{ln4LocalSearchJournal} obtained the same guarantee using a local search algorithm.

\subsection{Structure of the Paper}
The remainder of this paper is structured as follows. In \Cref{sec:overview}, we describe the techniques that we use to prove our upper bounds on the integrality gap (\Cref{thm:main_gap,thm:gap_mst_optimal}).
We start with a brief recap of the classical moat growing procedure in an undirected setting, then move on to a directed version that generalizes the dual-growth procedure from \cite{BCRlessthan2}, and end the section with an outline of our new analysis of this dual growth procedure.

Besides an instance of the Steiner tree problem, our dual growth algorithm additionally takes a so-called \emph{merge plan} as an input.
This object classifies which terminal sets are active at each point in time during the dual growth. 
In \Cref{sec:merge_plan}, we discuss several useful properties of such merge plans.

In \Cref{sec: assumptions}, we then introduce some technical assumptions about our instances and merge plans that we can make without loss of generality and that will simplify our proofs later on.
In \Cref{sec: analysis of moat growing}, we provide our new analysis of the dual growth procedure. 
That is, we show that (under the assumptions established in \Cref{sec: assumptions}) our dual growth procedure indeed produces a large dual solution if we provide it with a good merge plan.

In \Cref{sec:existence_merge_plan}, we then complete the proof of our new upper bound on the integrality gap of \ref{eq:bcr} (\Cref{thm:main_gap}).
To this end, we simultaneously construct a Steiner tree and a merge plan, such that the cost of the Steiner tree is at most by a factor of $\finalratio$ larger than the value of the dual solution to \ref{eq:bcr} that we obtain from the merge plan (using our dual-growth procedure analyzed in \Cref{sec: analysis of moat growing}).

Finally, \Cref{sec:lower_bound,sec:hypergraphic} are independent of our proof of the new upper bound on the integrality gap. 
In \Cref{sec:lower_bound}, we prove our lower bound on what is achievable using our moat-growing algorithm (\Cref{thm: merge plan algos cannot be better than 7 over 12 mst}). In \Cref{sec:hypergraphic}, we discuss a connection to the Hypergraphic Relaxation, that is we prove \Cref{thm:ratios_same}.

Most figures in this paper show (partial) results of an undirected or bidirected moat growing procedure. There are animated versions of these figures showing the progress of the algorithm over time available at \href{https://people.inf.ethz.ch/ppaschmanns/}{people.inf.ethz.ch/ppaschmanns/}. \section{Overview of our Techniques}\label{sec:overview}

In this section, we give an outline of how we prove \Cref{thm:main_gap,thm:gap_mst_optimal}.
First, we introduce a general class of moat-growing algorithms, which we call \emph{dual growth with merge plan} (\Cref{sec:outline_algo}).
Next, in \Cref{sec:outline_merge_plan}, we discuss how we choose the \emph{merge plans}, which are given as an input to the moat-growing algorithm.
Finally, in \Cref{sec:outline_analysis}, we provide an outline of our new analysis of the moat-growing algorithm for \ref{eq:bcr}.

\subsection{Dual Growth with Merge Plans}
\label{sec:outline_algo}

The key part of our proof is the construction of solutions to the dual of \ref{eq:bcr}.
This dual linear program is given by
\beforecustomtag{2345}
\begin{equation}\label{eq:dual_bcr}
\tag{Dual-BCR}
\begin{aligned}
\max \sum_{\substack{S\subseteq V\setminus\{r\}:\\ S\cap R \neq \emptyset}} y_S \\
\sum_{S : e\in \delta^+(S)} y_S \ \leq&\ c(e) &\text{ for all }e\in \vec{E}\\
y_S \ \ge&\ 0 &\text{ for all } S\subseteq V.
\end{aligned}
\end{equation}
\aftercustomtag
Note that we wrote the dual linear program such that it has variables for all subsets of $V$, but only those containing some terminal, but not the root, contribute to the objective function.
We can always set all other dual variables to zero in an optimal dual solution.

To construct a dual solution, we use a moat growing algorithm---similar to \cite{BCRlessthan2} and standard primal-dual methods.
Such moat growing algorithms start with the dual solution $y=0$. 
Then, for each terminal $s\in R$, they take the smallest set $U_s$ that contains $s$ and whose corresponding dual variable can be increased without violating the dual constraints.
This is 
\[
U_s \ =\ \{v\in V : v\text{ is reachable from $s$ by a path of tight edges}\},
\]
where an edge $e$ is \emph{tight} if the corresponding constraint in the dual LP is satisfied with equality.
Then the algorithm increases these dual variables (for all $s\in R$) at a unit speed, until any further improvement would lead to a violation of dual constraints.
At this point, the sets $U_s$ change, and one can iterate this process.
Moat growing algorithms do precisely this, except that at some times during the algorithm, some of the sets $U_s$ get merged into a single set.
Then, instead of growing the dual variable for a set $U_s$, we grow the dual variable for a set
\[
U_S \ =\ \{v\in V : v\text{ is reachable from the set $S$ by a path of tight edges}\}.
\]
where $S\subseteq R$ is a subset possibly containing several terminals.
For moat growing in undirected graphs, there is a natural point in time when two sets $U_{s_1}, U_{s_2}$ should be merged: this is when the corresponding terminals are connected by a path of tight edges.
See \Cref{fig:moat_growing} for an example.
\begin{figure}
    \centering

\resizebox{0.32\textwidth}{!}{
\begin{tikzpicture}[scale=0.65]
    \Huge
    \useasboundingbox (-6,-6) rectangle ++(24,18);
    \node[terminal, label=left:{$a$}] (A) at (  0,6) {};
    \node[steiner, label=above:{$b$}] (B) at (  4,3) {};
    \node[steiner, label=above:{$c$}] (C) at (  8,3) {};
    \node[terminal, label=right:{$d$}] (D) at ( 12,6) {};
    \node[terminal, label=left:{$e$}] (E) at (  0,0) {};
    \node[terminal, label=right:{$f$}] (F) at ( 12,0) {};
    \draw[edge] (A) -- (B);
    \draw[edge] (A) -- (D);
    \draw[edge] (A) -- (E);
    \draw[edge] (B) -- (C);
    \draw[edge] (B) -- (E);
    \draw[edge] (C) -- (D);
    \draw[edge] (C) -- (F);
    \draw[edge] (D) -- (F);
    \draw[edge] (E) -- (F);

    \draw[thickedgeprogress, red] (A) -- ($(A)!3/5!(B)$);
    \draw[thickedgeprogress, red] (A) -- ($(A)!1/4!(D)$);
    \draw[thickedgeprogress, red] (A) -- ($(A)!1/2!(E)$);
    \draw[thickedgeprogress] (D) -- ($(D)!1/4!(A)$);
    \draw[thickedgeprogress] (D) -- ($(D)!3/5!(C)$);
    \draw[thickedgeprogress] (D) -- ($(D)!1/2!(F)$);
    \draw[thickedgeprogress, green!70!black] (E) -- ($(E)!1/2!(A)$);
    \draw[thickedgeprogress, green!70!black] (E) -- ($(E)!3/5!(B)$);
    \draw[thickedgeprogress, green!70!black] (E) -- ($(E)!1/4!(F)$);
    \draw[thickedgeprogress, orange] (F) -- ($(F)!3/5!(C)$);
    \draw[thickedgeprogress, orange] (F) -- ($(F)!1/2!(D)$);
    \draw[thickedgeprogress, orange] (F) -- ($(F)!1/4!(E)$);   
    
    \draw[activeset, red] (A) ellipse [radius = 3];
    \draw[activeset, blue] (D) ellipse [radius = 3];
    \draw[activeset, green!70!black] (E) ellipse [radius = 3];
    \draw[activeset, orange] (F) ellipse [radius = 3];

\end{tikzpicture}
}
\hfill
\resizebox{0.32\textwidth}{!}{
\begin{tikzpicture}[scale=0.65]
    \Huge
    \useasboundingbox (-6,-6) rectangle ++(24,18);
    \node[terminal, label=left:{$a$}] (A) at (  0,6) {};
    \node[steiner, label=above right:{\!$b$}] (B) at (  4,3) {};
    \node[steiner, label=above left:{$c$\!\!}] (C) at (  8,3) {};
    \node[terminal, label=right:{$d$}] (D) at ( 12,6) {};
    \node[terminal, label=left:{$e$}] (E) at (  0,0) {};
    \node[terminal, label=right:{$f$}] (F) at ( 12,0) {};
    \draw[edge] (A) -- (B);
    \draw[edge] (A) -- (D);
    \draw[edge] (A) -- (E);
    \draw[edge] (B) -- (C);
    \draw[edge] (B) -- (E);
    \draw[edge] (C) -- (D);
    \draw[edge] (C) -- (F);
    \draw[edge] (D) -- (F);
    \draw[edge] (E) -- (F);

    \draw[thickedgeprogress, red] (A) -- ($(A)!3/5!(B)$);
    \draw[thickedgeprogress, red] (A) -- ($(A)!1/4!(D)$);
    \draw[thickedgeprogress, red] (A) -- ($(A)!1/2!(E)$);
    \draw[thickedgeprogress] (D) -- ($(D)!1/4!(A)$);
    \draw[thickedgeprogress] (D) -- ($(D)!3/5!(C)$);
    \draw[thickedgeprogress] (D) -- ($(D)!1/2!(F)$);
    \draw[thickedgeprogress, green!70!black] (E) -- ($(E)!1/2!(A)$);
    \draw[thickedgeprogress, green!70!black] (E) -- ($(E)!3/5!(B)$);
    \draw[thickedgeprogress, green!70!black] (E) -- ($(E)!1/4!(F)$);
    \draw[thickedgeprogress, orange] (F) -- ($(F)!3/5!(C)$);
    \draw[thickedgeprogress, orange] (F) -- ($(F)!1/2!(D)$);
    \draw[thickedgeprogress, orange] (F) -- ($(F)!1/4!(E)$);

    \draw[thickedgeprogress, brown] (B) -- ($(A)!3/5!(B)$);
    \draw[thickedgeprogress, brown] (B) -- ($(E)!3/5!(B)$);
    \draw[thickedgeprogress, brown] ($(A)!5/12!(D)$) -- ($(A)!1/4!(D)$);
    \draw[thickedgeprogress, brown] ($(E)!5/12!(F)$) -- ($(E)!1/4!(F)$);

    \draw[thickedgeprogress, purple] (C) -- ($(D)!3/5!(C)$);
    \draw[thickedgeprogress, purple] (C) -- ($(F)!3/5!(C)$);
    \draw[thickedgeprogress, purple] ($(A)!7/12!(D)$) -- ($(A)!3/4!(D)$);
    \draw[thickedgeprogress, purple] ($(E)!7/12!(F)$) -- ($(E)!3/4!(F)$);

    \draw[activeset, red] (A) ellipse [radius = 3];
    \draw[activeset, blue] (D) ellipse [radius = 3];
    \draw[activeset, green!70!black] (E) ellipse [radius = 3];
    \draw[activeset, orange] (F) ellipse [radius = 3];

\draw[activeset, brown] ($(A)-(5,0)$) arc (180:0:5);
    \draw[activeset, brown] ($(E)-(5,0)$) arc (-180:0:5);
    \draw[activeset, brown] ($(E)-(5,0)$) -- ($(A)-(5,0)$);
    \draw[activeset, brown] ($(E)+(5,0)$) edge[out=90,in=-90] ($(B)-(0.4,0)$);
    \draw[activeset, brown] ($(B)-(0.4,0)$) edge[out=90,in=-90] ($(A)+(5,0)$);

    \draw[activeset, purple] ($(D)-(5,0)$) arc (180:0:5);
    \draw[activeset, purple] ($(F)-(5,0)$) arc (-180:0:5);
    \draw[activeset, purple] ($(F)+(5,0)$) -- ($(D)+(5,0)$);
    \draw[activeset, purple] ($(F)-(5,0)$) edge[out=90,in=-90] ($(C)+(0.4,0)$);
    \draw[activeset, purple] ($(C)+(0.4,0)$) edge[out=90,in=-90] ($(D)-(5,0)$);
    
\end{tikzpicture}
}
\hfill
\resizebox{0.32\textwidth}{!}{
\begin{tikzpicture}[scale=0.65]
    \Huge
    \useasboundingbox (-6,-6) rectangle ++(24,18);
    \node[terminal, label=left:{$a$}] (A) at (  0,6) {};
    \node[steiner, label=above right:{\!$b$}] (B) at (  4,3) {};
    \node[steiner, label=above left:{$c$\!\!}] (C) at (  8,3) {};
    \node[terminal, label=right:{$d$}] (D) at ( 12,6) {};
    \node[terminal, label=left:{$e$}] (E) at (  0,0) {};
    \node[terminal, label=right:{$f$}] (F) at ( 12,0) {};
    \draw[edge] (A) -- (B);
    \draw[edge] (A) -- (D);
    \draw[edge] (A) -- (E);
    \draw[edge] (B) -- (C);
    \draw[edge] (B) -- (E);
    \draw[edge] (C) -- (D);
    \draw[edge] (C) -- (F);
    \draw[edge] (D) -- (F);
    \draw[edge] (E) -- (F);

    \draw[thickedgeprogress, red] (A) -- ($(A)!3/5!(B)$);
    \draw[thickedgeprogress, red] (A) -- ($(A)!1/4!(D)$);
    \draw[thickedgeprogress, red] (A) -- ($(A)!1/2!(E)$);
    \draw[thickedgeprogress] (D) -- ($(D)!1/4!(A)$);
    \draw[thickedgeprogress] (D) -- ($(D)!3/5!(C)$);
    \draw[thickedgeprogress] (D) -- ($(D)!1/2!(F)$);
    \draw[thickedgeprogress, green!70!black] (E) -- ($(E)!1/2!(A)$);
    \draw[thickedgeprogress, green!70!black] (E) -- ($(E)!3/5!(B)$);
    \draw[thickedgeprogress, green!70!black] (E) -- ($(E)!1/4!(F)$);
    \draw[thickedgeprogress, orange] (F) -- ($(F)!3/5!(C)$);
    \draw[thickedgeprogress, orange] (F) -- ($(F)!1/2!(D)$);
    \draw[thickedgeprogress, orange] (F) -- ($(F)!1/4!(E)$);

    \draw[thickedgeprogress, brown] (B) -- ($(A)!3/5!(B)$);
    \draw[thickedgeprogress, brown] (B) -- ($(E)!3/5!(B)$);
    \draw[thickedgeprogress, brown] ($(A)!1/2!(D)$) -- ($(A)!1/4!(D)$);
    \draw[thickedgeprogress, brown] ($(E)!1/2!(F)$) -- ($(E)!1/4!(F)$);
    \draw[thickedgeprogress, brown] ($(B)!1/4+1/12!(C)$) -- (B);

    \draw[thickedgeprogress, purple] (C) -- ($(D)!3/5!(C)$);
    \draw[thickedgeprogress, purple] (C) -- ($(F)!3/5!(C)$);
    \draw[thickedgeprogress, purple] ($(A)!1/2!(D)$) -- ($(A)!3/4!(D)$);
    \draw[thickedgeprogress, purple] ($(E)!1/2!(F)$) -- ($(E)!3/4!(F)$);
    \draw[thickedgeprogress, purple] ($(B)!3/4-1/12!(C)$) -- (C);

    \draw[activeset, red] (A) ellipse [radius = 3];
    \draw[activeset, blue] (D) ellipse [radius = 3];
    \draw[activeset, green!70!black] (E) ellipse [radius = 3];
    \draw[activeset, orange] (F) ellipse [radius = 3];

\draw[activeset, brown] ($(A)-(5,0)$) arc (180:0:5);
    \draw[activeset, brown] ($(E)-(5,0)$) arc (-180:0:5);
    \draw[activeset, brown] ($(E)-(5,0)$) -- ($(A)-(5,0)$);
    \draw[activeset, brown] ($(E)+(5,0)$) edge[out=90,in=-90] ($(B)-(0.4,0)$);
    \draw[activeset, brown] ($(B)-(0.4,0)$) edge[out=90,in=-90] ($(A)+(5,0)$);

    \draw[activeset, purple] ($(D)-(5,0)$) arc (180:0:5);
    \draw[activeset, purple] ($(F)-(5,0)$) arc (-180:0:5);
    \draw[activeset, purple] ($(F)+(5,0)$) -- ($(D)+(5,0)$);
    \draw[activeset, purple] ($(F)-(5,0)$) edge[out=90,in=-90] ($(C)+(0.4,0)$);
    \draw[activeset, purple] ($(C)+(0.4,0)$) edge[out=90,in=-90] ($(D)-(5,0)$);

\draw[activeset, brown] ($(A)-(6,0)$) arc (180:0:6);
    \draw[activeset, brown] ($(E)-(6,0)$) arc (-180:0:6);
    \draw[activeset, brown] ($(E)-(6,0)$) -- ($(A)-(6,0)$);
    \draw[activeset, brown] ($(E)+(6,0)$) edge[out=90,in=-90] ($(B)!1/4+1/12!(C)$);
    \draw[activeset, brown] ($(B)!1/4+1/12!(C)$) edge[out=90,in=-90] ($(A)+(6,0)$);

    \draw[activeset, purple] ($(D)-(6,0)$) arc (180:0:6);
    \draw[activeset, purple] ($(F)-(6,0)$) arc (-180:0:6);
    \draw[activeset, purple] ($(F)+(6,0)$) -- ($(D)+(6,0)$);
    \draw[activeset, purple] ($(F)-(6,0)$) edge[out=90,in=-90] ($(C)!1/4+1/12!(B)$);
    \draw[activeset, purple] ($(C)!1/4+1/12!(B)$) edge[out=90,in=-90] ($(D)-(6,0)$);
    
\end{tikzpicture}
}
     \caption{\label{fig:moat_growing}
    An example of the moat growing process for the undirected LP.
    The instance has four terminals $a,d,e,f$ (shown as squares), and initially, we grow a set around each of these terminals (left picture). 
    Once the sets $U_a$ (red) and $U_e$ (green) meet, we stop growing their corresponding dual variables and instead grow the dual variable $U_S$ (brown) for the set $S=\{a,e\}$. 
    Once, the vertex $b$ is reachable from $S=\{a,e\}$ by tight edges, it is included in the set $U_S$.
    The same applies symmetrically for the sets $U_d$ (blue) and $U_f$ (orange) and the vertex $c$.
    The colors on the edges indicate the contributions of the sets $S\subseteq R$ to the tightness of the edges. 
    An edge is tight once it is fully colored.
    }
\end{figure}
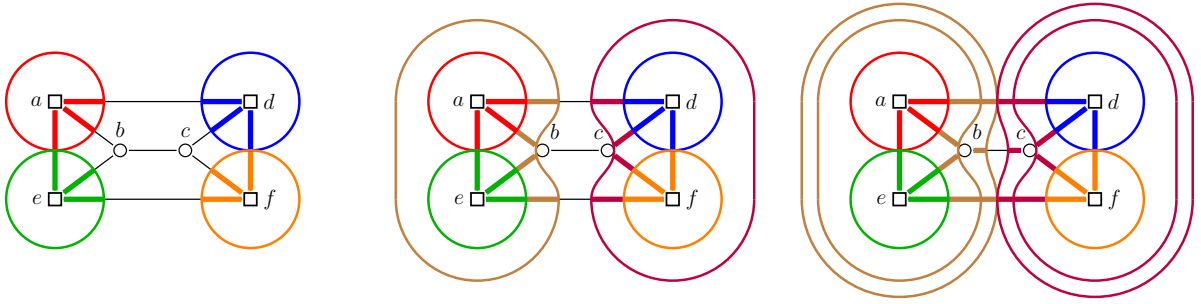

In a bidirected graph, however, it is not obvious at all when two sets should be merged into a single one.
The decision of when the sets ``growing around the terminals'' are merged is captured by the concept of a \emph{merge plan}.
For every time $t$ during the dual growth algorithm, a merge plan has a partition $\cS^t$ of the terminal set $R$, where the sets growing around terminals are already merged at time $t$ if and only if the terminals are part of the same part of the partition $\cS^t$.

\begin{definition}[Merge Plan]\label{def: merge plan}
    A \emph{merge plan} is a family $\cM=\lr{\cS^t}_{t\geq0}$ of partitions of $R$ such that
    for every two terminals $x, y\in R$ there exists a time $\merge{\cM}{x, y} \geq 0$ such that $x$ and $y$ are in different parts of the partition $\cS^t$ if and only if $0\leq t\leq\merge{\cM}{x, y}$.
    We call $\merge{\cM}{x, y}$ the \emph{merge time} of $x$ and $y$.
\end{definition}
 Note that the definition of a merge plan implies that for any times $0 \leq t_1<t_2$, the partition $\cS^{t_1}$ is a refinement of $\cS^{t_2}$.
 In particular, the partition changes only at a finite number of times.
 Moreover, there is some time $t \geq 0$ such that $|\cS^t| = 1 $, that is $\cS^t = \{R\}$.

The dual growth algorithm with a given merge plan is described in \Cref{algo:dual_growth_with_merge_plan}.
We provide a generic version that could be applied to both directed and undirected graphs. 
To obtain a dual solution to \ref{eq:bcr}, we apply the algorithm to the bidirected graph.
To obtain a dual solution to the natural undirected LP, one can apply it to the undirected graph.

\begin{algorithm}[ht]
\caption{Moat Growing with Merge Plan \label{algo:dual_growth_with_merge_plan}}
\KwIn{a (directed or undirected) graph $G=(V,E)$ with edge costs $c\colon E \to \mathbb{R}_{\geq 0}$,\newline a terminal set $R\subseteq V$, and a merge plan $\cM = \lr{\cS^t}_{t\geq0}$}
\KwOut{a solution $y$ to the dual linear program}
\vspace*{2mm}
Initialize $t= 0$ and $y_U=0$ for all $U\subseteq V$.\;
\While{ $\lrv{\cS^t}>1$}{
For $S\in \cS^t$, define
$U_S \ =\ \{v\in V : v\text{ is reachable from the set $S$ by a path of tight edges}\}$.\;
Increase the time $t$ and the dual variables $y_{U_S}$ for $S\in \cS^t$ at unit rate until
\newline the partition $\cS^t$ changes or a new edge becomes tight.
}
\Return{y}
\end{algorithm}

We will choose the merge plan $\cM$ so that we ensure the following.
Whenever we grow a dual variable corresponding to a vertex set $U_S$, then $U_S$ contains the root $r$ only if $r\in S$.
This guarantees that all dual variables we grow, except for one dual variable at a time, contribute to the objective value of \eqref{eq:dual_bcr}.
Then the value of the dual solution resulting from \Cref{algo:dual_growth_with_merge_plan} is equal to the \emph{value of the merge plan} $\cM$, which we define next.

\begin{definition}
    The value of a merge plan $\cM=\lr{\cS^t}_{t\geq0}$ is defined as
    $$
        \val{\cM}=\int_0^{\infty}\lrv{\cS^t}-1\,dt\,.
    $$
\end{definition}
Note that the value of a merge plan is finite because for every merge plan $\cM=\lr{\cS^t}_{t\geq0}$, there is some time $t^*$ such that $|\cS^t|=1$ for all $t> t^*$.

Both, the standard moat growing algorithm in undirected graphs and the construction of dual solutions from \cite{BCRlessthan2}, correspond to running \Cref{algo:dual_growth_with_merge_plan} with a particular choice of the merge plan $\cM$.
The dual growth in the classical primal-dual algorithm in undirected graphs corresponds to \Cref{algo:dual_growth_with_merge_plan} with the \emph{canonical merge plan}.

\begin{definition}[canonical merge plan]\label{def: canonical merge plan}
The \emph{canonical merge plan} for an instance $(G,c,R)$ of the Steiner Tree problem is the merge plan $\cM$ with largest value that satisfies 
\begin{align}\label{eq:def canonical merge plan}
    \merge{\cM}{x,y} \ \leq\ \frac{1}{2}\, \dist{x,y} \quad \forall x,y\in R,
\end{align}
where $\dist{x,y}$ is the length of a shortest $x$-$y$ path in $(G,c)$.
\end{definition}

The canonical merge plan has value $\frac{1}{2}\,\tmst$.
For a proof of this fact and a more detailed discussion of the canonical merge plan and its properties, we refer to \Cref{sec:merge_plan}.
The construction of dual solutions in \cite{BCRlessthan2} for MST-optimal instances essentially\footnote{
One minor difference to the construction of \cite{BCRlessthan2} is that their algorithm does not grow any dual variable corresponding to the set $S\in \cS^t$ containing the root. However, this difference is crucial in neither their nor our analysis.
Moreover, \cite{BCRlessthan2} first reduce to a particular class of instances (called $\gamma$-locally MST-optimal instances) and then apply this construction of dual solutions only to these instances.
}
corresponds to \Cref{algo:dual_growth_with_merge_plan} applied to the bidirected graph with a ``scaled'' version of the canonical merge plan: choose a suitable $\delta > 0$ and let $\cM$ be the merge plan with largest value satisfying
\begin{equation}\label{eq:scaled_merge_plan}
  \merge{\cM}{x,y} \ \leq\ \frac{1+\delta}{2}\, \dist{x,y} \quad \forall x,y\in R.  
\end{equation}

\subsection{Choosing the Merge Plan}
\label{sec:outline_merge_plan}

Next, we discuss which properties a good merge plan should satisfy.
On the one hand, we want the merge plan to have a large value. 
On the other hand, we want to ensure that the value of the dual solution resulting from \Cref{algo:dual_growth_with_merge_plan} is equal to the value of our merge plan, which is captured by the definition of a \emph{feasible} merge plan.

\begin{definition}[feasible merge plan]
A merge plan $\cM$ is \emph{feasible for an instance $\cI$} if \Cref{algo:dual_growth_with_merge_plan} applied to the bidirected graph and the merge plan $\cM$ satisfies:
whenever we grow a dual variable corresponding to a set $U_S$, 
the set $U_S$ contains the root $r$ only if $r\in S$.
\end{definition}

\begin{observation}\label{obs:feasible_implies_value}
If $\cM$ is a feasible merge plan for an instance $\cI$, then \Cref{algo:dual_growth_with_merge_plan} yields a solution to \ref{eq:dual_bcr} for $\cI$ whose value is $\val{\cM}$.
\end{observation}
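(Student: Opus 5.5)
The argument has two parts: the output $y$ is feasible for \ref{eq:dual_bcr}, and its objective value equals $\val{\cM}$.

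\emph{Feasibility.} We maintain by induction over the iterations of the while loop that $\sum_{S : e\in\delta^+(S)} y_S \le c(e)$ for every $e\in\vec{E}$.
\begin{itemize}
\item Within one iteration, the sets $U_S$ for $S\in\cS^t$ are fixed.
\item Let $e=(v,w)$ be tight at the start of the iteration. If $e\in\delta^+(U_S)$, then $v\in U_S$ and $w\notin U_S$. But $w$ is reachable from $S$ via $v$ along tight edges, so $w\in U_S$, a contradiction. Hence no growing set has $e$ in its out-cut, and the left-hand side for $e$ stays constant.
\item Edges that are not tight have positive slack. Their left-hand sides grow at rate at most $|\cS^t|$, and the iteration stops as soon as a new edge becomes tight. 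So no constraint is ever exceeded.
\end{itemize}
Nonnegativity is clear, since we only ever increase variables from $0$.

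\emph{Termination.} We need the loop to consist of finitely many iterations, so that the total growth is well defined. The partition changes only finitely often. Between two such changes, each iteration either ends at a partition change or makes at least one new edge tight. Tight edges stay tight, because the left-hand sides never decrease. So between consecutive partition changes there are at most $|\vec{E}|$ iterations. Moreover, the loop stops once $t$ exceeds the last merge time, since from then on $|\cS^t|=1$.

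\emph{Value.} The dual objective counts $y_U$ only for sets $U\subseteq V\setminus\{r\}$ with $U\cap R\ne\emptyset$.
\begin{itemize}
\item Each grown set $U_S$ contains $S\neq\emptyset$ and $S\subseteq R$, so it meets $R$.
\item By feasibility of $\cM$, $r\in U_S$ only when $r\in S$. Exactly one part of $\cS^t$ contains $r$, so at every time exactly $|\cS^t|-1$ of the grown sets count in the objective.
\item The same vertex set may be grown at several times or for different $S$. This is harmless, because the objective is linear in $y$: it increases at rate $|\cS^t|-1$ at time $t$ whenever a counting set is grown, regardless of which variable is increased. Distinct parts $S,S'$ giving the same $U_S=U_{S'}$ at the same moment simply add their rates to the same variable.
\end{itemize}
Integrating the rate over $t$ up to termination gives
\[
\int_0^{t_{\text{end}}}\lr{|\cS^t|-1}\,dt \;=\; \val{\cM},
\]
since the integrand vanishes after $t_{\text{end}}$.

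The only subtle point is that a dual variable is never grown for a set whose out-cut contains an already tight edge, and this is exactly the closure property of the reachability sets $U_S$ used above.
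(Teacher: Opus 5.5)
Your proof is correct and matches the paper's reasoning, which is given only informally in the paragraph before the definition of $\val{\cM}$: by feasibility of $\cM$, at each time exactly the $|\cS^t|-1$ grown sets not containing $r$ count in the objective, so integrating yields $\val{\cM}$. Your explicit checks of dual feasibility (via the closure of $U_S$ under tight edges) and of finite termination fill in details the paper leaves implicit in the design of the algorithm.
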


For MST-optimal instances, we will work with a scaled version of the canonical merge plan.
More precisely, we choose $\cM$ to be the merge plan with largest value that satisfies
\begin{equation}\label{eq:merge_plan_mst_optimal}
\merge{\cM}{x,y} \ \leq\ \frac{7}{12}\, \dist{x,y} \quad \forall x,y\in R.
\end{equation}
This corresponds to the construction from \cite{BCRlessthan2} with $\delta=\frac{1}{6}$ and leads to a dual solution to \ref{eq:bcr} with a value of $\val{\cM} = \frac{7}{12}\,\tmst = \frac{7}{12} \, \opt$.

Naturally, if the cost of a cheapest Steiner tree is much cheaper than $\tmst$---in general, $\opt$ could be as small as $\frac{1}{2}\, \tmst$---then we cannot expect to get dual solutions of value $\frac{7}{12} \, \tmst$.
To address this, \cite{BCRlessthan2} provide a reduction to a certain class of instances.
Then they prove that for some sufficiently small $\delta > 0$ the merge plan with largest value satisfying \eqref{eq:scaled_merge_plan}, is a feasible merge plan for this restricted class of instances.

To explain what this class of instances is, we need the notion of \emph{components}.
\begin{definition}[components, full components]
   For a set $X\subseteq V$, we say that $K\subseteq E$ is a \emph{component} connecting $X$, if all vertices of $X$ belong to the same connected component of $(V,K)$.
   We say that $K\subseteq E$ is a \emph{full component} if it is the edge set of a tree whose leaves are terminals and whose other vertices are Steiner vertices.
\end{definition}
Observe that whenever there exists a set $X\subseteq R$ and a component $K$ connecting $X$, then we obtain a Steiner tree by taking the union of the component $K$ and a terminal MST in the graph $G/X$ arising from contraction of $X$ (and possibly discarding some edges).
This Steiner tree has cost (at most) $c(K) + \tmst_{G/X}$.
If 
\[
c(K) \ <\ \dropG{X} \ \coloneqq\ \tmst_G - \tmst_{G/X},
\]
then this Steiner tree is cheaper than a terminal MST in the original graph $G$.
In this case we say that $K$ is an \emph{improving component}.
If $c(K) < (1-\gamma) \cdot \dropG{X}$, we say that the component $K$ is $\gamma$-improving. 
An instance is MST-optimal if and only if no improving component exists.

\citeauthor{BCRlessthan2} \cite{BCRlessthan2} choose a sufficiently small number $\gamma > 0$ and show that the scaled version of the canonical merge plan is feasible for every instance that does not have any $\gamma$-improving component.
Moreover, they provide a reduction to instances without $\gamma$-improving component. 
To this end they repeatedly contract $\gamma$-improving components, as long as they exist. 
They then observe that either only a small part of the instance gets contracted---in which case it suffices to prove a good lower bound on $\bcr$ in the contracted instance---or $\opt$ must be significantly smaller than $\tmst$, which immediately implies a better upper bound on the integrality gap than $2$ because $\bcr \geq \frac{1}{2} \tmst$.

We strengthen the analysis from \cite{BCRlessthan2} in two ways.
First, we provide a significantly improved analysis for instances on which no $\gamma$-improving component exists. 
That is, we prove that for such instances there is a feasible merge plan of value $\frac{7-5\gamma}{12} \cdot \tmst$.
Moreover, we extend the analysis beyond the case where no $\gamma$-improving component exists.

Next, we explain the condition that we prove to be sufficient for a merge plan being feasible.
To see that our new condition is indeed less restrictive than the assumption from \cite{BCRlessthan2}, it will be useful to express their assumption in a different way.
Namely, we will express $\dropG{X}$ in terms of the canonical merge plan.
To this end we introduce the \emph{local value} of a merge plan. 
We remark that for $X=R$, the local value of $\cM$ for $X$ is identical to the value of $\cM$.
\begin{definition}[local value of a merge plan]\label{def: local value of merge plan}
    Given a merge plan $\cM=\lr{\cS^t}_{t\geq0}$ and a\\ nonempty set of terminals $X\subseteq R$, we define the local value of $\cM$ for $X$ as
    $$
        \val{\cM, X}\coloneqq\int_0^{\infty}\lrv{\lrb{S\in\cS^t\,:\,S\cap X\neq\emptyset}}-1\ dt\,.
    $$
\end{definition}
 If $\cM$ is the canonical merge plan, then we have $\dropG{X}=2\cdot\val{\cM,X}$ for all nonempty sets $X\subseteq R$, as we will show in \Cref{sec:merge_plan} (\Cref{lem: drop in terms of canonical merge plan}).
Hence, if $\cM_{\delta}$ is the $(1+\delta)$-scaled version of the canonical merge plan, i.e., $\cM_{\delta}$ is the merge plan with largest value satisfying \eqref{eq:scaled_merge_plan}, we have
\begin{equation}\label{eq:equaivalent_drop_outline}
\dropG{X}=\frac{2}{1+\delta}\cdot \val{\cM_{\delta},X}
\end{equation}
for all nonempty sets $X\subseteq R$.
We write $\cost(X)$ to denote the cost of a cheapest component $K$ connecting a set $X\subseteq R$.
Then a $\gamma$-improving component $K$ connecting a set $X\subseteq R$ exists if and only if
\[
\cost(X) \ <\ (1-\gamma) \cdot \frac{2}{1+\delta}\cdot \val{\cM_{\delta},X}.
\]
When choosing $\delta = \frac{1}{6} - \frac56\gamma$ (which is different from the choice of $\delta$ in \cite{BCRlessthan2}), this leads to the definition of \emph{$\gamma$-cheap sets}.

\begin{definition}[$\gamma$-cheap]\label{def: gamma cheap}
    Given $\gamma\in\lre{0,1}$ and a merge plan $\cM$, we call a terminal set $X\subseteq R$ $\gamma$-cheap with respect to $\cM$, if 
    $$
        \cost(X) <\lr{1-\gamma}\cdot \frac{12}{7-5\gamma} \cdot \val{\cM, X} .
    $$
    Otherwise we say that $X$ is $\gamma$-expensive.
\end{definition}

We will essentially prove that if for some $\gamma \in [0,\frac15]$ and a merge plan $\cM$, no $\gamma$-cheap set $X\subseteq R$ exists, then the merge plan $\cM_{\delta}$ is feasible (for $\delta = \frac{1}{6} - \frac56\gamma$).
This is already a strengthening of~\cite{BCRlessthan2}, where the key difference is that our analysis allows us to choose the value of $\delta$ larger than in \cite{BCRlessthan2}.

Nevertheless, to obtain the claimed upper bound of $\finalratio$ on the integrality gap of \ref{eq:bcr}, we prove an even stronger statement.
The key insight here is that we can replace the assumption that no $\gamma$-cheap set exists by a less restrictive condition: whenever there is a $\gamma$-cheap set $X\subseteq R$, there must be two elements of $X$ that are merged early.

\begin{definition}[$\gamma$-good]\label{def: gamma good}
Let $\gamma \in [0,\frac15]$.
    We say that a merge plan $\cM$ is $\gamma$-good, if for all $x,y\in R$
    \begin{align}
        \merge{\cM}{x, y}\leq\frac{7-5\gamma}{12}\,\dist{x, y},\label{ieq: general upper bound for gamma good}
    \end{align}
    and for every terminal set $X\subseteq R$ that is $\gamma$-cheap with respect to $\cM$, there are $x,y\in X$ with $x\neq y$ satisfying
    \begin{align}
        \merge{\cM}{x, y}\leq\frac{7-5\gamma}{18-14\gamma}\,\dist{x, y}\,.\label{ieq: stronger upper bound for gamma good}
    \end{align}
\end{definition}

Up to some minor technical details, which we discuss in \Cref{sec: assumptions}, we prove that any $\gamma$-good merge plan is feasible. For an outline of this proof, we refer to \Cref{sec:outline_analysis}. 
By \Cref{obs:feasible_implies_value}, this immediately implies that any $\gamma$-good merge plan yields lower bounds on the value of \eqref{eq:dual_bcr}, and hence on the value of \ref{eq:bcr}:

\begin{theorem}\label{cor: gamma good merge plan gives large dual}
    Let $\gamma\in [0,\frac15]$ and let $\cM$ be a $\gamma$-good merge plan. Then
    $$
        \bcr\geq\val{\cM}\,.
    $$
\end{theorem}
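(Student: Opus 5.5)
By \Cref{obs:feasible_implies_value} and weak LP duality, it suffices to show that a $\gamma$-good merge plan $\cM$ is feasible. Concretely: when \Cref{algo:dual_growth_with_merge_plan} runs on the bidirected graph with $\cM$, no grown set $U_S$ with $r\notin S$ ever contains the root. The theorem is stated ``up to minor technical details'', so the first step is a reduction in the spirit of \Cref{sec: assumptions}. We take the metric closure, subdivide edges so that tight-edge reachability changes continuously, and perturb so that merge times and tightness events are distinct. We also choose the root $r$ conveniently; any root is allowed, since the value of \ref{eq:bcr} does not depend on it. One may also need to argue that the dual constructed is feasible regardless, and only the objective accounting depends on feasibility of $\cM$.

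The core argument is by contradiction. Suppose at some time $t$ a set $S\in\cS^t$ with $r\notin S$ reaches $r$ via a tight path $P$. We trace this path backwards. Each edge $(u,w)$ of $P$ became tight because the sets $U_{S'}$ containing $u$ but not $w$ were grown for total time $c(u,w)$. This yields a sequence of terminals $S=S_0$, $S_1,\dots,S_k\ni r$ such that each $S_{i-1}$ ``reached into'' the region of $S_i$. From these witnesses we build a component $K$: take the union of the tight paths from each relevant terminal to the points where the moats meet. This connects a terminal set $X$ containing representatives of $S_0,\dots,S_k$. Its cost $c(K)$ is at most the total dual charged along those paths. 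The key inequality to establish is
\[
c(K)\ <\ (1-\gamma)\frac{12}{7-5\gamma}\val{\cM,X}.
\]
This makes $X$ $\gamma$-cheap. To get it, we charge each unit of dual growth on $K$ to a term $|\{S\in\cS^{t'}:S\cap X\ne\emptyset\}|-1$ in the local value. The point is that $S_0$ and $r$'s part are unmerged, so throughout growth at least two parts of $X$ are active. Edges of $K$ receive growth from at most (roughly) all active moats crossing them, and the factor $\frac{12}{7-5\gamma}$ has to absorb the overcounting from bidirectedness, meaning moats growing from both sides of an edge.

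Once $X$ is $\gamma$-cheap, $\gamma$-goodness supplies $x\ne y\in X$ with $\merge{\cM}{x,y}\le\frac{7-5\gamma}{18-14\gamma}\dist{x,y}$. I would then show that this early merge contradicts the structure of the witness chain. Either it merges two $S_i$ before the moment the tight path was used, contradicting the fact that they were in different parts. Or it lets us shortcut $K$, producing a strictly smaller counterexample. This suggests an induction on the number of parts in the chain, or on the time $t$ of the first violation. Inequality \eqref{ieq: general upper bound for gamma good} is used to bound how far a single moat can travel before merging, namely $\frac{7-5\gamma}{12}\dist{}$. This is what keeps each $S_i$'s moat from reaching other terminals unmerged and ensures paths stay short.

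The main obstacle is the charging inequality: converting a tight directed path into a component whose cost is bounded by $\frac{12}{7-5\gamma}(1-\gamma)$ times the local value. Bidirected growth lets moats from different sides partially cover the same edge in opposite directions, so a naive charge loses a factor of $2$. My first attempt would be to pick, for each edge of $K$, only the direction actually used on $P$. I would then use the constants $\frac{7-5\gamma}{12}$ and $\frac{7-5\gamma}{18-14\gamma}$ to show that the opposite-direction contributions are at most a $\frac{5-7\gamma}{12}$-type fraction, via a careful case analysis of merge times along the chain. The exact constants come from optimizing this balance, so I expect the casework there to be delicate.
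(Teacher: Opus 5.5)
\textbf{The main gap: the charging inequality cannot be proved as proposed.}
Reducing to feasibility via \Cref{obs:feasible_implies_value} is the right frame. It matches the paper's outer layer: scale merge times by $1-\epsilon$, pass to a well-subdivided instance, invoke \Cref{thm: gamma good merge plan feasible}, and let $\epsilon\to 0$. The gap is in the core.

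The charging inequality you flag as the main obstacle is essentially the whole theorem, and your mechanism for it cannot work.
\begin{itemize}
\item For a path $P$ of tight directed edges, $c(P)=\sum_U y_U\,|\delta^+(U)\cap P|$. In the bidirected setting a single moat's out-cut can meet a tight path arbitrarily many times. So tight paths can be arbitrarily long compared with the elapsed time, even when only three sets interact (right part of \Cref{fig:k_interaction_example}).
\item This is not overcounting between the two orientations of an edge. No accounting of ``active moats crossing an edge'' against $|\{S\in\cS^{t'}:S\cap X\neq\emptyset\}|-1$ absorbs it.
\item Building $K$ around a single traced-back tight path is also problematic. Such a path need not contain any vertex that the interacting sets reach on their own (\Cref{fig:bifurcation_example}).
\end{itemize}

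The ingredients you are missing are structural. The paper restricts to \emph{safe} paths, on which at most two disjoint sets contribute to an edge (\Cref{def: safe path}). It proves that beyond the meeting point such a path grows at amortized speed at most $3$ (\Cref{lem: length from m_p to v}). It maintains \Cref{inv: the invariant} by induction over time, choosing bifurcation vertices recursively and using \Cref{lem: connect bound,lem: dist bound} to show that two sets cannot meet too early. Only then does it show that every active path is safe (\Cref{lem: all active paths are safe}), by combining the invariants of three sets into a component that would be too cheap.

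\textbf{A second gap: the use of $\gamma$-goodness cannot close.}
Knowing that some $x\neq y\in X$ satisfy $\merge{}{x,y}\le\frac{7-5\gamma}{18-14\gamma}\dist{x,y}$ contradicts nothing by itself. You also need an \emph{upper} bound on $\dist{x,y}$ in terms of a time at which $x$ and $y$ are known to be unmerged. Your two alternatives, ``two $S_i$ merged before the path was used'' and ``shortcut $K$ to a smaller counterexample'', are asserted, not derived.

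The paper builds this bound into the invariant: $t$-closeness (\Cref{def: t close}) includes $\dist{x,v}\le\frac{9-7\gamma}{7-5\gamma}\,t$. With it, every set the analysis assembles is shown to be late-merged, hence strictly $\gamma$-expensive (\Cref{obs: connecting late-merged set is expensive}). The contradiction then comes from an upper bound on its connection cost. At the end, the same distance bound with $v=r$ directly contradicts $\merge{}{s,r}<\frac{7-5\gamma}{12}\dist{s,r}$.

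Finally, perturbing the instance to separate events is not needed and not obviously compatible with $\gamma$-goodness. The paper instead obtains strictness by scaling (\Cref{obs: gamma good to strictly gamma good}). It also uses a specific two-stage subdivision (\Cref{lem:achieving_ivanriance_under_subdivision}), which is necessary because \Cref{algo:dual_growth_with_merge_plan} is not invariant under subdivision in general.
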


For MST-optimal instances, the merge plan with maximum value satisfying \eqref{eq:merge_plan_mst_optimal}, is a $0$-good merge plan with value $\frac{7}{12}\,\tmst$ (see \Cref{lem: merge plan mst optimal}).
Thus, \Cref{cor: gamma good merge plan gives large dual} implies \Cref{thm:gap_mst_optimal}.

For general instances, we show in \Cref{sec:existence_merge_plan} that $\gamma$-good merge plans with large value exist:

\begin{restatable}{theorem}{thmExistanceOfMergePlan}\label{thm: there is a gamma good merge plan with high value}
    For every Steiner tree instance and every $\gamma \in (0,\frac15]$, there exists a $\gamma$-good merge plan $\cM$ with
    $$
        \val{\cM}\ \geq\ \frac{7-5\gamma}{12}\lr{\tmst - \frac{\tmst - \opt}{\gamma} \cdot \frac{3-7\gamma}{9-7\gamma}}.
    $$
\end{restatable}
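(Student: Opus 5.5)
We construct a Steiner tree and a merge plan at the same time by iteratively contracting $\gamma$-cheap sets, as in the reduction of \cite{BCRlessthan2}, but we encode the contractions in the merge plan instead of in the instance. Set $\alpha \coloneqq \frac{7-5\gamma}{12}$ and $\beta \coloneqq \frac{7-5\gamma}{18-14\gamma}$, and note $\beta<\alpha$. We maintain a metric $d$ on $R$, initially $d=\dist{\cdot}$, and let $\cM_d$ denote the merge plan with largest value satisfying $\merge{\cM_d}{x,y}\le \alpha\, d(x,y)$. This is an $\alpha$-scaled canonical merge plan, so $\val{\cM_d}=\alpha\,\tmst_d$. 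Since we only ever decrease $d$ below $\dist{}$, condition \eqref{ieq: general upper bound for gamma good} holds automatically.

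\textbf{Loop.} While there is a set $X$ that is $\gamma$-cheap with respect to the current plan and no pair in $X$ already satisfies \eqref{ieq: stronger upper bound for gamma good}, we "partially contract" $X$. Concretely, we lower $d$ on the pairs of a spanning tree of $X$ so that these pairs merge by time $\beta\,\dist{x,y}$, i.e.\ we set $d(x,y)\coloneqq \frac{\beta}{\alpha}\dist{x,y}$ on those pairs and take the induced shortest-path metric. We also record a cheapest component $K_X$ connecting $X$. Each iteration creates a new early-merging pair that was not one before, so the loop terminates. At termination every $\gamma$-cheap set contains a pair satisfying \eqref{ieq: stronger upper bound for gamma good}, so the final plan is $\gamma$-good by \Cref{def: gamma good}. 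Equivalently, and perhaps more cleanly, one can run a relative-greedy style process: the footnote hints that the relative greedy algorithm of \cite{ImprovedRelGreedy} is exactly the primal side.

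\textbf{Accounting.} Let $\Delta_i$ be the drop in $\tmst_d$ caused by the $i$-th iteration. The merge plan then loses at most $\alpha\Delta_i$ of value. By \Cref{lem: drop in terms of canonical merge plan}, $\val{\cM,X}$ is proportional to the drop, and combining this with $\gamma$-cheapness gives $\cost(X)<(1-\gamma)\,c_\gamma\,\Delta_i$ for an explicit constant $c_\gamma$. Only the factor $(1-\frac{\beta}{\alpha})$ of the original distances actually disappears, since the pairs are only shortened to a $\frac{\beta}{\alpha}$ fraction. The Steiner tree built from $\bigcup K_X$ together with the final terminal MST therefore shows that each unit of lost merge-plan value buys a strict improvement over $\tmst$.

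\textbf{Final bound.} We then compare the constructed tree with $\opt$ through a relative-greedy argument: as long as $\tmst_d$ is substantially above $\opt$, a cheap set exists, because contracting $\opt$'s full components yields an improving set. This gives $\sum_i \Delta_i \le \frac{\tmst-\opt}{\gamma}\cdot\frac{3-7\gamma}{9-7\gamma}\cdot\frac{1}{\alpha}$ (after normalization). Plugging this into $\val{\cM}\ge \alpha(\tmst-\sum_i\Delta_i)$ gives the stated bound; note that $\frac{3-7\gamma}{9-7\gamma}=1-\frac{\beta}{\alpha}\cdot\frac{\ldots}{\ldots}$ arises from the factor $1-\frac{\beta}{\alpha}$.

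\textbf{Main obstacle.} The hardest step is this amortized bound on the total drop: partial contractions interact, and the bound must relate the greedy choices to $\opt$. I would first try the standard relative greedy potential, charging to $\opt$'s full components in the partially contracted metric, and check that the constant $\frac{3-7\gamma}{9-7\gamma}$ falls out of the $\beta/\alpha$ ratio.
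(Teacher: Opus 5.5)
Your loop does produce a $\gamma$-good plan. Since $d$ only decreases, early pairs stay early, so the loop terminates, and its stopping rule is exactly \Cref{def: gamma good}. The gap is the value bound, which is the real content of the theorem and which you assert rather than prove.

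\textbf{The justification you sketch points the wrong way.} A relative-greedy statement like ``while $\tmst_d$ is well above $\opt$, a cheap set exists'' bounds the cost of what the procedure builds, or forces the loop to continue. It gives no \emph{upper} bound on the total drop. What you need is that every drop you pay for in merge-plan value is certified by a saving against $\tmst$, so that $\gamma$ times the total drop is at most $\tmst-\opt$.

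\textbf{Your construction does not provide that certificate.} Being $\gamma$-cheap for $\cM_d$ means $\cost(X)<(1-\gamma)\,\mathrm{drop}_d(X)$ for the \emph{partially} contracted metric $d$. The only Steiner tree you have, $\bigcup_i K_{X_i}$ plus a terminal MST of $G/X_1/\cdots/X_k$, saves $\sum_i\bigl(\mathrm{drop}_{G/X_1/\cdots/X_{i-1}}(X_i)-\cost(X_i)\bigr)$. But $d$ dominates the terminal metric of $G/X_1/\cdots/X_{i-1}$ pointwise, so bottleneck distances, and hence drops, under $d$ can be strictly larger. A set your loop selects therefore need not be improving once its predecessors are really contracted. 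Nothing in your argument prevents its saving from being nonpositive even though you charged $\alpha\Delta_i$ of value to it.

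Two smaller problems:
\begin{itemize}
\item The per-step claim $\cost(X)<(1-\gamma)c_\gamma\Delta_i$ needs a \emph{lower} bound on $\Delta_i$. That fails when the chosen tree pairs already have $d$ just above $\tfrac{\beta}{\alpha}\,\mathrm{dist}$.
\item With your factor $1/\alpha$, plugging into $\alpha(\tmst-\sum_i\Delta_i)$ does not give the stated bound.
\end{itemize}

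The value side of your scheme is actually easy. Pointwise $d_{\mathrm{fin}}\ge\tfrac{\beta}{\alpha}\,\mathrm{dist}+(1-\tfrac{\beta}{\alpha})\,d^{\mathrm{full}}$, and MSTs are superadditive, so $\alpha\,\tmst_{d_{\mathrm{fin}}}\ge\beta\,\tmst+(\alpha-\beta)\,\tmst_{G/X_1/\cdots/X_k}$. What is missing is $\gamma\bigl(\tmst-\tmst_{G/X_1/\cdots/X_k}\bigr)\le\tmst-\opt$. That would require each $X_i$ to be $\gamma$-improving \emph{after} full contraction of its predecessors, which your selection rule does not guarantee.

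\textbf{How the paper avoids this.} It separates the two roles.
\begin{itemize}
\item It first contracts for real: it contracts $\gamma$-improving full components (relative greedy) until the resulting graph $G'$ has none. Telescoping gives $\opt\le\tmst_{G'}+\sum_i c(K_i)\le\tmst-\gamma(\tmst-\tmst_{G'})$; the paper packages this through $\rho_{\cI}$ and its monotonicity.
\item Only then does it define the plan: the maximum-value plan upper bounded by $u_\gamma(x,y)=\max\bigl\{\tfrac{7-5\gamma}{6}\merge{G'}{x,y},\ \tfrac{7-5\gamma}{9-7\gamma}\merge{G}{x,y}\bigr\}$.
\item Taking a pairwise maximum of \emph{merge times} makes $\gamma$-goodness automatic. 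Because $G'$ has no $\gamma$-improving component, a $\gamma$-cheap $X$ has $\val{\cM,X}>\tfrac{7-5\gamma}{12}\dropGp{X}$. So some pair in $X$ exceeds the first term of the maximum, is therefore bounded by the second, and that is at most $\tfrac{7-5\gamma}{18-14\gamma}\dist{x,y}$.
\item The price is the value bound. $\tmst_{u_\gamma}\ge(\alpha-\beta)\tmst_{G'}+\beta\,\tmst$ no longer follows pointwise, since a maximum lies below that combination. This is why the exchange argument of \Cref{lem: existence of special MST wrt c bar} is needed.
\item Finally, $(\alpha-\beta)/\alpha=1-\beta/\alpha=\frac{3-7\gamma}{9-7\gamma}$, which yields the stated bound.
\end{itemize}
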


Observe that for MST-optimal instances we have $\tmst -\opt =0$ and thus \Cref{thm: there is a gamma good merge plan with high value} for arbitrarily small $\gamma >0$ yields a merge plan with value arbitrarily close to $\frac{7}{12} \, \tmst = \frac{7}{12} \, \opt$. 
In fact, in this case we can simply choose $\cM$ to be a suitably scaled version of the canonical merge plan and use that the value of the canonical merge plan is $\frac{1}{2}\,\tmst$ (\Cref{cor: canonical merge plan}).

For general instances, we observe that for $\gamma < \frac{1}{5}$, we have $\frac{7-5\gamma}{12} > \frac{1}{2}$. 
If $\tmst - \opt$ is sufficiently small, we then obtain a stronger lower bound on $\bcr$ than $\frac{1}{2}\, \tmst \geq \frac{1}{2} \, \opt$.
Otherwise, $\tmst$ is much larger than $\opt$ and thus $\bcr \geq \frac{1}{2}\, \tmst$ already implies a better upper bound on the integrality gap of \ref{eq:bcr} than $2$.
Choosing the value of $\gamma$ optimally, leads to an improved upper bound on the integrality gap of \ref{eq:bcr} compared to \cite{BCRlessthan2}.
To obtain the claimed upper bound of $\finalratio$ on the integrality gap, we use a stronger version of \Cref{thm: there is a gamma good merge plan with high value} and choose the value of $\gamma$ depending on the Steiner tree instance.
For details we refer to \Cref{sec:existence_merge_plan}.

\subsection{Analyzing the Dual Growth Algorithm}
\label{sec:outline_analysis}

In this section, we give an informal outline of our analysis of the dual growth procedure.
We will first focus on the MST-optimal case with a particular merge plan, and only at the end of this section we will comment briefly on the extension to general instances and general $\gamma$-good merge plans.

Consider an MST-optimal instance, let $\gamma \coloneqq 0$, and let $\cM$ be the merge plan with maximum value satisfying \eqref{eq:merge_plan_mst_optimal}.
The value of this merge plan is $\frac{7}{6}$ times the value of the canonical merge plan, and thus $\frac{7}{12} \, \tmst$ (see \Cref{cor: canonical merge plan}).
Moreover, because the merge plan $\cM$ corresponds to the canonical merge plan scaled by a factor $1+\delta = \frac{7}{6}$, by \eqref{eq:equaivalent_drop_outline} we have
\[
\dropG{X} \ =\ \frac{12}{7} \cdot \val{\cM,X}
\]
for all nonempty $X\subseteq R$ (see \Cref{lem: drop in terms of canonical merge plan}).
Thus, the MST-optimality of the instance implies that there is no $\gamma$-cheap component for $\gamma = 0$.

Recall that our goal is to show that if a set $S\in \cS^t$ (for some time $t\geq 0$) does not contain the root $r$, then also the set $U_S$ of vertices reachable from $S$ by tight edges does not contain the root.
By the choice of the merge plan, the vertices in $S$ have a large distance from $r$, namely at least $\frac{12}{7}\, t$ (otherwise, the vertices from $S$ would have been already merged with $r$ at time $t$).
A natural strategy, already employed in \cite{BCRlessthan2}, is to prove that the set $U_S$ cannot contain vertices that are too far away from $S$.

\paragraph{Examples: Interactions of at most two Terminals.} Let us start by considering some simple examples.
Consider a set $U_S$ growing around a single terminal, i.e. $S=\{s\}$ for some terminal $s$.
If there is no interaction with sets growing around other terminals, that is, if none of the vertices reachable from $S$ is reachable from other terminals (yet), then we have 
\begin{equation}\label{eq:no_interaction_case}
   U_S \ = \ \{ v\in V : \dist{s,v} \leq t\}. 
\end{equation}
In other words, at time $t$, the set $S$ can only reach vertices at distance at most $t$.
\begin{figure}
    \centering
    \begin{tikzpicture}[scale=0.8]

    \node[terminal, label=below:{$s_2$}] (A) at ( 10,0) {};
    \node[steiner, label=below:{$m$}]  (B) at ( 4,0) {};
    \node[terminal, label=below:{$s_1$}] (C) at ( -2,0) {};
    \node[steiner, label=left:{$v$}]  (D) at ( 4,3) {};

    \draw[diredge] (A) -- (B);
    \draw[diredge] (B) --node[below=6pt]{$\frac{6}{7}t$} (A);
    \draw[diredge] (B) -- (C);
    \draw[diredge] (C) --node[below=6pt]{$\frac{6}{7}t$} (B);
    \draw[diredge] (D) -- (B);

    \draw[edgedirprogress] (A) -- (B);
    \draw[edgedirprogress, green!70!black] (B) -- ($(B)!1/3.5!(A)$);
    \draw[edgedirprogress, green!70!black] (C) -- (B);
    \draw[edgedirprogress] (B) -- ($(B)!1/3.5!(C)$);
    
    \draw[edgedirprogresshalfA] (B) --node[black, right=6pt]{$\frac{2}{7}t$} (D);
    \draw[edgedirprogresshalfB,green!70!black] (B) -- (D);
    
\end{tikzpicture}     \caption{\label{fig:simple_interaction_example}
    An example of (a part of) an instance with two terminals $s_1$ and $s_2$, indicated as squares.
    The green and blue color indicate the contributions of the sets growings around $s_1$ (green) and $s_2$ (blue) to the tightness of the edges. To the edge $(m,v)$ both sets contribute equally.
    The edges between $s_i$ and $m$ have length $\frac{6}{7} \, t$ each, and the two terminals can reach the vertex $m$ at time $\frac{6}{7} \, t$. 
    However, they are merged only at time $t = \frac{7}{12} \cdot \dist{s_1,s_2}$.
    By this time they have reached the vertex $v$, which is at distance $\frac{2}{7} \, t$ from $m$.
    Thus, the edge from $m$ to $v$ effectively gets tight at speed~$2$. 
    }
\end{figure}
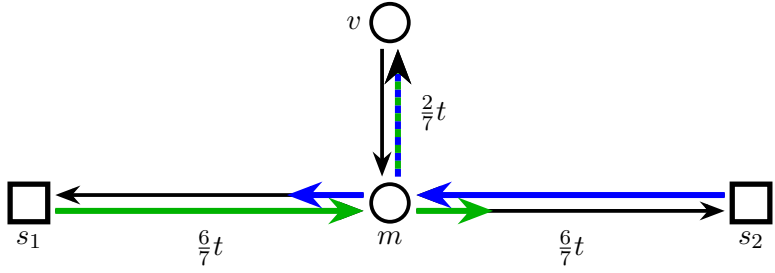
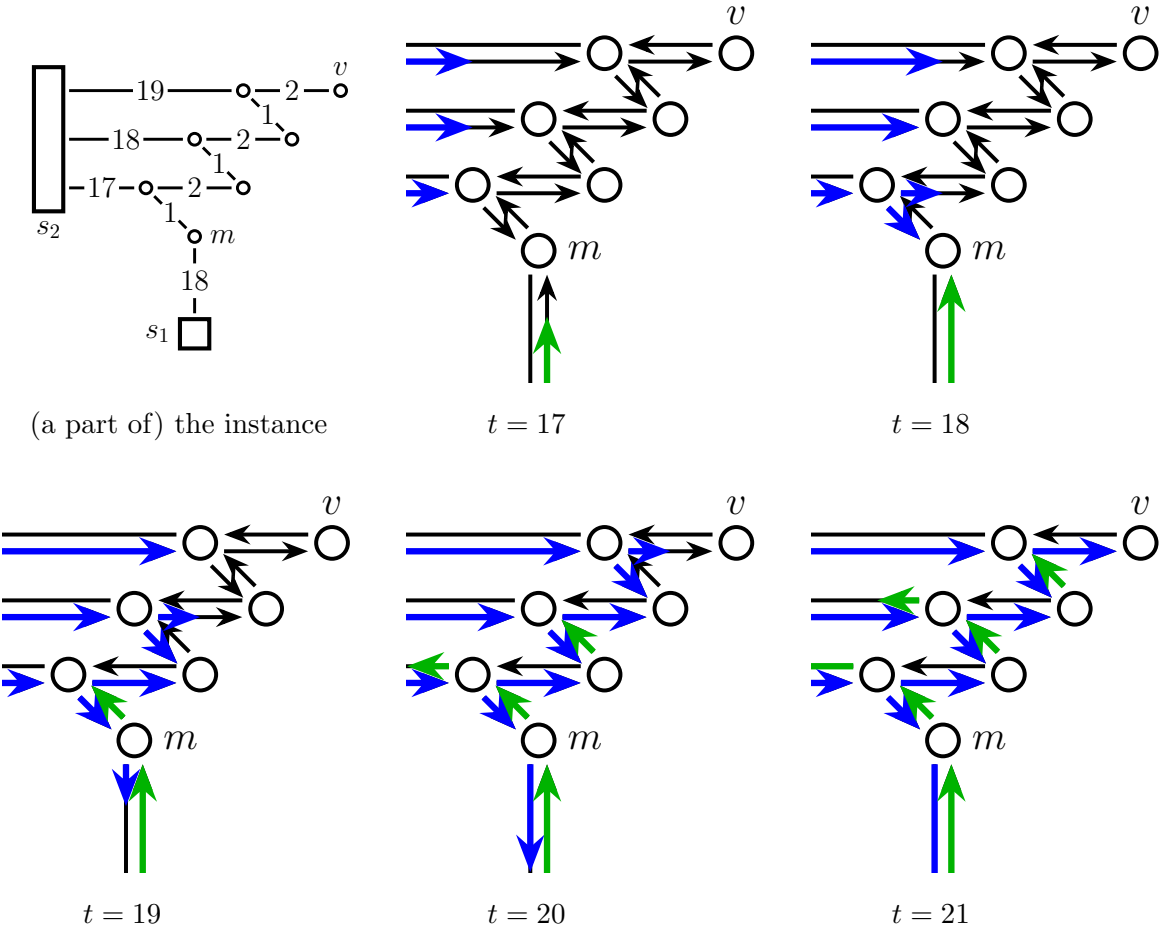
\begin{figure}[p]
    \centering

\newcommand{\drawAllVerticesSmallJumpII}{
\Large 
    \node[smallsteiner] (A1) at ( 0, 0) {};
    \node[smallsteiner] (A2) at ( 1, 1) {};
    \node[smallsteiner] (A3) at ( 2, 2) {};
    \node[smallsteiner, label=right:{$m$}] (B1) at ( 1,-1) {};
    \node[smallsteiner] (B2) at ( 2, 0) {};
    \node[smallsteiner] (B3) at ( 3, 1) {};
    \node[smallsteiner, label=above :{$v$}] (B4) at ( 4, 2) {};
    \node[terminal, transparent] (C1) at (-6, 0) {};
    \node[terminal, transparent] (C2) at (-6, 1) {};
    \node[terminal, transparent] (C3) at (-6, 2) {};
    \node[terminal, transparent] (C4) at (-6, 3) {};
    \node[terminal,minimum height=70pt, label=below:{}] (C) at (-6,1.5) {\LARGE$s_2$};
    \node[terminal, label=left:{\LARGE$s_1$}] (D) at (1,-8) {};
}

\newcommand{\drawAllEdgesSmallJumpII}{
    \draw[diredge] (C1) -- (A1);
    \draw[diredge] (C2) -- (A2);
    \draw[diredge] (C3) -- (A3);
    \draw[diredge] (A1) -- (C1);
    \draw[diredge] (A2) -- (C2);
    \draw[diredge] (A3) -- (C3);
    \draw[diredge] (A1) -- (B1);
    \draw[diredge] (A1) -- (B2);
    \draw[diredge] (A2) -- (B2);
    \draw[diredge] (A2) -- (B3);
    \draw[diredge] (A3) -- (B3);
    \draw[diredge] (A3) -- (B4);
    \draw[diredge] (B1) -- (A1);
    \draw[diredge] (B2) -- (A1);
    \draw[diredge] (B2) -- (A2);
    \draw[diredge] (B3) -- (A2);
    \draw[diredge] (B3) -- (A3);
    \draw[diredge] (B4) -- (A3);
    \draw[diredge] (D) -- (B1);
    \draw[diredge] (B1) -- (D);
}

\newcommand{\drawAllVerticesAndEdgesSmallJumpII}{
    \drawAllVerticesSmallJumpII
    \drawAllEdgesSmallJumpII
}

\newcommand{\WeirdBehaviourSmallJumpsIIA}{
\begin{tikzpicture}[scale=0.83]
\Large 
 \clip (-3,-4) rectangle (4.5,3.5);
    \node[tinysteiner] (A1) at ( 0, 0) {};
    \node[tinysteiner] (A2) at ( 1, 1) {};
    \node[tinysteiner] (A3) at ( 2, 2) {};
    \node[tinysteiner, label=right:{$m$}] (B1) at ( 1,-1) {};
    \node[tinysteiner] (B2) at ( 2, 0) {};
    \node[tinysteiner] (B3) at ( 3, 1) {};
    \node[tinysteiner, label=above:{$v$}] (B4) at ( 4, 2) {};
    \node[terminal, transparent] (C1) at (-2, 0) {};
    \node[terminal, transparent] (C2) at (-2, 1) {};
    \node[terminal, transparent] (C3) at (-2, 2) {};
    \node[terminal, transparent] (C4) at (-2, 3) {};
    \node[terminal,minimum height=70pt, label=below:{$s_2$}] (C) at (-2,1) {};
    \node[terminal, label=left:{$s_1$}] (D) at (1,-3) {};
    
    \draw[edge] (C1) -- node[edgelabel] {$17$} (A1);
    \draw[edge] (C2) -- node[edgelabel] {$18$} (A2);
    \draw[edge] (C3) -- node[edgelabel] {$19$} (A3);
    \draw[edge] (A1) -- node[edgelabel] {$1$} (B1);
    \draw[edge] (A1) -- node[edgelabel] {$2$} (B2);
    \draw[edge] (A2) -- node[edgelabel] {$1$} (B2);
    \draw[edge] (A2) -- node[edgelabel] {$2$} (B3);
    \draw[edge] (A3) -- node[edgelabel] {$1$} (B3);
    \draw[edge] (A3) -- node[edgelabel] {$2$} (B4);
    \draw[edge] (B1) -- node[edgelabel] {$18$} (D);
\end{tikzpicture}}

\newcommand{\WeirdBehaviourSmallJumpsIIB}{
\begin{tikzpicture}[scale=0.83]
\clip (-1,-3) rectangle (4.5,3.5);
    \drawAllVerticesAndEdgesSmallJumpII

    \draw[edgedirprogress] (C1) -- (A1);
    \draw[edgedirprogress] (C2) -- (A1|-A2);
    \draw[edgedirprogress] (C3) -- (A1|-A3);
    \draw[greendirprogress] (D) -- ($(D)!6/7!(B1)$);
\end{tikzpicture}}

\newcommand{\WeirdBehaviourSmallJumpsIIC}{
\begin{tikzpicture}[scale=0.83]
    \clip (-1,-3) rectangle (4.5,3.5);
    \drawAllVerticesAndEdgesSmallJumpII

    \draw[edgedirprogress] (C1) -- (A1);
    \draw[edgedirprogress] (C2) -- (A2);
    \draw[edgedirprogress] (C3) -- (A2|-A3);
    \draw[edgedirprogress] (A1) -- (B1);
    \draw[edgedirprogress] (A1) -- (B1|-B2);
    \draw[greendirprogress] (D) -- (B1);
\end{tikzpicture}}

\newcommand{\WeirdBehaviourSmallJumpsIID}{
\begin{tikzpicture}[scale=0.83]
    \clip (-1,-3) rectangle (4.5,3.5);
    \drawAllVerticesAndEdgesSmallJumpII

    \draw[edgedirprogress] (C1) -- (A1);
    \draw[edgedirprogress] (C2) -- (A2);
    \draw[edgedirprogress] (C3) -- (A3);
    \draw[edgedirprogress] (A1) -- (B1);
    \draw[edgedirprogress] (A1) -- (B2);
    \draw[edgedirprogress] (A2) -- (B2);
    \draw[edgedirprogress] (A2) -- (B2|-B3);
    \draw[edgedirprogress] (B1) -- ($(B1)!1/7!(D)$);
    \draw[greendirprogress] (D) -- (B1);
    \draw[greendirprogress] (B1) -- (A1);
\end{tikzpicture}}

\newcommand{\WeirdBehaviourSmallJumpsIIE}{
\begin{tikzpicture}[scale=0.83]
    \clip (-1,-3) rectangle (4.5,3.5);
    \drawAllVerticesAndEdgesSmallJumpII

    \draw[edgedirprogress] (C1) -- (A1);
    \draw[edgedirprogress] (C2) -- (A2);
    \draw[edgedirprogress] (C3) -- (A3);
    \draw[edgedirprogress] (A1) -- (B1);
    \draw[edgedirprogress] (A1) -- (B2);
    \draw[edgedirprogress] (A2) -- (B2);
    \draw[edgedirprogress] (A2) -- (B3);
    \draw[edgedirprogress] (A3) -- (B3);
    \draw[edgedirprogress] (A3) -- (B3|-B4);
    \draw[edgedirprogress] (B1) -- ($(B1)!2/7!(D)$);

    \draw[greendirprogress] (D) -- (B1);
    \draw[greendirprogress] (B1) -- (A1);
    \draw[greendirprogress] (B2) -- (A2);
    \draw[greendirprogress] (A1) -- ($(A1)!1/6!(C1)$);
\end{tikzpicture}}

\newcommand{\WeirdBehaviourSmallJumpsIIF}{
\begin{tikzpicture}[scale=0.83]
    \clip (-1,-3) rectangle (4.5,3.5);
    \drawAllVerticesAndEdgesSmallJumpII

    \draw[edgedirprogress] (C1) -- (A1);
    \draw[edgedirprogress] (C2) -- (A2);
    \draw[edgedirprogress] (C3) -- (A3);
    \draw[edgedirprogress] (A1) -- (B1);
    \draw[edgedirprogress] (A1) -- (B2);
    \draw[edgedirprogress] (A2) -- (B2);
    \draw[edgedirprogress] (A2) -- (B3);
    \draw[edgedirprogress] (A3) -- (B3);
    \draw[edgedirprogress] (A3) -- (B4);
    \draw[edgedirprogress] (B1) -- ($(B1)!3/7!(D)$);

    \draw[greendirprogress] (D) -- (B1);
    \draw[greendirprogress] (B1) -- (A1);
    \draw[greendirprogress] (B2) -- (A2);
    \draw[greendirprogress] (B3) -- (A3);
    \draw[greendirprogress] (A1) -- ($(A1)!2/6!(C1)$);
    \draw[greendirprogress] (A2) -- ($(A2)!1/7!(C2)$);
\end{tikzpicture}}

\begin{tabular}{ccc}
  \resizebox{0.3\textwidth}{!}{\WeirdBehaviourSmallJumpsIIA} 
  & \resizebox{0.3\textwidth}{!}{\WeirdBehaviourSmallJumpsIIB} 
  & \resizebox{0.3\textwidth}{!}{\WeirdBehaviourSmallJumpsIIC} 
  \\[2mm]
   (a part of) the instance  & $t=17$\hspace*{15mm} & $t= 18$\hspace*{15mm} \\
   \resizebox{0.3\textwidth}{!}{\WeirdBehaviourSmallJumpsIID} 
  & \resizebox{0.3\textwidth}{!}{\WeirdBehaviourSmallJumpsIIE} 
  & \resizebox{0.3\textwidth}{!}{\WeirdBehaviourSmallJumpsIIF} 
  \\[2mm]
  $t=19$\hspace*{15mm}  & $t=20$\hspace*{15mm} & $t= 21$\hspace*{15mm} 
\end{tabular}
     \caption{\label{fig:zig_zag_example}
    An example of two terminals interacting. The top left picture shows (a part of) an instance with two terminals $s_1$ and $s_2$, where the numbers indicate the cost/length of the edges.
    The other pictures indicate the contributions of the sets growing around $s_1$ (green) and $s_2$ (blue) to the tightness of the edges at different times $t$.
    The terminals $s_1$ and $s_2$ have distance $36$ and have thus a merge time of (at most) $21$.
    Observe that at time $t=18$ the two sets meet at the vertex $m$, which is at distance $18$ from $s_1$ (and from $s_2$).
    At time $t=21$, the vertex $v$ is reachable from $s_1$ by a path of tight edges. 
    The vertex $v$ has distance $27$ from $s_1$ and thus between the meeting time $t=18$ until the merge time $t=21$, the maximum distance of vertices reachable from $s_1$ grows effectively at speed $3$.
    (This maximum distance is not continuous and there are ``jumps'', but on average between the meeting time and the merge time it increases at speed $3$.)
    }
\end{figure}
If two sets growing around two different terminals interact, the situation can get more complicated.
\Cref{fig:simple_interaction_example} shows that the sets can reach vertices at distance $\frac{8}{7}\, t$.
However, this is not a worst-case example. 
\Cref{fig:zig_zag_example} shows that it is possible that a set reaches vertices at distance $\frac{9}{7}\, t$ and this turns out to be a worst-case example.\footnote{This follows from \eqref{ieq: t close dist} in \Cref{def: t close} and \Cref{inv: the invariant} for $\gamma = 0$.
}

We remark that the interaction of two terminals has already implicitly been analyzed tightly in \cite{BCRlessthan2}.
From now on, however, we will follow a different proof strategy than \cite{BCRlessthan2}.

\paragraph{Safe Edges.}
If more than two disjoint sets $S$ contribute to the same edge, then one could generalize the example from \Cref{fig:simple_interaction_example} to a situation, where a set $S$ can reach vertices very far away.
(See the left part of \Cref{fig:k_interaction_example}.) 
Even worse, already the interaction of three sets can lead to a situation where $S$ can reach vertices very far away.
(See the right part of \Cref{fig:k_interaction_example}.)
However, a key observation is that both of the instances in this figure are not MST-optimal.

\begin{figure}[t]
    \centering
    \resizebox{0.47\textwidth}{!}{
    \begin{tikzpicture}[scale=0.5, rotate =-90]
    \node[smallterminal] (A3) at ({4*cos(150)},{4*sin(150)}) {};
    \node[smallterminal] (A2) at ({4*cos(190)},{4*sin(190)}) {};
    \node[smallterminal] (A1) at ({4*cos(230)},{4*sin(230)}) {};
    \node[smallterminal] (A6) at ({4*cos(-50)},{4*sin(-50)}) {};
    \node[smallterminal] (A5) at ({4*cos(-10)},{4*sin(-10)}) {};
    \node[smallterminal] (A4) at ({4*cos( 30)},{4*sin( 30)}) {};
    \node[smallsteiner, label=left:{\Large$m$}]  (B) at ( 0,0) {};
    \node[smallsteiner, label=right:{\Large$v$}]  (D) at ( 0,7) {};
    \node[] at (0,-3) {$\dots$};

    \draw[diredge] (D) -- (B);
    \draw[diredge] ($(B)!11/24!(D)$) -- (D);

    \draw[edgedirprogress,red]        (A1) -- (B);
    \draw[edgedirprogress,orange] (A2) -- (B);
    \draw[edgedirprogress,yellow!90!black] (A3) -- (B);
    \draw[edgedirprogress,green!70!black] (A4) -- (B);
    \draw[edgedirprogress,blue] (A5) -- (B);
    \draw[edgedirprogress,purple]         (A6) -- (B);

    \tikzstyle{rainbow} = [edgedirprogress,dashed,dash pattern=on 3pt off 24pt]
    \tikzstyle{dots} = [line width = 1pt, dash pattern=on 1pt off 2pt, dash phase=2pt, offset = -3, shorten >=3pt]
    \draw[dots] (B) -- ($(B)!1/2!(D)$);
    \draw[rainbow,dash phase =-6pt, red] (B) -- ($(B)!1/2!(D)$);
    \draw[rainbow,dash phase =-3pt, orange] (B) -- ($(B)!1/2!(D)$);
    \draw[rainbow,dash phase = 0pt, yellow!90!black] (B) -- ($(B)!1/2!(D)$);
    \draw[rainbow,dash phase = 3pt, green!70!black] (B) -- ($(B)!1/2!(D)$);
    \draw[rainbow,dash phase = 6pt, blue] (B) -- ($(B)!1/2!(D)$);
    \draw[rainbow,dash phase = 9pt, purple,>={Stealth[black]}] (B) -- ($(B)!1/2!(D)$);

    \tikzstyle{rainbow} = [edgedirprogress,dashed,dash pattern=on 3pt off 21pt]
    \draw[dots] (B) -- (A1);
    \draw[rainbow,dash phase =-3pt, orange] (B) -- (A1);
    \draw[rainbow,dash phase = 0pt, yellow!90!black] (B) -- (A1);
    \draw[rainbow,dash phase = 3pt, green!70!black] (B) -- (A1);
    \draw[rainbow,dash phase = 6pt, blue] (B) -- (A1);
    \draw[rainbow,dash phase = 9pt, purple,>={Stealth[black]}] (B) -- (A1);

    \draw[dots] (B) -- (A2);
    \draw[rainbow,dash phase =-3pt, red] (B) -- (A2);
    \draw[rainbow,dash phase = 0pt, yellow!90!black] (B) -- (A2);
    \draw[rainbow,dash phase = 3pt, green!70!black] (B) -- (A2);
    \draw[rainbow,dash phase = 6pt, blue] (B) -- (A2);
    \draw[rainbow,dash phase = 9pt, purple,>={Stealth[black]}] (B) -- (A2);

    \draw[dots] (B) -- (A3);
    \draw[rainbow,dash phase =-3pt, red] (B) -- (A3);
    \draw[rainbow,dash phase = 0pt, orange] (B) -- (A3);
    \draw[rainbow,dash phase = 3pt, green!70!black] (B) -- (A3);
    \draw[rainbow,dash phase = 6pt, blue] (B) -- (A3);
    \draw[rainbow,dash phase = 9pt, purple,>={Stealth[black]}] (B) -- (A3);

    \draw[dots] (B) -- (A4);
    \draw[rainbow,dash phase =-6pt, red] (B) -- (A4);
    \draw[rainbow,dash phase =-3pt, orange] (B) -- (A4);
    \draw[rainbow,dash phase = 0pt, yellow!90!black] (B) -- (A4);
    \draw[rainbow,dash phase = 3pt, blue] (B) -- (A4);
    \draw[rainbow,dash phase = 6pt, purple,>={Stealth[black]}] (B) -- (A4);

    \draw[dots] (B) -- (A5);
    \draw[rainbow,dash phase =-6pt, red] (B) -- (A5);
    \draw[rainbow,dash phase =-3pt, orange] (B) -- (A5);
    \draw[rainbow,dash phase = 0pt, yellow!90!black] (B) -- (A5);
    \draw[rainbow,dash phase = 3pt, green!70!black] (B) -- (A5);
    \draw[rainbow,dash phase = 6pt, purple,>={Stealth[black]}] (B) -- (A5);

    \draw[dots] (B) -- (A6);
    \draw[rainbow,dash phase =-6pt, red] (B) -- (A6);
    \draw[rainbow,dash phase =-3pt, orange] (B) -- (A6);
    \draw[rainbow,dash phase = 0pt, yellow!90!black] (B) -- (A6);
    \draw[rainbow,dash phase = 3pt, green!70!black] (B) -- (A6);
    \draw[rainbow,dash phase = 6pt, blue,>={Stealth[black]}] (B) -- (A6);

\end{tikzpicture}     }
    \resizebox{0.48\textwidth}{!}{
    
\newcommand{\drawAllVerticesLongJump}{
    \LARGE
    \node[smallsteiner] (A0) at (-2,-1) {};
    \node[smallsteiner] (A1) at (-2, 1) {};
    \node[smallsteiner] (A2) at (-2, 3) {};
    \node[smallsteiner] (A3) at (-2, 5) {};
    \node[smallsteiner] (A4) at (-2, 7) {};
    \node[smallsteiner, label=below:{$v$}] (B1) at ( 1, 0) {};
    \node[smallsteiner] (B2) at ( 1, 2) {};
    \node[smallsteiner] (B3) at ( 1, 4) {};
    \node[smallsteiner] (B4) at ( 1, 6) {};
    \node[smallsteiner, label=right:{$w$}] (B5) at ( 1, 8) {};
    \node[terminal, transparent] (C1) at (-6, 1) {};
    \node[terminal, transparent] (C2) at (-6, 3) {};
    \node[terminal, transparent] (C3) at (-6, 5) {};
    \node[terminal, transparent] (C4) at (-6, 7) {};
    \node[terminal,minimum height=120pt, label=above:{$s_2$}] (C) at (-6, 4) {};
    \node[terminal, transparent] (D1) at ( 5, 0) {};
    \node[terminal, transparent] (D2) at ( 5, 2) {};
    \node[terminal, transparent] (D3) at ( 5, 4) {};
    \node[terminal, transparent] (D4) at ( 5, 6) {};
    \node[terminal,minimum height=120pt, label=above:{$s_3$}] (D) at ( 5, 3) {};
    \node[terminal, label=below:{$s_1$}] (E) at (-6, -1) {};
    
}

\newcommand{\drawAllEdgesLongJump}{
    \draw[diredge] (C1) -- (A1);
    \draw[diredge] (C2) -- (A2);
    \draw[diredge] (C3) -- (A3);
    \draw[diredge] (C4) -- (A4);
    \draw[diredge] (A1) -- (C1);
    \draw[diredge] (A2) -- (C2);
    \draw[diredge] (A3) -- (C3);
    \draw[diredge] (A4) -- (C4);
    \draw[diredge] (D1) -- (B1);
    \draw[diredge] (D2) -- (B2);
    \draw[diredge] (D3) -- (B3);
    \draw[diredge] (D4) -- (B4);
    \draw[diredge] (B1) -- (D1);
    \draw[diredge] (B2) -- (D2);
    \draw[diredge] (B3) -- (D3);
    \draw[diredge] (B4) -- (D4);
    
    \draw[diredge] (A0) -- (B1);
    \draw[diredge] (A1) -- (B1);
    \draw[diredge] (A1) -- (B2);
    \draw[diredge] (A2) -- (B2);
    \draw[diredge] (A2) -- (B3);
    \draw[diredge] (A3) -- (B3);
    \draw[diredge] (A3) -- (B4);
    \draw[diredge] (A4) -- (B4);
    \draw[diredge] (A4) -- (B5);
    \draw[diredge] (B1) -- (A0);
    \draw[diredge] (B1) -- (A1);
    \draw[diredge] (B2) -- (A1);
    \draw[diredge] (B2) -- (A2);
    \draw[diredge] (B3) -- (A2);
    \draw[diredge] (B3) -- (A3);
    \draw[diredge] (B4) -- (A3);
    \draw[diredge] (B4) -- (A4);
    \draw[diredge] (B5) -- (A4);

    \draw[diredge] (E) -- (A0);
    \draw[diredge] (A0) -- (E);
}

\newcommand{\drawAllVerticesAndEdgesLongJump}{
    \drawAllVerticesLongJump
    \drawAllEdgesLongJump
}

\begin{tikzpicture}[scale=0.6]
    \drawAllVerticesAndEdgesLongJump
    
    \draw[edgedirprogress] (C1) -- (A1);
    \draw[edgedirprogress] (C2) -- (A2);
    \draw[edgedirprogress] (C3) -- (A3);
    \draw[edgedirprogress] (C4) -- (A4);
    \draw[edgedirprogress, red] (D1) -- (B1);
    \draw[edgedirprogress, red] (D2) -- (B2);
    \draw[edgedirprogress, red] (D3) -- (B3);
    \draw[edgedirprogress, red] (D4) -- (B4);
    \draw[edgedirprogress, green!70!black] (E) -- (A0);

    \draw[edgedirprogress] (A1) -- (B1);
    \draw[edgedirprogress] (A1) -- (B2);
    \draw[edgedirprogress] (A2) -- (B2);
    \draw[edgedirprogress] (A2) -- (B3);
    \draw[edgedirprogress] (A3) -- (B3);
    \draw[edgedirprogress] (A3) -- (B4);
    \draw[edgedirprogress] (A4) -- (B4);
    \draw[edgedirprogress] (A4) -- (B5);
    
    \draw[edgedirprogress, red] (B1) -- (A0);
    \draw[edgedirprogress, red] (B1) -- (A1);
    \draw[edgedirprogress, red] (B2) -- (A1);
    \draw[edgedirprogress, red] (B2) -- (A2);
    \draw[edgedirprogress, red] (B3) -- (A2);
    \draw[edgedirprogress, red] (B3) -- (A3);
    \draw[edgedirprogress, red] (B4) -- (A3);
    \draw[edgedirprogress, red] (B4) -- (A4);

    \draw[edgedirprogress, green!70!black] (A0) -- (B1);

\end{tikzpicture}
     }
    \caption{\label{fig:k_interaction_example}
    The left part of the figure shows (a part of) an instance with $k$ terminals interacting. 
    If they meet at the vertex $m$ at time $t=1$, then the edge $(m,v)$ afterwards becomes tight at speed~$k$.
    However, this instance is not MST-optimal, because the star centered at $m$ connecting the $k$ terminals would be an improving component (for $k>2$).
    (If the drop of this terminal set was smaller than $k$, then the terminals would have been merged much earlier and would not continue growing each on their own.)\newline
    The example in the right part of the figure shows that even the interaction of just three terminals can lead to reaching far away vertices.
    In this example, the edges incident to the terminals are long, e.g. they could have length $1$ each, whereas the other edges are short, e.g. they have length $\epsilon$ each for some small $\epsilon > 0$.
    The figure shows the contributions of sets growing around the three terminals at time $t=1+\epsilon$.
    The vertex $w$ is reachable from $s_1$ by a path of tight edges, but this path could be arbitrarily long.
    (The only bound on the distance of $s_1$ and $w$ that we can give in this example is $3+3\epsilon$ by considering a non-tight path via one of the terminals $s_2$ or $s_3$. However, this bound is too weak to be useful.)
    Again, this instance is not MST-optimal, where a star centered at $v$ connecting $s_1, s_2, s_3$ is an improving component.
    }
\end{figure}
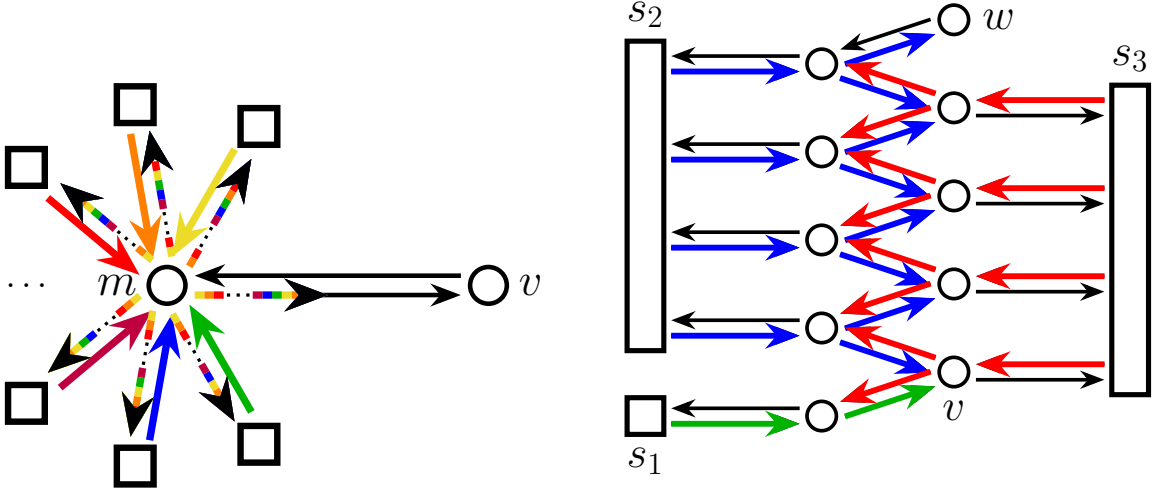

If a set $S$ ``uses'' an edge to which already two other disjoint sets contributed, we will say that this edge is not \emph{safe} for $S$. (We refer to \Cref{def: safe path} for a precise definition.)
We will first consider the case that whenever a set $S\in \cS^t$ reaches a vertex $v$ at time $t$ via a path of tight edges, all edges of this path are safe (for $S$).
Later, we will prove that this is indeed always true.
To do so, we will crucially use that the instance is MST-optimal.

\paragraph{The Component Invariant.}\label{par:component invariant}
Next, we state the key invariant that we prove in our analysis:
Suppose a set $S\in \cS^t$ reaches a vertex $v$ at time $t$ without having interacted with any sets disjoint from $S$ before. 
Then there exist a subset $X\subseteq S$ and a component $K$ connecting $X\cup \{v\}$ such that
\[
c(K) \ \leq\ \dropG{X} + t.
\]
We will also prove that a very similar (but more technical) invariant applies when $S$ is interacting with some other set, but for the purpose of this outline we will focus on the invariant stated above.

Before expanding on how we prove our component invariant, let us first explain why it implies that the set $U_S$ does not contain the root (unless $r\in S$).
Suppose for the sake of deriving a contradiction that $r\in U_S$, i.e., the root $r$ is reachable from $S$ by a path of tight edges at a time~$t$ with $S\in \cS^t$. 
Then the invariant implies that there exist a set $X\subseteq S$ and a component $K$ connecting $X\cup \{r\}$ such that $c(K) \leq \dropG{X} + t$.
Using that $S \in \cS^t$ does not contain $r$ and $X\subseteq S$, we get $\val{\cM,X\cup \{r\}} \geq \val{\cM,X} + t$ by the definition of the local value (see also \Cref{lem: drop of X union X bar}). 
Because $\cM$ is the default merge plan scaled by a factor of $\frac{7}{6}$, using \eqref{eq:equaivalent_drop_outline} for $\delta =\frac{1}{6}$, this implies
\[
\dropG{X\cup \{r\}}\ = \ \frac{12}{7}\,\val{\cM,X\cup \{r\}}
\ \geq\ \frac{12}{7}\,\big(\val{\cM,X} + t\big) \ >\ \dropG{X} + t.
\]
Therefore, the cost $c(K) \leq \dropG{X} + t$ of the component $K$ connecting $X\cup \{r\}$ is strictly less than $\dropG{X\cup \{r\}}$, contradicting the MST-optimality of our instance.
This way we can use our invariant to show that our merge plan is feasible. 

We remark that the only property of the root $r$ that we used is that $r$ is a terminal not contained in $S$.
Thus, the component invariant also implies that the only terminals contained in $U_S$ are the terminals in $S$.
Next, we explain how we prove our component invariant.
     
\paragraph{Proving the Component Invariant.}
\label{par: proving invariant}
First, observe that the invariant is satisfied if $S$ contains only a single terminal $s$.
In this case $X=S =\{s\}$ and $\dropG{X} = 0$.
The component $K$ consists of a shortest $s$-$v$ path, which has length at most $t$ by \eqref{eq:no_interaction_case}.

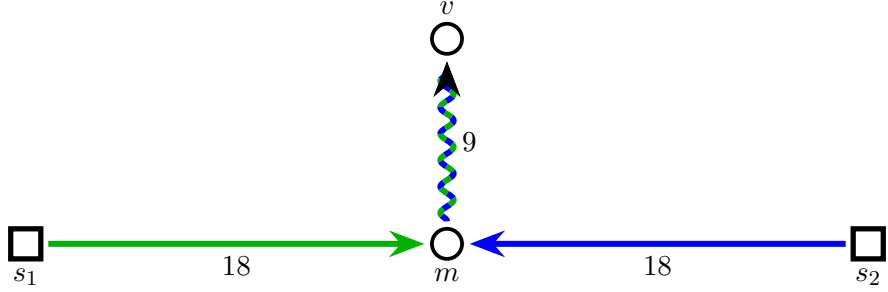
\begin{figure}[t]
    \centering
    \begin{tikzpicture}[scale=0.8]
\node[smallterminal, label=below:{$s_1$}] (S1) at (-3,0) {};
    \node[smallterminal, label=below:{$s_2$}] (S2) at (11,0) {};
    \node[smallsteiner, label=below:{$m$}] (M1) at (4,0) {};
    \node[smallsteiner, label=above:{$v$}] (M2) at (4,3.41) {};

    \draw[edgedirprogress, green!70!black, offset=0pt] (S1) -- node[below, text=black] {$18$} (M1);
    \draw[edgedirprogress, offset=0pt] (S2) -- node[below, text=black] {$18$} (M1);

    \draw[edgedirprogresshalfA, blue, offset=0pt, decorate, decoration={snake}] (M1) -- (M2);
    \draw[edgedirprogresshalfB, offset=0pt, green!70!black, decorate, decoration={snake}] (M1) -- node[right=1pt, text=black] {$9$} (M2);
\end{tikzpicture}     \caption{\label{fig:component_invariant_two_sets}
    An abstract drawing of the component $K$ in our invariant for $X=\{s_1,s_2\}$. 
    The component shown here corresponds to the example from \Cref{fig:zig_zag_example}.
    Both $s_1$ and $s_2$ reach the vertex $m$ at time $t = 18 = \frac{1}{2}\, \dist{s_1,s_2} \geq \frac{1}{2}\, \dropG{\{s_1,s_2\}}$. 
    At time $t=21=\frac{7}{12}\, \dropG{\{s_1,s_2\}}$ the two terminals are merged and become part of the same set $S$, which at this time has not interacted with any sets disjoint from $S$. 
    From time $t=18$ until time $t=21$ the $m$-$v$ path becomes tight effectively at speed $3$. Thus, if $v$ is reached at time $21$, it can have distance $9$ from $m$.
    }
\end{figure}

In this case, the two terminals $s_1$ and $s_2$ have distance at least $\dropG{\{s_1,s_2\}}$ from each other. 
Recall that terminals not interacting with other terminals can at time $t$ only reach vertices at distance at most $t$.
Thus, $s_1$ and $s_2$ cannot reach the same vertex and start interacting before time $\frac{1}{2}\,\dropG{\{s_1,s_2\}}$.
Starting from the vertex $m$ where $s_1$ and $s_2$ ``meet'' (until $s_1$ and $s_2$ are merged) the length of the path via which the terminals reach the vertex $v$ can grow effectively at speed $3$ (see \Cref{lem: length from m_p to v}), where the upper bound is attained by the example in \Cref{fig:zig_zag_example}.
Using that $s_1$ and $s_2$ are not merged ``too late'' (by the choice of the merge plan), we can prove that they do not travel at speed $3$ for too long.
We then choose $X=\{s_1,s_2\}$ and choose the component $K$ as the component consisting of the paths from $s_1$ and $s_2$, respectively, to the vertex $m$ together with the path from $m$ to~$v$.
See \Cref{fig:component_invariant_two_sets}.
This is our strategy for analyzing the interaction of sets growing around two single terminals that then get merged to a new set $S$.

Let us next explain, how we proceed in the general case.
Here we use induction over time, which is possible because the set of reachable vertices only changes at discrete points in time.
If $S$ reaches $v$ via a path $P$ of tight edges, there might be two disjoint subsets $S_1$ and $S_2$ of $S$ that interacted with each other along this path $P$. Because of our assumption that our path $P$ consists of safe edges, it will be sufficient to analyze the interaction of two disjoint terminal sets.
Let $S_1 \subseteq S$ and $S_2 \subseteq S$ be such interacting sets.
(Note that each of $S_1$ and $S_2$ could each have again disjoint interacting subsets, which again could have interacting subsets and so on.)
Similar to the case of two interacting terminals, we identify a vertex $m$ where the two sets ``meet''. 
In particular, the vertex $m$ will have the property that none of the two sets $S_1$ and $S_2$ contributed\footnote{
If a set $S'$ contributes to $S_i$ reaching $v$, this means that some corresponding set $U_{S'}$ contributed to the tightness of the edges on the $S_i$-$v$ path via which the set $S_i$ reached the vertex $v$.}  to the other set reaching $m$.
We then apply the induction hypothesis to $S_1$ and $m$, as well as to $S_2$ and $m$.
We obtain sets $X_1 \subseteq S_1$ and $X_2\subseteq S_2$ and components $K_1$ and $K_2$ connecting $X_1\cup\{m\}$ and $X_2\cup \{m\}$, respectively.
See \Cref{fig:induction_step_outline} for an illustration.
\begin{figure}[t]
    \centering
    \begin{tikzpicture}[scale=0.6]
\useasboundingbox (-3,-2) rectangle (16,11);
\node[smallterminal] (S1) at (-2, 3) {};
    \node[smallterminal] (S2) at ( 2, 1) {};
    \node[smallterminal] (S3) at ( 7, 0) {};
    \node[smallterminal] (S4) at (10, 1) {};
    \node[smallterminal] (S5) at (14, 4) {};
    \node[smallterminal] (S6) at (12, 3) {};
    \node[smallterminal] (S7) at (13, 2) {};

    \draw[activeset, green!50!black] (S1) ellipse [radius = 0.7];
    \draw[activeset, green!90!black] (S2) ellipse [radius = 0.7];
    \draw[activeset, blue] (S3) ellipse [radius = 0.7];
    \draw[activeset, blue!40!white] (S4) ellipse [radius = 0.7];
    \draw[activeset, blue!40!black] (S5) ellipse [radius = 0.7];
    \coordinate (S12) at ($(S1)!1/2!(S2)$);
    \draw[activeset, green!70!black,rotate around={-26.6:(S12)}](S12) ellipse [x radius=3.2, y radius=1.5];
    \coordinate (S34) at ($(S3)!1/2!(S4)$);
    \draw[activeset, blue!70!white,rotate around={18.4:(S34)}](S34) ellipse [x radius=3, y radius=1.2];
    \coordinate (S345) at ($(S34)!1/3!(S5)$);
    \draw[activeset, blue!70!black,rotate around={28:(S345)}](S345) ellipse [x radius=6, y radius=2];

    \coordinate (A1) at (1,4);
    \coordinate (A2) at (2,6);
    \node[smallsteiner, label={[label distance=-2.5mm]100:{$m$}}] (A3) at (5,9) {};
    \coordinate (A4) at (8,2);
    \coordinate (A5) at (8,4);
    \coordinate (A6) at (9,6);
    \coordinate (A7) at (7,7);
    \coordinate (A8) at ( 9,11);

    \draw[mstedge, green!50!black] (S1) -- (A1);
    \draw[mstedge, green!90!black] (S2) -- (A1);
    \draw[mstedge, curlyHalfA, green!50!black] (A1) -- (A2);
    \draw[mstedge, curlyHalfB, green!90!black] (A1) -- (A2);
    \draw[mstedge] (A2) -- (A3);
    
    \draw[mstedge, blue, blue] (S3) -- (A4);
    \draw[mstedge, blue, blue!40!white] (S4) -- (A4);
    \draw[mstedge, curlyHalfA, blue] (A4) -- (A5);
    \draw[mstedge, curlyHalfB, blue!40!white] (A4) -- (A5);
    \draw[mstedge, blue!70!white] (A5) -- (A6);
    \draw[mstedge, blue!40!black] (S5) -- (A6);
    \draw[mstedge, curlyHalfA, blue!70!white] (A6) -- (A7);
    \draw[mstedge, curlyHalfB, blue!40!black] (A6) -- (A7);
    \draw[mstedge, blue!70!black] (A7) -- (A3);
    
    \draw[mstedge, curlyHalfA, blue!70!black] (A3) -- (A8);
    \draw[mstedge, curlyHalfB, green!70!black] (A3) -- (A8);

    \node[edgelabel, text = green!70!black] (K1) at (1.5, 7) { $K_1$};
    \node[edgelabel, text = blue!70!black] (K2) at (10, 7) { $K_2$};
    \node[edgelabel, text = green!50!black] (S1L) at (-1, 1) { $S_1$};
    \node[edgelabel, text = blue!70!black] (S2L) at (13, 1) { $S_2$};

\end{tikzpicture}     \caption{\label{fig:induction_step_outline}
    Two components $K_1$ (green) and $K_2$ (blue) meeting at a vertex $m$. 
    Once they met, they interact analogously to the interaction of two sets growing around single terminals.
    Each of the components $K_i$ connects a subset $X_i \subset S_i$ to $m$.
    Once $S_1$ and $S_2$ get merged in $\cS^t$, we can combine them to a new component $K$ for the new set $S$ arising from the merge of $S_1$ and $S_2$ (and possibly further sets).
    The components $K_i$ arise by induction and hence there are several subsets of $S_i$ that contributed to the tightness of the edges of $K_i$, which are the subsets of $S_i$ shown in the picture.
    }
\end{figure}
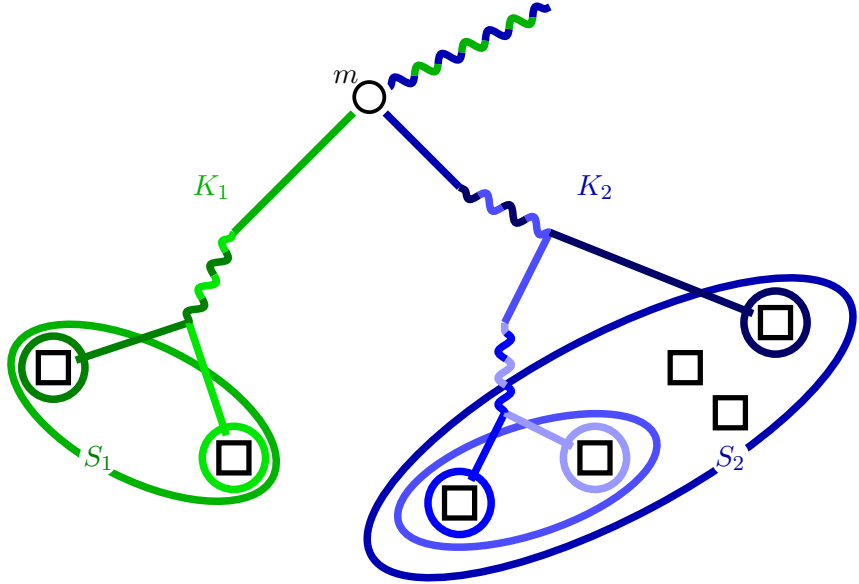

The union of $K_1$ and $K_2$ is a component connecting $X_1\cup X_2$.
If $t'$ is the time where the sets $S_1$ and $S_2$ reach $m$, then the
induction hypothesis implies
\[
c(K_1\cup K_2) \ \leq\ \dropG{X_1} + \dropG{X_2} + 2 t'.
\]
Using that the instance is MST-optimal and hence $c(K_1\cup K_2) \geq \dropG{X_1\cup X_2}$, we obtain a lower bound on when $S_1$ and $S_2$ ``meet'' at the vertex $m$.
Then, analogously to the special case of two interacting terminals we considered before, the path from $m$ to $v$ is effectively growing at speed $3$, and the two sets stop interacting once they are merged into a single set.
Once the two sets are merged, we can construct a new component $K$ for the merged set $S$ by taking the union of $K_1$, $K_2$, and the $m$-$v$ path.

\paragraph{Bifurcation Vertices.}\label{par: bifurcation}
In the proof of our component invariant, the bifurcation vertex $m$ where two set $S_1$ and $S_2$ ``meet'' plays a crucial role.
So far, we did not give a concise definition of this bifurcation vertex and indeed it is not obvious how to choose this vertex $m$.
A key property of the bifurcation vertex is that each of the sets $S_1$ and $S_2$ reaches $m$ without any interaction with the other set.
However, not every path of tight edges from $S_i$ to $v$ contains such a point, as shown in \Cref{fig:bifurcation_example}.

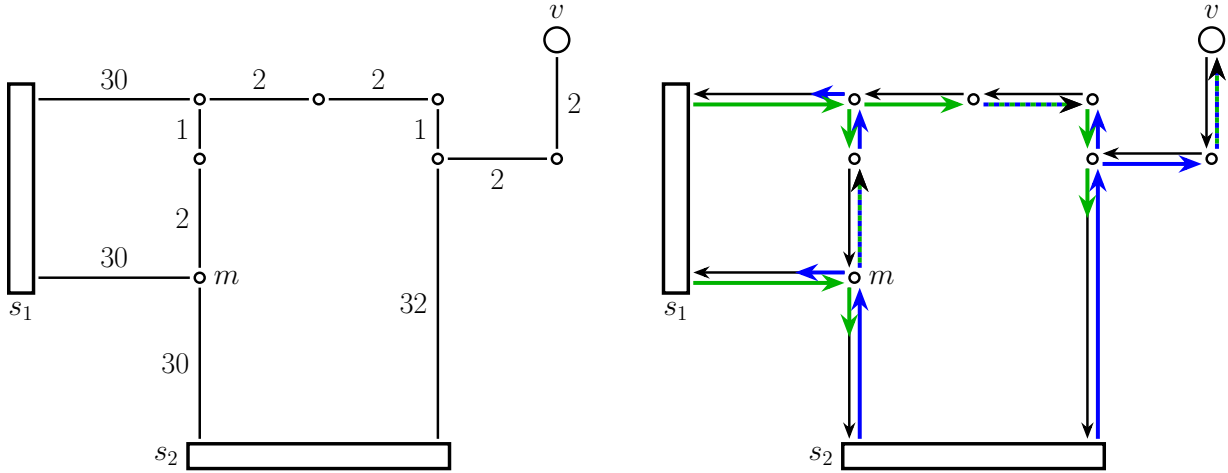
\begin{figure}[ht]
    \centering
    \resizebox{\textwidth}{!}{\newcommand{\VerticesNeedMeetingPoint}{
    \LARGE
    \node[terminal, transparent] (S11) at (0,6) {};
    \node[terminal, transparent] (S12) at (0,3) {};
    \node[terminal, minimum height = 120pt, label=below:{$s_1$}] (S1) at (0,4.5) {};
    \node[terminal, transparent] (S21) at (3,0) {};
    \node[terminal, transparent] (S22) at (7,0) {};
    \node[terminal, minimum width = 150pt, label=left:{$s_2$}] (S2) at (5,0) {};

    \node[tinysteiner, label=right:{$m$}] (V1) at (3,3) {};
    \node[tinysteiner] (V2) at (3,5) {};
    \node[tinysteiner] (V3) at (3,6) {};
    \node[tinysteiner] (V4) at (5,6) {};
    \node[tinysteiner] (V5) at (7,6) {};
    \node[tinysteiner] (V6) at (7,5) {};
    \node[tinysteiner] (V7) at (9,5) {};
    \node[steiner, label=above:{$v$}] (V8) at (9,7) {};
}

\begin{tikzpicture}[scale=1.2]
    \VerticesNeedMeetingPoint

    \draw[edge] (V1) -- node[left] {$30$} (S21);
    \draw[edge] (V1) -- node[above] {$30$} (S12);
    \draw[edge] (V2) -- node[left] {$2$} (V1);
    \draw[edge] (V2) -- node[left] {$1$} (V3);
    \draw[edge] (V3) -- node[above] {$30$} (S11);
    \draw[edge] (V4) -- node[above] {$2$} (V3);
    \draw[edge] (V5) -- node[above] {$2$} (V4);
    \draw[edge] (V6) -- node[left] {$32$} (S22);
    \draw[edge] (V7) -- node[below] {$2$} (V6);
    \draw[edge] (V5) -- node[left] {$1$} (V6);
    \draw[edge] (V8) -- node[right] {$2$} (V7);

\begin{scope}[shift={(11,0)}]
    \VerticesNeedMeetingPoint

    \draw[diredge] (V1) -- (S21);
    \draw[diredge] (V1) -- (S12);
    \draw[diredge] (V2) -- (V1);
    \draw[diredge] (V3) -- (S11);
    \draw[diredge] (V4) -- (V3);
    \draw[diredge] (V5) -- (V4);
    \draw[diredge] (V6) -- (S22);
    \draw[diredge] (V7) -- (V6);
    \draw[diredge] (V8) -- (V7);

    \draw[edgedirprogress] (S21) -- (V1);
    \draw[edgedirprogresshalfA] (V1) -- (V2);
    \draw[edgedirprogress] (V1) -- ($(V1)!1/3!(S12)$);
    \draw[edgedirprogress] (V2) -- (V3);
    \draw[edgedirprogress] (V3) -- ($(V3)!1/4!(S11)$);
    \draw[edgedirprogresshalfA] (V4) -- (V5);
    \draw[edgedirprogress] (V6) -- (V5);
    \draw[edgedirprogress] (S22) -- (V6);
    \draw[edgedirprogress] (V6) -- (V7);
    \draw[edgedirprogresshalfA] (V7) -- (V8);

    \draw[edgedirprogress, green!70!black] (S12) -- (V1);
    \draw[edgedirprogress, green!70!black] (V1) -- ($(V1)!1/3!(S21)$);
    \draw[edgedirprogresshalfB, green!70!black] (V1) -- (V2);
    \draw[edgedirprogress, green!70!black] (S11) -- (V3);
    \draw[edgedirprogress, green!70!black] (V3) -- (V2);
    \draw[edgedirprogress, green!70!black] (V3) -- (V4);
    \draw[edgedirprogresshalfB, green!70!black] (V4) -- (V5);
    \draw[edgedirprogress, green!70!black] (V5) -- (V6);
    \draw[edgedirprogress, green!70!black] (V6) -- ($(V6)!1/5!(S22)$);
    \draw[edgedirprogresshalfB, green!70!black] (V7) -- (V8);
\end{scope}
\end{tikzpicture} }
    \caption{\label{fig:bifurcation_example}
    An example illustrating the difficulty in the choice of the bifurcation vertex $m$.
    The left part of the figure shows (a part of) an instance with the lengths of the edges.
    The right part of the figure indicates the contributions of the two sets growing around the two terminals $s_1$ (green) and $s_2$ (blue) at time $t=35$.
    The tight $s_1$-$v$ path using the upper outgoing edge of $s_1$ does not contain any point that each of the terminals can reach via a path to which the other terminal did not contribute.
    The same applies for the tight $s_2$-$v$ path using the right outgoing edge of $s_2$.
    In our proof, we will choose the indicated vertex $m$ as the bifurcation vertex and analyse the component consisting of the edges $\{s_1,m\}$ and $\{s_2,m\}$ as well as the $m$-$v$ path of tight edges.
    }
\end{figure}

The concept of bifurcation vertices and the proof of their existence is a key technical contributions in our new analysis that enables our recursive proof strategy, which differs substantially from the analysis from \cite{BCRlessthan2}.\footnote{
The analysis of \cite{BCRlessthan2} always started from an arbitrary $S$-$v$ path $P$ with a particular property (a so-called $S$-tight path; see \Cref{def: S tight path}) and constructed only components containing this path $P$. 
The example in \Cref{fig:bifurcation_example} shows that it might be that no such paths contains any bifurcation vertex.
Hence, for our proof strategy to work, we crucially need to deviate from including an arbitrary $S$-tight path to $v$ in the component $K$.
}

\paragraph{Everything is Safe.}
Let us briefly comment on how we prove that our assumption regarding safe edges is indeed always satisfied.
Suppose some set would encounter a non-safe edge.
Then we consider a first such non-safe edge $e$. 
Let $S_1$ be the set that encounters an edge that is non-safe (for $S_1$) and let $S_2$ and $S_3$ be the other disjoint sets that contribute to $e$.
We will apply our component invariant to obtain three components $K_1,K_2,K_3$ connecting sets $X_1\cup\{v\}, X_2\cup \{v\}, X_3\cup \{v\}$, respectively, and combine them to an improving component $K$ connecting $X_1\cup X_2 \cup X_3$.
Then, we use this component $K$ to derive a contradiction to the MST-optimality and can thus conclude that we never encounter non-safe edges.
Here, we will make use of the (slightly more technical) version of our invariant that also applies to sets $S$ that interacted with one other disjoint set.
This will be needed for example to handle the case where some of the sets $S_1, S_2, S_3$ already had pairwise interactions with each other.
See \Cref{fig:accident} for an example.

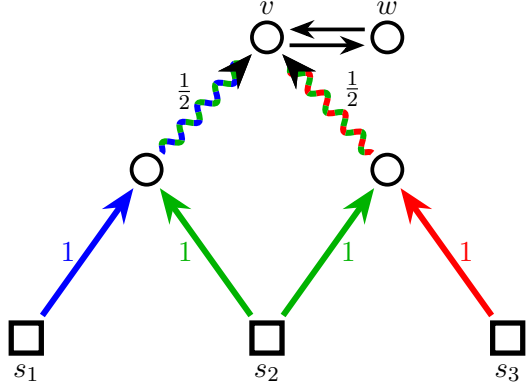
\begin{figure}
    \centering
    \begin{tikzpicture}[scale=0.8]
    \node[smallterminal] (S1) at ( 0, 1) {};
    \node[smallterminal] (S2) at ( 4, 1) {};
    \node[smallterminal] (S3) at ( 8, 1) {};
    \node[smallsteiner] (V1) at ( 2, 3.8) {};
    \node[smallsteiner] (V2) at ( 6, 3.8) {};
    \node[smallsteiner] (V3) at ( 4, 6) {};
    \node[above=5pt] at (V3) {$v$};
    \node[smallsteiner] (w) at ( 6, 6) {};
    \node[above=5pt] at (w) {$w$};
    \node[below=6pt] at (S1) {$s_1$};
    \node[below=6pt] at (S2) {$s_2$};
    \node[below=6pt] at (S3) {$s_3$};

    \draw[edgedirprogress, offset=0pt] (S1) -- node[left] {$1$} (V1);
    \draw[edgedirprogress, green!70!black, offset=0pt] (S2) -- node[left] {$1$} (V1);
    \draw[edgedirprogress, green!70!black, offset=0pt] (S2) -- node[right] {$1$} (V2);
    \draw[edgedirprogress, red, offset=0pt] (S3) -- node[right] {$1$} (V2);

    \draw[edgedirprogresshalfA, offset=0pt, decorate, decoration={snake}] (V1) -- (V3);
    \draw[edgedirprogresshalfB, offset=0pt, green!70!black, decorate, decoration={snake}] (V1) -- node[above left=-3pt, pos=0.4, black] {$\frac{1}{2}$} (V3);

    \draw[edgedirprogresshalfA, red, offset=0pt, decorate, decoration={snake}] (V2) -- (V3);
    \draw[edgedirprogresshalfB, offset=0pt, green!70!black, decorate, decoration={snake}] (V2) -- node[above right=-2pt, pos=0.4, black] {$\frac{1}{2}$} (V3);

    \draw[diredge] (V3) -- (w);
    \draw[diredge] (w) -- (V3);
\end{tikzpicture}     \caption{\label{fig:accident}
    An example with three terminals $s_1, s_2, s_3$ reaching the same vertex $v$ at time $t=\frac{7}{6}$.
    If the terminals are not merged at this time, they would all contribute to the edge $\lr{v, w}$, causing this edge to be non-safe.
    If a terminal MST contains the edges $\{s_1, s_2\}$ and $\{s_2, s_3\}$, each having length $2$, then we have $\dropG{\{s_1, s_2, s_3\}}= 4$, and the three terminals will be merged exactly at time $t=\frac{7}{6}$.
    \newline
    If one modifies the example by making the edges entering $v$ just a little bit shorter, $\lr{v, w}$ would indeed not be safe. 
    However, we could then obtain an improving component by connecting $s_1$, $s_2$ and $s_3$ to $v$ (omitting one of the two green edges).
    This example shows that in the MST-optimal case, we cannot merge terminals any later if we want edges to be safe. 
    As we will show in \Cref{sec:lower_bound}, this is not an artifact of our analysis strategy relying on safe edges, but there is indeed no merge plan that leads to any better dual solution.
    }
\end{figure}

\paragraph{Beyond the MST-Optimal Case.}
To generalize the analysis beyond the MST-optimal case, we make two changes to the outline above. 
First, we strengthen the upper bound on the cost of the component $K$ in our invariant and prove
\[
c(K) \ \leq\ \frac{8-8\gamma}{7-5\gamma}\,\val{\cM, X}+t.
\]
For $\gamma = 0$ and $\cM$ being the merge plan we used in the MST-optimal case, this is equivalent to $c(K) \leq \frac{2}{3} \, \dropG{X}+t$, where we used \eqref{eq:equaivalent_drop_outline} for $\delta =\frac{1}{6}$ to express the drop in terms of the local value of the merge plan. 
Hence, up to the factor of $\frac{2}{3}$ this indeed corresponds exactly to the invariant we stated before for the MST-optimal case.

Second, we additionally prove a bound on the distance of reachable vertices. More precisely, if $X\subseteq S$ is the set of terminals from $S$ that $K$ connects to the vertex $v$ (in our component invariant), then we prove an upper bound on the $x$-$v$ distance for all $x\in X$.
(See \Cref{def: t close} for a precise description of the bounds.)
We make use of these distance bounds in two ways.
First, they will help us to obtain a good dependence of our bounds on $\gamma$.

Second, they are crucial for us to go beyond the case where all components are $\gamma$-expensive.
If a $\gamma$-cheap component connecting a set $Y$ exists, then by the definition of a $\gamma$-good merge plan (\Cref{def: gamma good}), there are some vertices $x,y\in Y$ that have large distance from each other (compared to their merge time).
Using this, the distance bounds imply that if $x$ and $y$ are not merged at time $t$, the two sets in $\cS^t$ that each contain one of $x$ and $y$ cannot reach the same vertex $v$.
For this reason, these two sets cannot interact in any way, and we will never consider a component connecting $Y$ in our analysis.
Thus, it does not matter for our analysis whether such components are $\gamma$-expensive. 

\section{Merge Plans}\label{sec:merge_plan}
In this section, we establish a connection between merge plans and minimum spanning trees that will serve two purposes: first, it provides a more convenient way of constructing a merge plan, without defining each partition explicitly; second, for a set $X\subseteq R$, it allows us to express $ \dropG{X} \coloneqq \tmst_G - \tmst_{G/X} $ in terms of local value $\val{\cM, X}$ of $X$ with respect to the canonical merge plan $\cM$ of $G$.

\begin{definition}
    Given a terminal set $R$ and an upper bound $u\colon \binom{R}{2}\to\R_{\geq 0}$ we say that a merge plan $\cM$ is \emph{upper bounded by $u$} if
    \begin{align*}
        \merge{\cM}{x, y}\leq u\lr{\lrb{x, y}}\qquad\text{for all }x, y\in R\,.
    \end{align*}
\end{definition}

A natural question to ask is: given an upper bound $u\colon \binom{R}{2}\to\R_{\geq 0}$, how large can the value $\val{\cM}$ be if the merge plan $\cM$ is upper bounded by $u$? Before we answer this question in \Cref{lem: max merge plan has value mst and there are defining paths}, we observe an upper bound on the merge time $\merge{\cM}{x, y}$ that holds whenever $\cM$ is upper bounded by $u$:

\begin{observation}\label{obs: min max bound on merge times}
    Given a terminal set $R$ and an upper bound $u \colon \binom{R}{2}\to\R_{\geq 0}$, every merge plan $\cM$ upper bounded by $u$ satisfies
    \begin{align}
        \merge{\cM}{x, y}\leq\ \min_{\substack{P\,:\,x\text{-}y\,\text{path}\\\text{in }(R,\binom{R}{2})}}\ \max_{e\in P}\ u\lr{e}\label{ieq: min max bound on merge times}
    \end{align}
    for all $x,y\in R$.
\end{observation}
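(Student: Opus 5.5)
The plan is to use the transitivity of ``being in the same part'' together with the nestedness of the partitions. Fix $x,y\in R$ and let $P=(x=z_0,z_1,\dots,z_k=y)$ be any $x$-$y$ path in the complete graph $(R,\binom{R}{2})$. Set $\tau\coloneqq\max_{e\in P}u(e)$. Since the minimum on the right-hand side of \eqref{ieq: min max bound on merge times} is over finitely many paths (simple ones suffice), it is enough to show $\merge{\cM}{x,y}\leq\tau$ for every such path $P$.

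First, for each edge $\{z_{i-1},z_i\}$ of $P$, the upper bound gives $\merge{\cM}{z_{i-1},z_i}\leq u(\{z_{i-1},z_i\})\leq\tau$. By \Cref{def: merge plan}, two terminals are in different parts of $\cS^t$ exactly when $0\leq t\leq$ their merge time. Hence for every $t>\tau$, the terminals $z_{i-1}$ and $z_i$ lie in the same part of $\cS^t$, for every $i$.

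Next, since $\cS^t$ is a partition, ``lying in the same part'' is an equivalence relation. Chaining along $P$ shows that $x=z_0$ and $y=z_k$ lie in the same part of $\cS^t$ for every $t>\tau$. By the ``if and only if'' in \Cref{def: merge plan}, $x$ and $y$ are separated precisely on the interval $[0,\merge{\cM}{x,y}]$. So $\merge{\cM}{x,y}<t$ for all $t>\tau$, which gives $\merge{\cM}{x,y}\leq\tau$.

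Taking the minimum over all paths $P$ yields \eqref{ieq: min max bound on merge times}. There is no real obstacle here. The only point requiring care is the boundary convention: separation holds on the closed interval up to the merge time, so we argue with strict inequality $t>\tau$ and then take the limit.
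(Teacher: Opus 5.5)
Your proof is correct and follows the same argument as the paper. Both show that for every $t>\max_{e\in P}u(e)$ the endpoints of each edge of $P$ share a part of $\cS^t$, and then chain along the path to conclude that $x$ and $y$ do too; your explicit handling of the closed-interval convention in \Cref{def: merge plan} is a harmless extra detail.
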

\begin{proof}
    Let $\cM=\lr{\cS^t}_{t\geq0}$ be a merge plan that is upper bounded by $u$, and let $x, y\in R$. Further, let $P$ be an $x$-$y$ path in $\big(R,\binom{R}{2}\big)$. Then, for every time $t>\max_{e\in P}\ u\lr{e}$, for each edge $e$ of $P$ both endpoints of $e$ have to be in the same part of the partition $\cS^t$. Thus, all vertices of the path, in particular $x$ and $y$, are in the same part of $\cS^t$, which implies \eqref{ieq: min max bound on merge times}.
\end{proof}

Next we show that the upper bound \eqref{ieq: min max bound on merge times} on $\merge{\cM}{x, y}$ is tight as there exists a merge plan that is upper bounded by $u$ and satisfies \eqref{ieq: min max bound on merge times} with equality for all $x, y\in R$ simultaneously.
For an upper bound $u \colon \binom{R}{2}\to\R_{\geq 0}$, we write $\mst_u$ to denote the cost of a minimum spanning tree in $(R,\binom{R}{2})$ with respect to the edge costs $u$.

\begin{lemma}\label{lem: max merge plan has value mst and there are defining paths}
    Given a terminal set $R$ and an upper bound $u$, there is a unique merge plan $\cM_u$ that fulfills \eqref{ieq: min max bound on merge times} with equality for all $x,y\in R$. This merge plan is also the unique merge plan with maximal value among all merge plans upper bounded by $u$. Its value is $\val{\cM_u}=\mst_u$.
\end{lemma}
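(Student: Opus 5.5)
Write $m(x,y)$ for the right-hand side of \eqref{ieq: min max bound on merge times}, i.e.\ the minimax (bottleneck) distance between $x$ and $y$ in the complete graph on $R$ with edge weights $u$. The plan is to define $\cM_u$ directly from $m$, check that it is a merge plan with merge times exactly $m(x,y)$, and then compute its value by a Kruskal-type argument.

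\textbf{Construction.} For $t\geq 0$, let $\cS^t$ be the partition of $R$ into the connected components of the graph $(R,\{e : u(e) < t\})$. Then $x$ and $y$ lie in different parts of $\cS^t$ iff every $x$-$y$ path contains an edge with $u(e)\geq t$, i.e.\ iff $t\leq m(x,y)$. So $\cM_u=(\cS^t)_{t\geq 0}$ is a merge plan with $\merge{\cM_u}{x,y}=m(x,y)\leq u(\{x,y\})$; in particular it is upper bounded by $u$ and attains \eqref{ieq: min max bound on merge times} with equality.

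\textbf{Uniqueness.} A merge plan is determined by its merge times, since $x,y$ are in the same part of $\cS^t$ iff $t>\merge{\cM}{x,y}$. Hence any merge plan attaining equality in \eqref{ieq: min max bound on merge times} for all pairs coincides with $\cM_u$.

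\textbf{Maximality.} Let $\cM$ be any merge plan upper bounded by $u$. By \Cref{obs: min max bound on merge times}, $\merge{\cM}{x,y}\leq m(x,y)$ for all $x,y$. Whenever $x,y$ are separated in $\cS^t$ (the partition of $\cM$), they are also separated in $\cM_u$ at time $t$. Thus the partition of $\cM_u$ at time $t$ refines that of $\cM$, giving $\lrv{\cS^t_u}\geq\lrv{\cS^t}$ for every $t$, and hence $\val{\cM}\leq\val{\cM_u}$.

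For uniqueness of the maximizer, suppose $\cM\neq\cM_u$. Then some pair has $\merge{\cM}{x,y}<m(x,y)$. For every $t$ in the open interval between these two times, $x,y$ are joined in $\cM$ but separated in $\cM_u$. Since the partition of $\cM_u$ refines that of $\cM$, this refinement is strict, so $\lrv{\cS^t_u}>\lrv{\cS^t}$ on an interval of positive length, and the value is strictly smaller.

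\textbf{Value.} Use the layer-cake formula
\[
\val{\cM_u}=\int_0^\infty \bigl(\lrv{\cS_u^t}-1\bigr)\,dt .
\]
Let $T$ be a minimum spanning tree with respect to $u$ (take the Kruskal tree for any fixed tie-breaking). A standard MST property is that for every threshold $t$, the components of $(R,\{e:u(e)<t\})$ coincide with those of $(R,\{e\in T:u(e)<t\})$. Consequently
\[
\lrv{\cS_u^t}-1=\lrv{\{e\in T: u(e)\geq t\}}.
\]
Integrating over $t$ gives $\sum_{e\in T}u(e)=\mst_u$.

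The only step needing care is this MST component property. It is the cut/cycle property: if an edge $e$ with $u(e)<t$ joins two different components of the restricted forest, then the fundamental cycle of $e$ in $T$ contains an edge of weight at least $t>u(e)$, and exchanging it for $e$ would give a cheaper spanning tree, contradicting minimality. Everything else is bookkeeping.
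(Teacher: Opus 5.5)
Your proof is correct and is essentially the paper's argument with the two ingredients used at different steps. The paper defines $\cS^t$ via the components of the thresholded MST and uses the cycle property to show that the merge times are at most $u$, hence equal to the min-max bound. You define $\cS^t$ via the thresholded complete graph, so the merge times equal the min-max values by construction, and you use the same exchange argument to identify these components with those of the thresholded MST when computing the value. Your refinement argument for strict maximality also spells out a step that the paper only sketches.
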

\begin{proof}
    We start by defining a merge plan $\cM=\lr{\cS^t}_{\geq0}$ for which we then show that it fulfills \eqref{ieq: min max bound on merge times} with equality and satisfies $\val{\cM}=\mst_u$.

    Let $T$ be a minimum spanning tree in the weighted graph $\big(R,\binom{R}{2}, u\big)$. For every $t\geq0$, let $T^t$ be the subgraph of $\lr{R, T}$ that contains an edge $\lrb{x, y}\in T$ if and only if $u\lr{\lrb{x, y}}<t$. We define $\cS^t$ to be the partition of $R$ according to the connected components of $T^t$.

    For every tree edge $\lrb{x, y}\in T$, this implies that $\merge{\cM}{x, y}=u\lr{\lrb{x, y}}$. For general $x, y\in R$, consider the unique $x$-$y$ path $P$ in $T$. By definition of $\cM$, the terminals $x$ and $y$ are merged at time $\max_{e\in P}u\lr{e}$. Because $T$ is an MST, none of the edges in $P$ can be more expensive than the edge $\lrb{x, y}$. More precisely, $u\lr{e}\leq u\lr{\lrb{x, y}}$ for all $e\in P$. Thus,
    \begin{align*}
        \merge{\cM}{x, y}=\max_{e\in P}u\lr{e}\leq u\lr{\lrb{x, y}}\,,
    \end{align*}
    which implies not only that $\cM$ fulfills the upper bound $u$ and thus \eqref{ieq: min max bound on merge times}, but also that $\cM$ obtains equality in \eqref{ieq: min max bound on merge times}.

    Note that there can only be one merge plan that fulfills \eqref{ieq: min max bound on merge times} with equality, because a merge plan is uniquely defined by the vector of merge times $\lr{\merge{\cM}{x, y}}_{x, y\in R}$. 
Additionally, if there was a different merge plan $\cM'\neq\cM$ that is upper bounded by $u$ with $\val{\cM'}\geq\val{\cM}$, there would be terminals $x, y\in R$ with $\merge{\cM'}{x, y}>\merge{\cM}{x, y}$ contradicting \Cref{obs: min max bound on merge times}.

    It remains to determine $\val{\cM}$. To this end, we observe that for $t\geq0$, the edges $e\in T$ with $u\lr{e}\geq t$ form a spanning tree in the graph that arises from $\big(R, \binom{R}{2}\big)$ by contracting each part of the partition $\cS^t$ (by construction of $\cS^t$).
    Thus, the number of such edges is
    \begin{align*}
        \big|\lrb{e\in T \mid u\lr{e}\geq t}\big|\ =\ \lrv{\cS^t} - 1\,.
    \end{align*}
    This allows us to express the value of $\cM$ as
    \begin{align*}
        \val{\cM}\ =\ \int_0^{\infty}\lrv{\cS^t}-1\,dt\ =\ \int_0^{\infty}\big|\lrb{e\in T\ :\ u\lr{e}\geq t}\big|\,dt\ =\ \sum_{e\in T}u\lr{e}\ =\ \mst_u\,.
    \end{align*}
\end{proof}

A direct consequence of \Cref{lem: max merge plan has value mst and there are defining paths} is that if we choose $u\lr{\lrb{x, y}}$ proportional to the distance $\dist{x, y}$ in the graph $G$ for all $x,y\in R$, then the value of $\cM_u$ is proportional to the cost of a terminal minimum spanning tree. In particular, for $u(x,y) = \frac{1}{2}\,\dist{x, y}$, this determines the value of the canonical merge plan (\Cref{def: canonical merge plan}):

\begin{corollary}\label{cor: canonical merge plan}
  The canonical merge plan has value $\frac{1}{2}\,\tmst$.
\end{corollary}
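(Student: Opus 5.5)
We apply \Cref{lem: max merge plan has value mst and there are defining paths} with the upper bound $u(\{x,y\}) \coloneqq \frac{1}{2}\,\dist{x,y}$ for all $x,y\in R$. By \Cref{def: canonical merge plan}, the canonical merge plan is the merge plan of largest value among all merge plans upper bounded by this $u$. By the lemma, this maximizer exists, is unique, equals $\cM_u$, and has value $\val{\cM_u}=\mst_u$. So it only remains to identify $\mst_u$.

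A spanning tree $T$ of $\big(R,\binom{R}{2}\big)$ has cost
\[
\sum_{\{x,y\}\in T}u(\{x,y\}) \;=\; \frac{1}{2}\sum_{\{x,y\}\in T}\dist{x,y}.
\]
Scaling all edge weights by the positive constant $\frac{1}{2}$ does not change which trees are minimal. Hence $\mst_u$ equals $\frac{1}{2}$ times the cost of a minimum spanning tree of the complete graph on $R$ with weights $\dist{\cdot,\cdot}$.

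We assumed (by taking the metric closure) that $G$ is complete and $c$ is metric, so $c(\{x,y\})=\dist{x,y}$ for all terminals $x,y$. Therefore this minimum spanning tree cost is exactly the cost of a terminal MST in $G[R]$, i.e.\ $\tmst$. Without the metric-closure assumption, the same conclusion holds with $\tmst$ read in the metric closure, which is the convention fixed in the paper. Thus $\val{\cM_u}=\frac{1}{2}\,\tmst$.

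There is no real obstacle here. The only point needing care is that ``largest value'' in \Cref{def: canonical merge plan} is well defined, and this is supplied by the uniqueness and maximality part of \Cref{lem: max merge plan has value mst and there are defining paths}.
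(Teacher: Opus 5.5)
Your proof is correct and is exactly the paper's argument. The paper likewise derives the corollary directly from \Cref{lem: max merge plan has value mst and there are defining paths} with $u(\{x,y\})=\frac{1}{2}\,\dist{x,y}$, so that the canonical merge plan is $\cM_u$ and has value $\mst_u=\frac{1}{2}\,\tmst$.
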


Another consequence of \Cref{lem: max merge plan has value mst and there are defining paths} is the following.

\begin{observation}\label{obs: MSTs have same cost wrt u and to merge times}
    Given an upper bound $u$, the minimum spanning trees in the complete graph on $R$ with respect to the cost function $u$ and with respect to the cost function $\lrb{x, y}\mapsto\merge{\cM_u}{x, y}$ all have the same cost, namely $\val{\cM_u}$.
\end{observation}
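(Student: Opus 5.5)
The plan is to combine \Cref{lem: max merge plan has value mst and there are defining paths} with the standard bottleneck (min-max path) property of minimum spanning trees. Write $m(x,y)\coloneqq\merge{\cM_u}{x,y}$. By \Cref{lem: max merge plan has value mst and there are defining paths}, $m(x,y)$ equals the right-hand side of \eqref{ieq: min max bound on merge times}. In other words, $m(x,y)$ is the minimum bottleneck value over all $x$-$y$ paths in the complete graph with respect to $u$. The same lemma gives $\mst_u=\val{\cM_u}$, so it remains to show that every minimum spanning tree with respect to $m$ has cost $\val{\cM_u}$.

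\textbf{Step 1.} First I would show that the MST cost with respect to $m$ is at most $\mst_u$. Since $m\le u$ pointwise by \Cref{obs: min max bound on merge times}, any spanning tree $T$ satisfies $m(T)\le u(T)$. In particular, a $u$-MST has $m$-cost at most $\mst_u$.

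\textbf{Step 2.} For the reverse inequality, I would avoid tree exchanges and instead reuse the counting argument from the proof of the lemma. Let $T'$ be any spanning tree. For each $t$, the edges $e\in T'$ with $m(e)<t$ lie inside parts of $\cS^t$, because $m(e)<t$ means the endpoints of $e$ are already merged at time $t$. An acyclic edge set lying within the parts has at most $|R|-|\cS^t|$ edges. Hence at least $|\cS^t|-1$ edges of $T'$ satisfy $m(e)\ge t$. Integrating over $t$ gives
\[
m(T')\;=\;\int_0^\infty \big|\{e\in T' : m(e)\ge t\}\big|\,dt\;\ge\;\int_0^\infty \lrv{\cS^t}-1\,dt\;=\;\val{\cM_u}.
\]
The identical argument with $u\ge m$ in place of $m$ shows $u(T')\ge\val{\cM_u}$, which is consistent with $\mst_u=\val{\cM_u}$.

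\textbf{Conclusion.} Combining the two steps, the minimum spanning tree cost with respect to $m$ equals $\val{\cM_u}=\mst_u$. Every minimum spanning tree has the minimum cost by definition, so the observation follows.

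The only delicate point is the boundary convention. Points $x,y$ are in different parts exactly when $t\le m(x,y)$, so $m(e)<t$ indeed implies the endpoints are merged at time $t$. The remaining ties form a measure-zero set of times and do not affect the integral.
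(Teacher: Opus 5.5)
Your proof is correct, and it reaches the key lower bound by a different route than the paper.

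The paper never argues about individual spanning trees. It writes $\tilde u(x,y)=\merge{\cM_u}{x,y}$ and notes that $\cM_u$ is trivially upper bounded by $\tilde u$. It then applies \Cref{lem: max merge plan has value mst and there are defining paths} a second time, now to $\tilde u$. The maximality clause of that lemma gives $\val{\cM_u}\le\val{\cM_{\tilde u}}$, and its value clause gives $\val{\cM_{\tilde u}}=\mst_{\tilde u}$. The chain is closed by $\mst_{\tilde u}\le\mst_u=\val{\cM_u}$, which is exactly your Step~1.

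You replace the second application of the lemma with a direct argument. At each time $t$, any spanning tree needs at least $\lrv{\cS^t}-1$ edges running between different parts, and every such edge has merge time at least $t$. Integrating over $t$ then gives $\tilde u(T')\ge\val{\cM_u}$ for every spanning tree $T'$.

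The paper's route is shorter because it treats the lemma, including its maximality statement, as a black box. Yours is more self-contained: it needs the lemma only for $\mst_u=\val{\cM_u}$. It also proves something slightly more general, namely that for an arbitrary merge plan $\cM$ every spanning tree has total merge-time cost at least $\val{\cM}$. This is essentially the counting from the lemma's value computation, redone as an inequality for arbitrary trees rather than for one specific MST.
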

\begin{proof}
  Let $\tilde{u}$ denote the upper bound given by $\tilde{u}(x,y) \coloneqq \merge{\cM_u}{x, y}$.
  Then, because the merge plan $\cM_u$ is upper bounded by $u$, we have $\mst_u \leq \mst_{\tilde{u}}$.
  Moreover, the merge plan $\cM_u$ is also upper bounded by $\tilde{u}$. 
  Thus, we have $\val{\cM_{\tilde{u}}} \geq \val{\cM_{u}}$, where we used that $\cM_{\tilde{u}}$ is the merge plan with maximum value that is upper bounded by $\tilde{u}$.
  Finally, by \Cref{lem: max merge plan has value mst and there are defining paths}, we have $\mst_u = \val{\cM_u}$ and $\mst_{\tilde{u}} = \val{\cM_{\tilde{u}}}$, implying
  \[
     \mst_u = \val{\cM_u} \leq \val{\cM_{\tilde{u}}} = \mst_{\tilde{u}} \leq \mst_u\,
  \]
  and hence we have equality throughout.
\end{proof}

As mentioned earlier, we can express $ \dropG{X} \coloneqq \tmst_G - \tmst_{G/X} $ in terms of the local value of $X$ with respect to the canonical merge plan $\cM$ of $G$.
\begin{lemma}\label{lem: drop in terms of canonical merge plan}
    Let $\cM$ be the canonical merge plan of $G$ and $X\subseteq R$, then
    \begin{align*}
        \dropG{X}=2\cdot\val{\cM, X}\,.
    \end{align*}
\end{lemma}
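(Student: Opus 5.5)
We plan to reduce both sides to minimum spanning tree computations and then compare them through the same integral formula that appears in the proof of \Cref{lem: max merge plan has value mst and there are defining paths}. Since $c$ is metric, $\tmst_G = \mst_d$ with $d(x,y)=\dist{x,y}$, and by \Cref{lem: max merge plan has value mst and there are defining paths} the canonical merge plan $\cM=\cM_u$ for $u=\frac12 d$ satisfies $\merge{\cM}{x,y} = \frac12 \min_P \max_{e\in P} d(e)$, the bottleneck distance.

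\textbf{Expressing the drop.} In $G/X$ the terminal set is $(R\setminus X)\cup\{x_X\}$, with distances $d'(a,b)=\min(d(a,b), d(a,X)+d(X,b))$, where $d(a,X) = \min_{x \in X} d(a,x)$. We write $\tmst_{G/X}$ as an MST on $R$ in which all pairs inside $X$ get cost $0$. The only subtle point is that paths through the contracted vertex are already represented by pairs $(a,x)$ and $(x',b)$. Hence $\tmst_{G/X} = \mst_{d_X}$, where $d_X(x,x')=0$ for $x,x'\in X$ and $d_X = d$ otherwise; shortest paths in $G/X$ reduce to such terminal-level edges.

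\textbf{Counting components.} For any cost $w$ we use $\mst_w = \int_0^\infty (\kappa_w(t)-1)\,dt$, where $\kappa_w(t)$ is the number of connected components of the graph on $R$ consisting of the pairs with $w<t$. This is the argument in the proof of \Cref{lem: max merge plan has value mst and there are defining paths}, run with Kruskal's algorithm. With $t$ rescaled by $2$, $\kappa_d(2t)=|\cS^t|$. Under $d_X$, the components at threshold $2t$ are those of $\cS^t$, except that all parts meeting $X$ are merged into one; merging within $X$ never splits anything, so connectivity under $d_X$ is connectivity under $d$ plus $X$ glued together. Therefore
\[
\kappa_{d_X}(2t) = |\cS^t| - \lrv{\lrb{S\in\cS^t : S\cap X\neq\emptyset}} + 1 .
\]

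\textbf{Conclusion and main obstacle.} Subtracting and integrating gives
\[
\dropG{X} = \int_0^\infty \bigl(\kappa_d(s)-\kappa_{d_X}(s)\bigr)\,ds = 2\int_0^\infty \Bigl(\lrv{\lrb{S\in\cS^t : S\cap X\neq\emptyset}}-1\Bigr)dt = 2\,\val{\cM,X}.
\]
The main point to verify carefully is the claim that $\tmst_{G/X}$ equals the MST of $R$ with $X$ glued at cost $0$. One also has to handle boundary effects of strict versus non-strict thresholds, but these occur on a measure-zero set of $t$. The case $|X|=1$ is trivial, since both sides are $0$.
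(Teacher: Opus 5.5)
Your proof is correct, and its last step coincides with the paper's. Both arguments write each MST value as the integral over $t$ of the number of parts minus one, and both observe that passing from $G$ to $G/X$ merges exactly the parts of $\cS^t$ that meet $X$.

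The difference is in how you tie $\tmst_{G/X}$ to the merged partition. The paper proves the structural fact $\cM_u/X=\cM_{u_X}$ (\Cref{lem: Mu is preserved under contraction}) and then applies \Cref{cor: canonical merge plan} to $G/X$. You instead apply the Kruskal threshold formula $\mst_w=\int_0^\infty(\kappa_w(t)-1)\,dt$ directly to $w=d$ and $w=d_X$. The threshold graph of $d_X$ is plainly that of $d$ plus a clique on $X$.

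Your route is more self-contained, since it never forms the contracted merge plan $\cM/X$. It also makes explicit a point the paper passes over. For $a,b\notin X$, the contracted distance $\min\{d(a,b),\,d(a,X)+d(X,b)\}$ can be strictly smaller than $2u_X(\{a,b\})=d(a,b)$. This is harmless only because it leaves all threshold connectivities, and hence all bottleneck values, unchanged, which is exactly what your remark about paths through the contracted vertex checks.

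The paper's route, in exchange, isolates the contraction behaviour of $\cM_u$ as a standalone lemma, which it reuses in \Cref{obs:drop reduces under contraction}.
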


To prove \Cref{lem: drop in terms of canonical merge plan}, we show that the canonical merge plan (and more generally $\cM_u$ for any upper bound $u$) behaves nicely under contraction of terminal sets.
Given a terminal set $R$, a merge plan $\cM=\lr{\cS^t}_{t\geq0}$ on $R$, and a set $X\subseteq R$, we define the contracted merge plan $\cM/X=\lr{\cS_X^t}_{t\geq0}$ to be the merge plan on $R/X$ with the following property: For every time $t\geq0$ the partition $\cS_X^t$ arises from $\cS^t$ by combining all sets that intersect $X$ into a single set. 
Moreover, we define an upper bound $u_X$ arising from $u$ by contraction of $X$ as follows:  for $a,b\in R\setminus X$, let $u_X\lr{\lrb{a, b}}\coloneqq u\lr{\lrb{a, b}}$ and $u_X\lr{\lrb{a, X}}\coloneqq\min_{x\in X}u\lr{\lrb{a, x}}$.

Using this notation, we can show that $\cM_u$ is stable under contraction. More precisely:
\begin{lemma}\label{lem: Mu is preserved under contraction}
    Given a terminal set $R$, an upper bound $u$, and a subset $X\subseteq R$ we have
    \begin{align*}
        \cM_u/X=\cM_{u_X}\,.
    \end{align*}
\end{lemma}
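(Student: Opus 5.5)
Since a merge plan is uniquely determined by its vector of merge times, the plan is to show that $\cM_u/X$ and $\cM_{u_X}$ have the same merge time for every pair of elements of $R/X$. By \Cref{lem: max merge plan has value mst and there are defining paths}, the merge times of $\cM_{u_X}$ are given by the min-max path formula \eqref{ieq: min max bound on merge times} in the complete graph on $R/X$ with weights $u_X$, with equality. So the task reduces to computing the merge times of $\cM_u/X$ and showing that they equal this min-max quantity.

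First I would describe $\cM_u/X$ concretely, using the same construction as in the proof of \Cref{lem: max merge plan has value mst and there are defining paths}. The partition $\cS^t$ of $\cM_u$ consists of the connected components of the graph $H^t$ on $R$ with all edges $e$ satisfying $u(e)<t$; taking all edges rather than only tree edges gives the same components, by the min-max formula. The partition $\cS^t_X$ then arises by merging all components meeting $X$. Equivalently, $\cS^t_X$ is the component partition of the graph $H^t/X$ on $R/X$, whenever $X$ is nonempty and $t>0$.

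Next I would check what edges $H^t/X$ contains. For $a,b\notin X$, the edge $\{a,b\}$ is present iff $u(\{a,b\})<t$. The edge $\{a,X\}$ is present iff some $x\in X$ has $u(\{a,x\})<t$, which is exactly $u_X(\{a,X\})<t$. So $H^t/X$ is precisely the graph on $R/X$ consisting of the edges $e$ with $u_X(e)<t$. Hence two elements $p,q$ of $R/X$ are in different parts of $\cS^t_X$ iff every $p$-$q$ path in $R/X$ contains an edge with $u_X\ge t$. This holds iff $t\le \min_P\max_{e\in P}u_X(e)$, which identifies the merge time and also confirms that $\cM_u/X$ is a genuine merge plan. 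The merge times therefore agree with those of $\cM_{u_X}$, and uniqueness gives the lemma.

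The main obstacle is bookkeeping rather than any real difficulty. One must first justify that $\cS^t$ equals the component partition of the full threshold graph $H^t$ and not just of the MST threshold graph. This follows because $\merge{\cM_u}{x,y}$ is the min-max value, so $x,y$ lie in the same part of $\cS^t$ iff some path has all edge weights below $t$. The degenerate cases $X=\emptyset$ and $|X|=1$ also need a brief check, and in both cases contraction is trivial.
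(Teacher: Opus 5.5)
Your proof is correct. It rests on the same fact as the paper's proof, the min--max characterization of merge times from \Cref{lem: max merge plan has value mst and there are defining paths}, but organizes the argument differently.

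The paper proves two inequalities on merge times. First, $\cM_u/X$ is upper bounded by $u_X$, because merge times can only decrease under contraction. Second, the paper makes a case distinction at $t=\merge{\cM_u/X}{a,b}$: either $a$ and $b$ are merged in $\cM_u$ at time $t$, or both have been merged with some element of $X$ by then. In either case some $a$-$b$ path in $R/X$ has all $u_X$-weights at most $t$. Uniqueness in \Cref{lem: max merge plan has value mst and there are defining paths} then closes the sandwich.

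You instead identify every partition $\cS^t_X$ directly as the component partition of the $u_X$-threshold graph, using that thresholding commutes with contracting $X$. This gives both directions at once. It also verifies that $\cM_u/X$ is a genuine merge plan, which the paper leaves implicit. The price is that you need the threshold-graph description of the whole partition $\cS^t$, not just pairwise merge times. You justify this correctly: the minimum over the finitely many paths is attained, so $t>\merge{\cM_u}{x,y}$ holds iff some path has all weights below $t$.
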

\begin{proof}
    Because for all $a, b\in R$, the merge time cannot increase when contracting $X$, that is $\merge{\cM_u/X}{a, b}\leq\merge{\cM_u}{a, b}$, the merge plan $\cM_u/X$ satisfies the upper bound $u_X$.
    We will show that for all $a, b\in R/X$ with $a\neq b$, there exists an $a$-$b$ path $P$ in the complete graph on $R/X$  with
    \begin{align}
        \merge{\cM_u/X}{a, b}\ \geq\ \max_{e\in P} \ u_X\lr{e}.\label{ieq: min max bound in contracted graph}
    \end{align}
    Then \Cref{lem: max merge plan has value mst and there are defining paths} implies that $\cM_u/X=\cM_{u_X}$. It remains to show \eqref{ieq: min max bound in contracted graph}.

    Consider the time $t\coloneqq\merge{\cM_u/X}{a, b}$. By definition of $\cM_u/X$ one of the following it true: 
    (i) in $\cM_u$ the terminals $a$ and $b$ are merged exactly at time $t$ , or 
    (ii) in $\cM_u$ the terminal $a$ is merged with some terminal $x_a\in X$ and $b$ is merged with some terminal $x_b\in X$ both no later than time $t$. 
    Thus, by \Cref{lem: max merge plan has value mst and there are defining paths} the complete graph on $R$ contains (i) an $a$-$b$ path $P$ with $u\lr{e}\leq t$ for all $e\in P$, or (ii) an $a$-$x_a$ path $P_a$ and an $b$-$x_b$ path $P_b$ with $u\lr{e}\leq t$ for all $e\in P_a\cup P_b$. In both cases there is an $a$-$b$ path $P_{ab}$ in the complete graph on $R/X$ with $u_X\lr{e}\leq t$ for all $e\in P_{ab}$ (in case (i) choose $P_{ab}=P$ and for (ii) choose $P_{ab}=P_a\cup P_b$). This path $P_{ab}$ fulfills \eqref{ieq: min max bound in contracted graph}, which concludes the proof.
\end{proof}

Using \Cref{lem: Mu is preserved under contraction} and \Cref{cor: canonical merge plan}, we can prove \Cref{lem: drop in terms of canonical merge plan}:

\begin{proof}[Proof of \Cref{lem: drop in terms of canonical merge plan}]
    Let $\cM$ be the canonical merge plan of $G$. Then by \Cref{lem: Mu is preserved under contraction}, $\cM/X$ is the canonical merge plan of $G/X$.
    By \Cref{cor: canonical merge plan}, the values of the terminal spanning tree in $G$ and $G/X$ equal twice the values of $\cM$ and $\cM/X$, respectively. Thus
    \begin{align*}
        \frac12 \,\dropG{X}
        \ =\ \frac12\,\mst_G-\frac12\,\mst_{G/X}
        \ =\ \val{\cM}-\val{\cM/X}\,.
    \end{align*}
    It remains to show that that the right hand-side equals $\val{\cM, X}$. To this end, we fix a time $t\geq 0$ and consider the corresponding partitions $\cS^t$ of $\cM$ and $\cS^t_X$ of $\cM/X$. By definition of $\cM/X$, the partition $\cS^t_X$ arises from $\cS^t$ by combining all parts of $\cS^t$ that intersect $X$. 
    Thus,
    \begin{align*}
        \lr{\lrv{\cS^t}-1}-\lr{\lrv{\cS^t_X}-1}
        \ =\ \lrv{\cS^t}-\lrv{\cS^t_X}
        \ =\ \lrv{\lrb{S\in\cS^t\,:\,S\cap X\neq\emptyset}}-1\,.
    \end{align*}
    Integrating both sides from $0$ to $\infty$ yields
    \begin{align*}
        \val{\cM}-\val{\cM/X}\ =\ \val{\cM, X}.
    \end{align*}
\end{proof}

A direct consequence of \Cref{lem: drop in terms of canonical merge plan} is the following.

\begin{lemma}\label{lem: merge plan mst optimal}
    For MST-optimal instances of Steiner tree, there exists a merge plan $\cM$ that is $0$-good and satisfies $\val{\cM} \geq \frac{7}{12} \cdot \tmst$.
\end{lemma}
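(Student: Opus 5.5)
We take $\cM$ to be the merge plan $\cM_u$ from \Cref{lem: max merge plan has value mst and there are defining paths} for the upper bound $u(\{x,y\}) \coloneqq \frac{7}{12}\,\dist{x,y}$. In other words, $\cM$ is the merge plan with largest value satisfying \eqref{eq:merge_plan_mst_optimal}. By \Cref{lem: max merge plan has value mst and there are defining paths}, its value is $\mst_u = \frac{7}{12}\,\tmst$. Moreover, $\cM$ is upper bounded by $u$, so condition \eqref{ieq: general upper bound for gamma good} with $\gamma = 0$ holds by construction. For $0$-goodness it then suffices to show that no terminal set $X\subseteq R$ is $0$-cheap with respect to $\cM$. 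If that holds, the second condition of \Cref{def: gamma good} is vacuous.

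The key step is to relate $\val{\cM,X}$ to $\dropG{X}$. Let $\cM_c$ be the canonical merge plan, i.e.\ $\cM_{u_c}$ with $u_c = \frac12\,\dist{\cdot,\cdot}$. Since $u = \frac{7}{6}\,u_c$, the minimax formula \eqref{ieq: min max bound on merge times}, which holds with equality for both plans, gives $\merge{\cM}{x,y} = \frac{7}{6}\,\merge{\cM_c}{x,y}$ for all $x,y$. Hence $\cS^t_{\cM} = \cS^{6t/7}_{\cM_c}$ for every $t$. Substituting $s = \frac{6}{7}t$ in the integral defining the local value yields $\val{\cM,X} = \frac{7}{6}\,\val{\cM_c,X}$. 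Then \Cref{lem: drop in terms of canonical merge plan} gives $\val{\cM,X} = \frac{7}{12}\,\dropG{X}$ for every nonempty $X\subseteq R$.

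Now suppose some $X$ were $0$-cheap, i.e.\ $\cost(X) < \frac{12}{7}\,\val{\cM,X} = \dropG{X}$. Take a cheapest component $K$ connecting $X$. Combining $K$ with a terminal MST of $G/X$ gives a Steiner tree of cost at most $\cost(X) + \tmst_{G/X} < \tmst_G$. This contradicts MST-optimality, $\opt = \tmst$. If $|X|\le 1$, then $\dropG{X}=0\le\cost(X)$, so such sets are never cheap. Hence no $0$-cheap set exists, and $\cM$ is $0$-good with $\val{\cM} = \frac{7}{12}\,\tmst$.

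The only step needing care is the scaling argument, namely that scaling the upper bound by $\frac{7}{6}$ scales all merge times, and hence all local values, by $\frac76$. This follows directly from the minimax characterization in \Cref{lem: max merge plan has value mst and there are defining paths} together with the change of variables in the integral; I expect no real obstacle.
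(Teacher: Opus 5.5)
Your proof is correct and follows the paper's argument: you take the $\frac{7}{6}$-scaled canonical merge plan, obtain $\frac{12}{7}\,\val{\cM,X}=\dropG{X}$ from \Cref{lem: drop in terms of canonical merge plan}, conclude from MST-optimality that no set is $0$-cheap, and read off the value from \Cref{lem: max merge plan has value mst and there are defining paths}. The one difference is that you derive the scaling of merge times and local values explicitly from the minimax formula, whereas the paper simply asserts it.
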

\begin{proof}
    For an MST-optimal instance, there are no improving components.
    That is, we have $\conn{X}\geq\dropG{X}$ for all $X\subseteq R$.
    Let $\cM_G$ be the canonical merge plan, and consider a ``scaled'' version $\cM$ of $\cM_G$, namely the merge plan with maximal value subject to satisfying 
    \begin{equation*}
    \merge{\cM}{x,y} \ \leq\ \frac{7}{12}\, \dist{x,y} \quad \forall x,y\in R\,.
    \end{equation*}
    Then, by \Cref{lem: drop in terms of canonical merge plan}, we obtain for all $X\subseteq R$,
    \begin{align*}
        \conn{X}\ \geq\ \dropG{X}\ =\ 2\cdot\val{\cM_G, X}\ =\ \frac{12}{7}\cdot \val{\cM, X}\,,
    \end{align*}
    where the last equation follows from the fact that $\cM$ is a suitably scaled version of $\cM_G$. 
Hence, there are no $0$-cheap components with respect to $\cM$ (see \Cref{def: gamma cheap}) and the merge plan $\cM$ is $0$-good (see \Cref{def: gamma good}).
    The value of $\cM$ equals $\frac{7}{6}$ times that value of $\cM_G$.
Together with \Cref{cor: canonical merge plan}, this implies $\val{\cM}=\frac{7}{6}\, \val{\cM_G}=\frac{7}{12}\,\tmst$.
\end{proof}

Next, we discuss how the local value of sets behaves under taking unions.
For general $X,\bar X\subseteq R$, the local value $\val{\cM, X\cup\bar X}$ might be as small as
\begin{align*}
    \max\lrb{\val{\cM, X},\,\val{\cM, \bar X}}\,.
\end{align*}
This is the case, for example, if for essentially all times $t$ the sets $X$ and $\bar X$ intersect the same parts of the partition $\cS^t$.
However, this can not be the case if the partition $\cS^t$ separates $X$ and $\bar X$ until one of them is entirely contained in a single part of $\cS^t$, as we prove next.

\begin{lemma}\label{lem: drop of X union X bar}
    Given a fixed merge plan $\cM=\lr{\cS^t}_{t\geq0}$, let $S\in\cS^{t^*}$ be a set that is active at time $t^*$ and consider two terminal sets $X\subseteq S$ and $\bar X\subseteq R\setminus S$. Then
    $$
        \val{\cM, X\cup\bar X}\ \geq\ \val{\cM, X}+\val{\cM, \bar X}+t^*\,.
    $$
\end{lemma}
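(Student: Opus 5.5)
We prove the inequality pointwise in time and then integrate. For each $t\ge 0$ set
\[
f_Z(t)\coloneqq \lrv{\lrb{S'\in\cS^t : S'\cap Z\neq\emptyset}}-1 \qquad\text{for nonempty } Z\subseteq R,
\]
so that $\val{\cM,Z}=\int_0^\infty f_Z(t)\,dt$. We may assume $X$ and $\bar X$ are nonempty. If one of them is empty, say $\bar X=\emptyset$, then $\val{\cM,\bar X}$ should be read as $0$, and the required bound $\val{\cM,X}\ge\val{\cM,X}+t^*$ fails unless $t^*=0$. So the statement implicitly assumes both sets are nonempty, and we take that as a hypothesis.

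The pointwise claim is the following:
\begin{itemize}
\item for $t\le t^*$ (more precisely, for almost every $t<t^*$), $f_{X\cup\bar X}(t)\ge f_X(t)+f_{\bar X}(t)+1$;
\item for all $t$, $f_{X\cup\bar X}(t)\ge f_X(t)+f_{\bar X}(t)$.
\end{itemize}
Integrating, the first inequality contributes the extra term $\int_0^{t^*}1\,dt=t^*$, and the second gives the rest. This yields the lemma.

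For the first inequality, fix $t<t^*$. Since $\cS^t$ refines $\cS^{t^*}$, every part of $\cS^t$ lies inside some part of $\cS^{t^*}$. Parts meeting $X$ lie inside $S$, because $X\subseteq S$ and each part of $\cS^t$ is contained in a single part of $\cS^{t^*}$. Parts meeting $\bar X$ lie outside $S$, because $\bar X\cap S=\emptyset$. Hence no part meets both sets, the two families of parts are disjoint, and
\[
f_{X\cup\bar X}(t)+1=(f_X(t)+1)+(f_{\bar X}(t)+1),
\]
which is the claimed inequality. At $t=t^*$ the same argument applies because $S\in\cS^{t^*}$.

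For general $t$, write $A$ for the parts of $\cS^t$ meeting $X$ and $B$ for those meeting $\bar X$. Then
\[
\lrv{A\cup B}=\lrv{A}+\lrv{B}-\lrv{A\cap B}\ge \lrv{A}+\lrv{B}-1
\]
provided $\lrv{A\cap B}\le 1$. But $\lrv{A\cap B}$ could exceed $1$ once the sets are merged, so this bound is false in general. We must therefore use the structure of the situation, and this is the main obstacle.

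The fix is to observe that for $t>t^*$ the part of $\cS^t$ containing $S$ contains all of $X$, so $\lrv{A}=1$ and $f_X(t)=0$. It then suffices to show $f_{X\cup\bar X}(t)\ge f_{\bar X}(t)$, which holds since $A\cup B\supseteq B$, that is, by monotonicity in $Z$. This covers all $t>t^*$.

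Combining the two ranges, the integral over $[0,t^*]$ contributes at least $\int_0^{t^*}(f_X+f_{\bar X})\,dt+t^*$, and the integral over $(t^*,\infty)$ contributes at least $\int_{t^*}^\infty(f_X+f_{\bar X})\,dt$, because $f_X$ vanishes there. Summing gives
\[
\val{\cM,X\cup\bar X}\ge\val{\cM,X}+\val{\cM,\bar X}+t^*.
\]
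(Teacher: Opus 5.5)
Your proof is correct and is essentially the paper's argument: compare the counting functions pointwise, using that for $t<t^*$ the refinement of $\cS^{t^*}$ means no part meets both $X$ and $\bar X$ (which gives the extra $+1$), while for $t\ge t^*$ exactly one part meets $X$ so monotonicity suffices, and then integrate. Your side remark that both sets must be nonempty is accurate: the local value is defined only for nonempty terminal sets, and the paper applies the lemma only in that setting.
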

\begin{proof}
    For $Y\in\lrb{X, \bar X, X\cup\bar X}$ let $f_Y\lr{t}\coloneqq\lrv{\lrb{S'\in\cS^{t}\,:\,S'\cap Y\neq\emptyset}}$.
    For every time $t<t^*$, any active set $S'\in\cS^{t}$ can intersect at most one of $X$ and $\bar X$ and thus $ f_{X\cup\bar X}\lr{t}=f_X\lr{t}+f_{\bar X}\lr{t}$, implying
    \begin{align}
       f_{X\cup\bar X}\lr{t}-1\ =\ \lr{f_X\lr{t}-1}+\lr{f_{\bar X}\lr{t}-1}+1\,.\label{ieq: num intersected active sets before t}
    \end{align}
    For $t\geq t^*$ exactly one set $S'\in\cS^{t}$ will intersect $X$ and thus 
    \begin{align}
        f_{X\cup\bar X}\lr{t}-1\ \geq\ f_{\bar X}\lr{t}-1\ =\ \lr{f_{X}\lr{t}-1}+\lr{f_{\bar X}\lr{t}-1}\,.\label{ieq: num intersected active sets after t}
    \end{align}
    Together \eqref{ieq: num intersected active sets before t} and \eqref{ieq: num intersected active sets after t} imply
    \begin{align*}
        \val{X\cup\bar X}
        &=\int_0^{\infty}f_{X\cup\bar X}\lr{t}-1\,dt\\
        &\geq\int_0^{\infty}\lr{f_{X}\lr{t}-1}+\lr{f_{\bar X}\lr{t}-1}\,dt\ + t^*\\
        &=\val{X}+\val{\bar X}+t^*\,.
    \end{align*}
\end{proof}

Finally, we prove a well-known fact about $\dropG{X}$, namely that it can only decrease, when we contract terminals:

\begin{observation}\label{obs:drop reduces under contraction}
    For a graph $G$ and two terminal sets $X, Y\subseteq R$, we have
    \begin{align*}
        \drop[G/Y]{X}\leq\dropG{X}\,.
    \end{align*}
\end{observation}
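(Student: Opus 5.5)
The plan is to write both drops as differences of spanning tree costs and compare them via the canonical merge plan. Write $\dropG{X}=\tmst_G-\tmst_{G/X}$ and $\drop[G/Y]{X}=\tmst_{G/Y}-\tmst_{G/(X\cup Y)}$, where contracting $X$ in $G/Y$ is the same as contracting $X\cup Y$ in $G$. The claim is then equivalent to
\[
\tmst_{G/Y}+\tmst_{G/X}\ \leq\ \tmst_G+\tmst_{G/(X\cup Y)},
\]
which is a supermodularity-type statement for $Z\mapsto -\tmst_{G/Z}$.

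To prove this inequality, I would use \Cref{lem: drop in terms of canonical merge plan} twice. Let $\cM=\lr{\cS^t}_{t\ge0}$ be the canonical merge plan of $G$. Then $\dropG{X}=2\,\val{\cM,X}$. By \Cref{lem: Mu is preserved under contraction}, $\cM/Y$ is the canonical merge plan of $G/Y$. Here one checks that the metric of $G/Y$ restricted to terminals yields exactly $u_Y$ for $u=\frac12\dist{\cdot,\cdot}$: the distance from $a$ to the contracted vertex $Y$ is $\min_{y\in Y}\dist{a,y}$, and distances between $a,b\notin Y$ may shrink by routing through $Y$. Strictly, this gives $\dist[G/Y]{a,b}=\min\lr{\dist{a,b},\,\dist{a,Y}+\dist{Y,b}}$, which differs from the plain $u_Y$. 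However, by \Cref{obs: min max bound on merge times} the merge times of $\cM_{u}$ only depend on min–max path values, and the extra path $a$–$Y$–$b$ in the complete graph on $R/Y$ has max-edge value at most half of $\dist{a,Y}+\dist{Y,b}$. So $\cM_{u_Y}$ coincides with the canonical merge plan of $G/Y$. Applying \Cref{lem: drop in terms of canonical merge plan} in $G/Y$ then gives $\drop[G/Y]{X}=2\,\val{\cM/Y,X'}$, where $X'$ is the image of $X$ in $R/Y$.

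It remains to show $\val{\cM/Y,X'}\le\val{\cM,X}$, and this can be done pointwise in $t$. The partition $\cS^t_Y$ arises from $\cS^t$ by merging all parts meeting $Y$ into one part. Let $f(t)$ be the number of parts of $\cS^t$ meeting $X$, and $f_Y(t)$ the number of parts of $\cS^t_Y$ meeting $X'$. Merging parts can only reduce the number of distinct parts meeting $X$ (the merged part counts once), so $f_Y(t)\le f(t)$. A degenerate case needs care: if $X\cap Y\neq\emptyset$, or if $X'$ becomes the single vertex $Y$, the count is still at least $1$ whenever $X\neq\emptyset$, so $f_Y(t)-1\ge 0$ and the integrand comparison $f_Y(t)-1\le f(t)-1$ holds. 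If $X=\emptyset$, both drops are $0$. Integrating over $t$ gives the claim.

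The main obstacle is the identification of the canonical merge plan of $G/Y$ with $\cM/Y$, because the contracted metric is not literally $u_Y$. I expect this to go through via the min–max characterization, using \Cref{lem: max merge plan has value mst and there are defining paths} and the fact that shortcut paths through $Y$ are already represented as two-edge paths in $R/Y$. As a fallback, one can avoid merge plans entirely and argue directly with Kruskal: contracting $Y$ is equivalent to adding zero-cost edges among $Y$, and the drop of $X$ equals the total cost of the $|X|-1$ edges removed when zero-cost edges among $X$ are added to an MST. Adding zero edges among $Y$ first can only make each such removed maximum edge cheaper, by the exchange property of MSTs.
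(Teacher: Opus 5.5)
Your main argument (second and third paragraphs) is the paper's proof: you express both drops through the canonical merge plan via \Cref{lem: drop in terms of canonical merge plan} and \Cref{lem: Mu is preserved under contraction}, and then show $\val{\cM/Y,X}\le\val{\cM,X}$ pointwise in $t$, since every part of $\cS^t_Y$ meeting $X$ is a union of parts of $\cS^t$, at least one of which meets $X$. Your bottleneck argument that $\cM_{u_Y}$ is the canonical merge plan of $G/Y$, even though $\tfrac12\dist[G/Y]{\cdot,\cdot}\neq u_Y$, correctly fills in a step the paper leaves implicit.

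One correction concerns your first-paragraph reformulation, which is false when $X\cap Y=\emptyset$. In that case $(G/Y)/X$ has $X$ and $Y$ as two distinct contracted vertices; it is not $G/(X\cup Y)$. The displayed inequality can then fail: for $R=\{x,y\}$, $X=\{x\}$, $Y=\{y\}$, the left side is $2\,\dist{x,y}$ while the right side is $\dist{x,y}$. Nothing later in your proof uses this reformulation, so simply delete that paragraph.
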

\begin{proof}
    Due to \Cref{lem: drop in terms of canonical merge plan,lem: Mu is preserved under contraction}, it suffices to prove that for any merge plan $\cM$ we have $\val{\cM/Y, X}\leq\val{\cM, X}$.
    Let $\cM=\lr{\cS^t}_{t\geq0}$ and $\cM/Y=\lr{\cS_Y^t}_{t\geq0}$. For each set of the partition $\cS_Y^t$ that $X$ intersects, it has to intersect at least one set of the partition $\cS^t$. Hence,
    \begin{align*}
        \val{\cM/Y, X}&=\int_0^{\infty}\lrv{\lrb{S\in\cS_Y^t\,:\,S\cap X\neq\emptyset}}-1\ dt\\
        &\leq\ \int_0^{\infty}\lrv{\lrb{S\in\cS^t\,:\,S\cap X\neq\emptyset}}-1\ dt\ =\ \val{\cM, X}.
    \end{align*}
\end{proof}

 \section{Preliminary Remarks on the Moat Growing Algorithm}\label{sec: assumptions}

The goal of this section is to describe the precise setting for which we analyze \Cref{algo:dual_growth_with_merge_plan}. To avoid tedious case distinctions we do two things:
First, we will not work with $\gamma$-good merge plans, but with \emph{strictly} $\gamma$-good merge plans (\Cref{def: strictly gamma good}).
Second, we will not run \Cref{algo:dual_growth_with_merge_plan} on the original instance. Instead, if we are given an instance $\cI$ and a strictly $\gamma$-good merge plan $\cM$ for $\cI$, we show that $\cM$ is feasible for an instance $\tilde\cI$ that arises from $\cI$ by subdivision of edges. Note that subdividing edges changes neither the value of \ref{eq:bcr} nor the cost of an optimum Steiner tree. 
Thus, the instance arising from subdivision is equivalent and, in particular, has the same integrality gap.

Before we introduce the precise theorem that we prove, we want to clarify the following: as we are interested in dual solutions of \ref{eq:bcr}, we always run \Cref{algo:dual_growth_with_merge_plan} on the bidirected graph. 
That is, we replace each undirected edge $\lrb{v, w}$ of the given undirected graph by two directed edges $\lr{v, w}$ and $\lr{w, v}$ to which we assign the same cost $c\lr{\lr{v, w}}=c\lr{\lr{w, v}}=c\lr{\lrb{v, w}}$. The step of bidirecting edges happens after we subdivided the undirected edges.

The only difference between \emph{strictly} $\gamma$-good merge plans and $\gamma$-good merge plans is that we want strict inequalities. Hence, we say that a set $X\subset R$ is \emph{strictly $\gamma$-expensive} with respect to $\cM$ if $\conn{X}>\frac{12-12\gamma}{7-5\gamma}\,\val{\cM, X}$, and define \emph{strictly} $\gamma$-good merge plans as follows:\begin{definition}[strictly $\gamma$-good]\label{def: strictly gamma good}
Let $\gamma \in [0,\frac15]$.
    We say that a merge plan $\cM$ is \emph{strictly} $\gamma$-good, if for all $x,y\in R$ with $x\neq y$
    \begin{align}
        \merge{\cM}{x, y}<\frac{7-5\gamma}{12}\,\dist{x, y},\label{ieq: strict general upper bound for gamma good}
    \end{align}
    and for every terminal set $X\subseteq R$ that is not strictly $\gamma$-expensive with respect to $\cM$, there are $x,y\in X$ with $x\neq y$ satisfying
    \begin{align}
        \merge{\cM}{x, y}<\frac{7-5\gamma}{18-14\gamma}\,\dist{x, y}\,.\label{ieq: strict stronger upper bound for gamma good}
    \end{align}
\end{definition}

We can approximate (the value of) a $\gamma$-good merge plan using strictly $\gamma$-good merge plans as follows:
Whenever we have a $\gamma$-good merge plan $\cM$, we can obtain a strictly $\gamma$-good merge plan $\tilde\cM$, by scaling down all merge times by a factor of $\lr{1-\varepsilon}$:

\begin{observation}\label{obs: gamma good to strictly gamma good}
    Let $\epsilon \in (0,1)$ and let $\cM$ be a $\gamma$-good merge plan. 
    Then we obtain a strictly $\gamma$-good merge plan $\tilde\cM=\big(\tilde\cS^t\big)_{t\geq0}$ with $\textnormal{value}\big(\tilde\cM\big)=\lr{1-\epsilon}\val{\cM}$ by defining $\tilde\cS^t\coloneqq\cS^{t/\lr{1-\epsilon}}$ for all $t\geq 0$.
\end{observation}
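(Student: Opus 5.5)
First I will check that $\tilde\cM$ is a merge plan and identify its merge times. For $t\ge0$ set $\tilde\cS^t=\cS^{t/(1-\epsilon)}$. Two terminals $x,y$ lie in different parts of $\tilde\cS^t$ if and only if $t/(1-\epsilon)\le \merge{\cM}{x,y}$, that is, if and only if $t\le(1-\epsilon)\,\merge{\cM}{x,y}$. Hence $\tilde\cM$ satisfies \Cref{def: merge plan} with
\[
\merge{\tilde\cM}{x,y}=(1-\epsilon)\,\merge{\cM}{x,y}.
\]

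The value follows from the substitution $s=t/(1-\epsilon)$:
\[
\val{\tilde\cM}=\int_0^\infty \bigl(|\cS^{t/(1-\epsilon)}|-1\bigr)\,dt=(1-\epsilon)\,\val{\cM}.
\]
The same substitution shows $\val{\tilde\cM,X}=(1-\epsilon)\,\val{\cM,X}$ for every nonempty $X\subseteq R$.

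For \eqref{ieq: strict general upper bound for gamma good}, take $x\neq y$. Then $\dist{x,y}>0$ because costs are positive. So
\[
\merge{\tilde\cM}{x,y}=(1-\epsilon)\,\merge{\cM}{x,y}\le(1-\epsilon)\,\frac{7-5\gamma}{12}\,\dist{x,y}<\frac{7-5\gamma}{12}\,\dist{x,y},
\]
where the first inequality is \eqref{ieq: general upper bound for gamma good} for $\cM$.

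For \eqref{ieq: strict stronger upper bound for gamma good}, let $X\subseteq R$ be not strictly $\gamma$-expensive with respect to $\tilde\cM$. Using the local value identity, this means
\[
\conn{X}\le\frac{12-12\gamma}{7-5\gamma}\,\val{\tilde\cM,X}=(1-\epsilon)\,\frac{12-12\gamma}{7-5\gamma}\,\val{\cM,X}.
\]
I split into two cases.

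\emph{Case $\val{\cM,X}>0$.} The right-hand side is strictly smaller than $\frac{12-12\gamma}{7-5\gamma}\val{\cM,X}$, so $X$ is $\gamma$-cheap with respect to $\cM$. Since $\cM$ is $\gamma$-good, there are $x\neq y$ in $X$ satisfying \eqref{ieq: stronger upper bound for gamma good}. Multiplying by $1-\epsilon<1$ and using $\dist{x,y}>0$ gives the strict inequality \eqref{ieq: strict stronger upper bound for gamma good} for $\tilde\cM$.

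\emph{Case $\val{\cM,X}=0$.} Then $\conn{X}\le0$. Since costs are positive, $X$ is a single terminal (or empty), and there is no pair $x\neq y$ to exhibit. This is the only delicate point: the condition is vacuous here, or, as is surely intended, only sets with $|X|\ge2$ are meant.

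Checking the two bounds and this degenerate case completes the argument.
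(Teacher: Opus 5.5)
Your argument is correct and is essentially the paper's proof. Merge times and local values scale by $1-\epsilon$, both strict inequalities come from multiplying by $1-\epsilon<1$ and using $\dist{x,y}>0$, and your contrapositive case split is the same dichotomy the paper uses (either $X$ is not $\gamma$-cheap for $\cM$, or an early-merged pair exists). One small correction: for a singleton $X$ the requirement is not vacuous but literally unsatisfiable, since no pair $x\neq y$ exists, so the definition must be read for $|X|\ge 2$; the paper's own parenthetical remark about $\val{\cM,X}=0$ makes this assumption tacitly.
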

\begin{proof}
    Consider two terminals $x, y\in R$ with $x\neq y$. By definition of the merge plan $\tilde\cM$, we have $\merge{\tilde\cM}{x, y}=\lr{1-\epsilon}\merge{\cM}{x, y}$.
    For every $X\subseteq R$ the local value of $X$ with respect to $\tilde\cM$ is $\textnormal{value}\big(\tilde\cM, X\big)=\lr{1-\varepsilon}\val{\cM, X}$. In particular, $\textnormal{value}\big(\tilde\cM\big)=\lr{1-\epsilon}\,\val{\cM}$.
    
    Because our edge costs are positive, we have for all $x\neq y\in R$ that $\dist{x, y}>0$. Thus, if $\merge{\cM}{x, y}=0$, then also $\merge{\tilde\cM}{x, y}=0<\frac{7-5\gamma}{12}\,\dist{x, y}$. Otherwise, $\merge{\cM}{x, y}>0$ and we get
    \begin{align*}
        \merge{\tilde\cM}{x, y}\ =\ \lr{1-\epsilon}\,\merge{\cM}{x, y}\ <\ \merge{\cM}{x, y}\ \leq\ \frac{7-5\gamma}{12}\,\dist{x, y}\,,
    \end{align*}
    which shows the first property \eqref{ieq: strict general upper bound for gamma good} of \Cref{def: strictly gamma good}.
    
    For the second property we can argue in a similar way. Because $\cM$ is $\gamma$-good (\Cref{def: gamma good}), for each set $X\subseteq R$ one of the following holds: (i) $X$ is not $\gamma$-cheap with respect to $\cM$, or (ii) there are $x,y\in X$ with $x\neq y$ that where merged very early. More precisely, we have (i)
    \begin{align}
        \frac{12-12\gamma}{7-5\gamma}\,\textnormal{value}\big(\tilde\cM, X\big)\ <\ \frac{12-12\gamma}{7-5\gamma}\,\val{\cM, X}\ \leq\ \conn{X}\,,\label{ieq:not cheap}
    \end{align}
    or (ii)
    \begin{align}
        \merge{\tilde\cM}{x, y}\ <\ \merge{\cM}{x, y}\ \leq\ \frac{7-5\gamma}{18-14\gamma}\,\dist{x, y}\,.\label{ieq:early merge}
    \end{align}
    Hence, every set $X\subseteq R$ is strictly $\gamma$-expensive, or there are $x,y\in X$ that are merged strictly before time $\frac{7-5\gamma}{18-14\gamma}\,\dist{x, y}$.
    (Note that, in both cases, if $\val{\cM, X}$ respectively $\merge{\cM}{x, y}$ is zero, the first inequality in \eqref{ieq:not cheap} respectively \eqref{ieq:early merge} is not strict, but then the second inequality is strict instead.)
\end{proof}

The reason why we run \Cref{algo:dual_growth_with_merge_plan} on a subdivided instance $\tilde\cI$, and not on $\cI$ itself, is that it allows us to split edges into segments with the same properties and to easily refer to points ``in the middle of some edge'' by adding new vertices at these points.
To formally define the properties obtained by subdivision, we introduce some notation.

\begin{definition}
We say that a directed edge $e$ is \emph{tight} if $\ \sum_{U\subseteq V:e\in\delta^+\lr{U}}y_U=c\lr{e}$.    
\end{definition}

\begin{definition}
A set $X\subseteq R$ \emph{reaches} a vertex $v$ at time $t$, if there is an $X$-$v$ path $P$ such that all edges of $P$ are tight at time $t$. By $\atf{X}{v}$ we denote the first time $t$ for which $v$ is reachable from $X$.
\end{definition}

\begin{definition}\label{def: contribution}
We say that a set $S$ \emph{contributes} to an edge $e=\lr{v, w}$ at time $t$, if $S\in\cS^t$ and 
$\atf{S}{v} < t \leq \atf{S}{w}$.
\end{definition}
Note that if $S\in\cS^t$, we increase the dual variable $y_{U_S}$ at time $t$.
Moreover, for times $t$ between $\atf{S}{v}$ and $\atf{S}{w}$, we have $(v,w)\in \delta^+\lr{U_S}$.

There are two reasons, why we subdivide edges. 
First, we want to ensure that sets don't ``meet in the middle of an edge'' (see \ref{itm: reachability of both endpoints} of \Cref{def:well-subdivided} below). 
In particular, we want that every bifurcation vertex (as mentioned in \Cref{par: bifurcation}) is indeed a vertex and not ``some point in the middle of an edge''.
Second, it will be convenient to split edges into segments with the same properties (see \ref{itm: uniform edges} of \Cref{def:well-subdivided} below). 

Both of this is captured by the definition of a well-subdivided instance:

\begin{definition}[well-subdivided]\label{def:well-subdivided}
    We say that a Steiner tree instance $(G=(V,E),c,R)$ is \emph{well-subdivided} for a merge plan $\cM$, if \Cref{algo:dual_growth_with_merge_plan} with merge plan $\cM$ fulfills the following two properties:
 \begin{enumerate}[label=(\roman*)]
     \item\label{itm: reachability of both endpoints} 
     If at time $t$ both endpoints of an edge $e$ are reachable (by some terminals), then at least one orientation of $e$ is tight at time $t$.
     \item\label{itm: uniform edges} For every edge $\vec{e}$ of the bidirected graph with $\sum_{U\,:\,\vec{e}\in\delta^+\lr{U}}y_U>0$, there is a time interval $(t_1,t_2]$ and a set family $\cB \subseteq \cS^{t_2}$ such that 
   \begin{itemize}
       \item until (including) time $t_1$ no set $S\in \cS^t$ contributes to $\vec{e}$, and
       \item from time $t_1$ until (including) $t_2$, the sets contributing to $\vec{e}$ are exactly the sets in $\cB$, and
       \item at time $t_2$ the edge $\vec{e}$ is tight.
   \end{itemize}
 \end{enumerate}
\end{definition}

A natural strategy to obtain a well-subdivided instance would be to establish the properties for one edge after the other by a suitable subdivision of this edge.
For this approach to work, we need to ensure that a subdivision does not destroy the properties on already considered edges.
Thus, it would be desirable that \Cref{algo:dual_growth_with_merge_plan} behaves equivalently on both the original and the subdivided instance. Unfortunately this is not the case as the example in \Cref{fig:not_invariant_under_subdivision} shows.

However, it is possible to subdivide the instance such that any further subdivision of edges does not change the behavior of \Cref{algo:dual_growth_with_merge_plan}.
For this reason, we will subdivide our instance in two steps. In the first step, we will ensure that \ref{itm: reachability of both endpoints} of \Cref{def:well-subdivided} is fulfilled and that further subdivisions of our instance won't change the behavior of \Cref{algo:dual_growth_with_merge_plan}. The result of this first step is captured by the following lemma.

\begin{restatable}{lemma}{InvarianceSubdivision}
\label{lem:achieving_ivanriance_under_subdivision}
 Let $\cI =(G=(V, E), c, R)$ be an instance of Steiner Tree and let $\cM$ be a merge plan for $\cI$.
 Then there exists a graph $G'=\lr{V', E'}$ that arises from $G$ by subdividing each edge at most once, such that instance $\cI'=(G', c, R)$ satisfies \ref{itm: reachability of both endpoints} of \Cref{def:well-subdivided} and any further subdivisions of the edges of $G'$ leave $\atf{X}{v}$ unchanged for all $X\subseteq R$ and $v\in V'$.
\end{restatable}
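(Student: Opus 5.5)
We will construct $G'$ by running \Cref{algo:dual_growth_with_merge_plan} on $G$ itself and using that, for each undirected edge $\{v,w\}$, the two orientations $(v,w)$ and $(w,v)$ receive dual contributions that grow piecewise linearly in time.

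\textbf{Key quantity.} The algorithm on $G$ has finitely many events: merges of $\cM$ and new tight edges. Between consecutive events, each active set $U_S$ grows at unit rate. For an edge $e=\{v,w\}$, once both endpoints are reached, the load on $(v,w)$ comes from sets that reached $v$ but not $w$, and the load on $(w,v)$ comes from sets that reached $w$ but not $v$. The first time at which
\[
\text{load}(v,w)+\text{load}(w,v)=c(e)
\]
holds is the moment two "fronts meet in the middle" of $e$. If we place a new vertex $z_e$ exactly at the corresponding point, so that the $v$-side front has travelled length $\text{load}(v,w)$ and the other side the rest, then on the subdivided edge both halves $\{v,z_e\}$ and $\{z_e,w\}$ become tight in one orientation at that moment.

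\textbf{Construction of $G'$.} We subdivide exactly those edges $e$ whose orientations are both strictly non-tight when both endpoints first become reachable and that later reach such a meeting point. Each such edge is subdivided once, at the position defined above. The edges with one orientation already tight need no subdivision. For them, condition \ref{itm: reachability of both endpoints} holds already.

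\textbf{Claim 1: no change to reachability.} Inserting $z_e$ at the meeting point does not change the run on the rest of the graph, so $\atf{X}{v}$ is unchanged for all original vertices $v$. We prove this by induction over events. Before the meeting time, the sub-edge $(v,z_e)$ accumulates exactly the load that $(v,w)$ accumulated, and the same holds on the other side. So no sub-edge becomes tight earlier than the corresponding event in $G$. At the meeting time, $z_e$ is reached from both sides, and afterwards reachability across $e$ is governed by the two halves.

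Here lies the main subtlety. In $G$, the edge $(v,w)$ might have become tight later, giving $w$-reachability from the $v$ side. In $G'$, by contrast, reaching $w$ from $v$ requires $(z_e,w)$ to become tight, which now only receives load from sets that have reached $z_e$. I expect this comparison of loads before and after the meeting to be the main obstacle, since this is exactly the non-invariance shown in \Cref{fig:not_invariant_under_subdivision}. The plan for it is to choose the subdivision point so that the two runs coincide from the meeting time on: every set that reached $v$ now reaches $z_e$ and contributes to $(z_e,w)$ exactly as it did to $(v,w)$. The lemma only requires some choice of $G'$, and we define $G'$ to be this subdivided instance.

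\textbf{Claim 2: stability and \ref{itm: reachability of both endpoints}.} In $G'$, every edge either has a tight orientation by the time both endpoints are reached, or never has both endpoints reached. Hence \ref{itm: reachability of both endpoints} holds. For a further subdivision of an edge of $G'$ at a point $p$, the two sides of $p$ are reached in sequence and never simultaneously with a gap between them. Along each tight orientation, the fronts advance monotonically, so $p$ is reached precisely when the front passes it. Consequently the load on every piece equals the prorated load on the original edge, and by induction over event times all reach times $\atf{X}{v}$ for $v\in V'$ remain unchanged.
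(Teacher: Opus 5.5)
There is a genuine gap. You read the meeting points off the discrete run of \Cref{algo:dual_growth_with_merge_plan} on $G$, and your Claim~1 asserts that inserting them leaves that run unchanged. Both fail, for exactly the reason shown in \Cref{fig:not_invariant_under_subdivision}. In the discrete run, a set $S$ whose $U_S$ contains both endpoints of a not-yet-filled edge contributes to neither orientation. After subdividing the edge, the same set contributes from both ends and meets itself in the middle.

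In that example, $s_2$ reaches both $x$ and $z$ at time $18$. With your definition of load, $(x,z)$ and $(z,x)$ therefore stay unloaded until $s_1$ reaches $x$ at time $20$, and afterwards only $(x,z)$ is loaded. So the ``meeting point'' is $z$ itself and $\{x,z\}$ is not subdivided. Then \ref{itm: reachability of both endpoints} fails on $\{x,z\}$ already at time $18$. Moreover, a further subdivision of $\{x,z\}$ at its midpoint changes $\atf{s_1}{z}$ (it becomes $22$), so both required properties fail.

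The same happens whenever the two sets loading an edge from opposite ends are merged before their loads meet, or the loads never fill the edge before termination. Your rule leaves such an edge unsubdivided, and the assertion in Claim~2 that every edge has a tight orientation by the time both endpoints are reached is false for it. Also note that the lemma does not ask $G'$ to preserve the reach times of $G$, and the correct $G'$ in fact does not: in the example, $\atf{s_1}{z}$ is $22$ in the correct $G'$ but not in $G$. So Claim~1 aims at a false statement, and ``choosing the subdivision point so that the two runs coincide'' is impossible in general. Since subdividing one edge changes reach times elsewhere, meeting points computed from the run on $G$ are in general not the meeting points of the run on $G'$.

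The missing idea is to determine the meeting points from a process that is itself invariant under subdivision. The paper runs a continuous variant with explicit contributions $C(S,\vec e)$, in which each active $S$ also contributes to $(v,w)$ whenever $v\in U_S$ and $\{v,w\}$ is not yet filled, i.e.\ the contributions from both sides sum to less than $c(\{v,w\})$. Invariance is proved by showing that after subdividing $\{u,w\}$ at a new vertex, the contributions on the two pieces are exactly the truncations $\min\{C(S,(u,w)),c_u\}$ and $\max\{C(S,(u,w))-c_u,0\}$, and symmetrically for the other orientation. Every edge is then subdivided where its two contribution totals meet in this continuous run, and edges that never fill are subdivided at a point that is never reached. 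On the resulting instance $C_S=\delta^+(U_S)$ throughout, so the discrete and continuous runs coincide, and this property survives further subdivision. Your Claim~2 corresponds to this last step, but it can only be used once \ref{itm: reachability of both endpoints} has actually been established for $G'$.
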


\begin{figure}
    \centering
    
\newcommand{\UnsubdividedVertices}{
    \useasboundingbox (-2,-1) rectangle (7, 4);
\LARGE
    \node[terminal, label=below:{$s_1$}] (S1) at (-1, 0) {};
    \node[terminal, transparent] (S2a) at ( 2, 0) {};
    \node[terminal,minimum width=130pt, label=below:{$s_2$}] (A3) at ( 4, 0) {};
    \node[terminal, transparent] (S2b) at ( 6, 0) {};
    \node[steiner, label=above:{$v$}] (V) at (-1, 3) {};
    \node[steiner, label=above:{$x$}] (X) at ( 2, 3) {};
    \node[steiner, label=above:{$z$}] (Z) at ( 6, 3) {};
}

\newcommand{\SubdividedVertices}{
    \UnsubdividedVertices
    \node[steiner, label=above:{$w$}] (W) at (0.5, 3) {};
    \node[steiner, label=above:{$y$}] (Y) at ( 4, 3) {};
}

\newcommand{\UnsubdividedEdges}{
    \draw[diredge] (S1) -- (V);
    \draw[diredge] (S2a) -- (X);
    \draw[diredge] (S2b) -- (Z);
    \draw[diredge] (V) -- (X);
    \draw[diredge] (X) -- (Z);
    
    \draw[diredge] (V) -- (S1);
    \draw[diredge] (X) -- (S2a);
    \draw[diredge] (Z) -- (S2b);
    \draw[diredge] (X) -- (V);
    \draw[diredge] (Z) -- (X);
}

\newcommand{\SubdividedEdges}{
    \draw[diredge] (S1) -- (V);
    \draw[diredge] (S2a) -- (X);
    \draw[diredge] (S2b) -- (Z);
    \draw[diredge] (V) -- (W);
    \draw[diredge] (W) -- (X);
    \draw[diredge] (X) -- (Y);
    \draw[diredge] (Y) -- (Z);
    
    \draw[diredge] (V) -- (S1);
    \draw[diredge] (X) -- (S2a);
    \draw[diredge] (Z) -- (S2b);
    \draw[diredge] (X) -- (W);
    \draw[diredge] (W) -- (V);
    \draw[diredge] (Y) -- (X);
    \draw[diredge] (Z) -- (Y);
}

\newcommand{\UnsubdividedA}{
\begin{tikzpicture}[scale=1]
    \UnsubdividedVertices
    
    \draw[edge] (S1) -- node[left] {$18$} (V);
    \draw[edge] (S2a) -- node[left] {$18$} (X);
    \draw[edge] (S2b) -- node[right] {$18$} (Z);
    \draw[edge] (V) -- node[above] {$2$} (X);
    \draw[edge] (X) -- node[above] {$4$} (Z);
\end{tikzpicture}}

\newcommand{\SubdividedA}{
\begin{tikzpicture}[scale=1]
    \SubdividedVertices
    
    \draw[edge] (S1) -- node[left] {$18$} (V);
    \draw[edge] (S2a) -- node[left] {$18$} (X);
    \draw[edge] (S2b) -- node[right] {$18$} (Z);
    \draw[edge] (V) -- node[above] {$1$} (W);
    \draw[edge] (W) -- node[above] {$1$} (X);
    \draw[edge] (X) -- node[above] {$2$} (Y);
    \draw[edge] (Y) -- node[above] {$2$} (Z);
\end{tikzpicture}}

\newcommand{\UnsubdividedB}{
\begin{tikzpicture}[scale=1]
    \UnsubdividedVertices
    \UnsubdividedEdges
    \draw[edgedirprogress] (S2a) -- (X);
    \draw[edgedirprogress] (S2b) -- (Z);
    \draw[edgedirprogress] (X) -- (V);
    \draw[edgedirprogress, green!70!black] (S1) -- (V);
    \draw[edgedirprogress, green!70!black] (V) -- (X);
\end{tikzpicture}}

\newcommand{\SubdividedB}{
\begin{tikzpicture}[scale=1]
    \SubdividedVertices
    \SubdividedEdges
    \draw[edgedirprogress] (S2a) -- (X);
    \draw[edgedirprogress] (S2b) -- (Z);
    \draw[edgedirprogress] (X) -- (W);
    \draw[edgedirprogress] (W) -- (V);
    \draw[edgedirprogress] (X) -- (Y);
    \draw[edgedirprogress] (Z) -- (Y);
    \draw[edgedirprogress, green!70!black] (S1) -- (V);
    \draw[edgedirprogress, green!70!black] (V) -- (W);
    \draw[edgedirprogress, green!70!black] (W) -- (X);
\end{tikzpicture}}

\newcommand{\UnsubdividedC}{
\begin{tikzpicture}[scale=1]
    \UnsubdividedVertices
    \UnsubdividedEdges
    \draw[edgedirprogress] (S2a) -- (X);
    \draw[edgedirprogress] (S2b) -- (Z);
    \draw[edgedirprogress] (X) -- (V);
    \draw[edgedirprogress] (V) -- ($(V)!1/3.5!(S1)$);
    \draw[edgedirprogress, green!70!black] (S1) -- (V);
    \draw[edgedirprogress, green!70!black] (V) -- (X);
    \draw[edgedirprogress, green!70!black] (X) -- ($(X)!1/2!(Z)$);
    \draw[edgedirprogress, green!70!black] (X) -- ($(X)!1/3.5!(S2a)$);
\end{tikzpicture}}

\newcommand{\SubdividedC}{
\begin{tikzpicture}[scale=1]
    \SubdividedVertices
    \SubdividedEdges
    \draw[edgedirprogress] (S2a) -- (X);
    \draw[edgedirprogress] (S2b) -- (Z);
    \draw[edgedirprogress] (X) -- (W);
    \draw[edgedirprogress] (W) -- (V);
    \draw[edgedirprogress] (X) -- (Y);
    \draw[edgedirprogress] (Z) -- (Y);
    \draw[edgedirprogress] (V) -- ($(V)!1/3.5!(S1)$);
    \draw[edgedirprogress, green!70!black] (S1) -- (V);
    \draw[edgedirprogress, green!70!black] (V) -- (W);
    \draw[edgedirprogress, green!70!black] (W) -- (X);
    \draw[edgedirprogress, green!70!black] (Y) -- (Z);
    \draw[edgedirprogress, green!70!black] (X) -- ($(X)!1/3.5!(S2a)$);
\end{tikzpicture}}

\begin{tabular}{ccc}
  \resizebox{0.3\textwidth}{!}{\UnsubdividedA} 
  & \resizebox{0.3\textwidth}{!}{\UnsubdividedB} 
  & \resizebox{0.3\textwidth}{!}{\UnsubdividedC} 
  \\[0mm]
  original instance  & $t=20$ & $t=22$ \\[4mm]
   \resizebox{0.3\textwidth}{!}{\SubdividedA} 
  & \resizebox{0.3\textwidth}{!}{\SubdividedB} 
  & \resizebox{0.3\textwidth}{!}{\SubdividedC} 
  \\
  subdivided instance  & $t=20$ & $t=22$
\end{tabular}     \caption{An example showing that \Cref{algo:dual_growth_with_merge_plan} is in general not invariant under subdivision. In the instance shown in the first row, the terminal $s_2$ contributes to neither $\lr{x, z}$ nor to $\lr{z, x}$ because it reaches $x$ and $z$ at the same time ($t=18$). 
    In the second row, we see a subdivided version of the instance from the first row. That is, the edge $\lrb{x, z}$ of length two was replaced by two edges $\lrb{x, y}$ and $\lrb{y, z}$, each of length two. 
    Similarly, the edge $\lrb{v, x}$ is subdivided. 
    Because $s_2$ does not reach the newly introduced vertex $y$ at time $18$, it will contribute to both $\lr{x, y}$ and $\lr{z, y}$. 
    While this does not affect when $s_2$ can reach the original vertices, it has an impact on $s_1$. 
    In the subdivided instance, $s_1$ reaches $z$ at time $t=22$, which is not the case in the original instance.
    The subdivided instance satisfies \ref{itm: reachability of both endpoints} of \Cref{def:well-subdivided}. 
    In fact, $w$ and $y$ are precisely the vertices we would introduce to prove \Cref{lem:achieving_ivanriance_under_subdivision}. 
    The second property \ref{itm: uniform edges} of \Cref{def:well-subdivided} is not satisfied (in both instances) because to both edges $\lr{v, s_1}$ and $\lr{x, s_2}$ some terminal contributed, but the edges are not tight in the end.
    \label{fig:not_invariant_under_subdivision}}
\end{figure}
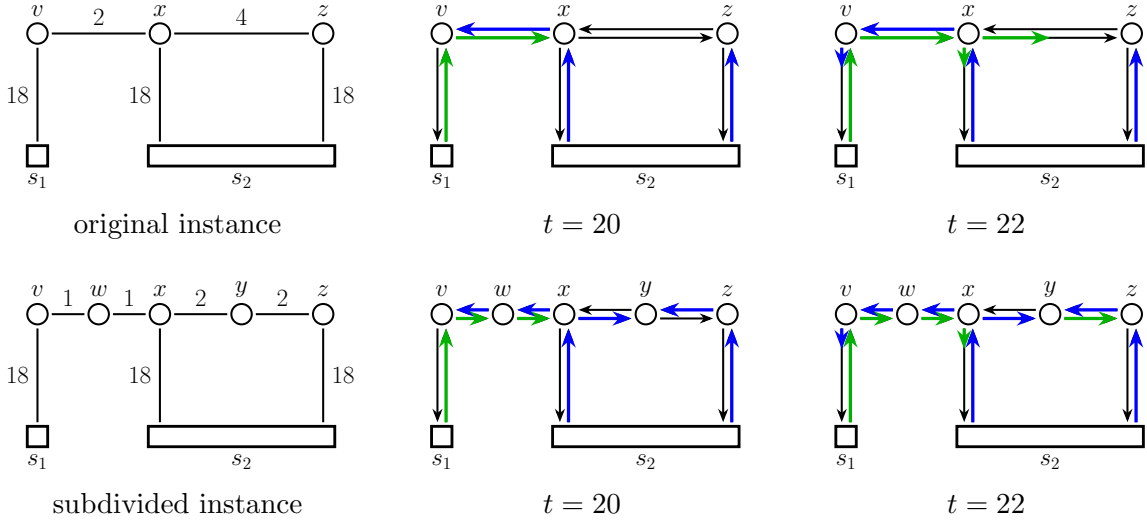

The idea to prove \Cref{lem:achieving_ivanriance_under_subdivision} is the following:
We run a \emph{continuous} version of \Cref{algo:dual_growth_with_merge_plan}, which corresponds to the limit of what we obtain when subdividing edges into shorter and shorter pieces. This continuous version is, by construction, invariant under subdivision of edges.
Afterwards, we detect for every original edge $\lrb{v, w}$, the point where the sets contributing to $\lr{v, w}$ and the sets contributing to $\lr{w, v}$ meet on $\lrb{v, w}$. This point is the only one where we have to subdivide the edge. 
In the example shown in \Cref{fig:not_invariant_under_subdivision}, the continuous version of \Cref{algo:dual_growth_with_merge_plan} behaves on the original instance as \Cref{algo:dual_growth_with_merge_plan} behaves on the subdivided instance. In particular, from time $t=18$ to $t=20$, the terminal $s_2$ would contribute to both $\lr{x, z}$ and $\lr{z, x}$. 
At time $t=20$, the terminal $s_2$ would meet itself at the midpoint of $\lrb{x, z}$. Thus, we subdivide $\lrb{x, z}$ by inserting $y$. Similarly, we subdivide $\lrb{v, x}$ at the point where $s_1$ and $s_2$ meet at time $t=19$, namely we insert the vertex $w$. The three other edges do not need to be subdivided to obtain \ref{itm: reachability of both endpoints} of \Cref{def:well-subdivided}.
For a detailed proof of \Cref{lem:achieving_ivanriance_under_subdivision}, we refer to \Cref{sec:invariance_under_subdivision}.

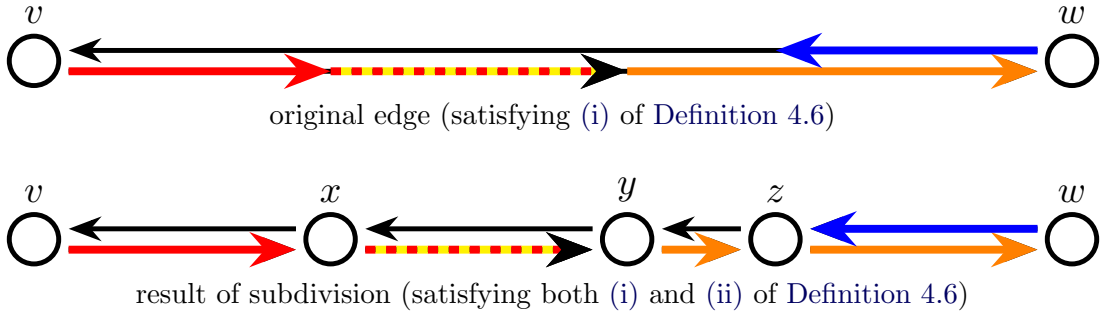
\begin{figure}
    \centering
    \newcommand{\OriginalEdge}{
\begin{tikzpicture}[scale=1.5]
    \node[steiner, label=above:{$v$}] (V) at ( 0, 0) {};
    \node[steiner, label=above:{$w$}] (W) at ( 7, 0) {};
    
    \draw[diredge, black] (V) -- (W);
    \draw[diredge, black] (W) -- (V);
    \draw[edgedirprogress, red] (V) -- ($(V)!2/7!(W)$);
    \draw[edgedirprogresshalfA, red] ($(V)!2/7!(W)$) -- ($(V)!4/7!(W)$);
    \draw[edgedirprogresshalfB, yellow] ($(V)!2/7!(W)$) -- ($(V)!4/7!(W)$);
    \draw[edgedirprogress, orange] ($(V)!4/7!(W)$) -- (W);
    
    \draw[edgedirprogress, blue] (W) -- ($(W)!2/7!(V)$);
\end{tikzpicture}
}

\newcommand{\SubdividedEdge}{
\begin{tikzpicture}[scale=1.5]
    \node[steiner, label=above:{$v$}] (V) at ( 0, 0) {};
    \node[steiner, label=above:{$x$}] (X) at ( 2, 0) {};
    \node[steiner, label=above:{$y$}] (Y) at ( 4, 0) {};
    \node[steiner, label=above:{$z$}] (Z) at ( 5, 0) {};
    \node[steiner, label=above:{$w$}] (W) at ( 7, 0) {};
    
    \draw[diredge, black] (V) -- (X);
    \draw[diredge, black] (X) -- (Y);
    \draw[diredge, black] (Y) -- (Z);
    \draw[diredge, black] (Z) -- (W);
    \draw[diredge, black] (X) -- (V);
    \draw[diredge, black] (Y) -- (X);
    \draw[diredge, black] (Z) -- (Y);
    \draw[diredge, black] (W) -- (Z);
    
    \draw[edgedirprogress, red] (V) -- (X);
    \draw[edgedirprogresshalfA, red] (X) -- (Y);
    \draw[edgedirprogresshalfB, yellow] (X) -- (Y);
    \draw[edgedirprogress, orange] (Y) -- (Z);
    \draw[edgedirprogress, orange] (Z) -- (W);
    \draw[edgedirprogress, blue] (W) -- (Z);
\end{tikzpicture}
}

\begin{tabular}{c}
    \resizebox{0.9\textwidth}{!}{\OriginalEdge} 
    \\[0mm]
    original edge (satisfying \ref{itm: reachability of both endpoints} of \Cref{def:well-subdivided}) \\[4mm]
    \resizebox{0.9\textwidth}{!}{\SubdividedEdge}
    \\
    result of subdivision (satisfying both \ref{itm: reachability of both endpoints} and \ref{itm: uniform edges} of \Cref{def:well-subdivided})
\end{tabular}     \caption{An edge $\{v,w\}$ on which, for different points in time, different families of sets contributed.
    The first time a set contributes to an orientation of $\lrb{v, w}$ is when the red terminal set reaches~$v$. While the red set still contributes to $\lr{v, w}$, eventually also the yellow set reaches $v$ and thus also contributes to $\lr{v, w}$. A bit later (still before $\lr{v, w}$ got tight), the red set and the yellow set are merged into an orange set, which then is the only set contributing to $\lr{v, w}$ until the edge becomes tight. Later, the blue terminal set reaches $w$ and starts contributing to the edge $\lr{w, v}$. Soon afterwards, the algorithm terminates before $\lr{w, v}$ got tight. 
    Below the original edge, it is shown how we subdivide this edge. If the edge $\{v,w\}$ is part of an instance that is invariant under subdivision (as we obtain it from \Cref{lem:achieving_ivanriance_under_subdivision}), then subdividing at each position where the contributing sets change will result in an instance satisfying \ref{itm: uniform edges} of \Cref{def:well-subdivided}.
    \label{fig:uniform_edge_assumption}
    }
\end{figure}

Given an instance $\cI'$ as in \Cref{lem:achieving_ivanriance_under_subdivision} (in particular, an instance for which further subdivisions leave $\atf{X}{v}$ unchanged), we can easily obtain a well-subdivided instance $\tilde\cI$ by further subdividing the edges as we explain next.

By \Cref{lem:achieving_ivanriance_under_subdivision}, the times $\atf{S}{v}$ at which active sets $S$ can reach vertices $v$ don't change, if we further subdivide edges. 
Hence, also the sets $U_S$ of vertices that are reachable from $S$ remain unchanged (except that it might contain newly introduced vertices). 
Thus, if we subdivide an edge~$e$, we cannot destroy properties \ref{itm: reachability of both endpoints} and \ref{itm: uniform edges} of \Cref{def:well-subdivided} on other edges where they were already established.

Therefore, it suffices to show that we can subdivide a single edge such that the edges arising from the subdivision satisfy both properties of \Cref{def:well-subdivided}.
This can be achieved by subdividing the edge at each position, where (for one of the orientations) the family of sets contributing to this edge changes. See \Cref{fig:uniform_edge_assumption} for an example.

Hence, for every instance $\cI$ of the Steiner tree problem and merge plan $\cM$, there is an instance $\tilde\cI$ that arises from $\cI$ by subdividing edges, such that $\tilde\cI$ is well-subdivided for merge plan $\cM$.

In \Cref{sec: analysis of moat growing} we will prove the following:

\begin{restatable}{theorem}{thmMergePlanFeasible}\label{thm: gamma good merge plan feasible}
    Let $\gamma\in [0,\frac15]$.  Let $\cI$ a Steiner tree instance and let $\cM$ be a strictly $\gamma$-good merge plan for $\cI$ such that $\cI$ is well-subdivided for $\cM$. Then $\cM$ is feasible for $\cI$.
\end{restatable}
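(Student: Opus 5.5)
We prove the theorem via an inductive \emph{component invariant}, following the outline in \Cref{sec:outline_analysis}. We induct over the finitely many event times of \Cref{algo:dual_growth_with_merge_plan}. Well-subdividedness guarantees that reachability only changes at vertices and that each edge has a uniform family of contributing sets. The invariant is stated for every active set $S\in\cS^t$ and every vertex $v$ with $\atf{S}{v}\le t$. In the simplest case, when $S$ has not interacted with any disjoint set, it reads as follows. There exist $X\subseteq S$ with $v$'s witness and a component $K$ connecting $X\cup\{v\}$ such that
\[
c(K)\ \le\ \frac{8-8\gamma}{7-5\gamma}\,\val{\cM,X}+t .
\]
In addition, we require distance bounds $\dist{x,v}$ for $x\in X$, with the precise ``$t$-close'' form to be fixed so that the induction closes. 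There is also a relaxed version for sets that have interacted with exactly one other disjoint set. Feasibility then follows as sketched in the outline. Suppose $r\in U_S$ with $r\notin S$. Applying the invariant with $v=r$ and \Cref{lem: drop of X union X bar} to $X$ and $\{r\}$ gives $\val{\cM,X\cup\{r\}}\ge \val{\cM,X}+t$. Combined with the cost bound, this shows that $X\cup\{r\}$ is not strictly $\gamma$-expensive, since $\frac{12-12\gamma}{7-5\gamma}>1$ and $\frac{12-12\gamma}{7-5\gamma}\ge\frac{8-8\gamma}{7-5\gamma}$. Then \eqref{ieq: strict stronger upper bound for gamma good} yields two elements of $X\cup\{r\}$ that are merged early relative to their distance. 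Together with the distance bounds, this contradicts the fact that they lie in different sets yet reach a common vertex. If both lie in $X$, we must check that the distance bounds forbid this inside the induction as well.

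\textbf{Key steps, in order.} (1) Define \emph{safe} edges and \emph{bifurcation vertices}. For $S$ reaching $v$ by a tight path arising from the merge of disjoint sets $S_1,S_2$, we must find a vertex $m$ that each $S_i$ reaches without contributions of the other. I would choose $m$ as a vertex reached by, say, $S_1$ at the latest possible time among vertices reached by both sets via paths free of the other's contributions. I would then prove the needed property by tracing back along contributing sets, exploiting \ref{itm: uniform edges} of \Cref{def:well-subdivided}. (2) Base case: a single terminal reaches exactly the ball of radius $t$. (3) Two-set interaction: after $S_1,S_2$ meet at $m$ at time $t'$, the $m$-$v$ tight path length grows at effective rate at most $3$ until the merge time; this is the analogue of \Cref{fig:zig_zag_example}. (4) Induction step: apply the hypothesis to $(S_1,m)$ and $(S_2,m)$ to obtain $K_1,K_2$. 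Since $X_1\cup X_2$ is strictly $\gamma$-expensive, or else the early-merge condition plus distance bounds forbid the meeting altogether, we get $c(K_1\cup K_2)>\frac{12-12\gamma}{7-5\gamma}\val{\cM,X_1\cup X_2}$. By \Cref{lem: drop of X union X bar}, this lower bounds $t'$. Adding the $m$-$v$ path of length at most $3(t-t')$ and using the upper bound \eqref{ieq: strict general upper bound for gamma good} on the merge time then gives the invariant for $K=K_1\cup K_2\cup P_{mv}$, $X=X_1\cup X_2$. The constants $\frac{8-8\gamma}{7-5\gamma}$ and $\frac{7-5\gamma}{18-14\gamma}$ are precisely what make this arithmetic close. (5) Everything is safe: take the first non-safe edge. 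It has three disjoint contributing sets, which yield three components meeting at a common vertex. Their union connects $X_1\cup X_2\cup X_3$, and applying \Cref{lem: drop of X union X bar} twice gives a component that is cheap relative to the local value. This contradicts strict $\gamma$-goodness, either directly or via the early-merge/distance argument.

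\textbf{Main obstacle.} The hardest part is step (1) together with the relaxed invariant needed in step (5). Proving that a bifurcation vertex exists is subtle: \Cref{fig:bifurcation_example} shows that it need not lie on any given tight path, so the component must be built from a cleverly chosen vertex rather than the witnessing path. Managing pairwise prior interactions among the three sets is similarly delicate. I expect most effort to go into formulating the invariant with exactly the right distance bounds so that both the induction and the safety argument close.
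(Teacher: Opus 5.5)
Your plan follows the architecture of the paper's proof: the cost bound $\frac{8-8\gamma}{7-5\gamma}\val{X}+t$ plus distance bounds, safe paths, bifurcation vertices, and a three-set contradiction at the first unsafe edge. But the two points you defer are exactly where your sketch does not close.

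\textbf{Step (5) fails as written.} Suppose all three sets $S_1,S_2,S_3$ at the unsafe edge received help, i.e.\ case (b) with bifurcation vertices $m_i$ and paths $Q_i$. Then connecting $X_1\cup X_2\cup X_3$ pays for all three $Q_i$. Take $\gamma=0$, singleton $X_i,\bar X_i$, and $\atf{S_i}{v}=\atf{R}{v}=a$. The best available estimates are $\atf{R}{m_i}>\frac67 a$ and $c(Q_i)\le 3a-3\atf{R}{m_i}$. So your bound on $\conn{X_1\cup X_2\cup X_3}$ is $\sum_i(3a-2\atf{R}{m_i})$, which can be nearly $\frac{27}{7}a$. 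Expensiveness only gives a lower bound of $\frac{12}{7}\cdot 2a=\frac{24}{7}a$, so there is no contradiction.

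The paper instead picks $j\in\{2,3\}$ with $\bar X_j\neq\emptyset$ and connects $X_j\cup\bar X_j\cup X_k$. Then $X_j$ and $\bar X_j$ share the single path $Q_j$, and the total is $6a-\atf{R}{m_j}-2\atf{R}{m_k}<\frac{24}{7}a$. The index $k$ is chosen so that exactly one of $X_k,\bar X_j$ lies in $S_1$. This makes the union late-merged and lets $S_j'$ and $S_k'$ separate the three sets in \Cref{lem: drop of X union X bar}. That choice relies on \Cref{lem: wlog merge X bar is less than tRv}: $\bar X$ lies in a single active set at time $\atf{R}{v}$.

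That lemma needs the sharper bound $c(Q)\le 2\atf{S}{v}+\atf{R}{v}-3\atf{R}{m}$ from \Cref{lem: length from m_p to v}, not your $3(t-t')$. With $3(\atf{S}{v}-\atf{R}{m})$, the lemma's contradiction $c(Q)<\atf{S}{v}-\atf{R}{m}$ can never be reached. The $\atf{R}{v}$ term is also what makes this case close for general $\gamma$ when $\atf{S_j}{v}>\atf{S_k}{v}$.

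\textbf{Step (1) is not settled by an extremal choice.} The only length control on an $m$--$v$ path is \Cref{lem: length from m_p to v}. It applies to the suffix of a safe active path after its meeting point. A vertex chosen as ``latest reached by both sets without the other's help'' comes with no such path.

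The paper builds $m$ recursively. It takes the meeting point $m_P$ of the $S$-active path and a set $\bar S\subseteq R\setminus S$ actively reaching $m_P$ at time $\atf{R}{m_P}$, then applies the invariant to $(\bar S,m_P)$. If $\bar S$'s helper lies in $S$, it inherits $\bar S$'s bifurcation vertex; the path bounds telescope because $\atf{\bar S}{m_P}=\atf{R}{m_P}$. Otherwise it takes $m=m_P$. A strengthened hypothesis (every helper terminal reaches an inner vertex $w$ of $P$ at time $\atf{R}{w}$) then guarantees that the helper of the subset $S'\subseteq S$ at $m_P$ lies in $S$.

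\textbf{The invariant must be pinned down.} It needs the distance bound $\dist{x,v}\le\min\{t+\frac{2-6\gamma}{7-5\gamma}\val{X},\frac{9-7\gamma}{7-5\gamma}t\}$ and must require $X$ to be late-merged. In step (4), the cost-based bound $t'>\frac{6-6\gamma}{7-5\gamma}t+\frac{2-2\gamma}{7-5\gamma}(\val{X_1}+\val{X_2})$ alone leaves an excess of $\frac{4\gamma}{7-5\gamma}t$ for singletons. Only combining it with the distance-based bound $t'>\frac{6}{7-5\gamma}t-\frac{1-3\gamma}{7-5\gamma}(\val{X_1}+\val{X_2})$ closes the induction.

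With that distance bound, the final step needs no detour through $X\cup\{r\}$. Some $s\in X\subseteq S$ has $\dist{s,r}\le\frac{9-7\gamma}{7-5\gamma}t<\frac{12}{7-5\gamma}t$, which contradicts \eqref{ieq: strict general upper bound for gamma good}.
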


Before proving \Cref{thm: gamma good merge plan feasible}, we show that it indeed implies \Cref{cor: gamma good merge plan gives large dual}:

\begin{proof}[Proof of \Cref{cor: gamma good merge plan gives large dual}]
    Let $\cI$ be a Steiner tree instance and let $\cM$ be a $\gamma$-good merge plan.
    We show that for every $\varepsilon>0$ we have $\bcr\lr\cI\geq\lr{1-\varepsilon}\val{\cM}$.
    The merge plan $\tilde\cM$ with $\merge{\tilde\cM}{x, y}=\lr{1-\varepsilon}\merge{\cM}{x, y}$ for all $x,y\in R$ is strictly $\gamma$-good for $\cI$ (by \Cref{obs: gamma good to strictly gamma good}). 
    By subdividing the edges of $\cI$ as described above, we obtain an instance $\tilde\cI$ that is well-subdivided for $\tilde\cM$. 
    Subdividing edges changes neither distances between terminals, nor the cost of connecting a given terminal set.
    Hence, $\tilde\cM$ is strictly $\gamma$-good not only for $\cI$ but also for $\tilde\cI$. 
    Thus, by \Cref{thm: gamma good merge plan feasible}, $\tilde\cM$ is feasible for $\tilde\cI$. 
    Because subdividing edges leaves the value of $\bcr$ unchanged, we conclude
    \begin{align*}
        \bcr\lr{\cI}\ =\ \bcr\big(\tilde\cI\big)\ \geq\ \textnormal{value}\big(\tilde\cM\big)\ =\ \lr{1-\varepsilon}\,\val{\cM}\,.
    \end{align*}
    Hence, we have $\bcr\lr{\cI} \geq \lr{1-\varepsilon}\,\val{\cM}$ for all $\epsilon > 0$ and thus $\bcr\lr{\cI}\geq\val{\cM}$.
\end{proof} 

\section{Feasibility of Strictly $\gamma$-Good Merge Plans}\label{sec: analysis of moat growing}

The goal of this section is to prove \Cref{thm: gamma good merge plan feasible}.
Thus, throughout this section, we fix an instance $\cI$, a merge plan $\cM$, and $\gamma\in\lre{0, \frac15}$ such that $\cI$ is well-subdivided and $\cM$ is strictly $\gamma$-good. Our goal is to prove that $\cM$ is feasible for $\cI$, namely that whenever we increase a dual variable $y_{U_S}$ during \Cref{algo:dual_growth_with_merge_plan}, the set $U_S$ contains the root $r$ only if $r\in S$.

To simplify notation, we abbreviate $\merge{}{x, y}\coloneqq\merge{\cM}{x, y}$ for $x, y\in R$ and $\val{X}\coloneqq\val{\cM, X}$ for $X\subseteq R$.

This section is structured as follows.
In \Cref{sec: tight active safe paths} we define increasingly restrictive classes of paths, namely tight paths, active paths, and safe paths. Safe paths will play a crucial role in our analysis: every path and every component we consider in the analysis will be composed of (parts of) safe paths. In \Cref{sec: meeting points and length of safe path}, we formalize the notion of a set ``meeting'' another set. This will not immediately give the bifurcation vertices mentioned in \Cref{par: bifurcation}, but it will play the main role in (implicitly) finding such bifurcations. 
Furthermore, we bound the length of (parts of) safe paths.
In particular, we show that the example from \Cref{fig:zig_zag_example}  is indeed a worst-case example.
\Cref{sec: introducing invariant} formally introduces our main invariant, and outlines how we use this invariant to prove that $\cM$ is a feasible merge plan.
The latter is done in two main steps: First, we prove that if at time $t$ a set $S\in\cS$ reaches a vertex $v$ via a safe path and strictly before time $t$ all active paths were safe, then the invariant holds (\Cref{sec: invariant holds}).
Second, we show that indeed all active paths are safe, implying that the invariant is satisfied throughout the entire algorithm (\Cref{sec: all active paths are safe}).

\subsection{Tight, Active, and Safe Paths}\label{sec: tight active safe paths}

In this section, we introduce several basic definitions, including in particular the notion of $S$-tight, $S$-active, and $S$-safe paths. The definition of $S$-tight paths (\Cref{def: S tight path}) and a proof of their existence (\Cref{lem: tight paths always exist}) were already given in \cite{BCRlessthan2}.

We say that a set $S$ is \emph{active at time $t$} if $S\in \cS^t$.
By $\cS=\bigcup_{t\geq0}\cS^t$ we denote all sets that are active at some time during the algorithm.

\begin{definition}[activation time, deactivation time]
    For a set $S\in\cS$, we denote its activation time and its deactivation time by
    \begin{align}
        a^S=\inf\lrb{t\geq0\,:\,S\in\cS^t}\qquad\text{and}\qquad d^S=\max\lrb{t\geq0\,:\,S\in\cS^t}\,,\label{ieq: def of aS and dS}
    \end{align}
    respectively.
\end{definition}
Note that by \Cref{def: merge plan}, the infimum in \eqref{ieq: def of aS and dS} is a minimum if and only if $S$ is a singleton. Also by \Cref{def: merge plan}, the partitions $\cS^t$ are refinements of each other. Thus $\cS$ is a laminar family. That is, for any two sets $S, S'\in\cS$, either $S$ and $S'$ are disjoint, or one is a subset of the other.

Recall that a set $X\subseteq R$ \emph{reaches} a vertex $v$ at time $t$, if there is an $X$-$v$ path for which all of its edges are tight at time $t$. 
Further, recall that $\atf{X}{v}$ denotes the first time at which $X$ reaches~$v$. 
For a set $S\in\cS$, we are mainly interested in the vertices it can reach before it is deactivated, which motivates the following definition:
\begin{definition}[actively reaching]
    We say that a set $S\in\cS$ \emph{actively reaches} a vertex $v\in V$ if $S$ is active at time $\atf{S}{v}$.
\end{definition}

If a set $S$ (actively) reaches a vertex $v$, there might be multiple $S$-$v$ paths consisting only of tight edges. 
For the analysis, we are interested in paths that represent how $S$ built its way towards~$v$. 
This is captured by the concept of $S$-tight paths, which was introduced in \cite{BCRlessthan2}.
\begin{definition}[$S$-tight, $S$-active]\label{def: S tight path}
    Given a set $S\in\cS$, we call an edge $e=\lr{v, w}$ \emph{$S$-tight}, if $\atf{S}{v}\leq\atf{S}{w}$ and $e$ is tight at time $\atf{S}{w}$.
    Similarly, a path is called \emph{$S$-tight} if it starts in $S$ and all of its edges are $S$-tight.
    An $S$-tight $S$-$v$ path is called \emph{$S$-active}
    if $S$ actively reaches $v$.
\end{definition}

For a path $P$ visiting vertices $a$ and $b$ in this order, we write $P_{[a,b]}$ to denote the $a$-$b$ subpath of~$P$.
For an $S$-tight path $P$ visiting a vertex $v$, we also write $P_{[S,v]}$ to denote the prefix of $P$ until the vertex $v$.
In other words, if $s\in S$ is the first vertex of $P$, then $P_{[S,v]}=P_{[s,v]}$.

A direct consequence of \Cref{def: S tight path} is the following observation.

\begin{observation}\label{obs: edges of S tight path are tight early}
    Let $S\in\cS$, $v\in V$, $P$ an $S$-tight path to $v$, and $w$ a vertex on $P$. 
    Then all edges of $P_{\lre{S, w}}$ are tight at time $\atf{S}{w}$.
\end{observation}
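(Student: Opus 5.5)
The claim is \Cref{obs: edges of S tight path are tight early}: every edge of the prefix $P_{[S,w]}$ is tight at time $\atf{S}{w}$. I would proceed edge by edge along this prefix. The two facts used are that reach times are monotone along an $S$-tight path, and that tightness is never lost once attained.

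First, write $P_{[S,w]} = (s=v_0, v_1, \dots, v_k = w)$ with $s\in S$. Each edge $(v_{i-1},v_i)$ is $S$-tight, so by \Cref{def: S tight path} it satisfies $\atf{S}{v_{i-1}} \le \atf{S}{v_i}$. Chaining these inequalities gives
\[
\atf{S}{v_i}\ \le\ \atf{S}{w}\qquad\text{for all } i\in\{1,\dots,k\}.
\]

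Second, by \Cref{def: S tight path}, the edge $(v_{i-1},v_i)$ is tight at time $\atf{S}{v_i}$. I then need that an edge tight at some time remains tight at all later times. This follows from \Cref{algo:dual_growth_with_merge_plan}. Dual variables only ever increase. Moreover, the algorithm grows a variable $y_{U_S}$ only for $U_S$ the set of vertices reachable from $S$ by tight edges, so a tight edge $(a,b)$ with $a\in U_S$ also has $b\in U_S$. Hence a tight edge never lies in $\delta^+(U_S)$ for a growing set, and its left-hand side $\sum_{U:e\in\delta^+(U)} y_U$ stays equal to $c(e)$ forever. Combining this with $\atf{S}{v_i}\le \atf{S}{w}$, every edge $(v_{i-1},v_i)$ of $P_{[S,w]}$ is tight at time $\atf{S}{w}$.

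There is no real obstacle here. The only point needing care is the monotonicity of tightness, which rests on $U_S$ being closed under tight out-edges. Note also that the conclusion does not require $S$ to be active at time $\atf{S}{w}$; the statement is only about tightness.
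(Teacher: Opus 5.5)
Your proof is correct and follows the paper's own argument: chain the inequalities $\atf{S}{v_{i-1}}\le\atf{S}{v_i}$ along the $S$-tight prefix, and use that each edge is tight at the reach time of its head. The one difference is that you spell out why tightness persists over time (a tight edge leaving a vertex of $U_S$ must end in $U_S$, so it is never in $\delta^+(U_S)$), which the paper uses without comment.
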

\begin{proof}
    Let $v_1, \dots, v_k=w$ be the vertices of $P_{\lre{S, w}}$. By the $S$-tightness of the path $P$, we have
    \begin{align*}
        \atf{S}{v_1}\leq\atf{S}{v_2}\leq\dots\leq\atf{S}{w}\,.
    \end{align*}
    Moreover, by \Cref{def: S tight path}, each edge $\lr{v_i, v_{i+1}}$ of $P_{[S,w]}$ is tight at time $\atf{S}{v_{i+1}}$. 
    Hence, all edges of $P_{\lre{S, w}}$ are tight at time $\atf{S}{w}$.
\end{proof}

Another useful fact about $S$-tight paths, which was already proven by \citeauthor{BCRlessthan2}, is that they always exist if a set $S$ reaches a vertex $v$.
\begin{lemma}[\cite{BCRlessthan2}]
\label{lem: tight paths always exist}
    If a set $S\in\cS$ reaches a vertex $v\in V$, there exists an $S$-tight $S$-$v$ path.
\end{lemma}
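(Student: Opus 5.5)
We will construct the path backwards from $v$, by induction on the reaching time $\atf{S}{v}$. The goal at each step is a predecessor $u$ of $v$ such that the edge $(u,v)$ is $S$-tight, that is, $\atf{S}{u}\le\atf{S}{v}$ and $(u,v)$ is tight at time $\atf{S}{v}$, and such that we can recurse from $u$. If $v\in S$ then $\atf{S}{v}=0$, and the trivial path consisting only of $v$ works. So assume $v\notin S$ and let $t:=\atf{S}{v}>0$.

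Since $S$ reaches $v$ at time $t$, there is an $S$-$v$ path $Q$ all of whose edges are tight at time $t$. By minimality of $t$, no such path exists at any earlier time. There are two consequences:
\begin{itemize}
\item Every vertex $u$ on $Q$ satisfies $\atf{S}{u}\le t$, because $Q_{[S,u]}$ is tight at time $t$.
\item Let $u$ be the predecessor of $v$ on $Q$. Tightness of edges is monotone in time, since dual variables only increase. Hence all edges of $Q_{[S,u]}$ stay tight after time $t$, and $(u,v)$ is tight at time $t$.
\end{itemize}
Therefore $(u,v)$ is $S$-tight in the sense of \Cref{def: S tight path}.

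Next we recurse on $u$. If $\atf{S}{u}<t$, the induction hypothesis gives an $S$-tight $S$-$u$ path $P'$, and we append $(u,v)$. The only possible defect is that $v$ already lies on $P'$. In that case we instead take the prefix $P'_{[S,v]}$, which is still $S$-tight. So in either case we obtain an $S$-tight $S$-$v$ walk, which we shorten to a path by cutting cycles. Cutting cycles preserves $S$-tightness, because the edge entering each vertex in the shortened path is one that was already $S$-tight.

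The main obstacle is the case $\atf{S}{u}=t$, where induction on time alone does not apply. Here we refine the induction to a lexicographic order: first on $\atf{S}{\cdot}$, then on the number of edges of a shortest $S$-$w$ path that is tight at time $\atf{S}{w}$. Choose $Q$ with the minimum number of edges. Then $Q_{[S,u]}$ is tight at time $t=\atf{S}{u}$ and has strictly fewer edges, so the measure strictly decreases and induction gives an $S$-tight $S$-$u$ path. We extend it by $(u,v)$ as before.

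Because this is a double induction over a well-ordered measure, we should confirm well-foundedness. There are only finitely many distinct values $\atf{S}{w}$, as the algorithm has finitely many events, and path lengths are finite. One could alternatively argue via \Cref{obs: edges of S tight path are tight early}-style monotonicity. In any case, this lemma is quoted from \cite{BCRlessthan2}, so a sketch at this level suffices.
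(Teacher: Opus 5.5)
Your proof is correct. It differs from the paper's only in how it handles predecessors that have the same reaching time.

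\textbf{Your route.} You step back one edge at a time. You therefore need the secondary measure (the length of a shortest $S$-$u$ path that is tight at time $\atf{S}{u}$) to make progress when $\atf{S}{u}=\atf{S}{v}$.

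\textbf{The paper's route.} The paper takes an arbitrary $S$-$v$ path $P$ that is tight at time $t=\atf{S}{v}$. It jumps back to the \emph{last} vertex $w$ of $P$ with $\atf{S}{w}<t$, and observes that the entire suffix $P_{[w,v]}$ is $S$-tight. This suffix consists of one edge along which $\atf{S}{\cdot}$ increases to $t$, followed by edges along which it stays equal to $t$, all tight at time $t$. As a result, plain induction on time suffices, with no lexicographic measure. Concatenating with the inductively obtained $S$-$w$ path automatically gives a path: its vertices have reaching time at most $\atf{S}{w}<t$, while the suffix vertices after $w$ have reaching time exactly $t$.

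\textbf{Minor remarks on your version.} Your edge-by-edge argument is equally valid. In your case $\atf{S}{u}<t$, the ``defect'' of $v$ lying on $P'$ cannot occur, since reaching times are non-decreasing along an $S$-tight path; you only need that fix in the equal-time case. Well-foundedness of your measure is immediate from the finiteness of $V$. So the argument is complete as it stands, and you do not need to fall back on the lemma being a citation.
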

\begin{proof}
    We prove the lemma by induction over time $t$.
    Note that it suffices to consider a finite number of points in time, namely, those at which an edge becomes tight.
    For $t=0$, the set $S$ can only reach vertices contained in $S$. 
    Thus, in this case, the path containing no edges is an $S$-tight path. 
    For $t>0$, consider any $S$-$v$ path $P$ whose edges are tight at time $t$. 
    Let $w$ be the last vertex of $P$ for which $\atf{S}{w}<t$. 
    Because $P$ is $S$-tight and by the choice of the vertex $w$, the suffix $P_{\lre{w, v}}$ consists of one edge along which $\atf{S}{\,\cdot}$ is increasing, followed by (zero or more) edges along which $\atf{S}{\,\cdot}=t$ stays constant. 
    Thus, all edges of $P_{\lre{w, v}}$ are $S$-tight. 
    By the induction hypothesis, there is an $S$-tight $S$-$w$ path $Q$. 
    Combining $Q$ with $P_{\lre{w, v}}$, yields an $S$-tight $S$-$v$ path.
\end{proof}

By \Cref{def: S tight path}, every prefix of an $S$-tight path is again $S$-tight. 
However, a prefix $P_{\lre{S, w}}$ of an $S$-active path $P$ might not be $S$-active, because $S$ might only be activated after it already reached~$w$. 
Nevertheless, $P_{\lre{S, w}}$ is still an active path, namely an $S'$-active path for a suitable subset $S'\subseteq S$.
Before we prove this, we need a small observation regarding the first time a set $X\subseteq R$ can reach a vertex $v$. 

\begin{observation}\label{obs: larger sets arrive no later}
    Let $v\in V$ and $\emptyset\neq X'\subseteq X\subseteq R$ then $\atf{X}{v}\leq\atf{X'}{v}$.
\end{observation}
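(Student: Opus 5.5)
The statement is almost immediate from the definitions, so the plan is to show directly that $X$ reaches $v$ at time $\atf{X'}{v}$.

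Set $t' := \atf{X'}{v}$. By the definition of reaching, there is an $X'$-$v$ path $P$ whose edges are all tight at time $t'$. The first vertex of $P$ lies in $X' \subseteq X$, so $P$ is also an $X$-$v$ path with all edges tight at time $t'$. Hence $X$ reaches $v$ at time $t'$, and since $\atf{X}{v}$ is the first time $X$ reaches $v$, we get $\atf{X}{v} \le t'$.

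One detail needs checking: that ``first time'' is attained, i.e.\ that $\atf{X'}{v}$ is a minimum rather than an infimum, so that $P$ really exists at time $t'$. I will argue as follows.
\begin{itemize}
\item Dual variables only increase during \Cref{algo:dual_growth_with_merge_plan}, so once an edge is tight it stays tight.
\item The left-hand side of each edge's dual constraint is a continuous nondecreasing function of time, so the set of times at which an edge is tight is a closed ray $[\tau,\infty)$.
\item The reachability of $v$ from $X'$ can only change at the finitely many times at which an edge becomes tight.
\end{itemize}
Together these show that the set of times at which $X'$ reaches $v$ is of the form $[t',\infty)$, so the first time is attained. The same monotonicity also shows that $X$ keeps reaching $v$ for all $t \ge t'$, though this is not needed for the inequality.

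If $X'$ never reaches $v$, we interpret $\atf{X'}{v} = \infty$ and the inequality holds trivially. There is no real obstacle here; the only point requiring care is the closedness and monotonicity of tightness, which follows from the fact that the dual variables are grown monotonically and continuously.
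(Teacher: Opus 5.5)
Your proof is correct and is essentially the paper's argument: a tight $X'$-$v$ path is also a tight $X$-$v$ path because $X'\subseteq X$, so $X$ reaches $v$ whenever $X'$ does. Your additional check that the first reaching time is actually attained is sound extra rigor that the paper leaves implicit; it rests on tightness being monotone in time and each edge's set of tight times being a closed ray.
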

\begin{proof}
    Whenever a set $X\subseteq R$ can reach a vertex $v$, then by definition of reachability also all super sets of $X$ can reach $v$. 
\end{proof}

\begin{lemma}\label{lem: active reaching subset}
    For a set $S\in\cS$, let $v\in V$ be a vertex that is reachable from $S$ at time $\atf{S}{v}\leq d^S$, and let $P$ be an $S$-tight path to $v$, then there exists a unique subset $S'\in\cS$ of $S$ such that $P$ is $S'$-active. This subset satisfies $\atf{S'}{v}=\atf{S}{v}$.
\end{lemma}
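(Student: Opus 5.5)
The plan is to pick out $S'$ from the chain of active sets containing the start vertex of $P$, and then compare reaching times along $P$. Let $s\in S$ be the first vertex of $P$ and set $\tau\coloneqq\atf{S}{v}\leq d^S$.

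\textbf{Step 1 (the chain).} Since $\cS$ is laminar and the partitions $\cS^t$ refine each other, the sets of $\cS$ that contain $s$ and are contained in $S$ form a chain $\{s\}=S_1\subsetneq S_2\subsetneq\dots\subsetneq S_k=S$. Their activity intervals are $[0,d^{S_1}]$ and $(a^{S_i},d^{S_i}]$ with $a^{S_{i+1}}=d^{S_i}$. These intervals are pairwise disjoint and their union is $[0,d^S]$. Hence there is exactly one index $i$ such that $S_i$ is active at time $\tau$, and we set $S'\coloneqq S_i$. This bookkeeping at the interval endpoints (infimum versus minimum in the definition of $a^S$) is the one point that needs care. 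It is also what makes uniqueness work, so I expect it to be the main, if mild, obstacle.

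\textbf{Step 2 (reaching times agree along $P$).} We claim that $\atf{S'}{w}=\atf{S}{w}$ for every vertex $w$ on $P$. By \Cref{obs: edges of S tight path are tight early}, all edges of $P_{[s,w]}$ are tight at time $\atf{S}{w}$. Since this path starts at $s\in S'$, the set $S'$ reaches $w$ by that time, so $\atf{S'}{w}\leq\atf{S}{w}$. The reverse inequality is \Cref{obs: larger sets arrive no later}. In particular $\atf{S'}{v}=\tau$.

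\textbf{Step 3 ($P$ is $S'$-active).} Because the two time functions coincide on the vertices of $P$, the definition of $S'$-tightness for each edge $(x,y)$ of $P$ is literally the definition of $S$-tightness. The latter holds by assumption, so $P$ is $S'$-tight. Moreover, $S'$ is active at $\atf{S'}{v}=\tau$ by its choice in Step 1. Therefore $P$ is $S'$-active, and the identity $\atf{S'}{v}=\atf{S}{v}$ was shown in Step 2.

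\textbf{Step 4 (uniqueness).} Suppose $P$ is $S''$-active for some $S''\in\cS$ with $S''\subseteq S$. Since $P$ starts in $S''$, we have $s\in S''$, so $S''$ is a member of the chain from Step 1. The argument of Step 2 applies verbatim to $S''$ and gives $\atf{S''}{v}=\tau$. Thus $S''$ is active at time $\tau$, and by the uniqueness in Step 1 we conclude $S''=S'$.
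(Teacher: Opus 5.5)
Your proof is correct and follows essentially the same route as the paper. Both choose $S'$ as the set of $\cS$ containing the start vertex $s$ that is active at time $\atf{S}{v}$, sandwich the reaching times along $P$ to get $\atf{S'}{w}=\atf{S}{w}$ (the paper sandwiches through $\{s\}$, you go directly to $S'$), and then transfer $S$-tightness to $S'$-tightness. Your Step 4 spells out what the paper only notes in passing: any other candidate must contain $s$, and by the same sandwich it must be active at $\atf{S}{v}$, so it coincides with $S'$.
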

\begin{proof}
    Let $s\in S$ be the start vertex of $P$. Let $S' \in \cS$ be the unique set that is active at time $\atf{S}{v}$ and contains $s$ (note that this is the only possible candidate for $S'$).
    Because $\atf{S}{v}\leq d^S$ and $S\cap S' \neq \emptyset$, we have $S'\subseteq S$.

    Consider a vertex $w$ visited by $P$.
    Then, by \Cref{obs: edges of S tight path are tight early}, all edges of the path $P_{[s,w]}$ are tight at time  $\atf{S}{w}$, implying $\atf{\{s\}}{w}\leq\atf{S}{w}$. 
    Moreover, by \Cref{obs: larger sets arrive no later}, we have $\atf{S}{w}\leq\atf{S'}{w}\leq\atf{\{s\}}{w}$. 
    We conclude
    $\atf{S}{w}=\atf{S'}{w}=\atf{\{s\}}{w}$,
    for every vertex $w$ visited by $P$.
    This implies, that $P$ is $S'$-tight and $S'$ actively reaches $v$ (because $P$ is $S$-tight and $S'$ is active at time $\atf{S}{v}$). 
    Hence, $P$ is an $S'$-active path.
\end{proof}

Even though \citeauthor{BCRlessthan2} did not define $S$-active paths, they proved that for their particular choice of the merge plan $\cM$, every $S$-active path is short, which then implies that no set can actively reach the root (see \cite[Lemma 17]{BCRlessthan2}). 
We will follow a different proof strategy and won't bound the length of every $S$-active path.

Instead, we will bound the distance from a set $S$ to a vertex $v$ that it (actively) reaches as follows.
If $S$ reaches $v$ without interacting with any disjoint set along the way, then a simple inductive argument will be sufficient. 
If $S$ did not reach $v$ on its own, but interacted with a set $\bar S$ disjoint from $S$ along the way, we will identify a bifurcation vertex $m$ such that both $S$ and $\bar S$ can reach $m$ without the help of other terminals (that is without interacting with disjoint sets along the way). 
(Recall the discussion in \Cref{par: proving invariant} and \Cref{fig:component_invariant_two_sets}.)
By induction, we obtain a bound on the distance from $S$ to $m$. 
Furthermore, we will bound the distance from the bifurcation vertex $m$ to the vertex~$v$. 

The example in \Cref{fig:bifurcation_example} shows that the path from $m$ to $v$ might not be contained in a single $\tilde{S}$-active path for any terminal set $\tilde{S}$. (In this example, there is an $m$-$v$ path consisting of tight edges, but neither $\atf{s_1}{\,\cdot}$ nor $\atf{s_2}{\,\cdot}$ nor $\atf{\lrb{s_1, s_2}}{\,\cdot}$ is monotone along this $m$-$v$ path, which implies that it is not contained in any $\tilde{S}$-tight path for any terminal set $\tilde{S}$.) 
Nevertheless, the $m$-$v$ paths we consider will always be composed of certain suffixes of $\tilde{S}$-tight paths for some terminal sets $\tilde{S}$. 
Hence, it will be sufficient to analyze the length of such suffixes.

As the examples in \Cref{fig:k_interaction_example} show, (suffixes of) tight paths can be arbitrarily long in general. 
What both examples have in common is that more than two disjoint sets interact with each other on the problematic path. 
Thus, we will restrict our analysis to \emph{safe} paths, which we define next. Intuitively safe paths are those on which at most two disjoint sets interact with each other.
Eventually, we will show that (for strictly $\gamma$-good merge plans) every active path is safe (see \Cref{sec: all active paths are safe}), but for now, we define safe paths as the following (possibly more restrictive) class of paths.

\begin{definition}[safe paths]\label{def: safe path}
    Consider a directed edge $e=\lr{v, w}$ and let $t$ be the time at which $e$ became tight. For a set $S\in\cS$, we say that $e$ is $S$-safe, if until time $t$ at most two sets contributed to $e$ or $\eback=\lr{w, v}$ and at most one of these sets is a subset of $R\setminus S$.
    We call an $S$-active path \emph{$S$-safe} if all of its edges are safe. 
\end{definition}

The definition of safe paths is chosen such that the problematic paths in the examples from \Cref{fig:k_interaction_example} are not safe.
In \Cref{lem: length from m_p to v}, we will prove that along an $S$-safe path $P$ to the vertex~$v$, the set $S$ travels (amortized) at most with speed $3$. More precisely, we will show that if $m_P$ is the first vertex on $P$ where $S$ ``met'' another set $\bar S\subseteq R\setminus S$ (see \Cref{def: meeting point} below), then the length of $P_{\lre{m_P, v}}$ is at most $3\lr{\atf{S}{v}-\atf{S}{m_P}}$.
In particular, this implies that \Cref{fig:zig_zag_example} is indeed a worst-case example.

\subsection{Meeting Points and the Length of Safe Paths}\label{sec: meeting points and length of safe path}

Before we analyze the length of (parts of) safe paths, we define the meeting point of an $S$-active path $P$, that is the first vertex on $P$ where $R\setminus S$ arrives no later than $S$ does:

\begin{definition}[Meeting point]\label{def: meeting point}
    For an $S$-active path $P$ the \emph{meeting point} $m_P$ is the first vertex of $P$ with $\atf{R\setminus S}{m_P}\leq \atf{S}{m_P}$.
    If no such vertex exists, we say that $P$ has no meeting point.
    If $m_P$ is not the last vertex of $P$, we call it a \emph{proper meeting point}.
\end{definition}

\Cref{fig: examples meeting point} shows the meeting points of two different paths.
While the definition itself allows that $R\setminus S$ arrives at $m_P$ even \emph{strictly} earlier than $S$ does, this will not be the case for well-subdivided instances (see \Cref{lem: proper meeting point is earlier} below).

We are mainly interested in proper meeting points, because only if $P$ has a proper meeting point, the set $S$ might have gotten help from disjoint terminal sets to reach $v$, that is, such sets disjoint from $S$ could have contributed to edges of $P$.
If the path $P$ has a proper meeting point, then $S$ can reach the proper meeting point $m_P$ strictly earlier than $v$. 
Before we prove this fact (\Cref{lem: proper meeting point is earlier}), we need an auxiliary statement that captures a useful property of well-subdivided instances:

Consider a directed edge $(w,v)$ and a terminal set $X$ (not necessarily contained in $\cS$) such that no other terminal reaches the vertex $v$ strictly earlier than $X$ reaches $v$.
We show that if the edge $(w,v)$ is not tight at the time when $X$ reaches $v$, then $X$ reaches $w$ strictly later than it reaches $v$, and moreover, no other terminal reaches $w$ strictly earlier than $X$ reaches~$w$.

\begin{lemma}\label{lem:first_implies_first_at_neighbor_or_tight_incoming_edge}
    Let $\lr{w, v}$ be a directed edge, and $X\subseteq R$ a set with $\atf{X}{v}=\atf{R}{v}$. If $\lr{w, v}$ is not tight at time $\atf{X}{v}$, then $\atf{X}{w}=\atf{R}{w}>\atf{R}{v}=\atf{X}{v}$ (or the algorithm terminates at time $\atf{R}{v}$).
\end{lemma}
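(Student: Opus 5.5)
We argue directly from the dynamics of \Cref{algo:dual_growth_with_merge_plan} together with property \ref{itm: reachability of both endpoints} of well-subdivided instances. Set $t_0\coloneqq\atf{X}{v}=\atf{R}{v}$ and assume the algorithm continues after $t_0$.

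\textbf{Step 1: $w$ is not reached before $v$.} First we show $\atf{R}{w}\geq t_0$. Suppose instead that $\atf{R}{w}<t_0$. Then at time $\atf{R}{w}$ the vertex $w$ is reachable while $v$ is not. Consider times $t\in(\atf{R}{w},t_0)$. During this interval the edge $(w,v)$ lies in $\delta^+(U_{S})$ for the active set $S$ that reaches $w$: since $v$ is reachable by no terminal, $v\notin U_S$. So $y_{U_S}$ grows on $(w,v)$ at unit rate, and $(w,v)$ is never tight before $t_0$, because otherwise $v$ would be reached earlier.

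At time $t_0$ both endpoints are reachable, so by \ref{itm: reachability of both endpoints} some orientation of $\{v,w\}$ is tight at $t_0$. By hypothesis $(w,v)$ is not tight, so $(v,w)$ is tight. However, $(v,w)\in\delta^+(U)$ requires $v\in U$, and no $U_S$ contains $v$ before $t_0$. Hence the load on $(v,w)$ is $0<c(v,w)$ at time $t_0$, which is a contradiction. If $t_0=0$, then $\atf{R}{w}<0$ is impossible anyway. Therefore $\atf{R}{w}\geq t_0$.

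\textbf{Step 2: strictness.} Next we show $\atf{R}{w}>t_0$. If $\atf{R}{w}=t_0$, then both endpoints are reachable at $t_0$, and by \ref{itm: reachability of both endpoints} one orientation is tight at $t_0$. Before $t_0$ neither endpoint lies in any $U_S$, so neither orientation has received any dual contribution. Since costs are positive, neither orientation can be tight, which is a contradiction. Hence $\atf{R}{w}>t_0$.

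\textbf{Step 3: $X$ is the first to reach $w$.} It remains to show $\atf{X}{w}=\atf{R}{w}$. By \Cref{obs: larger sets arrive no later}, $\atf{R}{w}\leq\atf{X}{w}$. For the reverse inequality, consider any tight path from some terminal to $w$ at time $t_1\coloneqq\atf{R}{w}$, and in particular its last edge $(u,w)$. The obstacle is to show that $X$ itself has a tight path to $w$ by time $t_1$.

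The natural route is via $v$. During $(t_0,t_1)$, the vertex $v$ is reachable from $X$, and $w$ is reachable from no terminal. So the active sets containing terminals of $X$ that reach $v$ grow across $(v,w)$. Property \ref{itm: reachability of both endpoints} at time $t_1$ forces some orientation to be tight. The orientation $(w,v)$ received no contribution before $t_1$, since $w$ lay in no $U_S$, so $(v,w)$ is tight at $t_1$.

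Then $X$ reaches $v$ (it did so at $t_0$, and tight edges stay tight) and traverses $(v,w)$, so $\atf{X}{w}\leq t_1$. This yields $\atf{X}{w}=\atf{R}{w}>\atf{R}{v}=\atf{X}{v}$.

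The delicate point is the repeated use of the fact that an edge $(a,b)$ can only receive dual load while $a$ belongs to some grown set $U_S$. This holds because every grown set has the form $U_S$, the set of vertices reachable from $S$. It implies that no edge leaving an unreached vertex is ever loaded.
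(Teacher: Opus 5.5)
Steps 1--3 are correct as far as they go, and they follow the paper's reasoning: property \ref{itm: reachability of both endpoints} of \Cref{def:well-subdivided}, together with the facts that $(v,w)$ carries no load before $v$ is reached and $(w,v)$ carries no load before $w$ is reached, shows two things. First, $w$ is unreachable at time $t_0$. Second, at the first moment any terminal reaches $w$, the edge $(v,w)$ is tight, so $X$ gets there via $v$. Steps 1 and 2 could be merged, as the paper does: if $w$ were reachable at $t_0$, some orientation would be tight. But $(w,v)$ is not tight by hypothesis, and $(v,w)$ has zero load. The extra discussion of load on $(w,v)$ in Step 1 is not needed.

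The gap is that you never show that $w$ is reached at all. You write ``assume the algorithm continues after $t_0$'', but this assumption is never used. Step 3 starts from $t_1=\atf{R}{w}$ as though it were a finite time. A priori, the algorithm could run past $t_0$ and then stop, once $|\cS^t|=1$, while $(v,w)$ is only partially loaded. Then $w$ is never reached and $\atf{R}{w}$ is not a well-defined time.

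Ruling this out is exactly the content of the clause ``(or the algorithm terminates at time $\atf{R}{v}$)''. It requires property \ref{itm: uniform edges} of \Cref{def:well-subdivided}, which your argument never invokes. The missing step is as follows. If the algorithm does not stop at $t_0$, then immediately after $t_0$ the active set $S$ containing a terminal of $X$ that reaches $v$ is grown. By your Steps 1--2, $v\in U_S$ and $w\notin U_S$, so $(v,w)$ receives positive load. Property \ref{itm: uniform edges} then forces $(v,w)$ to become tight at some finite time, so $w$ becomes reachable and $t_1<\infty$. With this added, your Step 3 completes the proof.

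One could instead set $\atf{R}{w}=\infty$ for an unreached $w$, so that the identity holds vacuously. But then the parenthetical exception in the statement would serve no purpose; the intended claim is that $w$ is actually reached.
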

\begin{proof}
    Let $\lr{w, v}$ be an edge that is not tight at time $\atf{R}{v}$, and let $X\subseteq R$ be a terminal set reaching $v$ at time $\atf{R}{v}$.

    Before time $\atf{R}{v}$, no set contributed to $\lr{v, w}$, implying that $\lr{v, w}$ is not tight at that time. 
    Recall that also $\lr{w, v}$ is not tight at time $\atf{R}{v}$. 
    Hence, by \ref{itm: reachability of both endpoints} of \Cref{def:well-subdivided}, the vertex $w$ is not reachable at time $\atf{R}{v}$ and will not be reachable until one of the edges $(v,w)$ and $(w,v)$ is tight.
    Because the edge $(w,v)$ can only become tight strictly after some terminal reached $w$, the vertex $w$ will not be reachable from any terminal until the edge $\lr{v, w}$ becomes tight. 

    Thus, the time at which the edge $\lr{v, w}$ becomes tight is the first time at which any terminal can reach $w$. 
    The set $X$ can then reach $w$ via $\lr{v, w}$. 
    Hence,
    \begin{align*}
        \atf{X}{w}=\atf{R}{w}>\atf{R}{v}=\atf{X}{v}\,,
    \end{align*}
    unless the edge $\lr{v, w}$ never becomes tight.
    We claim that in the latter case, the algorithm terminates as time $\atf{R}{v}$.
    Indeed, if the algorithm does not terminate at time $\atf{R}{v}$, there is at least one set $S\in\cS$ that is active immediately after time $\atf{R}{v}$, and that can reach $v$. 
    Because this set cannot yet reach $w$ (no terminal can reach $w$ at time $\atf{R}{v}$), this set $S$ will contribute to $\lr{v, w}$. Hence, \ref{itm: uniform edges} of \Cref{def:well-subdivided} implies that $\lr{v, w}$ will eventually become tight.
\end{proof}

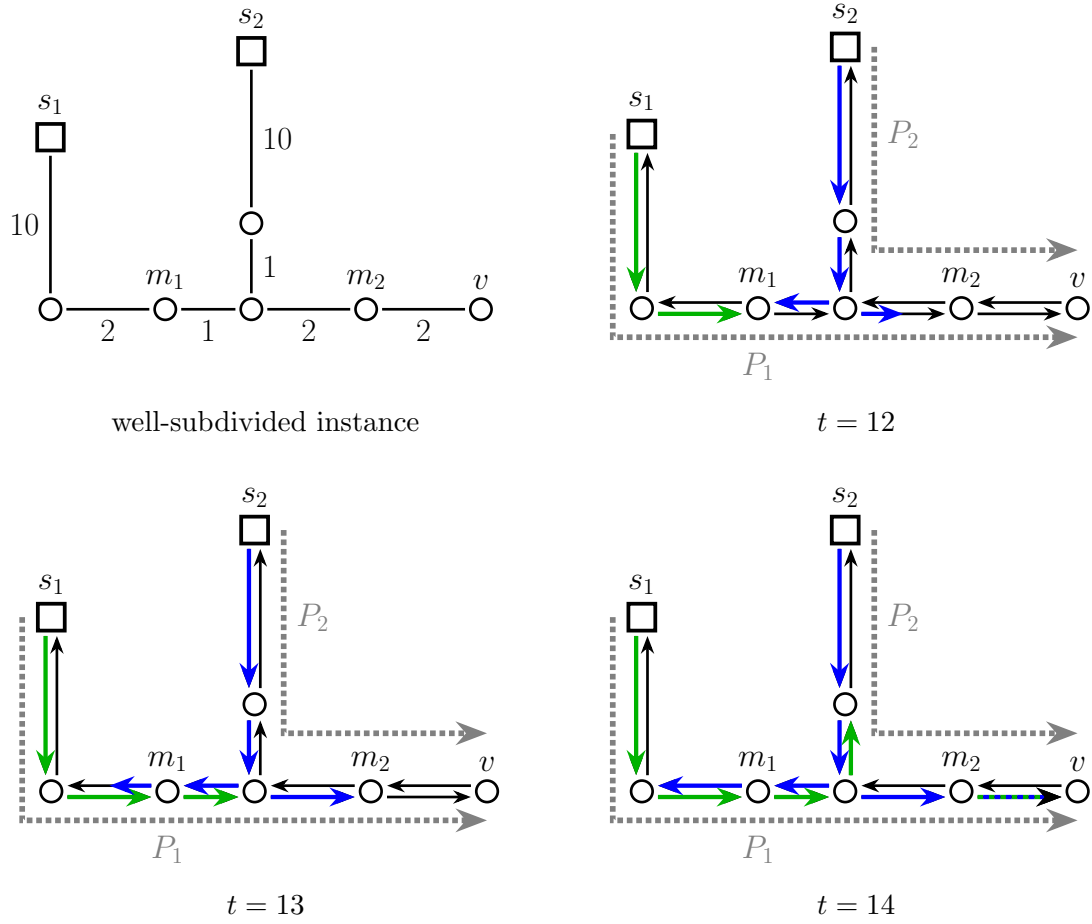
\begin{figure}
    \centering
    \newcommand{\VerticesMeetingPoint}
{
    \LARGE
    \useasboundingbox (-4.5,-1.5) rectangle (5,5.5);
    \node[terminal, label=above:{$s_1$}] (S1) at (-3.5, 3) {};
    \node[terminal, label=above:{$s_2$}] (S2) at ( 0, 4.5) {};
    \node[smallsteiner] (A0) at (-3.5, 0) {};
    \node[smallsteiner] (A1) at ( 0, 0) {};
    \node[smallsteiner, label=above:{$m_1$}] (M1) at (-1.5, 0) {};
    \node[smallsteiner, label=above:{$v$}] (A2) at ( 4, 0) {};
    \node[smallsteiner, label=above:{$m_2$}] (M2) at ( 2, 0) {};
    \node[smallsteiner] (A4) at ( 0,1.5) {};
}

\newcommand{\DiredgesMeetingPoint}
{
    \draw[diredge] (S1) -- (A0);
    \draw[diredge] (A0)  -- (S1);
    \draw[diredge] (A0) -- (M1);
    \draw[diredge] (M1)  -- (A0);
    \draw[diredge] (A1) -- (M1);
    \draw[diredge] (M1)  -- (A1);
    \draw[diredge] (S2) -- (A4);
    \draw[diredge] (A4) -- (S2);
    \draw[diredge] (A4) -- (A1);
    \draw[diredge] (A1) -- (A4);
    \draw[diredge] (A1) -- (M2);
    \draw[diredge] (M2) -- (A1);
    \draw[diredge] (M2) -- (A2);
    \draw[diredge] (A2) -- (M2);
}

\newcommand{\MeetingPointA}{
\begin{tikzpicture}[scale=1.08]
    \VerticesMeetingPoint
    \draw[edge] (S1) -- node[left] {$10$} (A0);
    \draw[edge] (A0) -- node[below] {$2$} (M1);
    \draw[edge] (M1) -- node[below] {$1$} (A1);
    \draw[edge] (S2) -- node[right] {$10$} (A4);
    \draw[edge] (A4) -- node[right] {$1$} (A1);
    \draw[edge] (A1) -- node[below] {$2$} (M2);
    \draw[edge] (M2) -- node[below] {$2$} (A2);
\end{tikzpicture}
}

\newcommand{\drawPaths}{

    \draw[line width = 3pt, gray, dotted, ->, >={Stealth}] ($(S1)-(.5,0)$) -- ($(A0)-(.5,.5)$) -- ($(A2)-(0,.5)$);
    \node[text = gray] at (-1.5,-1) {$P_1$};
    
    \draw[line width = 3pt, gray, dotted, ->, >={Stealth}] ($(S2)+(.5,0)$) -- ($(A1)+(.5,1)$) -- ($(A2)+(0,1)$);
    \node[text = gray] at ( 1, 3) {$P_2$};
}

\newcommand{\MeetingPointB}{
\begin{tikzpicture}[scale=1.08]
    \VerticesMeetingPoint
    \DiredgesMeetingPoint

    \draw[edgedirprogress, green!70!black] (S1) -- (A0);
    \draw[edgedirprogress, green!70!black] (A0) -- (M1);
    \draw[edgedirprogress] (S2) -- (A4);
    \draw[edgedirprogress] (A4) -- (A1);
    \draw[edgedirprogress] (A1) -- (M1);
    \draw[edgedirprogress] (A1) -- ($(A1)!1/2!(M2)$);

    \drawPaths
\end{tikzpicture}}

\newcommand{\MeetingPointC}{
\begin{tikzpicture}[scale=1.08]
    \VerticesMeetingPoint
    \DiredgesMeetingPoint
    
    \draw[edgedirprogress, green!70!black] (S1) -- (A0);
    \draw[edgedirprogress, green!70!black] (A0) -- (M1);
    \draw[edgedirprogress, green!70!black] (M1) -- (A1);
    \draw[edgedirprogress] (S2) -- (A4);
    \draw[edgedirprogress] (A4) -- (A1);
    \draw[edgedirprogress] (A1) -- (M1);
    \draw[edgedirprogress] (M1) -- ($(M1)!1/2!(A0)$);
    \draw[edgedirprogress] (A1) -- (M2);
    
    \drawPaths
\end{tikzpicture}}

\newcommand{\MeetingPointD}{
\begin{tikzpicture}[scale=1.08]
    \VerticesMeetingPoint
    \DiredgesMeetingPoint
    
    \draw[edgedirprogress, green!70!black] (S1) -- (A0);
    \draw[edgedirprogress, green!70!black] (A0) -- (M1);
    \draw[edgedirprogress, green!70!black] (M1) -- (A1);
    \draw[edgedirprogress, green!70!black] (A1) -- (A4);
    \draw[edgedirprogresshalfA, green!70!black] (M2) -- (A2);
    \draw[edgedirprogresshalfB] (M2) -- (A2);
    \draw[edgedirprogress] (S2) -- (A4);
    \draw[edgedirprogress] (A4) -- (A1);
    \draw[edgedirprogress] (A1) -- (M1);
    \draw[edgedirprogress] (M1) -- (A0);
    \draw[edgedirprogress] (A1) -- (M2);
    
    \drawPaths
\end{tikzpicture}}

\begin{tabular}{cc}
  \resizebox{0.45\textwidth}{!}{\MeetingPointA} 
  & \resizebox{0.45\textwidth}{!}{\MeetingPointB}
  \\[0mm]
well-subdivided instance  & $t=12$ \\[4mm]
   \resizebox{0.45\textwidth}{!}{\MeetingPointC} 
  & \resizebox{0.45\textwidth}{!}{\MeetingPointD} 
  \\
   $t=13$ & $t=14$
\end{tabular}     \caption{A well-subdivided instance in which both terminals $S_1$ and $s_2$ can reach $v$ at time $14$ via the $s_1$-active path $P_1$ and $s_2$-active path $P_2$, respectively. The first vertex of $P_1$ at which $s_2$ arrives no later than $s_1$ does is the meeting point of $m_1$ of $P_1$. The meeting point of $P_2$ is $m_2$. Because the instance is well-subdivided, $s_1$ and $s_2$ arrive at $m_1$ simultaneously at time $t=12$ (and both arrive at $m_2$ simultaneously at time $t=13$). Note that while both $s_1$ and $s_2$ can reach $m_1$ without the help of the other terminal, the same does not hold for $s_1$ and $m_2$.\label{fig: examples meeting point}}
\end{figure}

\begin{lemma}\label{lem: proper meeting point is earlier}
    Let $S\in\cS$, $v\in V$ and consider an $S$-active path $P$ to $v$. If $P$ has a proper meeting point $m_P$, then
    $$
        \atf{S}{m_P}=\atf{R}{m_P}<\atf{S}{v}\,.
    $$
\end{lemma}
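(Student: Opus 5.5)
The argument is a forward induction along $P$ up to the meeting point, driven by \Cref{lem:first_implies_first_at_neighbor_or_tight_incoming_edge}. Let $s\in S$ be the start vertex of $P$ and write $P = (s = v_0, v_1, \dots, v_k = v)$. Let $m_P = v_j$ with $j<k$, since the meeting point is proper.

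First I show that $\atf{S}{v_i}=\atf{R}{v_i}$ for all $i\le j$. For $i<j$ this holds because $v_i$ is not the meeting point, so $\atf{R\setminus S}{v_i}>\atf{S}{v_i}$. Since $\atf{R}{v_i}=\min\{\atf{S}{v_i},\atf{R\setminus S}{v_i}\}$, this gives $\atf{R}{v_i}=\atf{S}{v_i}$. For $i=j$ the meeting point condition gives $\atf{R\setminus S}{m_P}\le\atf{S}{m_P}$, so only $\atf{R}{m_P}\le \atf{S}{m_P}$ is immediate. Equality needs an argument.

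To get it, I apply \Cref{lem:first_implies_first_at_neighbor_or_tight_incoming_edge} to the edge $(v_{j-1},v_j)$ in reverse. Concretely, I use the lemma with $X=S$ and vertex $v_{j-1}$, where $\atf{S}{v_{j-1}}=\atf{R}{v_{j-1}}$, and with the directed edge $(v_j,v_{j-1})$. There are two cases.
\begin{itemize}
\item If $(v_j,v_{j-1})$ is not tight at time $\atf{R}{v_{j-1}}$, the lemma gives $\atf{S}{v_j}=\atf{R}{v_j}$, unless the algorithm terminates at that time. Termination is impossible, because $S$ is still active at time $\atf{S}{v}\ge\atf{S}{v_{j-1}}$ and reaches $v_j$ afterwards.
\item If $(v_j,v_{j-1})$ is tight at time $\atf{R}{v_{j-1}}$, then some set already reached $v_j$ before contributing to it, so $\atf{R}{v_j}<\atf{R}{v_{j-1}}=\atf{S}{v_{j-1}}\le\atf{S}{v_j}$. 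I then show this contradicts $S$-tightness of $P$. Any set $Y$ contributing to $(v_j,v_{j-1})$ reached $v_j$ first, and reaching $v_{j-1}$ requires the reverse edge to become tight, which is accounted for by \ref{itm: uniform edges} and \ref{itm: reachability of both endpoints} of \Cref{def:well-subdivided}. The cleaner route is to observe that when $(v_j,v_{j-1})$ is tight, $v_{j-1}$ is reachable from whoever reaches $v_j$. Hence $\atf{R}{v_{j-1}}\le$ the time the edge became tight. I would apply the lemma to the minimal index where equality fails, and also handle $j=0$, where $m_P=s$ has $\atf{R}{s}=0=\atf{S}{s}$.
\end{itemize}
I expect this tight-edge case to be the main obstacle: getting the exact order of "reaching" versus "tightening" right on a well-subdivided instance.

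Second, I show $\atf{S}{m_P}<\atf{S}{v}$. Since $\atf{S}{\cdot}$ is nondecreasing along the $S$-tight path, suppose for contradiction that $\atf{S}{v_{j+1}}=\atf{S}{m_P}=:\tau$. Then the edge $(m_P,v_{j+1})$ is tight at time $\tau$. Because $\atf{R}{m_P}=\tau$, no set reached $m_P$ before $\tau$, so no set contributed to $(m_P,v_{j+1})$ before $\tau$. That edge has positive cost, so it cannot be tight at $\tau$, a contradiction. Monotonicity along the rest of $P$ then gives $\atf{S}{m_P}<\atf{S}{v_{j+1}}\le\atf{S}{v}$.
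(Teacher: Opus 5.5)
Your Case~1 (reverse edge $(v_j,v_{j-1})$ not tight at time $\atf{R}{v_{j-1}}$) and your proof of the strict inequality are fine. Case~2 has a genuine gap. You claim that $\atf{R}{v_j}<\atf{R}{v_{j-1}}=\atf{S}{v_{j-1}}\le\atf{S}{v_j}$ contradicts the $S$-tightness of $P$, but it does not. $S$-tightness of $(v_{j-1},v_j)$ only requires $\atf{S}{v_{j-1}}\le\atf{S}{v_j}$ and that this edge be tight at time $\atf{S}{v_j}$. That is compatible with a terminal outside $S$ having reached $v_j$ earlier and made the reverse edge tight. Your ``cleaner route'' (once $(v_j,v_{j-1})$ is tight, whoever reaches $v_j$ also reaches $v_{j-1}$) stops before any contradiction. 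The remark about a minimal failing index does not help either, since equality for $i<j$ already follows from the definition of $m_P$.

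The missing idea is to use the minimality of the meeting point at $v_{j-1}$.
\begin{itemize}
\item Let $\tau\coloneqq\atf{R}{v_{j-1}}=\atf{S}{v_{j-1}}$. The edge $(v_j,v_{j-1})$ has positive cost and is tight at time $\tau$, so some set contributed to it before $\tau$. Hence some terminal $x$ satisfies $\atf{x}{v_j}<\tau$.
\item Since $\atf{S}{v_j}\ge\atf{S}{v_{j-1}}=\tau$, we get $x\in R\setminus S$.
\item Via the tight edge $(v_j,v_{j-1})$, the terminal $x$ reaches $v_{j-1}$ by time $\tau$. So $\atf{R\setminus S}{v_{j-1}}\le\atf{S}{v_{j-1}}$.
\item Thus $v_{j-1}$ already satisfies the meeting-point condition, contradicting that $m_P=v_j$ is the first such vertex on $P$.
\end{itemize}
With this fix your argument is the forward mirror of the paper's proof. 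The paper assumes $\atf{R\setminus S}{m_P}<\atf{S}{m_P}$ and uses the same two cases, applying the auxiliary lemma with $X=R\setminus S$ to the edge $(w,m_P)$, to show that the predecessor $w$ would already have been the meeting point.

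Your termination remark is also loosely justified. The clean reason is that $(v_{j-1},v_j)$ receives no contribution before $\atf{R}{v_{j-1}}$ yet is tight at $\atf{S}{v_j}$, so the algorithm runs past $\atf{R}{v_{j-1}}$.
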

\begin{proof}
    We start by proving $\atf{R}{m_P}<\atf{S}{v}$. By assumption $m_P$ is a proper meeting point, that is $m_P \neq v$. Hence, $P$ contains an outgoing edge of $m_P$. By \Cref{obs: edges of S tight path are tight early}, this edge is tight at time $\atf{S}{v}$. 
    Thus, any set contributing to it must have reached the meeting point $m_P$ strictly before time $\atf{S}{v}$, which implies $\atf{R}{m_P}<\atf{S}{v}$.

    Next, we show $\atf{S}{m_P}=\atf{R}{m_P}$. By the definition of the meeting point (\Cref{def: meeting point}), we have $\atf{R\setminus S}{m_P}\leq\atf{S}{m_P}$. 
    Thus, assume for the sake of deriving a contradiction that $\atf{R\setminus S}{m_P}<\atf{S}{m_P}$. 
    Then $m_P$ cannot be the start vertex of $P$ (because the start vertex is contained in $S$ and is thus reached by $S$ at time $0$). 
    Consider the predecessor $w$ of $m_P$ on $P$ and let $e=\lr{w, m_P}$.
If the edge $e$ is tight at time $\atf{R\setminus S}{m_P}<\atf{S}{m_P}$, then no subset of $S$ contributed to $e$. 
    In particular, this implies that $R\setminus S$ reaches $w$ before $S$ does.
    Otherwise, $e$ is not tight at time $\atf{R\setminus S}{m_P}$ and
    then by \Cref{lem:first_implies_first_at_neighbor_or_tight_incoming_edge} applied for $X=R\setminus S$, the set $R\setminus S$ can reach $w$ at time $\atf{R}{w}$. 
    In particular, $R\setminus S$ can reach $w$ no later than $S$ can reach $w$. 
    Hence, in both cases we would have rather chosen $w$ instead of $m_P$ as the meeting point (according to \Cref{def: meeting point}), which leads to the desired contradiction.
\end{proof}

By \Cref{def: meeting point}, the set $R\setminus S$ can reach the meeting point $m_P$ whenever $S$ can reach it. In fact, this not only holds for $m_P$, but also for every vertex of $P$ that comes after the meeting point.

\begin{observation}\label{obs: atf ieq behind meeting point}
    Let $S\in\cS$ and $v\in V$, and  let $P$ be an $S$-active path to $v$ with meeting point $m_P$.
    Then for every vertex $w$ of $P_{\lre{m_P, v}}$, we have $\atf{R\setminus S}{w}\leq\atf{S}{w}$.
\end{observation}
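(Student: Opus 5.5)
The natural route is induction along the path $P_{[m_P,v]}$, starting at the meeting point and moving forward one edge at a time. For $w=m_P$ the claim is exactly the defining property of the meeting point from \Cref{def: meeting point}. So I assume the claim for a vertex $u$ on $P_{[m_P,v]}$ with successor $w$ on $P$, that is $\atf{R\setminus S}{u}\leq\atf{S}{u}$, and show it for $w$.

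The key fact is that the edge $e=(u,w)$ is $S$-tight. By \Cref{def: S tight path}, this gives $\atf{S}{u}\leq\atf{S}{w}$ and that $e$ is tight at time $\atf{S}{w}$. I split into two cases.

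\emph{Case 1: $\atf{R\setminus S}{u}<\atf{S}{w}$.} At time $\atf{S}{w}$ the set $R\setminus S$ already reaches $u$ by a tight path. The edge $(u,w)$ is tight at that time, and tightness is monotone in time since duals only increase. Appending $e$ therefore yields a tight $(R\setminus S)$-$w$ path at time $\atf{S}{w}$, so $\atf{R\setminus S}{w}\leq\atf{S}{w}$.

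\emph{Case 2: $\atf{R\setminus S}{u}\geq\atf{S}{w}$.} Combined with the chain $\atf{S}{w}\geq\atf{S}{u}\geq\atf{R\setminus S}{u}$ from tightness and the induction hypothesis, all three times coincide. Let $\tau$ be this common time. Then $R\setminus S$ reaches $u$ at time $\tau$. The edge $e$ is tight at time $\atf{S}{w}=\tau$, so appending it again shows $\atf{R\setminus S}{w}\leq\tau=\atf{S}{w}$.

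Both cases thus reduce to the single point that "reach $u$ at time $t$, and $(u,w)$ tight at time $t$" implies "reach $w$ at time $t$". This holds directly from the definition of reaching via tight paths.

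I expect no real obstacle. The only subtlety is making sure the tightness of $e$ at time $\atf{S}{w}$ is used in the correct direction: the edge is traversed from $u$ to $w$, matching its orientation along $P$. The case split can even be merged into one line, using $\atf{R\setminus S}{u}\leq\atf{S}{u}\leq\atf{S}{w}$ and the monotonicity of tightness.
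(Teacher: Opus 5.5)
Your proof is correct and takes essentially the paper's approach. The paper reaches $m_P$ by the definition of the meeting point and then appends the whole subpath $P_{[m_P,w]}$, which is tight at time $\atf{S}{w}$ by \Cref{obs: edges of S tight path are tight early}; you instead unroll that observation into an edge-by-edge induction. Your case split is unnecessary, since the single chain $\atf{R\setminus S}{u}\leq\atf{S}{u}\leq\atf{S}{w}$, together with monotonicity of tightness, already suffices.
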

\begin{proof}
    If $P$ is an $S$-active path with meeting point $m_P$, then by \Cref{def: meeting point}, the set $R\setminus S$ can reach $m_P$ at time $\atf{S}{m_P}$. 
    Consider a vertex $w$ that is visited after $m_P$ by $P$, then the $S$-tightness of $P$ implies that $\atf{S}{m_P}\leq\atf{S}{w}$, and  all edges of $P_{\lre{S, w}}$ are tight at time $\atf{S}{w}$ (by \Cref{obs: edges of S tight path are tight early}). 
    Thus, at time $\atf{S}{w}$ the set $R\setminus S$ can reach $w$.
\end{proof}

By \Cref{def: meeting point}, we know that $R\setminus S$ can reach the meeting point $m_P$ at time $\atf{R}{m_P}$. Often reachability alone will not be sufficient for us. Instead, we want to find a set in $\cS$ that is disjoint from $S$ and \emph{actively} reaches $m_P$ at time $\atf{R}{m_P}$. By the following lemma this is always possible.

\begin{lemma}\label{lem: there is another set reaching mP}
    Let $S\in\cS$ and $v\in V$, and let $P$ be an $S$-active path to $v$. If $m_P$ is a proper meeting point of $P$, then there is a subset $\bar S\in\cS$ of $R\setminus S$ that actively reaches $m_P$ at time $\atf{R}{m_P}$.
\end{lemma}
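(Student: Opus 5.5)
The plan is to find a single terminal $\bar s\in R\setminus S$ together with an $\{\bar s\}$-tight path to $m_P$ that realises the time $\tau\coloneqq\atf{R}{m_P}$. Then Lemma \ref{lem: active reaching subset} upgrades this to an active path for the set in $\cS$ containing $\bar s$ that is active at time $\tau$.

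\textbf{Step 1: a disjoint terminal arriving first.} By Lemma \ref{lem: proper meeting point is earlier}, we have $\atf{S}{m_P}=\atf{R}{m_P}=\tau<\atf{S}{v}\le d^S$. By Definition \ref{def: meeting point}, $\atf{R\setminus S}{m_P}\le\atf{S}{m_P}=\tau$, and since $R\setminus S\subseteq R$, Observation \ref{obs: larger sets arrive no later} gives $\atf{R\setminus S}{m_P}=\tau$. So at time $\tau$ there is an $(R\setminus S)$-$m_P$ path of tight edges, starting at some $\bar s\in R\setminus S$. Hence $\atf{\{\bar s\}}{m_P}\le\tau$, and since $\tau=\atf{R}{m_P}$ is a lower bound for every terminal, $\atf{\{\bar s\}}{m_P}=\tau$.

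\textbf{Step 2: choose the active set.} Let $\bar S\in\cS^{\tau}$ be the part of the partition at time $\tau$ containing $\bar s$. First, $\bar S$ is disjoint from $S$. Since $S$ is active at times strictly after $\tau$ (namely at $\atf{S}{v}$), and partitions only coarsen, every part at time $\tau$ is either contained in $S$ or disjoint from it. As $\bar s\notin S$, we get $\bar S\subseteq R\setminus S$. Second, by Observation \ref{obs: larger sets arrive no later} and minimality of $\tau$, we have $\tau\le\atf{\bar S}{m_P}\le\atf{\{\bar s\}}{m_P}=\tau$. So $\bar S$ reaches $m_P$ at time $\tau$ and is active at that time, i.e.\ it actively reaches $m_P$ at time $\atf{R}{m_P}$.

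\textbf{Main obstacle: the activity boundary.} The delicate point is what "active at time $\tau$" means when $\tau$ coincides with a merge time, given the convention that $S\in\cS^t$ holds exactly for $a^S<t\le d^S$ (for non-singletons). I would resolve this by taking $\bar S$ as the part of $\cS^\tau$ itself, so activity at $\tau$ holds by definition. If one instead wants the set to come from an $\{\bar s\}$-tight path via Lemma \ref{lem: active reaching subset}, apply that lemma to the singleton $\{\bar s\}$, or to any larger set in $\cS$ with deactivation time at least $\tau$. Lemma \ref{lem: tight paths always exist} gives an $\{\bar s\}$-tight path, and the lemma returns exactly this $\bar S$, with $\atf{\bar S}{m_P}=\tau$.

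One subtlety remains: $\{\bar s\}$ may not lie in $\cS$ as an active set past time $\tau$. The argument sidesteps this by only needing $\bar S\in\cS^\tau$, which is a set of $\cS$.
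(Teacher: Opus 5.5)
Your proof is correct and is essentially the paper's argument: you take $\bar s\in R\setminus S$ reaching $m_P$ at time $\atf{R}{m_P}$, let $\bar S$ be the part of $\cS^{\atf{R}{m_P}}$ containing $\bar s$, and check that $\atf{\bar S}{m_P}=\atf{R}{m_P}$. The paper gets $\bar S\cap S=\emptyset$ from laminarity and $a^{\bar S}<\atf{R}{m_P}<\atf{S}{v}\leq d^S$, after setting aside the singleton case; you get it directly from the fact that $\cS^{\atf{R}{m_P}}$ refines $\cS^{\atf{S}{v}}\ni S$, which is a minor difference, and your closing detour through Lemma~\ref{lem: active reaching subset} is not needed.
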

\begin{proof}
    By \Cref{def: meeting point}, the set $R\setminus S$ can reach $m_P$ at time $\atf{R}{m_P}$. 
    Thus, there exists a terminal $\bar s\in R\setminus S$ such that $\bar s$ reaches $m_P$ at time $\atf{R}{m_P}$. Let $\bar S\in\cS$ be the set containing $\bar s$ that is active at time $\atf{R}{m_P}$. Then $\bar S$ actively reaches $m_P$. If $\bar S=\lrb{\bar s}$, it is clearly disjoint from $S$. 
    Otherwise, we have $a^{\bar S} < \atf{R}{m_P}$.
    By \Cref{lem: proper meeting point is earlier}, this implies 
    \[
    a^{\bar S} \ <\ \atf{R}{m_P} \ <\ \atf{S}{v}\ \leq\ d^S,
    \]
    where the last inequality follows from the fact that $S$ actively reaches $v$.
    Thus, $\bar S$ is no super-set of $S$.
    Because $\bar S$ contains the vertex $\bar s\in R\setminus S$, the set $\bar S$ its also no subset of $S$, implying that $S$ and $\bar S$ are disjoint (because both $S$ and $\bar S$ belong to the laminar family $\cS$).
\end{proof}

To bound the length of the path $P_{\lre{m_P, v}}$, we need to understand which sets contribute to an edge and how much a set $S$ contributes to an edge $\vec{e}$. The latter, we define as the length of the time interval during which $S$ contributes to $e$. An edge $\vec{e}$ is tight if and only if the total amount that all sets contributed to $\vec{e}$ adds up to $c\lr{\vec{e}}$. We first observe, that two sets that contribute to (orientations of) the same edge have to be disjoint: 

\begin{lemma}\label{lem: contributing to same undir edge implies disjoint}
    Let $e$ be an undirected edge and $S,\bar S\in\cS$ be two different sets that each contribute to some direction of $e$. Then $S$ and $\bar S$ are disjoint.
\end{lemma}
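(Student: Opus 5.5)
Since $\cS$ is laminar, two different sets $S,\bar S\in\cS$ are either disjoint or strictly nested. So it suffices to rule out $S\subsetneq\bar S$; the case $\bar S\subsetneq S$ is symmetric. Suppose $S\subsetneq \bar S$. By \Cref{def: merge plan}, partitions only coarsen over time. Hence every time at which $\bar S$ is active lies strictly after every time at which $S$ is active, i.e.\ $d^S\le a^{\bar S}$. By \Cref{def: contribution}, $S$ contributes to its orientation of $e$ at some time $t_1\le d^S$, and $\bar S$ contributes at some time $t_2> a^{\bar S}\ge d^S$. In particular $t_1<t_2$.

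\textbf{Case 1: both contribute to the same orientation $(v,w)$.} From $S$ contributing at $t_1$ we get $\atf{S}{v}<t_1\le\atf{S}{w}$. In particular $S$ does not yet reach $w$ at time $t_1$. Since $S\subseteq\bar S$, \Cref{obs: larger sets arrive no later} gives $\atf{\bar S}{v}\le \atf{S}{v}<t_1$. From $\bar S$ contributing at $t_2$ we need $t_2\le\atf{\bar S}{w}$, so $\bar S$ has not reached $w$ before $t_2$. I will derive a contradiction from this. I expect to use \ref{itm: uniform edges} of \Cref{def:well-subdivided}: the family of contributors to $(v,w)$ is a fixed family $\cB\subseteq\cS^{t_2'}$ on an interval $(t_1',t_2']$, after which the edge is tight. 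Both $S$ and $\bar S$ contributed, so both lie in $\cB\subseteq \cS^{t_2'}$, a single partition. But a partition cannot contain both $S$ and the strict superset $\bar S$. This is already a contradiction, so Case~1 is handled by (ii).

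\textbf{Case 2: $S$ contributes to $(v,w)$ and $\bar S$ to $(w,v)$.} Then $\atf{S}{v}<t_1\le\atf{S}{w}$ and $\atf{\bar S}{w}<t_2\le\atf{\bar S}{v}$. Using \Cref{obs: larger sets arrive no later}, $\atf{\bar S}{v}\le\atf{S}{v}<t_1<t_2$. This contradicts $t_2\le\atf{\bar S}{v}$.

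The case $\bar S\subsetneq S$ is identical with the roles swapped, so $S$ and $\bar S$ must be disjoint. The only delicate point is Case~1, where a time-ordering argument alone does not obviously suffice: $\bar S$ could in principle still be contributing to $(v,w)$ long after $S$ stopped. I therefore rely on the uniform-edge property \ref{itm: uniform edges}, which forces all contributors to one orientation to be members of a single partition $\cS^{t_2'}$. If that reading is too strong, the fallback is to note that $S$ contributing at $t_1$ means $S$ is active while the edge is not yet tight. Then, by uniformity, the same family contributes up to tightness, so $S$ remains a contributor until the edge is tight at $t_2'$, and $\bar S$ must contribute within the same interval. This places both in $\cS^{t}$ for some common $t$, again contradicting laminarity.
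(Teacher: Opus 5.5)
Your proof is correct and follows essentially the same route as the paper. Laminarity reduces everything to the nested case. The opposite-orientation case is excluded because the smaller set reaches the tail $v$ of its edge $(v,w)$ before the larger set is ever active, so by \Cref{obs: larger sets arrive no later} the larger set cannot contribute to $(w,v)$. The same-orientation case is excluded by property \ref{itm: uniform edges}, which places all contributors to one orientation in a single partition. Like the paper, you leave implicit that no set contributes to an edge after it has become tight; this follows directly from \Cref{def: contribution}.
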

\begin{proof}
    Suppose that $S$ and $\bar S$ are not disjoint. Then, by laminarity of $\cS$, we can assume without loss of generality that $\bar S\subsetneq S$. Let $\lr{v, w}$ be the direction of $e$ to which $\bar S$ contributed.
    Then \Cref{obs: larger sets arrive no later}, the fact that $\bar S$ contributed to $\lr{v, w}$, and $\bar S\subsetneq S$ imply
    \begin{align*}
        \atf{S}{v}\leq\atf{\bar S}{v}<d^{\bar S}\leq a^S\,.
    \end{align*}
    Hence, when $S$ is activated, it can already reach $v$. Thus, $S$ cannot contribute to the direction $\lr{w, v}$. 
    Using that our instance is well-subdivided, we have \ref{itm: uniform edges} of \Cref{def:well-subdivided} and thus the set $S$ cannot contribute to $\lr{v, w}$ either (because it is not active at the same time as $\bar S$). 
    Together, this yields a contradiction to $S$ contributing to some direction of $e$. 
\end{proof}

By \Cref{def: contribution}, a set $S$ contributes to an edge $\lr{v, w}$ at times $t\in(\atf{S}{v},\atf{S}{w}]\cap(a^S,d^S]$. However, using \ref{itm: uniform edges} of the definition of well-subdivided instances (\Cref{def:well-subdivided}), we can give a simpler characterization.

\begin{observation}\label{obs: contribution and atfs}
    If a set $S\in\cS$ contributes to an edge $\vec{e}=\lr{v, w}$ at some time $t$, then it does so exactly at the times $t\in(\atf{S}{v}, \atf{S}{w}]$.
\end{observation}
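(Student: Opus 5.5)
We have to show that if $S$ contributes to $\vec e=(v,w)$ at some time, then the set of contribution times is exactly $(\atf{S}{v},\atf{S}{w}]$. By \Cref{def: contribution}, the contribution times are $(\atf{S}{v},\atf{S}{w}]\cap\{t: S\in\cS^t\}$, and $\{t : S\in\cS^t\}$ is the interval $(a^S,d^S]$ (or $[a^S,d^S]$ for singletons). So it suffices to prove $a^S\le \atf{S}{v}$ and $\atf{S}{w}\le d^S$. The tool is property \ref{itm: uniform edges} of \Cref{def:well-subdivided}, applied to $\vec e$, which has positive dual load because $S$ contributes. It gives an interval $(t_1,t_2]$ and a family $\cB\subseteq\cS^{t_2}$. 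No set contributes up to $t_1$, the contributors on $(t_1,t_2]$ are exactly the members of $\cB$, and $\vec e$ is tight at $t_2$. Since $S$ contributes at some time, $S\in\cB$, so $S\in\cS^{t_2}$.

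\textbf{Upper endpoint.} Because $S$ is a contributor throughout $(t_1,t_2]$, the definition gives $S\in\cS^t$ and $\atf{S}{v}<t\le\atf{S}{w}$ for all $t\in(t_1,t_2]$. Taking $t=t_2$ gives $t_2\le \atf{S}{w}$. Conversely, $\vec e$ is tight at $t_2$, and at that time $S$ reaches $v$ (since $\atf{S}{v}<t_2$), so $S$ reaches $w$ at time $t_2$. Hence $\atf{S}{w}=t_2\le d^S$.

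\textbf{Lower endpoint.} $S$ contributes for all $t$ arbitrarily close to $t_1$ from above, so $\atf{S}{v}\le t_1$. Suppose for contradiction that $a^S>\atf{S}{v}$. Then $S$ is not a singleton, and the contributions start at $\max(a^S,\atf{S}{v})=a^S$. Thus $t_1=a^S$ and $\atf{S}{v}<a^S$.

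To rule this out, I would use that at time $\atf{S}{v}$ some terminal $s\in S$ reaches $v$. Let $S'\subsetneq S$ be the active set containing $s$ just after $\atf{S}{v}$, so that $S'$ reaches $v$ then. $S'$ does not yet reach $w$: if it did, $S$ would reach $w$ before $t_2=\atf{S}{w}$, a contradiction. Hence $S'$ contributes to $\vec e$ at times in $(\atf{S}{v},\min(d^{S'},\ldots)]$, which lie before $a^S=t_1$. This contradicts the fact that nobody contributes before $t_1$.

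The delicate point is the boundary case where $s$ first reaches $v$ exactly at time $a^S=d^{S'}$ and no earlier, for instance when the reachability of $v$ by $s$ comes via another set. This is excluded because we assumed $\atf{S}{v}<a^S$ strictly. So the interval $(\atf{S}{v},a^S)$ is nonempty, and the set $S'\in\cS^t$ containing $s$ during it contributes there. This closes the argument: $a^S\le\atf{S}{v}$, and the observation follows.
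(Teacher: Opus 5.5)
Your proof is correct and follows essentially the paper's argument. The upper endpoint comes from $\vec e$ being tight at $t_2$ via \ref{itm: uniform edges} of \Cref{def:well-subdivided}, and the lower endpoint uses the active subset of $S$ containing a terminal $s$ with $\atf{\lrb{s}}{v}=\atf{S}{v}$, taken just after time $\atf{S}{v}$. The only difference is cosmetic: the paper shows this subset must equal $S$ by invoking \Cref{lem: contributing to same undir edge implies disjoint}, whereas you get the contradiction directly from the first bullet of \ref{itm: uniform edges}, which is the same reasoning unpacked.
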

\begin{proof}
    By \Cref{def: contribution}, the set $S$ contributes to $\vec{e}$ at each time $t$ with $t\in(\atf{S}{v},\atf{S}{w}]\cap(a^S,d^S]\eqqcolon(t_1,t_2]$. 
    Thus, our goal is to show that if $(t_1,t_2]$ is non-empty, then
    \begin{enumerate}[label=(\alph*), itemsep=0pt]
        \item \label{item:lower_bound_interval}
        $t_1=\atf{S}{v}$, and
        \item \label{item:upper_bound_interval}
        $t_2=\atf{S}{w}$.
    \end{enumerate}
    By \ref{itm: uniform edges} of the definition of well-subdivided instances (\Cref{def:well-subdivided}), all other sets that contribute to $\vec{e}$ do so exactly during the interval $(t_1, t_2]$ and $\vec{e}$ becomes tight at time $t_2$.
    In particular, $S$ can reach $w$ at time $t_2$ (via the edge $\vec{e}$).
    This implies \ref{item:upper_bound_interval} as follows:
    \begin{align*}
        \atf{S}{w}\leq t_2\ =\ \min\lr{d^S,\ \atf{S}{w}}\ \leq\ \atf{S}{w}\,.
    \end{align*}
    To prove \ref{item:lower_bound_interval}, consider a terminal $s\in S$ with $\atf{\lrb{s}}{v}=\atf{S}{v}$.
    Let $\tilde S\in\cS$ be the subset of $S$ containing~$s$ that is active immediately after time $\atf{S}{v}$ (which exists because the interval $(t_1,t_2]$ is non-empty and thus $\atf{S}{v}<d^S$).
    Then, using \Cref{obs: larger sets arrive no later} and the fact that $S$ contributed to the edge $(v,w)$, we get
    \begin{align*}
        \atf{\tilde S}{v}\ \leq\ \atf{\lrb{s}}{v}\ =\ \atf{S}{v}\ <\ \atf{S}{w}\ \leq\ \atf{\tilde S}{w}\,.
    \end{align*}
    Because $\tilde S$ is active immediately after time $\atf{S}{v}$, this implies that $\tilde S$ starts contributing to $\vec{e}$ immediately after time $\atf{S}{v}$.
    Thus, by \Cref{lem: contributing to same undir edge implies disjoint}, we have $S=\tilde S$. 
    We conclude that the set $S$ is activated no later than at time $\atf{S}{v}$, which yields \ref{item:lower_bound_interval}.
\end{proof}

One consequence of \Cref{obs: contribution and atfs} is the following.

\begin{lemma}\label{lem: contribution of subset of S}
    For a set $S\in\cS$, let $\vec{e}=\lr{v, w}$ be an $S$-tight edge. Each subset of $S$ contributes at most $\atf{S}{w}-\atf{S}{v}$ to $\vec{e}$.
\end{lemma}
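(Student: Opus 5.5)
Let $S'\subseteq S$ be a subset of $S$ (we may restrict to $S'\in\cS$, since only active sets contribute) that contributes to $\vec{e}=(v,w)$ at some time. By \Cref{obs: contribution and atfs}, $S'$ contributes exactly during the interval $(\atf{S'}{v},\atf{S'}{w}]$, so its total contribution is $\atf{S'}{w}-\atf{S'}{v}$. It therefore suffices to show
\[
\atf{S'}{w}-\atf{S'}{v}\ \leq\ \atf{S}{w}-\atf{S}{v}.
\]

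For the upper end, we use that $\vec{e}$ is $S$-tight, so $\vec{e}$ is tight at time $\atf{S}{w}$. Contributions stop once the edge is tight: once $S'$ reaches $v$ and the edge is tight, $S'$ reaches $w$. More precisely, by \ref{itm: uniform edges} of \Cref{def:well-subdivided}, all contributing sets contribute on a common interval $(t_1,t_2]$, and $\vec{e}$ becomes tight at $t_2$. Hence $\atf{S'}{w}=t_2$ is the first time $\vec{e}$ is tight, and this is at most $\atf{S}{w}$.

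For the lower end, \Cref{obs: larger sets arrive no later} gives $\atf{S}{v}\leq\atf{S'}{v}$. Combining the two bounds,
\[
\atf{S'}{w}-\atf{S'}{v}\ \leq\ \atf{S}{w}-\atf{S}{v},
\]
which is the claim.

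The only subtle point is the justification that $\atf{S'}{w}$ equals the time $\vec{e}$ becomes tight and does not exceed $\atf{S}{w}$. This follows from \Cref{obs: contribution and atfs} together with tightness at time $\atf{S}{w}$: if $S'$ were still contributing after that time, the edge would be tight while still receiving contributions, contradicting dual feasibility. Equivalently, $t_2\leq\atf{S}{w}$ because tightness is first reached at $t_2$.
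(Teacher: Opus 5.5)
Your proof is correct and follows essentially the same route as the paper: the exact contribution $\atf{S'}{w}-\atf{S'}{v}$ comes from \Cref{obs: contribution and atfs}, the lower end from \Cref{obs: larger sets arrive no later}, and the upper end from $\vec{e}$ being tight at time $\atf{S}{w}$. The only difference is in the upper end. The paper argues directly that $S'$ reaches $v$ before $\vec{e}$ becomes tight and hence reaches $w$ via $\vec{e}$ by time $\atf{S}{w}$, whereas you identify $\atf{S'}{w}$ with the first tight time $t_2$ from the well-subdivided property; this is the same idea.
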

\begin{proof}
    Let $S'\in\cS$ be a subset of $S$ that contributes to $\vec{e}$. By \Cref{obs: contribution and atfs}, the set $S'$ contributes exactly $\atf{S'}{w}-\atf{S'}{v}$ to $\vec{e}$. 
    Hence, it remains to show that $\atf{S'}{w}-\atf{S'}{v}\leq\atf{S}{w}-\atf{S}{v}$. 
    By \Cref{obs: larger sets arrive no later}, we get $\atf{S'}{v}\geq\atf{S}{v}$. 
    Further, by the $S$-tightness of $\vec{e}$, the edge $\vec{e}$ is tight at time $\atf{S}{w}$. 
    In particular, we have $\atf{S'}{v} \leq \atf{S}{w}$ (because $S'$ contributes to $\vec{e}$ at some time) and
    thus $S'$ can reach $w$ via $\vec{e}$ at time $\atf{S}{w}$, implying $\atf{S'}{w}\leq\atf{S}{w}$. 
    All together, we thus obtain $\atf{S'}{w}-\atf{S'}{v}\leq\atf{S}{w}-\atf{S}{v}$ and hence $S'$ contributes at most $\atf{S}{w}-\atf{S}{v}$ to $\vec{e}$.
\end{proof}

The previous statements in this section all applied to any $S$-active path. 
The final lemma we will prove in this section will apply only to $S$-safe paths.
We will prove that \Cref{fig:zig_zag_example} is indeed a worst-case example.
Namely, from the meeting point $m_P$ to the vertex $v$, the set $S$ (on average) travels with a speed of at most $3$. 
More precisely, as $S$ reaches $m_P$ at time $\atf{S}{m_P}=\atf{R}{m_P}$ and $v$ at time $\atf{S}{v}$, traveling with average speed at most $3$, means that the length of $P_{\lre{m_P,v}}$ is at most $3\lr{\atf{S}{v}-\atf{S}{m_P}}$. We will actually show a slightly stronger statement: our upper bound on the length will be even smaller if $v$ is reachable by some other terminal before time $\atf{S}{v}$.

\begin{lemma}\label{lem: length from m_p to v}
    Let $S\in\cS$, $v\in V$ and consider an $S$-safe path $P$ to $v$. If $m_P$ is a proper meeting point of $P$ then the length of $P_{\lre{m_P,v}}$ is at most 
    $$
        2\atf{S}{v}+\atf{R}{v}-3\atf{S}{m_P}\,.$$
\end{lemma}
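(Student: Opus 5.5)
We bound the length edge by edge. Let $m_P=v_0,v_1,\dots,v_k=v$ be the vertices of $P_{\lre{m_P,v}}$ and write $\tau_i\coloneqq\atf{S}{v_i}$. Then $\tau_0=\atf{S}{m_P}$ and $\tau_k=\atf{S}{v}$, and the $\tau_i$ are nondecreasing because $P$ is $S$-tight. Each edge $e_i=\lr{v_{i-1},v_i}$ is tight at time $\tau_i$, so $c(e_i)$ equals the total contribution of all sets to $e_i$ up to time $\tau_i$. Our plan is to show that each edge has length at most $2(\tau_i-\tau_{i-1})$ plus an amount that telescopes, so that the sum over all edges is at most $2\tau_k+\atf{R}{v}-3\tau_0$.

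First we identify which sets can contribute to $e_i$. By \Cref{obs: contribution and atfs}, a contributing set $S'$ contributes exactly $\atf{S'}{v_i}-\atf{S'}{v_{i-1}}$. Subsets of $S$ contribute at most $\tau_i-\tau_{i-1}$ by \Cref{lem: contribution of subset of S}, and by \Cref{lem: contributing to same undir edge implies disjoint} at most one subset of $S$ contributes. Since $e_i$ is $S$-safe, at most one set $\bar S_i\subseteq R\setminus S$ contributes to $e_i$ or to its reverse. The contribution of $\bar S_i$ is at most $\tau_i-\atf{\bar S_i}{v_{i-1}}$. By \Cref{obs: atf ieq behind meeting point}, $R\setminus S$ reaches every vertex of $P_{\lre{m_P,v}}$ no later than $S$, so we work with the potential $\rho_i\coloneqq\atf{R\setminus S}{v_i}\le\tau_i$. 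A crude bound is $c(e_i)\le(\tau_i-\tau_{i-1})+(\tau_i-\rho_{i-1})$. Writing $\tau_i-\rho_{i-1}=(\tau_i-\tau_{i-1})+(\tau_{i-1}-\rho_{i-1})$ shows that we must control the slack $\tau_{i-1}-\rho_{i-1}$.

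The key refinement is that a contributing set $\bar S_i$ cannot have reached $v_i$ cheaply. If $R\setminus S$ reaches $v_i$ strictly before time $\tau_i$, then the edge $e_i$ in direction $(v_{i-1},v_i)$ stops receiving contributions from $R\setminus S$ at time $\rho_i$, and the reverse direction $(v_i,v_{i-1})$ may receive them instead. Property \ref{itm: uniform edges} of \Cref{def:well-subdivided} forces a single uniform phase on each direction. Combining this with property \ref{itm: reachability of both endpoints} and \Cref{lem:first_implies_first_at_neighbor_or_tight_incoming_edge}, we aim to prove the per-edge inequality
\[
c(e_i)\ \le\ 2(\tau_i-\tau_{i-1}) + (\rho_i-\rho_{i-1}) - (\tau_i-\tau_{i-1}) + (\tau_i-\rho_i)-(\tau_{i-1}-\rho_{i-1}),
\]
or some equivalent form. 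Summing it from $1$ to $k$ yields $2(\tau_k-\tau_0)+(\rho_k-\rho_0)$ up to sign conventions. This gives the claimed bound once we use $\rho_0=\tau_0$, which holds by \Cref{lem: proper meeting point is earlier}, and $\rho_k\le\atf{R}{v}$.

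Proving this per-edge inequality is the main obstacle. It requires a case analysis on whether $\bar S_i$ contributes to $e_i$ itself or only to its reverse, and on whether $\rho_i<\tau_i$. We will also induct along the path, maintaining the invariant $\sum_{j\le i}c(e_j)\le2\tau_i+\rho_i-3\tau_0$, where $\rho_i=\atf{R}{v_i}$ since $\rho_i\le\tau_i$. In the induction step, if no set outside $S$ contributes to $e_i$, the edge is paid for by $S$ alone: $c(e_i)\le\tau_i-\tau_{i-1}$, and $\rho_i\ge\rho_{i-1}$ need not hold, but then $\atf{R}{v_i}\ge\atf{R}{v_{i-1}}-c(\text{reverse})$. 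This shows the invariant needs care, and we would handle it via \Cref{lem:first_implies_first_at_neighbor_or_tight_incoming_edge}. If $\bar S_i$ contributes, its contribution is at most $\tau_i-\rho_{i-1}$, which the increase $(\tau_i-\tau_{i-1})+(\rho_i-\rho_{i-1})$ covers provided $\rho_i\ge\tau_{i-1}$. We would first try to establish that last inequality from uniformity of contributions on $e_i$.
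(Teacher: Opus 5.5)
Your target is right: an edge-wise bound that telescopes, i.e.\ the invariant $\sum_{j\le i}c(e_j)\le 2\tau_i+\atf{R}{v_i}-3\tau_0$. But the per-edge inequality that carries the lemma is never proved, and the routes you sketch toward it fail.

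\emph{The displayed inequality is wrong.} All $\rho$-terms cancel and it reads $c(e_i)\le 2(\tau_i-\tau_{i-1})$, which telescopes to $2(\tau_k-\tau_0)$, not to what you claim. It is false whenever $R\setminus S$ alone made $e_i$ tight before $S$ reached $v_{i-1}$. For example, in \Cref{fig:zig_zag_example}, the first length-$2$ edge after $m$ on the path of $s_1$ is made tight by $s_2$ during $(17,19]$, while $s_1$ reaches both its endpoints at time $19$.

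\emph{The fallback also fails.} For the case that $\bar S_i$ contributes to $e_i$, you propose to bound its contribution by $\tau_i-\rho_{i-1}$ and then show $\rho_i\ge\tau_{i-1}$. But $\rho_i\ge\tau_{i-1}$ is false in general. Take two terminals $s,\bar s$ with merge time $13.5$ and edges $\{s,v_0\},\{\bar s,v_1\},\{v_0,v_1\},\{v_1,v_2\}$ of costs $12,11,1,1$ (plus the harmless subdivisions of \Cref{sec: assumptions}). For $S=\{s\}$ and $P=s,v_0,v_1,v_2$ one gets $m_P=v_0$, $\tau_0=\rho_0=12$, $\tau_1=\tau_2=13$, $\rho_1=11$ and $\rho_2=12<\tau_1$, while $\{\bar s\}$ contributes to $(v_1,v_2)$. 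The invariant is tight after $e_1$, and the required increment for $e_2$ is $2(\tau_2-\tau_1)+(\rho_2-\rho_1)=1=c(e_2)$. Your estimate only gives $c(e_2)\le(\tau_2-\tau_1)+(\tau_2-\rho_1)=2$, so the induction breaks.

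\emph{The reverse-direction case is left open.} When only the reverse orientation receives outside contributions, you note that $\rho$ can decrease but give no argument for what pays for the decrease.

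Two ideas are missing.

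(a) \emph{Pin down the outside contributor exactly.} For $e_i$, let $x$ be the endpoint first reached by any terminal and $y$ the other. By \Cref{obs: atf ieq behind meeting point}, some $\bar s\in R\setminus S$ reaches $x$ at time $\atf{R}{x}$. For the set $\bar S\ni\bar s$ active right afterwards, \Cref{lem:first_implies_first_at_neighbor_or_tight_incoming_edge} gives $\atf{\bar S}{y}=\atf{R}{y}>\atf{R}{x}=\atf{\bar S}{x}$. So $\bar S$ reaches both endpoints at the $R$-earliest times. By safety it is the only subset of $R\setminus S$ contributing to either orientation, and at most one subset of $S$ contributes. That last fact also needs safety; \Cref{lem: contributing to same undir edge implies disjoint}, which you cite, only yields disjointness. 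With $\rho_i\coloneqq\atf{R}{v_i}$, the set $\bar S$ therefore contributes exactly $\max\{\rho_i-\rho_{i-1},0\}$ to $e_i$ and exactly $\max\{\rho_{i-1}-\rho_i,0\}$ to its reverse.

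(b) \emph{Use a two-sided estimate.} You have the upper bound $c(e_i)\le(\tau_i-\tau_{i-1})+\max\{\rho_i-\rho_{i-1},0\}$ and the lower bound $c(e_i)\ge|\rho_i-\rho_{i-1}|$, since one set cannot contribute more than the cost to either orientation. Twice the first minus the second gives $c(e_i)\le 2(\tau_i-\tau_{i-1})+(\rho_i-\rho_{i-1})$ regardless of direction. This telescopes to $2\atf{S}{v}-2\atf{S}{m_P}+\atf{R}{v}-\atf{R}{m_P}$, and \Cref{lem: proper meeting point is earlier} supplies $\atf{R}{m_P}=\atf{S}{m_P}$.
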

\begin{proof}
    We start by showing that for every edge $e=\lr{a,b}$ of $P_{\lre{m_P, v}}$ there is a unique set $\bar S\subset R\setminus S$ that contributed to $e$ or $\eback$ before $e$ got tight and this set fulfills 
    \begin{align}\label{ieq: S bar is earliest on both sides}
        \atf{\bar S}{a}=\atf{R}{a}\qquad\text{and}\qquad\atf{\bar S}{b}=\atf{R}{b}.
    \end{align}

    Consider an edge $\lr{a,b}$ of $P_{\lre{m_P, v}}$, and let $x\in\lrb{a, b}$ be the vertex of $a$ and $b$ that is first reachable by some terminal, and let $y$ be the other one of these two vertices, that is $\atf{R}{x}\leq\atf{R}{y}$.
    This in particular implies that $\lr{y, x}$ is not tight at time $\atf{R}{x}$.
    By \Cref{obs: atf ieq behind meeting point}, there is a terminal $\bar s\in R\setminus S$, that reaches $x$ at time $\atf{R}{x}$. 
    Let $\bar S\in\cS$ be the set containing $\bar s$ that is active immediately after time $\atf{R}{x}$, then \Cref{lem:first_implies_first_at_neighbor_or_tight_incoming_edge} implies that $\atf{\bar S}{y}=\atf{R}{y}>\atf{R}{x}=\atf{\bar S}{x}$. Thus $\bar S$ contributes to the edge $\lr{x, y}$ and satisfies \eqref{ieq: S bar is earliest on both sides}.
    
    The fact that $P$ is $S$-safe (\Cref{def: safe path}) has two useful implications:
    First, $\bar S$ is the only subset of $R\setminus S$ that contributes to $\lr{a, b}$ or $\lr{b, a}$ before $\lr{a, b}$ is tight. In particular, $\bar S$ is the only subset of $R\setminus S$ can contribute to $\lr{a, b}$.
    Second, at most one subset of $S$ can contribute to $\lr{a, b}$ or $\lr{b, a}$ before $\lr{a, b}$ is tight. In particular, at most one subset of $S$ can possibly contribute to $\lr{a, b}$.

    Using this, we can bound the length of $P_{\lre{m_P, v}}$ as follows.
    Consider a directed edge $\vec{e}=\lr{a,b}\in P_{\lre{m_P, v}}$ and the corresponding set $\bar S$ as above.
    By \Cref{obs: contribution and atfs}, the set $\bar S$ contributes exactly $\max\lrb{\atf{\bar S}{b}-\atf{\bar S}{a},\ 0}$ to $\vec{e}$. Similarly, if there is a subset of $S$ that contributes to $\vec{e}$, then by \Cref{lem: contribution of subset of S}, it contributes at most $\atf{S}{b}-\atf{S}{a}$. The edge $\vec{e}$ eventually becomes tight (as it is tight at time $\atf{S}{v}$). Hence, its length equals the total contribution of $\bar S$ and the corresponding subset of $S$, which gives the following bound:
    \begin{align}\label{ieq: exact cost of e}
        c\lr{\vec{e}}\ \leq\ \atf{S}{b}-\atf{S}{a}\ +\ \max\lrb{\atf{\bar S}{b}-\atf{\bar S}{a},\ 0}\,.
    \end{align}

    The set $\bar S$ cannot contribute more than $c\lr{\vec{e}}$ to either of $\vec{e}$ and $\eback=\lr{b, a}$.
    Hence, again using \Cref{obs: contribution and atfs},
    \begin{align}\label{ieq: lower bound on cost of e}
        c\lr{\vec{e}}\ \geq\ \lrv{\atf{\bar S}{b}-\atf{\bar S}{a}}\,.
    \end{align}
    Subtracting \eqref{ieq: lower bound on cost of e} from two times \eqref{ieq: exact cost of e}, we get
    \begin{align*}
        c\lr{\vec{e}}\ 
        &\leq\ 2\atf{S}{b}-2\atf{S}{a}+2\cdot\max\lrb{\atf{\bar S}{b}-\atf{\bar S}{a},\  0}-\lrv{\atf{\bar S}{b}-\atf{\bar S}{a}}\\
        &=\ 2\atf{S}{b}-2\atf{S}{a}+\atf{\bar S}{b}-\atf{\bar S}{a}\\
        &=\ 2\atf{S}{b}-2\atf{S}{a}+\atf{R}{b}-\atf{R}{a},\defaulttag\label{ieq: upper bound on cost of e}
    \end{align*}
    where we used \eqref{ieq: S bar is earliest on both sides} in the last equality.
    Summing \eqref{ieq: upper bound on cost of e} over all edges on $P_{\lre{m_P, v}}$, we obtain a telescopic sum that collapses to
    \begin{align*}
        c\lr{P_{\lre{m_P, v}}}\ \leq\ 2\atf{S}{v}-2\atf{S}{m_P}+\atf{R}{v}-\atf{R}{m_P}\ =\ 2\atf{S}{v}+\atf{R}{v}-3\atf{R}{m_P},
    \end{align*}
    where we used \Cref{lem: proper meeting point is earlier} in the last step.
\end{proof}

We want to point out, that the proof of \Cref{lem: length from m_p to v} is the key part of our analysis where we use that the paths we consider are safe.
For non-safe paths, there are multiple reasons why the proof of \Cref{lem: length from m_p to v} might fail: If there are two or more subsets of $S$ or two or more subsets of $R\setminus S$ contributing to $\vec{e}$, then \eqref{ieq: exact cost of e} does not hold in general (no matter which set we choose as $\bar S$). 
For the example on the right-hand side of \Cref{fig:k_interaction_example}, we could for each directed edge of the $v$-$w$ path identify one subset $\bar S$ of $R\setminus S$ such that \eqref{ieq: exact cost of e} is fulfilled (by taking $\bar S$ as the subset of $R\setminus S$ contributing to this directed edge, with is unique in this example)\footnote{
The instance on the right-hand side of \Cref{fig:k_interaction_example} is not well-subdivided. However, subdividing each diagonal edge exactly at its midpoint would result in a well-subdivided instance with the same issue.}.
However, $\bar S$ will not satisfy \eqref{ieq: S bar is earliest on both sides}. 
Consequently, the last step of \eqref{ieq: upper bound on cost of e} does not work, and we eventually don't obtain a telescopic sum (because we cannot choose $\bar S$ uniformly on all edges).

\subsection{The Main Invariant}\label{sec: introducing invariant}
In this section we introduce the invariant (\Cref{inv: the invariant}) which we later prove to be satisfied throughout the moat growing procedure. 
As mentioned in \Cref{sec:outline_analysis}, we want to prove that whenever $S$ actively reaches a vertex $v$, there is a non-empty subset $X\subseteq S$ that can be cheaply connected to $v$. 
To get a good analysis also for instances that are not MST-optimal, we additionally prove that all terminals in $X$ have a short distance to $v$. 
Both of this is captured by the notion of \emph{$t$-closeness}, which we introduce next. 
Recall that $\conn{X}$ denotes the minimum cost of a component that connects the terminals in $X$. 
Similarly we define $\conn{X, v}$ to be the minimum cost of a component that connects terminals $X$ and the vertex~$v$.

\begin{definition}[$t$-close]\label{def: t close}
    A terminal set $X\subseteq R$ is $t$-close to a vertex $v$, if every vertex $x\in X$ reaches $v$ at time $\atf{x}{v}\leq t$ and satisfies
    \begin{align}
        \dist{x, v}&\ \leq\ \min\lrb{t+\frac{2-6\gamma}{7-5\gamma}\,\val{X},\ \frac{9-7\gamma}{7-5\gamma}t}\label{ieq: t close dist}\,,\quad\text{and}\\
        \conn{X, v}&\ \leq\ t+\frac{8-8\gamma}{7-5\gamma}\,\val{X}\label{ieq: t close connect}\,.
    \end{align}
\end{definition}

In the MST-optimal case it would suffice to require \eqref{ieq: t close connect}.
The first part of the upper bound on $\dist{x, v}$ in \eqref{ieq: t close dist} will be used to obtain a better dependence on $\gamma$. The second part allows us to identify strictly $\gamma$-expensive set and will be used as follows.

If we want to use our upper bound \eqref{ieq: t close connect} on $\conn{X, v}$ and $\conn{Y,v}$ to obtain a lower bound on the time when two sets $X$ and $Y$ ``meet'' at a vertex $v$, then we will need a lower bound on $\conn{X, v} + \conn{Y,v} \geq \conn{X\cup Y}$.
More precisely, we want $X\cup Y$ to be strictly $\gamma$-expensive. 
For an MST-optimal instance and a strictly $0$-good merge plan, this is the case for every set $X \cup Y\subseteq R$. 
In general, a terminal set $X$ is strictly $\gamma$-expensive if no two terminals $x,y\in X\cup Y$ satisfy \eqref{ieq: strict stronger upper bound for gamma good} (because our merge plan $\cM$ is strictly $\gamma$-good). 
This yields a sufficient condition for a set to be strictly $\gamma$-expensive.

\begin{definition}[late-merged]\label{def: late-merged}
    We say that a terminal set $X\subseteq R$ is late-merged if for all $x, y\in X$ with $x\neq y$:
    \begin{align}
        \merge{}{x, y}\ \geq\ \frac{7-5\gamma}{18-14\gamma}\,\dist{x, y}\label{ieq: late-merged condition}\,.
    \end{align}
\end{definition}

\begin{observation}\label{obs: connecting late-merged set is expensive}
    Every late-merged set $X\subseteq R$ is strictly $\gamma$-expensive, that is, we have
    \begin{align*}
        \conn{X}\ >\ \frac{12-12\gamma}{7-5\gamma}\,\val{X}\,.
    \end{align*}
\end{observation}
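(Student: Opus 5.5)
The plan is to argue directly from the definition of a strictly $\gamma$-good merge plan (\Cref{def: strictly gamma good}). That definition says that every terminal set $X\subseteq R$ which is \emph{not} strictly $\gamma$-expensive contains two distinct terminals $x,y\in X$ with
\[
\merge{}{x,y}\ <\ \frac{7-5\gamma}{18-14\gamma}\,\dist{x,y}.
\]
So I take a late-merged set $X$ and suppose, for contradiction, that $X$ is not strictly $\gamma$-expensive. The property above then supplies distinct $x,y\in X$ satisfying this strict inequality. However, late-mergedness (\Cref{def: late-merged}) gives exactly the reverse non-strict inequality $\merge{}{x,y}\geq\frac{7-5\gamma}{18-14\gamma}\,\dist{x,y}$ for all distinct pairs in $X$, which is a contradiction. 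Hence $X$ is strictly $\gamma$-expensive, which by definition means $\conn{X}>\frac{12-12\gamma}{7-5\gamma}\,\val{X}$.

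The only point needing care is degenerate sets, namely $|X|\le 1$ or $X=\emptyset$. Such a set is vacuously late-merged, yet contains no pair $x\neq y$, so the argument above does not apply. For a singleton, $\val{X}=0$, since exactly one part of every partition intersects $X$, and $\conn{X}=0$. The strict inequality $0>0$ then fails, so the observation must implicitly concern sets with $|X|\ge 2$, or the convention in \Cref{def: strictly gamma good} must treat such sets separately.

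I would handle this by noting that in \Cref{def: strictly gamma good} the requirement on sets that are not strictly $\gamma$-expensive can only be meaningfully imposed on sets with at least two elements. The observation is then stated and used only for late-merged sets with $|X|\ge 2$, which is the case in all later applications, where $X$ is a union of two nonempty disjoint sets. There is no real obstacle here beyond making this convention explicit; the proof is a two-line contrapositive of the definition.
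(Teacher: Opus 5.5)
Your argument is correct and matches the paper's reasoning: the observation is the contrapositive of the second condition in \Cref{def: strictly gamma good}, since late-mergedness rules out any pair $x\neq y$ in $X$ satisfying \eqref{ieq: strict stronger upper bound for gamma good}. Your remark about singletons is also accurate, since there $\conn{X}=\val{X}=0$; it is harmless because the observation is only ever applied to unions of at least two disjoint nonempty sets.
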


Using $\dist{x,y} \leq \dist{x,v} +\dist{y,v}$, will allow us to use \eqref{ieq: t close dist} to prove that we have \eqref{ieq: late-merged condition} for any $x\in X$ and $y\in Y$, which we will then use to prove that $X\cup Y$ is indeed late-merged (see \Cref{lem: inv and m not v implies that X union X bar is late-merged}) and thus strictly $\gamma$-expensive.

To summarize, if $S$ actively reaches $v$ at time $t$, we would ideally want to find a non-empty subset $X\subseteq S$ that is late-merged and $t$-close to $v$. 
This will only be the case if $S$ reached $v$ on its own, meaning without the help of terminals in $R\setminus S$. 
If this is not the case, we will additionally identify a set $\bar X\subseteq R\setminus S$ that helped $S$ to reach $v$. 
Further, we will identify a vertex $m$ that serves as a bifurcation when connecting $X\cup\bar X$ to $v$. 
More precisely, we want both $X$ and $\bar X$ to be close to $m$, and we want that $m$ is not too far away from $v$.

The intuition is that both $X$ and $\bar X$ can reach the bifurcation vertex $m$ on their own, but they work together to get from $m$ to $v$.
All of this is captured by the following invariant, which we will prove to be satisfied throughout the dual growth procedure. 

\begin{invariant}\label{inv: the invariant}
    Let $v\in V$ be a vertex and $S\in\cS$ a set that actively reaches $v$ at time $\atf{S}{v}$ via an $S$-safe path.
    Then one of the following is true:
    \begin{enumerate}[label=(\alph*)]
        \item\label{item: no help case}
        there is a non-empty late-merged subset $X\subseteq S$ that is $\atf{S}{v}$ close to $v$, or
        \item \label{item: help case}
        there is a vertex $m\in V$ and non-empty late-merged sets $X\subseteq S$, and $\bar X\subseteq R\setminus S$ such that both $X$ and $\bar X$ are $\atf{R}{m}$-close to $m$ with $\atf{R}{m}<\atf{S}{v}$, and there is an $m$-$v$ path $Q$, all of whose edges are tight at time $\atf{S}{v}$, with
        \begin{align}
            c\lr{Q}\ \leq\ 2\atf{S}{v}+\atf{R}{v}-3\atf{R}{m}\label{ieq: inv dist m v}\,.
        \end{align}
    \end{enumerate}
\end{invariant}

This is illustrated in \Cref{fig: invariant variable names}. 
For $S$, $v$, $Q$, $m$, $X$, and $\bar X$ as in \Cref{inv: the invariant}, we will also say that the tuple $\lr{S, v, Q, m, X, \bar X}$ satisfies \Cref{inv: the invariant}. 
For the first case of the invariant, we say that $\lr{S, v, \emptyset, v, X, \emptyset}$ satisfies \Cref{inv: the invariant}.

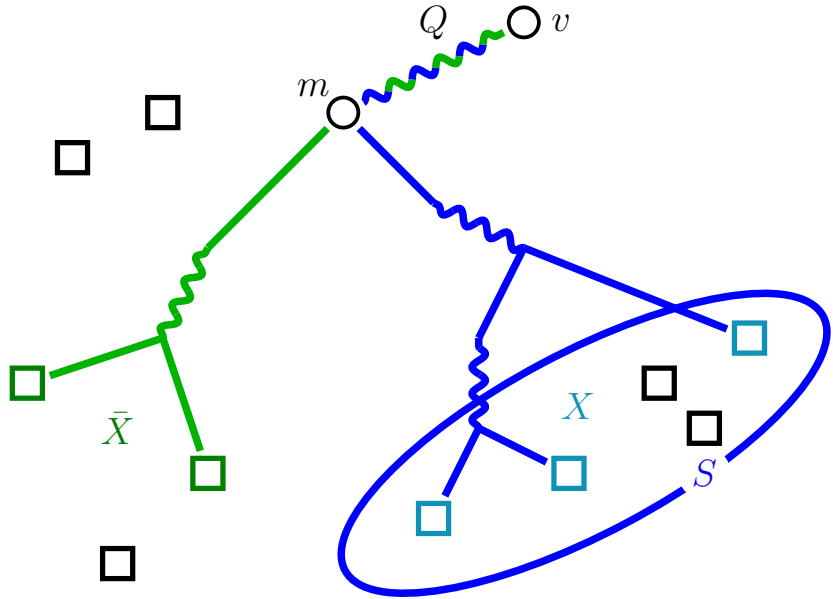
\begin{figure}[ht]
    \centering
    \begin{tikzpicture}[scale=0.6]
\useasboundingbox (-3,-2) rectangle (16,12);
    \Large
    \node[smallterminal, draw = green!50!black] (S1) at (-2, 3) {};
    \node[smallterminal, draw = green!50!black] (S2) at ( 2, 1) {};
    \node[smallterminal, draw = cyan!70!black] (S3) at ( 7, 0) {};
    \node[smallterminal, draw = cyan!70!black] (S4) at (10, 1) {};
    \node[smallterminal, draw = cyan!70!black] (S5) at (14, 4) {};
    \node[smallterminal] (S6) at (12, 3) {};
    \node[smallterminal] (S7) at (13, 2) {};
    
    \node[smallterminal] (S7) at ( 0,-1) {};
    \node[smallterminal] (S7) at (-1, 8) {};
    \node[smallterminal] (S7) at ( 1, 9) {};

    \coordinate (S34) at ($(S3)!1/2!(S4)$);
    \coordinate (S345) at ($(S34)!1/3!(S5)$);
    \draw[activeset, blue,rotate around={28:(S345)}](S345) ellipse [x radius=6, y radius=2];

    \coordinate (A1) at (1,4);
    \coordinate (A2) at (2,6);
    \node[smallsteiner, label={[label distance=-2.5mm]100:{$m$}}] (A3) at (5,9) {};
    \coordinate (A4) at (8,2);
    \coordinate (A5) at (8,4);
    \coordinate (A6) at (9,6);
    \coordinate (A7) at (7,7);
    \node[smallsteiner, label=right:{$v$}] (A8) at (9,11) {};

    \draw[mstedge, green!70!black] (S1) -- (A1);
    \draw[mstedge, green!70!black] (S2) -- (A1);
    \draw[mstedge, curlyHalfA, green!70!black] (A1) -- (A2);
    \draw[mstedge, curlyHalfB, green!70!black] (A1) -- (A2);
    \draw[mstedge] (A2) -- (A3);
    
    \draw[mstedge, blue, blue] (S3) -- (A4);
    \draw[mstedge, blue, blue] (S4) -- (A4);
    \draw[mstedge, curlyHalfA, blue] (A4) -- (A5);
    \draw[mstedge, curlyHalfB, blue] (A4) -- (A5);
    \draw[mstedge, blue] (A5) -- (A6);
    \draw[mstedge, blue] (S5) -- (A6);
    \draw[mstedge, curlyHalfA, blue] (A6) -- (A7);
    \draw[mstedge, curlyHalfB, blue] (A6) -- (A7);
    \draw[mstedge, blue] (A7) -- (A3);
    
    \draw[mstedge, curlyHalfA, blue] (A3) -- (A8);
    \draw[mstedge, curlyHalfB, green!70!black] (A3) -- (A8);

    \node[edgelabel, minimum size=7mm, text = blue] (S2L) at (13, 1) {$S$};

    \node[edgelabel, text = cyan!70!black] (X) at (10.2,2.5) {$X$};
    \node[edgelabel, text = green!50!black] (Xbar) at (0,2) {$\bar X$};
    \node[edgelabel] (Q) at (7,11) {$Q$};

\end{tikzpicture}     \caption{\label{fig: invariant variable names}
    An example of a tuple $\lr{S, v, Q, m, X, \bar X}$ satisfying case \ref{item: help case} of \Cref{inv: the invariant}. The set $S$ actively reaches the vertex $v$. The path $Q$ from $m$ to $v$ has length at most $2\atf{S}{v}+\atf{R}{v}-3\atf{R}{m}$ and all of its edges are tight when $S$ first reaches $v$. The light-blue terminals form the set $X$ that is $\atf{R}{m}$-close to $m$. Similarly, the green terminals form the set $\bar X$ that is also $\atf{R}{m}$-close to $m$. All of the blue terminals are contained in $S$, while none of the green terminals is contained in $S$.
    }
\end{figure}

If $S$ had help from external terminals $\bar X$, there is, in general, no $X\subseteq S$ that is $\atf{S}{v}$-close to $v$. 
In this case, we have to find a suitable bifurcation vertex $m$. 
A natural candidate for $m$ would be the meeting point of the $S$-active path to $v$. 
Then, the bound \eqref{ieq: inv dist m v} on the length of $Q$ would be satisfied by \Cref{lem: length from m_p to v}.
However, as the example from \Cref{fig:bifurcation_example} shows\footnote{Note that in this example, the bifurcation vertex $m$ lies neither on the only $\{s_1\}$-tight $s_1$-$v$ path, nor on the only  $\{s_2\}$-tight $s_2$-$v$ path.}, the meeting point $m_P$ is not always a valid choice because the terminals in $R\setminus S$ reaching it might again had help from (subsets of) $S$. If this is the case, we will recursively (by some inductive argument) choose the bifurcation vertex by considering the bifurcation vertex that we would choose for a path from $\bar S \subseteq R\setminus S$ to $m_P$.

To prove that \Cref{inv: the invariant} holds, we use an inductive argument. To this end, we will not only need that $S$ reaches $v$ via a safe path, but we will also need the following: if some other set $S'\in\cS$ actively reaches a vertex $w$ strictly before time $\atf{S}{v}$, then it does so via a safe path. 
For this reason, we first prove the following lemma in \Cref{sec: invariant holds}.

\begin{restatable}{lemma}{lemInvUntilFirstAccident}\label{lem: inv holds until first accident}
    \Cref{inv: the invariant} is satisfied until (including) the first time at which there exists an $S$-active path (for some set $S\in \cS$) that is not $S$-safe.
\end{restatable}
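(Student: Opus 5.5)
We prove the statement by induction over the (finitely many) event times of \Cref{algo:dual_growth_with_merge_plan}. Fix $S\in\cS$ and $v$ that $S$ actively reaches at time $t:=\atf{S}{v}$ via an $S$-safe path $P$, where $t$ is at most the first time at which some non-safe active path exists. The induction hypothesis is that every tuple $(S',w)$ with $\atf{S'}{w}<t$ reached actively via a safe path satisfies \Cref{inv: the invariant}; by assumption all such paths are indeed safe. We also induct on $|S|$, so that we may use the invariant for proper subsets of $S$ that actively reached prefixes of $P$ at time $t$.

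\textbf{No proper meeting point.} Suppose first that $P$ has no proper meeting point. Then along $P$ only subsets of $S$ contributed to the edges of $P$. Let $w$ be the last vertex of $P$ with $\atf{S}{w}<t$. Apply \Cref{lem: active reaching subset} to obtain $S'\subseteq S$ with $P_{[S,w]}$ being $S'$-active, and apply the induction hypothesis to $(S',w)$. If $S'$ falls into case \ref{item: no help case} with set $X$, we keep $X$. We then extend \eqref{ieq: t close dist} and \eqref{ieq: t close connect} along $P_{[w,v]}$, using that its edges were contributed only by subsets of $S$ and have length at most $t-\atf{S}{w}$ by \Cref{lem: contribution of subset of S}. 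We also use $\val{X}$ unchanged. If $S'$ is in case \ref{item: help case} with $\bar X\subseteq R\setminus S'$, there are two subcases. If $\bar X\subseteq R\setminus S$, we keep $m,X,\bar X$ and append $P_{[w,v]}$ to $Q$. Otherwise $\bar X\subseteq S\setminus S'$, and we merge: $X\cup\bar X\subseteq S$. We show it is late-merged via \eqref{ieq: t close dist} and $\merge{}{x,y}\ge a^S$-type bounds (this is the forthcoming \Cref{lem: inv and m not v implies that X union X bar is late-merged}). We then bound $\conn{X\cup\bar X,v}\le \conn{X,m}+\conn{\bar X,m}+c(Q)+c(P_{[w,v]})$, and use \Cref{lem: drop of X union X bar} with $t^*=\atf{R}{m}$ to convert $\val{X}+\val{\bar X}+\atf{R}{m}\le \val{X\cup\bar X}$. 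This gives $t$-closeness with $X\cup\bar X$ in case \ref{item: no help case}. The needed lower bound on $\atf{R}{m}$ comes from \Cref{obs: connecting late-merged set is expensive}: we have $\conn{X\cup\bar X}\le 2\atf{R}{m}+\frac{8-8\gamma}{7-5\gamma}(\val X+\val{\bar X})$ against $\frac{12-12\gamma}{7-5\gamma}\val{X\cup\bar X}$. We also need the merge-time upper bound \eqref{ieq: strict general upper bound for gamma good}, which caps how long the speed-$3$ phase lasts.

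\textbf{Proper meeting point.} Suppose instead that $P$ has a proper meeting point $m_P$. By \Cref{lem: proper meeting point is earlier}, $\atf{S}{m_P}=\atf{R}{m_P}<t$. By \Cref{lem: there is another set reaching mP} there is $\bar S\subseteq R\setminus S$ actively reaching $m_P$ at that time, and by \Cref{lem: active reaching subset} some $S'\subseteq S$ actively reaches $m_P$ via $P_{[S,m_P]}$. Both reach $m_P$ via safe paths, possibly after choosing a $\bar S$-tight path by \Cref{lem: tight paths always exist}, since the time is earlier. Apply the induction hypothesis to both. If both land in case \ref{item: no help case}, set $m=m_P$ and $Q=P_{[m_P,v]}$; then \eqref{ieq: inv dist m v} is exactly \Cref{lem: length from m_p to v}. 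If one of them, say $\bar S$, lands in case \ref{item: help case} with vertex $m'$ and helper set $\bar X'\subseteq R\setminus\bar S$, we recurse and choose $m=m'$. The helper $\bar X'$ must lie in $S$: a set disjoint from both would make a third set contribute near $m_P$, contradicting safety. We take $Q$ as the concatenation of the tight $m'$-$m_P$ path with $P_{[m_P,v]}$. The length estimates add up: $2\atf{R}{m_P}+\atf{R}{m_P}-3\atf{R}{m'}$ plus $2t+\atf{R}{v}-3\atf{R}{m_P}$ telescopes to \eqref{ieq: inv dist m v}.

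\textbf{Main obstacle.} The main obstacle is the merging step in the no-meeting-point case, i.e.\ verifying the numerical inequalities \eqref{ieq: t close dist} and \eqref{ieq: t close connect} for $X\cup\bar X$. This requires combining the expensiveness lower bound on $\atf{R}{m}$ with the upper bound on the merge time of $X$ and $\bar X$, which is at most $a^S\le t$. The constants $\frac{8-8\gamma}{7-5\gamma}$, $\frac{2-6\gamma}{7-5\gamma}$ and $\frac{9-7\gamma}{7-5\gamma}$ must close exactly, and the plan is to reduce everything to a linear inequality in $\atf{R}{m}$, $t$, and the $\val{\cdot}$ terms, then check it at the extreme values.
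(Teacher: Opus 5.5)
\textbf{Genuine gap in the proper-meeting-point case.} You claim that if $\bar S$ lands in case~\ref{item: help case} of \Cref{inv: the invariant} at $m_P$, with bifurcation vertex $\bar m$ and helper set $\hat X\subseteq R\setminus\bar S$, then $\hat X\subseteq S$, because otherwise a third set would contribute near $m_P$ and violate safety.

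This does not follow. $S$-safety of $P$ (\Cref{def: safe path}) only restricts which sets contribute to edges of $P$ and their reversals. The helper's contributions can lie entirely on $\bar S$'s side: on the $\bar m$-$m_P$ path $\bar Q$ and on the routes from $\bar X$ and $\hat X$ to $\bar m$. For instance, the active set containing $\hat X$ may be merged with $\bar S$ exactly at time $\atf{R}{m_P}$. Afterwards only two sets contribute along $P_{[m_P,v]}$, namely a subset of $S$ and the merged set. So $P$ is $S$-safe, yet $\hat X\cap S=\emptyset$.

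All you actually have is the dichotomy from \Cref{lem: wlog merge X bar is less than tRv}: either $\hat X\subseteq S$ or $\hat X\cap S=\emptyset$. In the second case your choice $m=\bar m$ cannot work. Case~\ref{item: help case} for $(S,v)$ needs a nonempty subset of $S$ that is close to $m$, and both sets attached to $\bar m$ avoid $S$. Your case split also omits the situation where $S'$ lands in case~\ref{item: help case} at $m_P$ with its helper inside $S$. Recursing to its bifurcation vertex would then give two subsets of $S$.

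\textbf{The missing idea: collapse a same-side pair into one set at the meeting point.}
\begin{itemize}
\item If $\hat X\cap S=\emptyset$, then by \Cref{lem: X and X bar atfRm close to m implies X union X bar atfSv close to v} and \Cref{lem: inv and m not v implies that X union X bar is late-merged}, the set $\bar X\cup\hat X\subseteq R\setminus S$ is late-merged and $\atf{R}{m_P}$-close to $m_P$.
\item If $S'$'s helper lies in $S$, its pair likewise collapses to a late-merged subset of $S$ that is $\atf{S'}{m_P}=\atf{R}{m_P}$-close to $m_P$.
\item If $S'$'s helper lies in $R\setminus S$, you may recurse to its bifurcation vertex exactly as in your $\hat X\subseteq S$ subcase. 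The paper instead excludes this case by strengthening the induction hypothesis: helpers can be chosen to reach an inner vertex $u$ of the path at time $\atf{R}{u}$, which contradicts the minimality of $m_P$.
\end{itemize}
With the two sides collapsed, $m=m_P$ and $Q=P_{[m_P,v]}$ satisfy case~\ref{item: help case} by \Cref{lem: length from m_p to v}.

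\textbf{The no-meeting-point case.} The same lemma disposes of what you call the main obstacle there. $X\cup\bar X$ is $\atf{S'}{w}$-close to $w$, hence $t$-close to $v$ by \Cref{lem: from close to w to close to v}. Your proposed ingredients would not close the inequalities:
\begin{itemize}
\item \Cref{lem: drop of X union X bar} must be applied with $t^*=\atf{S'}{w}$, since $S'$ is active then and separates $X$ from $\bar X$. With $t^*=\atf{R}{m}$ the bound is too weak.
\item The merge-time fact you need is the lower bound $\merge{}{x,y}\geq\atf{S'}{w}$ for $x\in X$, $y\in\bar X$ (this feeds \Cref{lem: dist bound}). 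An upper bound via $a^S$ points in the wrong direction.
\end{itemize}
Your handling of a helper in $R\setminus S$ in that case, by appending $P_{[w,v]}$ to $Q$, is fine and is a valid alternative to the paper's strengthened hypothesis. It does need $\atf{R}{v}\geq\atf{S}{w}$. This follows from \ref{itm: reachability of both endpoints} of \Cref{def:well-subdivided} together with the absence of a proper meeting point.
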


To prove \Cref{lem: inv holds until first accident} by induction, we will also show the following lemma, which is a key part of the induction step.

\begin{restatable}{lemma}{lemInductionStep}
    \label{lem: X and X bar atfRm close to m implies X union X bar atfSv close to v}
    Let $\lr{S, v, Q, m, X, \bar X}$ fulfill \Cref{inv: the invariant}. Then $X\cup\bar X$ is $\atf{S}{v}$-close to $v$.
\end{restatable}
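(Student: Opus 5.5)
Throughout, write $T\coloneqq\atf{S}{v}$, $\tau\coloneqq\atf{R}{m}$, and note $\tau<T$ and $\atf{R}{v}\leq T$. In case \ref{item: no help case} of \Cref{inv: the invariant} (i.e.\ $\bar X=\emptyset$, $m=v$) the statement is just the hypothesis, so I would concentrate on case \ref{item: help case}. The plan is to check the three requirements of \Cref{def: t close} for $X\cup\bar X$ and the time $T$ separately, each time routing through $m$ and then along $Q$.

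\emph{Reachability and the second distance bound.} Every $x\in X\cup\bar X$ reaches $m$ by time $\tau$, and all edges of $Q$ are tight at time $T$. Hence $\atf{x}{v}\leq T$. For the distance I would use $\dist{x,v}\leq\dist{x,m}+c(Q)$ together with the $\tau$-closeness of $X$ (resp.\ $\bar X$) to $m$ and \eqref{ieq: inv dist m v}. This gives
\[
\dist{x,v}\ \leq\ \tfrac{9-7\gamma}{7-5\gamma}\,\tau+3T-3\tau\ \leq\ \tfrac{9-7\gamma}{7-5\gamma}\,T,
\]
where the last step uses $\tfrac{9-7\gamma}{7-5\gamma}\leq3$ and $\tau\leq T$.

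\emph{Two structural facts.} I would establish these before the remaining bounds.

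First, a value inequality. Since $X\subseteq S$, $S$ is active at time $T$, and $\bar X\subseteq R\setminus S$, \Cref{lem: drop of X union X bar} with $t^*=T$ gives
\[
V\coloneqq\val{X\cup\bar X}\ \geq\ \val{X}+\val{\bar X}+T.
\]

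Second, $X\cup\bar X$ is late-merged. Pairs inside $X$ or inside $\bar X$ are late-merged by assumption. For $x\in X$ and $\bar x\in\bar X$, the terminals $x$ and $\bar x$ are separated until at least $d^S\geq T>\tau$. On the other hand, the distance part of $\tau$-closeness gives $\dist{x,\bar x}\leq\dist{x,m}+\dist{\bar x,m}\leq2\,\tfrac{9-7\gamma}{7-5\gamma}\,\tau$. Together these yield \eqref{ieq: late-merged condition}. By \Cref{obs: connecting late-merged set is expensive} we then have
\[
\tfrac{12-12\gamma}{7-5\gamma}\,V\ <\ \conn{X\cup\bar X}\ \leq\ \conn{X,m}+\conn{\bar X,m}\ \leq\ 2\tau+\tfrac{8-8\gamma}{7-5\gamma}\big(\val{X}+\val{\bar X}\big).
\]
This is a lower bound on the meeting time $\tau$ in terms of $V$ and $A\coloneqq\val{X}+\val{\bar X}$.

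\emph{Connection bound and first distance bound.} For \eqref{ieq: t close connect} I would use the component formed by the two components at $m$ and the path $Q$:
\[
\conn{X\cup\bar X,v}\ \leq\ 2\tau+cA+2T+\atf{R}{v}-3\tau\ \leq\ 3T-\tau+cA,
\qquad c\coloneqq\tfrac{8-8\gamma}{7-5\gamma}.
\]
I then substitute the lower bound on $\tau$ from the expensiveness inequality and use $V\geq A+T$ to compare with $T+cV$.

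The first distance bound is handled the same way. We have $\dist{x,v}\leq\tau+\tfrac{2-6\gamma}{7-5\gamma}\val{X}+3T-3\tau$, and this must be compared with $T+\tfrac{2-6\gamma}{7-5\gamma}V$ using $V-\val{X}\geq\val{\bar X}+T$ and the lower bound on $\tau$.

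\emph{Main obstacle.} The expected difficulty is this coefficient bookkeeping. A naive combination works exactly for $\gamma=0$ but seems to lose a term of order $\gamma T$ for $\gamma>0$. I would first try to recover it in two ways: by using the sharper bound $\atf{R}{v}\leq T$ only where needed, and by mixing in the $\val{\cdot}$-part $t+\tfrac{2-6\gamma}{7-5\gamma}\val{X}$ of \eqref{ieq: t close dist} to strengthen the lower bound on $\tau$. If that still falls short, I would choose the best convex combination of the expensiveness inequality and the trivial bound $\tau\geq0$, so that the constants $\tfrac{12-12\gamma}{7-5\gamma}$, $\tfrac{8-8\gamma}{7-5\gamma}$ and $\tfrac{2-6\gamma}{7-5\gamma}$ balance for every $\gamma\in[0,\tfrac15]$.
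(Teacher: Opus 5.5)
Your reachability argument, the late-merged check, the inequality $V\ge A+T$ from \Cref{lem: drop of X union X bar}, and the expensiveness bound $\tau>\frac{6-6\gamma}{7-5\gamma}T+\frac{2-2\gamma}{7-5\gamma}A$ are the paper's ingredients (\Cref{lem: inv and m not v implies that X union X bar is late-merged,lem: connect bound}). However, two steps fail.

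First, your derivation of $\dist{x,v}\le\frac{9-7\gamma}{7-5\gamma}T$ is wrong. We have
\[
\tfrac{9-7\gamma}{7-5\gamma}\,\tau+3(T-\tau)\ =\ \tfrac{9-7\gamma}{7-5\gamma}\,T+\tfrac{12-8\gamma}{7-5\gamma}\,(T-\tau),
\]
so the claimed inequality is equivalent to $\tau\ge T$, which contradicts $\tau<T$. The facts $\frac{9-7\gamma}{7-5\gamma}\le 3$ and $\tau\le T$ push it the other way. Inserting a lower bound on $\tau$ does not rescue this route: for $\gamma=0$ and singleton $X,\bar X$ it gives only $\frac{75}{49}T>\frac97T$. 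You must instead start from the value part of closeness at $m$, namely $\dist{x,v}\le T+\frac{2-6\gamma}{7-5\gamma}\val{X}+2(T-\tau)$. Then use an upper bound on $2(T-\tau)$ that carries a $-A$ term to absorb $\val{X}$.

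Second, and more importantly, the obstacle you flag is real, and none of your remedies removes it. The bound $\atf{R}{v}\le T$ is already used. The bound $\tau\ge 0$ is dominated by the expensiveness bound, so convex combinations only weaken it. The value part of \eqref{ieq: t close dist} is an upper bound on distances; it yields a lower bound on $\tau$ only when paired with a lower bound on $\dist{x,\bar x}$, which you never supply. With the expensiveness bound alone, when $A=0$ and $V=T$ the connection estimate gives $\frac{15-9\gamma}{7-5\gamma}T$ against the target $\frac{15-13\gamma}{7-5\gamma}T$.

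The missing ingredient is the first condition of strict $\gamma$-goodness, \eqref{ieq: strict general upper bound for gamma good}. The terminals $x\in X$ and $\bar x\in\bar X$ are separated by $S$ at time $T$, so
\[
\tfrac{12}{7-5\gamma}\,T\ <\ \dist{x,\bar x}\ \le\ 2\tau+\tfrac{2-6\gamma}{7-5\gamma}\,A,
\]
which is \Cref{lem: dist bound}. Take $\frac23$ of this plus $\frac13$ of the expensiveness bound to get $T-\tau<\frac{1-3\gamma}{7-5\gamma}T$. Adding the expensiveness bound once more gives $2(T-\tau)<\frac{2-2\gamma}{7-5\gamma}(T-A)$ (\Cref{cor: lin comb of dist and connect bound}).

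The first of these inequalities gives the connection bound:
\[
\conn{X\cup\bar X,v}\ \le\ 2T+(T-\tau)+\tfrac{8-8\gamma}{7-5\gamma}A\ <\ T+\tfrac{8-8\gamma}{7-5\gamma}(T+A)\ \le\ T+\tfrac{8-8\gamma}{7-5\gamma}V.
\]
Together with $V\ge\val{X}+T$, it also gives the value part of \eqref{ieq: t close dist}. The second inequality gives the $\frac{9-7\gamma}{7-5\gamma}T$ part, because $\frac{2-6\gamma}{7-5\gamma}\val{X}\le\frac{2-2\gamma}{7-5\gamma}A$.
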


Using \Cref{lem: inv holds until first accident}, we then show that indeed all $S$-active paths are $S$-safe (\Cref{sec: all active paths are safe}).

\begin{restatable}{lemma}{lemAllActivePathsAreSafe}\label{lem: all active paths are safe}
    Every $S$-active path (for a set $S\in \cS$) is $S$-safe.
\end{restatable}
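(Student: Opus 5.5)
We argue by contradiction, using \Cref{lem: inv holds until first accident}. Suppose some $S$-active path is not $S$-safe. Let $t^*$ be the first time at which such a path exists, and fix a directed edge $e=\lr{a,b}$ on it that violates \Cref{def: safe path}. We may take $e$ to be the edge of the path that becomes tight at $t^*$. Edges that became tight earlier lie on active paths whose prefixes are, by \Cref{lem: active reaching subset}, active paths for subsets of $S$, and these were safe before $t^*$ by minimality.

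By the choice of $e$, at least three sets contribute to $e$ or $\eback$ before $e$ is tight. By \Cref{lem: contributing to same undir edge implies disjoint} they are pairwise disjoint, and at least two of them, say $S_2,S_3$, lie in $R\setminus S_1$. Here $S_1\subseteq S$ is the subset of $S$ that contributes. Each $S_i$ actively reaches $a$ or $b$ strictly before $t^*$, or exactly at $t^*$ via the prefix of the path. By \Cref{lem: inv holds until first accident} these reaching paths are safe, so \Cref{inv: the invariant} holds for each triple $(S_i,w_i)$ with $w_i\in\{a,b\}$. Combining it with \Cref{lem: X and X bar atfRm close to m implies X union X bar atfSv close to v}, we obtain non-empty sets $Y_i$ that are $\atf{S_i}{w_i}$-close to $w_i$. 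In case \ref{item: no help case}, $Y_i=X_i\subseteq S_i$. In case \ref{item: help case}, $Y_i=X_i\cup\bar X_i$, which may include terminals outside $S_i$, possibly from other $S_j$; this is the pairwise-interaction situation of \Cref{fig:accident}.

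\textbf{Building a cheap component.} Take the union of the three components from \eqref{ieq: t close connect}, together with $e$ if needed to join $a$ and $b$. This gives a component $K$ connecting $Y=Y_1\cup Y_2\cup Y_3$. Because the instance is well-subdivided, contributions to $e$ and $\eback$ sum to $c(e)$ on each side, so the contributing times $\atf{S_i}{w_i}$ together with $c(e)$ are at most about $2t^*$ plus the total contribution. This bounds $c(K)\le \sum_i \atf{S_i}{w_i}+\frac{8-8\gamma}{7-5\gamma}\sum_i\val{Y_i}+c(e)$.

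\textbf{Lower-bounding the local value.} Apply \Cref{lem: drop of X union X bar} twice, using that the $S_i$ are active and pairwise disjoint at the relevant times. This gives $\val{Y}\ge\sum_i\val{Y_i}+$ (two activation times). The activation times are at least the $\atf{}{}$-values minus the portion already spent on the edge.

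We then need $Y$ to be late-merged. The distance bounds \eqref{ieq: t close dist}, with the triangle inequality through $a$ or $b$, give $\dist{x,y}\le \frac{18-14\gamma}{7-5\gamma}\max t$. On the other hand, $x,y$ are unmerged at that time, so $\merge{}{x,y}\ge t$, and \eqref{ieq: late-merged condition} follows. By \Cref{obs: connecting late-merged set is expensive}, $c(K)\ge\conn{Y}>\frac{12-12\gamma}{7-5\gamma}\val{Y}$. Plugging in the bounds, the right-hand side exceeds the upper bound on $c(K)$, because $\frac{12-12\gamma}{7-5\gamma}\ge\frac{8-8\gamma}{7-5\gamma}$ and the factor $\frac{12-12\gamma}{7-5\gamma}\cdot 2$ on the times beats the coefficient of roughly $3$ available for $\gamma\le\frac15$. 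This is the desired contradiction.

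\textbf{Main obstacle.} The hard part is case \ref{item: help case}, where the helper sets $\bar X_i$ overlap with other $S_j$, or two of the $S_i$ already interacted. Then the $Y_i$ are not disjoint, and the additivity step in \Cref{lem: drop of X union X bar} fails. I would handle this by choosing the components more carefully. If $\bar X_i$ lies inside some $S_j$, drop it and instead route through the bifurcation vertex $m_i$ and the path $Q_i$, whose cost is bounded by \eqref{ieq: inv dist m v}. Then redo the accounting with $\atf{R}{m_i}$ in place of $\atf{S_i}{w_i}$. I expect this case split and the resulting time bookkeeping to be the most delicate part; \Cref{fig:accident} shows the constants are tight there.
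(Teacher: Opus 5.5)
\textbf{Gap: the helper case is not handled, and your proposed fix fails numerically.} Your skeleton matches the paper's: take the first time an active path is unsafe, find three pairwise disjoint sets at the offending edge, apply \Cref{inv: the invariant} to each, and build a cheap component on a late-merged, hence strictly $\gamma$-expensive, terminal set. But the step you defer as the ``main obstacle'' is where the proof lives, and your fix does not close.

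Suppose the three tuples are all in case~\ref{item: help case}, and you connect $X_1,X_2,X_3$ to the common vertex through $m_i$ and all three paths $Q_i$. Then \eqref{ieq: inv dist m v} and \Cref{cor: lin comb of dist and connect bound} give only $\conn{X_i,v}\le\frac{9-11\gamma}{7-5\gamma}\atf{S_i}{v}+\frac{8-8\gamma}{7-5\gamma}\val{X_i}$. Mixing in \eqref{ieq: bound with minus drop} only raises the time coefficient. Meanwhile \Cref{lem: drop of X union X bar} adds just the two largest times to $\val{X_1\cup X_2\cup X_3}$. With singleton $X_i$ and equal times $T$, you would need $3(9-11\gamma)T<2(12-12\gamma)T$, i.e.\ $\gamma>\frac13$, which fails for $\gamma\le\frac15$. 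The constants have no slack: with three separate arms, the paper can afford a $Q$-path for only one of them (its Case~1, and only for the latest set).

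The missing idea is to use one helper set as part of the terminal set.
\begin{itemize}
\item Pick $j\in\{2,3\}$ with $\bar X_j\neq\emptyset$ and connect $X_j\cup\bar X_j\cup X_k$.
\item Both $X_j$ and $\bar X_j$ hang off $m_j$, so only $Q_j$ and $Q_k$ are paid.
\item $S_j'$ separates $X_j$ from $\bar X_j$, which yields an extra $\atf{S_j}{v}$ in the value.
\item Choose $k\neq j$ so that exactly one of $X_k,\bar X_j$ lies in $S_1$. This is possible because, by \Cref{lem: wlog merge X bar is less than tRv}, $\bar X_j$ is either inside $S_1$ or disjoint from it.
\item Then $\bar X_j$ and $X_k$ stay separated until $\atf{S_1}{v}\ge\max\{\atf{S_j}{v},\atf{S_k}{v}\}$, which gives late-mergedness.
\item $S_k'$, which is active at $\atf{S_k'}{v}$ (an active time, not an activation time), separates $X_k$ from $X_j\cup\bar X_j$ for the second application of \Cref{lem: drop of X union X bar}.
\end{itemize}
Even so, the final inequality is strict only through \eqref{ieq: gamma over three bound}.

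\textbf{Secondary problems.}
\begin{itemize}
\item \emph{Paying $c(e)$.} You apply the invariant at $w_i\in\{a,b\}$ and join through $e$, paying $c(e)$. Your sketch of how $c(e)$ is absorbed is not an argument, and with zero slack any extra term is fatal. The paper avoids it. A set contributing to $\eback$ reaches the tail $a$ at time $\atf{R}{a}$, before $e$ is tight; this follows from \Cref{lem:first_implies_first_at_neighbor_or_tight_incoming_edge} and property~\ref{itm: uniform edges}. So all three sets actively reach the \emph{same} vertex via safe paths, no later than the path's owner does, and the components are glued there without $e$.
\item \emph{Non-safety.} It does not mean ``three sets contribute''; it can be just two subsets of $R\setminus S$. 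The paper shows no proper subset of the path's owner $S_1$ contributes to $e$ or $\eback$. Otherwise $S_1$ would not actively reach the head, or the path would not be $S_1$-tight. So the first set is the owner of an inclusion-wise minimal unsafe active path, with the invariant applied to the subset $S_1'$ reaching the tail via the safe prefix (\Cref{lem: active reaching subset}). It is not ``the contributing subset of $S$.''
\item \emph{Choice of $e$.} Picking $e$ as ``the edge that becomes tight at $t^*$'' should likewise be replaced by this minimality argument.
\end{itemize}
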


Given \Cref{lem: inv holds until first accident,lem: X and X bar atfRm close to m implies X union X bar atfSv close to v,lem: all active paths are safe}, we can complete the proof of \Cref{thm: gamma good merge plan feasible} as follows:

\begin{proof}[Proof of \Cref{thm: gamma good merge plan feasible}]
    Suppose, for the sake of deriving a contradiction, that the merge plan $\cM$ is not feasible. 
    Then at some time $t$ during the algorithm we grow the dual variable corresponding to a set $U_S$ containing $r$ for a set $S\in\cS$ with $r\notin S$. 
    This implies that $S$ reaches $r$ no later than time~$t$. Hence, by \Cref{lem: active reaching subset}, there is a subset $S'\in\cS$ of $S$ that actively reaches $r$.
    By \Cref{lem: all active paths are safe}, it does so via an $S'$-safe path and \Cref{lem: inv holds until first accident} applies.
    Together with \Cref{lem: X and X bar atfRm close to m implies X union X bar atfSv close to v}, this implies that there is a vertex $s\in S'\subseteq S$ such that $\dist{s, r}\leq\frac{9-7\gamma}{7-5\gamma}t$. 
    However, $s$ and $r$ are not merged before time $t$ (because we grow the dual variable $U_S$ at time $t$ and $s\in S$, but $r\notin S$).
    Thus, \eqref{ieq: strict general upper bound for gamma good} from the definition of strictly $\gamma$-good merge plans implies $\frac{12}{7-5\gamma}t<\dist{s, r}$, which leads to a contradiction:
    \begin{align*}
        \dist{s, r}\ \leq\ \frac{9-7\gamma}{7-5\gamma}\,t\ <\ \frac{12}{7-5\gamma}\, t\ <\ \dist{s, r}\,.
    \end{align*}
\end{proof}

We remark that, as sketched in \Cref{par:component invariant}, it is also possible to prove \Cref{thm: gamma good merge plan feasible} by (lower- and upper-) bounding the cost $\conn{X\cup\lrb{r}}=\conn{X, r}$ of connecting a suitable subset $X\subseteq S'$ of terminals to $r$. 
This would require a few more computations, but allows avoiding bounds on distances (namely \eqref{ieq: t close dist}) for the special case of MST-optimal instances.

\subsection{The Invariant Holds as long as all Active Paths are Safe}\label{sec: invariant holds}

The goal of this section is to prove \Cref{lem: inv holds until first accident}, that is, that \Cref{inv: the invariant} holds until (including) the first time that there is some $S$-active path that is not $S$-safe.
Before we can do this, we need a few auxiliary statements.

\begin{lemma}\label{lem: from close to w to close to v}
    Let $t\geq t'\geq0$, $v, w\in V$, and $X\subseteq R$.
    If $X$ is $t'$-close to $w$ and there is a $w$-$v$ path $Q$ with $c\lr{Q}\leq t-t'$ such that all edges of $Q$ are tight at time $t$, then $X$ is $t$-close to $v$.
\end{lemma}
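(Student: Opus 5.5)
We verify the three requirements of \Cref{def: t close} for $X$ and $v$ at time $t$, each from the corresponding requirement for $X$ and $w$ at time $t'$ plus the path $Q$. The ingredients are the triangle inequality for $\textnormal{dist}$, appending $Q$ to a component, and the trivial bound $t-t'\le \frac{9-7\gamma}{7-5\gamma}(t-t')$. The last inequality holds because $\frac{9-7\gamma}{7-5\gamma}\ge 1$ for $\gamma\in[0,\frac15]$.

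\emph{Reachability.} Fix $x\in X$. By $t'$-closeness, $x$ reaches $w$ by time $\atf{x}{w}\le t'\le t$ via a path of tight edges. Tightness is monotone in time since dual variables only increase, so this path is still tight at time $t$. All edges of $Q$ are tight at time $t$ by assumption. Concatenating the two gives an $x$-$v$ walk of tight edges at time $t$, hence $\atf{x}{v}\le t$.

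\emph{Distance bound \eqref{ieq: t close dist}.} We have $\dist{x,v}\le \dist{x,w}+c(Q)\le \dist{x,w}+(t-t')$. For the first term of the minimum,
\[
\dist{x,v}\ \le\ t'+\tfrac{2-6\gamma}{7-5\gamma}\val{X}+t-t' \ =\ t+\tfrac{2-6\gamma}{7-5\gamma}\val{X}.
\]
For the second term,
\[
\dist{x,v}\ \le\ \tfrac{9-7\gamma}{7-5\gamma}t'+(t-t')\ \le\ \tfrac{9-7\gamma}{7-5\gamma}t.
\]
Together these give the minimum bound.

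\emph{Connection bound \eqref{ieq: t close connect}.} Take a cheapest component $K$ connecting $X\cup\{w\}$ and add the edges of $Q$, viewed as undirected edges of the same cost. The result connects $X\cup\{v\}$, so
\[
\conn{X,v}\ \le\ \conn{X,w}+c(Q)\ \le\ t'+\tfrac{8-8\gamma}{7-5\gamma}\val{X}+t-t'.
\]
This is exactly \eqref{ieq: t close connect} at time $t$.

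None of these steps is a real obstacle. The only points needing care are the monotonicity of tightness, so that paths to $w$ remain usable at time $t$, and the sign check on the coefficient $\frac{9-7\gamma}{7-5\gamma}$.
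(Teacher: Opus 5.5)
Your proof is correct and follows the paper's argument: the triangle inequality and appending $Q$ to a component preserve \eqref{ieq: t close dist} and \eqref{ieq: t close connect}, using $\frac{9-7\gamma}{7-5\gamma}\ge 1$, and reachability follows by concatenating a tight path to $w$ with $Q$. You spell out two points the paper leaves implicit, namely that tightness is monotone in time and that each part of the minimum in \eqref{ieq: t close dist} is checked separately, which makes the argument slightly more explicit but not different.
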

\begin{proof}
    Both \eqref{ieq: t close dist} and \eqref{ieq: t close connect} are preserved when adding the length of $Q$ (where we used that $\frac{9-7\gamma}{7-5\gamma} \geq 1$ because $\gamma \leq \frac{1}{5}$). 
    Moreover, all $x\in X$ can reach $w$ at time $t'\leq t$, and the edges of $Q$ are tight at time $t$. 
    Thus, all $x\in X$ can reach $v$ at time $t$.
\end{proof}

Our second auxiliary statement will allow us to ensure that we only ever encounter late-merged sets. 
We remark that the only reason why we care about sets being late-merged is that late-merged sets are strictly $\gamma$-expensive, and thus, this statement is not needed for the MST-optimal case.

\begin{lemma}\label{lem: inv and m not v implies that X union X bar is late-merged}
    If $\lr{S, v, Q, m, X, \bar X}$ fulfills \Cref{inv: the invariant}, then $X\cup\bar X$ is late-merged.
\end{lemma}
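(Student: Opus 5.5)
In case (a) of \Cref{inv: the invariant} we have $\bar X=\emptyset$, and $X$ is late-merged by assumption, so there is nothing to prove. In case (b), $X$ and $\bar X$ are each late-merged by assumption. It therefore remains to verify \eqref{ieq: late-merged condition} for pairs $x\in X$, $y\in\bar X$. Write $t_m\coloneqq\atf{R}{m}$ and $\alpha\coloneqq\frac{9-7\gamma}{7-5\gamma}$.

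\textbf{Lower bound on the merge time.} The point is that $x$ and $y$ are still separate at time $t_m$. Indeed, $x\in X\subseteq S$ and $y\in\bar X\subseteq R\setminus S$. Moreover $S$ is active at time $\atf{S}{v}$, and $t_m<\atf{S}{v}$. By the refinement property of merge plans, the part of $\cS^{t_m}$ containing $x$ is contained in $S$. So $x$ and $y$ lie in different parts of $\cS^{t_m}$, which gives $\merge{}{x,y}\geq t_m$. (If one prefers strictness, a tiny sliver above $t_m$ also works, but $\geq$ suffices for \eqref{ieq: late-merged condition}.)

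\textbf{Upper bound on the distance.} Since both $X$ and $\bar X$ are $t_m$-close to $m$, the second term of the minimum in \eqref{ieq: t close dist} gives $\dist{x,m}\leq\alpha t_m$ and $\dist{y,m}\leq\alpha t_m$. The triangle inequality then yields $\dist{x,y}\leq 2\alpha t_m$.

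\textbf{Combining the two bounds.} We get
\[
\frac{7-5\gamma}{18-14\gamma}\,\dist{x,y}\ \leq\ \frac{7-5\gamma}{18-14\gamma}\cdot\frac{2(9-7\gamma)}{7-5\gamma}\,t_m\ =\ t_m\ \leq\ \merge{}{x,y},
\]
which is exactly \eqref{ieq: late-merged condition}. Hence every pair of distinct elements of $X\cup\bar X$ satisfies the condition. Pairs inside $X$ or inside $\bar X$ are covered by the assumption, and mixed pairs are covered by the computation above; note that $X\cap\bar X=\emptyset$, so these cases are exhaustive.

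The only delicate step is the merge-time lower bound. It requires checking that no active set at time $t_m$ can contain both $x$ and $y$. This follows from laminarity together with $S\in\cS^{\atf{S}{v}}$ and $t_m<\atf{S}{v}$: any set active at an earlier time and containing $x\in S$ is a subset of $S$. The constant $\frac{7-5\gamma}{18-14\gamma}$ is exactly $\frac{1}{2\alpha}$, which is why the bound closes with equality.
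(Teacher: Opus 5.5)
Your proposal is correct and follows the same route as the paper's proof: you bound $\dist{x,y}$ by $2\cdot\frac{9-7\gamma}{7-5\gamma}\,\atf{R}{m}$ via closeness to $m$, and you lower-bound $\merge{}{x,y}$ by $\atf{R}{m}$ using that $S$ is active at time $\atf{S}{v}>\atf{R}{m}$ and separates $x$ from $y$. You also correctly use closeness to $m$, where the paper's text says ``to $v$'', which is a typo.
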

\begin{proof}
    If $\bar X$ is empty, then \ref{item: no help case} of \Cref{inv: the invariant} applies and and in particular, $X\cup \bar X = X$ is late-merged. 
    Otherwise, by \ref{item: help case} of \Cref{inv: the invariant}, both $X$ and $\bar X$ are late-merged. 
    Hence, it suffices to prove \eqref{ieq: late-merged condition} for $x\in X$ and $y\in\bar X$.
    By the $\atf{R}{m}$-closeness of $X$ and $\bar X$ to $v$, we can upper bound the distance from $x$ to $y$ by
    \begin{align}
    \label{ieq: dist x to y}
        \dist{x, y}\ \leq\ \dist{x, v}+\dist{y, v}
        \ \leq\ 2\cdot\frac{9-7\gamma}{7-5\gamma}\cdot \atf{R}{m}
        \ =\ \frac{18-14\gamma}{7-5\gamma}\cdot \atf{R}{m}\,.
    \end{align}
    The terminals $x$ and $y$ are separated by the set $S$ that is active at time $\atf{S}{v}>\atf{R}{m}$. Thus, their merge time is at least $\atf{R}{m}$. Together with \eqref{ieq: dist x to y}, this gives
    \begin{align*}
        \merge{}{x, y}\ \geq\ \atf{R}{m}\ \geq\ \frac{7-5\gamma}{18-14\gamma}\,\dist{x, y}\,,
    \end{align*}
    which implies that $X\cup\bar X$ is late-merged.
\end{proof}

\paragraph{How early can two sets meet?}
A crucial part of \Cref{inv: the invariant} is \eqref{ieq: inv dist m v}, the bound on the length of $Q$. However, this bound would be very weak if $\atf{R}{m}$ could be arbitrarily small. 
Recalling \Cref{fig:induction_step_outline}, this means that we want to lower bound the time that the green component ($K_1$) and the blue component ($K_2$) can meet. In \Cref{fig:simple_interaction_example} we have seen that two \emph{terminals} cannot meet earlier than $\frac67$ times the time they will be merged. By arguing about the cost of $K_1 \cup K_2$, we will be able to derive a similar bound on the time that two \emph{terminal sets} meet (\Cref{lem: connect bound}). Note that for the MST-optimal case ($\gamma=0$) this generalizes the bound from \Cref{fig:simple_interaction_example}. However, in general (for larger values of $\gamma$) we only get a weaker bound. To compensate this, we will additionally argue about the \emph{distance} between the two sets that meet (\Cref{lem: dist bound}).

\begin{lemma}\label{lem: connect bound}
    If $\lr{S, v, Q, m, X, \bar X}$ fulfills \Cref{inv: the invariant} and $\bar X$ is non-empty, then
    \begin{align}
        \atf{R}{m}&\ >\ \frac{6-6\gamma}{7-5\gamma}\cdot\atf{S}{v}+\frac{2-2\gamma}{7-5\gamma}\cdot \lr{\val{X}+\val{\bar X}}\,.\label{ieq: connect bound}
    \end{align}
\end{lemma}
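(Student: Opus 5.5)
The plan is to compare an upper bound and a lower bound on $\conn{X\cup\bar X}$. By \Cref{lem: inv and m not v implies that X union X bar is late-merged}, the set $X\cup\bar X$ is late-merged. Hence, by \Cref{obs: connecting late-merged set is expensive}, it is strictly $\gamma$-expensive:
\[
\conn{X\cup\bar X}\ >\ \frac{12-12\gamma}{7-5\gamma}\,\val{X\cup\bar X}.
\]
The local value of the union is then bounded from below via \Cref{lem: drop of X union X bar}. The set $S$ is active at time $t^*=\atf{S}{v}$, with $X\subseteq S$ and $\bar X\subseteq R\setminus S$. This gives
\[
\val{X\cup\bar X}\ \geq\ \val{X}+\val{\bar X}+\atf{S}{v}.
\]

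For the upper bound, we use that in case \ref{item: help case} of \Cref{inv: the invariant} both $X$ and $\bar X$ are $\atf{R}{m}$-close to $m$. By \eqref{ieq: t close connect} there are components $K_1$ connecting $X\cup\{m\}$ and $K_2$ connecting $\bar X\cup\{m\}$ with
\[
c(K_1)\le \atf{R}{m}+\tfrac{8-8\gamma}{7-5\gamma}\val{X}
\quad\text{and}\quad
c(K_2)\le \atf{R}{m}+\tfrac{8-8\gamma}{7-5\gamma}\val{\bar X}.
\]
Both components contain $m$, so $K_1\cup K_2$ connects $X\cup\bar X$. Therefore
\[
\conn{X\cup\bar X}\ \le\ 2\atf{R}{m}+\frac{8-8\gamma}{7-5\gamma}\bigl(\val{X}+\val{\bar X}\bigr).
\]

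Chaining the two bounds gives
\[
2\atf{R}{m}+\frac{8-8\gamma}{7-5\gamma}\bigl(\val{X}+\val{\bar X}\bigr)\ >\ \frac{12-12\gamma}{7-5\gamma}\bigl(\val{X}+\val{\bar X}+\atf{S}{v}\bigr).
\]
Rearranging and dividing by $2$ yields exactly \eqref{ieq: connect bound}: the coefficient of $\atf{S}{v}$ is $\frac{6-6\gamma}{7-5\gamma}$, and the coefficient of $\val{X}+\val{\bar X}$ is $\frac{12-12\gamma-8+8\gamma}{2(7-5\gamma)}=\frac{2-2\gamma}{7-5\gamma}$.

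There is no real obstacle here. The only points that need checking are that the hypotheses of \Cref{lem: drop of X union X bar} hold, namely that $S$ is active at time $\atf{S}{v}$ because it actively reaches $v$, and that the strict inequality is preserved. The strictness comes from strict $\gamma$-expensiveness, and it survives because the coefficient $\frac{12-12\gamma}{7-5\gamma}$ is positive.
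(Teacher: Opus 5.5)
Your proposal is correct and follows the paper's proof essentially step for step. Late-mergedness of $X\cup\bar X$ gives strict $\gamma$-expensiveness, \Cref{lem: drop of X union X bar} applied to $S$ at time $\atf{S}{v}$ lower-bounds $\val{X\cup\bar X}$, and the $\atf{R}{m}$-closeness of $X$ and $\bar X$ to $m$ upper-bounds $\conn{X\cup\bar X}$ by $\conn{X,m}+\conn{\bar X,m}$, after which the rearrangement is exactly the paper's.
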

\begin{proof}
    By \Cref{lem: inv and m not v implies that X union X bar is late-merged}, $X\cup\bar X$ is late-merged. 
    Thus, by \Cref{obs: connecting late-merged set is expensive}, $X\cup\bar X$ is strictly $\gamma$-expensive, that is
    \begin{align}
        \conn{X\cup\bar X}>\frac{12-12\gamma}{7-5\gamma}\, \val{X\cup\bar X}\label{ieq: drop leq connect for X union X bar}\,.
    \end{align}
    Using \Cref{lem: drop of X union X bar}, we can lower bound the right hand-side of \eqref{ieq: drop leq connect for X union X bar} by
    $$
        \frac{12-12\gamma}{7-5\gamma}\lr{\val{X}+\val{\bar X}+\atf{S}{v}}\,.
    $$
    The left hand-side of \eqref{ieq: drop leq connect for X union X bar} can be upper bounded by $\conn{X, m}+\conn{\bar X, m}$. Together with \eqref{ieq: t close connect} this yields
    \begin{align*}
        \frac{8-8\gamma}{7-5\gamma}\lr{\val{X}+\val{\bar X}}+2\atf{R}{m}
        \ >\ \frac{12-12\gamma}{7-5\gamma}\lr{\val{X}+\val{\bar X}+\atf{S}{v}}
        \,,
    \end{align*}
    which is equivalent to \eqref{ieq: connect bound}.
\end{proof}

\begin{lemma}\label{lem: dist bound}
    If $\lr{S, v, Q, m, X, \bar X}$ fulfils \Cref{inv: the invariant} and $\bar X$ is non-empty, then
    \begin{align}        \label{ieq: dist bound}
        \atf{R}{m}\ >\ \frac{6}{7-5\gamma}\cdot \atf{S}{v}-\frac{1-3\gamma}{7-5\gamma}\cdot\lr{\val{X}+\val{\bar X}}.
    \end{align}
\end{lemma}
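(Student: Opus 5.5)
The plan is to compare two bounds on the distance between one terminal of $X$ and one terminal of $\bar X$: a lower bound from the fact that they are merged late, and an upper bound from the distance part \eqref{ieq: t close dist} of $\atf{R}{m}$-closeness to the bifurcation vertex $m$. This mirrors \Cref{lem: connect bound}, with distances in place of connection costs. Because the required inequality is linear in $\atf{R}{m}$, a single pair of terminals will suffice.

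Since $\bar X$ is non-empty, case \ref{item: help case} of \Cref{inv: the invariant} applies, so both $X$ and $\bar X$ are non-empty. Fix arbitrary $x\in X$ and $y\in\bar X$. We have $x\neq y$, because $X\subseteq S$ and $\bar X\subseteq R\setminus S$.

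\emph{Lower bound.} The set $S$ actively reaches $v$, so $S\in\cS^{\atf{S}{v}}$. At time $\atf{S}{v}$, $x$ lies in the active set $S$ and $y$ does not, so $x$ and $y$ are in different parts of $\cS^{\atf{S}{v}}$. By \Cref{def: merge plan} this gives $\merge{}{x,y}\geq\atf{S}{v}$. The strict bound \eqref{ieq: strict general upper bound for gamma good} of the strictly $\gamma$-good merge plan then yields
\[
\dist{x,y}\ >\ \frac{12}{7-5\gamma}\,\merge{}{x,y}\ \geq\ \frac{12}{7-5\gamma}\,\atf{S}{v}.
\]

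\emph{Upper bound.} Apply the triangle inequality through $m$, and then the first term of the minimum in \eqref{ieq: t close dist}, for $X$ and for $\bar X$, which are both $\atf{R}{m}$-close to $m$:
\[
\dist{x,y}\ \leq\ \dist{x,m}+\dist{y,m}\ \leq\ 2\,\atf{R}{m}+\frac{2-6\gamma}{7-5\gamma}\big(\val{X}+\val{\bar X}\big).
\]

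Combining the two bounds and dividing by $2$ gives exactly \eqref{ieq: dist bound}. I expect no real obstacle here. The only point needing care is the justification of $\merge{}{x,y}\geq\atf{S}{v}$, which uses that $S$ is active at time $\atf{S}{v}$. The proof must also use the $\val{\cdot}$-term of \eqref{ieq: t close dist}, not the $\frac{9-7\gamma}{7-5\gamma}$-term, since the latter was the one used in \Cref{lem: inv and m not v implies that X union X bar is late-merged}.
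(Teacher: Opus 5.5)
Your proof is correct and is essentially the paper's argument. You compare the strict lower bound $\frac{12}{7-5\gamma}\atf{S}{v}$ on the distance between a terminal of $X$ and a terminal of $\bar X$ (from $S$ separating them at time $\atf{S}{v}$ plus strict $\gamma$-goodness) with the upper bound $2\atf{R}{m}+\frac{2-6\gamma}{7-5\gamma}\lr{\val{X}+\val{\bar X}}$ from the first term of \eqref{ieq: t close dist} via the triangle inequality through $m$; the only cosmetic difference is that you fix one pair $x\in X$, $y\in\bar X$ instead of writing $\dist{X,\bar X}$, which makes the step $\merge{}{x,y}\geq\atf{S}{v}$ more explicit.
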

\begin{proof}
    We consider the distance between $X$ and $\bar X$. Using that both $X$ and $\bar X$ are $\atf{R}{m}$-close to the bifurcation vertex $m$, we can upper bound their distance by
    \begin{align*}
        \dist{X, \bar X}\ \leq\ \dist{X, m}+\dist{\bar X, m}
        \ \leq\ 2 \atf{R}{m}+\frac{2-6\gamma}{7-5\gamma}\lr{\val{X}+\val{\bar X}}.
    \end{align*}
    However, $X$ and $\bar X$ are not merged before time $\atf{S}{v}$, that is, for times $t< \atf{S}{v}$, no vertex from $X$ belongs to the same part of $\cS^t$ as a vertex from $\bar X$. 
    Thus, the distance of $X$ and $\bar X$ is strictly larger than $\frac{12}{7-5\gamma}\cdot\atf{S}{v}$ (because the merge plan $\cM$ is strictly $\gamma$-good), implying
    \begin{align*}
         2\atf{R}{m}+\frac{2-6\gamma}{7-5\gamma}\lr{\val{X}+\val{\bar X}}\ >\ \frac{12}{7-5\gamma}\atf{S}{v},
    \end{align*}
     which is equivalent to \eqref{ieq: dist bound}.
\end{proof}

In the MST-optimal case (where we can choose $\gamma=0$), the bound \eqref{ieq: connect bound} dominates \eqref{ieq: dist bound} and is thus sufficient. 
In general, neither of the bounds \eqref{ieq: connect bound} and \eqref{ieq: dist bound} dominates the other, which is why we often use linear combinations of \eqref{ieq: connect bound} and \eqref{ieq: dist bound}. 
To simplify later computations, we provide the relevant combinations in \Cref{cor: lin comb of dist and connect bound}.

\begin{corollary}\label{cor: lin comb of dist and connect bound}
    If $\lr{S, v, Q, m, X, \bar X}$ fulfils \Cref{inv: the invariant} and $\bar X$ is non-empty, then
    \begin{align*}
        \atf{S}{v}-\atf{R}{m}&<\frac{1-3\gamma}{7-5\gamma}\cdot \atf{S}{v}\defaulttag\label{ieq: gamma over three bound}\,,\quad\textnormal{and}\\
        2\lr{\atf{S}{v}-\atf{R}{m}}&<\frac{2-2\gamma}{7-5\gamma}\cdot \atf{S}{v}-\frac{2-2\gamma}{7-5\gamma}\cdot \lr{\val{X}+\val{\bar X}}\,.\defaulttag\label{ieq: bound with minus drop}
    \end{align*}
\end{corollary}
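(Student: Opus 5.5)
The plan is to derive both inequalities as nonnegative linear combinations of \eqref{ieq: connect bound} from \Cref{lem: connect bound} and \eqref{ieq: dist bound} from \Cref{lem: dist bound}. To keep the algebra light, abbreviate $t\coloneqq\atf{S}{v}$, $\tau\coloneqq\atf{R}{m}$, $D\coloneqq\val{X}+\val{\bar X}\geq 0$, and $N\coloneqq 7-5\gamma>0$. The two lemmas then read
\[
\tau > \tfrac{6-6\gamma}{N}t+\tfrac{2-2\gamma}{N}D
\qquad\text{and}\qquad
\tau > \tfrac{6}{N}t-\tfrac{1-3\gamma}{N}D .
\]
Both are strict, so any combination with positive weights is again strict.

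For \eqref{ieq: bound with minus drop}, the target is $2(t-\tau)<\tfrac{2-2\gamma}{N}(t-D)$, which is the same as
\[
\tau>\Bigl(1-\tfrac{1-\gamma}{N}\Bigr)t+\tfrac{1-\gamma}{N}D=\tfrac{6-4\gamma}{N}t+\tfrac{1-\gamma}{N}D .
\]
I will take a combination $\lambda\cdot$\eqref{ieq: connect bound}$+(1-\lambda)\cdot$\eqref{ieq: dist bound} and match the coefficient of $D$:
\[
\lambda(2-2\gamma)-(1-\lambda)(1-3\gamma)=1-\gamma .
\]
This forces $\lambda=\tfrac{2-4\gamma}{3-5\gamma}$, which lies in $[0,1]$ for $\gamma\le\frac15$. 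Next I check the coefficient of $t$:
\[
\lambda(6-6\gamma)+(1-\lambda)\cdot 6=6-6\gamma\lambda .
\]
Matching it with $6-4\gamma$ would require $6\gamma\lambda=4\gamma$, i.e. $\lambda=\tfrac23$ when $\gamma>0$, which is inconsistent with the $\lambda$ above in general. So exact matching of both coefficients is not the route.

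Instead I will match only the $D$ coefficient and show that the resulting $t$ coefficient is at least the target, using $t\ge0$. With $\lambda=\tfrac{2-4\gamma}{3-5\gamma}$ we get $6\gamma\lambda\le 4\gamma$, because $3(2-4\gamma)\le 2(3-5\gamma)$ reduces to $-12\gamma\le-10\gamma$. Hence $6-6\gamma\lambda\ge 6-4\gamma$, and since $t\ge0$ the combined inequality gives $\tau>\tfrac{6-4\gamma}{N}t+\tfrac{1-\gamma}{N}D$, which is \eqref{ieq: bound with minus drop}.

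For \eqref{ieq: gamma over three bound}, the target is $\tau>\tfrac{6-2\gamma}{N}t$, and it should not depend on $D$. I will eliminate $D$ with the combination $\mu\cdot$\eqref{ieq: connect bound}$+(1-\mu)\cdot$\eqref{ieq: dist bound}, choosing $\mu$ so that
\[
\mu(2-2\gamma)=(1-\mu)(1-3\gamma),
\qquad\text{that is}\qquad
\mu=\tfrac{1-3\gamma}{3-5\gamma}.
\]
The resulting $t$ coefficient is $6-6\gamma\mu$. I need $6\gamma\mu\le 2\gamma$, i.e. $3(1-3\gamma)\le 3-5\gamma$, which reduces to $-9\gamma\le-5\gamma$ and holds. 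This yields $\tau>\tfrac{6-2\gamma}{N}t$, which rearranges to \eqref{ieq: gamma over three bound}.

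There is no real obstacle beyond bookkeeping. The points to watch are that both weights lie in $[0,1]$ for $\gamma\in[0,\frac15]$, that $D\ge0$ (it never needs a sign in the first combination, and is eliminated in the second), and that $t\ge0$, which is what lets a larger $t$ coefficient still imply the target.
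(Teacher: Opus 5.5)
Your proof is correct and uses the same idea as the paper: both inequalities are positive combinations of \eqref{ieq: connect bound} and \eqref{ieq: dist bound}. The only difference is the choice of multipliers. For \eqref{ieq: gamma over three bound} the paper uses the fixed weights $\frac13$ on \eqref{ieq: connect bound} and $\frac23$ on \eqref{ieq: dist bound}, and it obtains \eqref{ieq: bound with minus drop} by adding a rearranged \eqref{ieq: connect bound}, which amounts to weights $\frac23,\frac13$. This matches the $\atf{S}{v}$-coefficients exactly and discards a nonnegative multiple of $\val{X}+\val{\bar X}$, whereas your $\gamma$-dependent weights match the $\val{X}+\val{\bar X}$-coefficients and discard a nonnegative multiple of $\atf{S}{v}$ instead.
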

\begin{proof}
    Subtracting $\frac{2}{3}$ of \eqref{ieq: dist bound} and $\frac{1}{3}$ of \eqref{ieq: connect bound} from $\atf{S}{v}$ implies \eqref{ieq: gamma over three bound}:
    \begin{align*}
        \atf{S}{v}-\atf{R}{m}
        \ <&\ \atf{S}{v}-\frac23\lr{\frac{6}{7-5\gamma}\cdot\atf{S}{v}-\frac{1-3\gamma}{7-5\gamma}\lr{\val{X}+\val{\bar X}}}\\
        &\ -\frac13 \lr{\frac{6-6\gamma}{7-5\gamma}\cdot\atf{S}{v}+\frac{2-2\gamma}{7-5\gamma}\lr{\val{X}+\val{\bar X}}}\\[2mm]
        \leq&\ \frac{1-3\gamma}{7-5\gamma}\cdot \atf{S}{v}.
    \end{align*}
    Inequality \eqref{ieq: connect bound} is equivalent to
    \begin{align*}
        \atf{S}{v}-\atf{R}{m}<\frac{1+\gamma}{7-5\gamma}\cdot\atf{S}{v}-\frac{2-2\gamma}{7-5\gamma}\lr{\val{X}+\val{\bar X}}\,.
    \end{align*}
    Adding this to \eqref{ieq: gamma over three bound} gives \eqref{ieq: bound with minus drop}.
\end{proof}

Using \Cref{cor: lin comb of dist and connect bound}, we can prove \Cref{lem: X and X bar atfRm close to m implies X union X bar atfSv close to v}, which we restate here.
At this point, the proof boils down to a few calculations. 
However, we want to emphasize that this is the key lemma that allows us to control the behavior of \Cref{algo:dual_growth_with_merge_plan} with merge plans for which $\cS$ is not a collection of singletons. 

\lemInductionStep*
\begin{proof}
    If $\bar X= \emptyset$, then we are in case \ref{item: no help case} of \Cref{inv: the invariant} and thus the statement trivially holds. 
    
    Therefore, we now assume that $\bar X$ is non-empty.
    In particular, we can apply \Cref{cor: lin comb of dist and connect bound}.
    We start by proving the desired upper bound on $\cost(X\cup \bar X, v)$, that is \eqref{ieq: t close connect}. 
    Due to $X$ and $\bar X$ being $\atf{R}{m}$-close to $m$, and \eqref{ieq: inv dist m v} from \Cref{inv: the invariant}, we have
    \begin{align*}
        &\conn{X\cup\bar X, v}\\
        &\leq\ \conn{X, m}+\conn{\bar X, m}+\dist{m, v}\\
        &\leq\ \frac{8-8\gamma}{7-5\gamma}\,\val{X}+\atf{R}{m}\ +\ \frac{8-8\gamma}{7-5\gamma}\,\val{\bar X}+\atf{R}{m}\ +\ 2\atf{S}{v}+\atf{R}{v}-3\atf{R}{m}\\
        &\leq\ \frac{8-8\gamma}{7-5\gamma}\lr{\val{X}+\val{\bar X}}+2\atf{S}{v}+\atf{S}{v}-\atf{R}{m},
    \end{align*}
    where the last inequality follows from \Cref{obs: larger sets arrive no later}.
    Together with \eqref{ieq: gamma over three bound} from \Cref{cor: lin comb of dist and connect bound}, this implies
    \begin{align*}
        \conn{X\cup\bar X, v}
        &\leq\ \frac{8-8\gamma}{7-5\gamma}\lr{\val{X}+\val{\bar X}}+\lr{2+\frac{1-3\gamma}{7-5\gamma}}\atf{S}{v}\\
        &=\ \frac{8-8\gamma}{7-5\gamma}\lr{\val{X}+\val{\bar X}+\atf{S}{v}}+\atf{S}{v}\\
        &\leq\ \frac{8-8\gamma}{7-5\gamma}\,\val{X\cup\bar X}+\atf{S}{v}
        \,,
    \end{align*}
    where we used \Cref{lem: drop of X union X bar} in the last step. This proves \eqref{ieq: t close connect}.

    To prove the desired distance bound (see \eqref{ieq: t close dist}) for all $x\in X\cup\bar X$, fix such a vertex $x$ and assume without loss of generality that $x\in X$. 
    (The argument for $x\in \bar X$ is the same, except that we replace $X$ everywhere by $\bar X$.)
    Using again \eqref{ieq: inv dist m v} from \Cref{inv: the invariant}, we have $\dist{m, v} \leq 2\atf{S}{v}+\atf{R}{v}-3\atf{R}{m} \leq 3\lr{\atf{S}{v}-\atf{R}{m}}$, where the second inequality follows from \Cref{obs: larger sets arrive no later}.
    By the $\atf{R}{m}$-closeness of $X$ to $m$, this implies
    \begin{align*}
        \dist{x, v}&\leq\dist{x, m}+\dist{m, v}\\
        &\leq\atf{R}{m}+\frac{2-6\gamma}{7-5\gamma}\val{X}+3\lr{\atf{S}{v}-\atf{R}{m}}\\
        &=\atf{S}{v}+\frac{2-6\gamma}{7-5\gamma}\val{X}+2\lr{\atf{S}{v}-\atf{R}{m}}\,.\defaulttag\label{ieq: dist x v general form}
    \end{align*}
    
    Combining \eqref{ieq: dist x v general form} with each of the bounds \eqref{ieq: gamma over three bound} and \eqref{ieq: bound with minus drop} from \Cref{cor: lin comb of dist and connect bound}, yields
    \begin{align*}
        \,\dist{x, v}
        \leq&\,\min\lrb{\atf{S}{v}+\frac{2-6\gamma}{7-5\gamma}\lr{\val{X}+\atf{S}{v}},\ \atf{S}{v}+\frac{2-2\gamma}{7-5\gamma}\atf{S}{v}}\,.
    \end{align*}
    Using \Cref{lem: drop of X union X bar}, this implies the desired distance bound (see \eqref{ieq: t close dist}) for  $x\in X \cup \bar X$.
    
    Finally, we have to show that every vertex $x\in X\cup\bar X$ can reach $v$ at time $\atf{S}{v}$. 
    This is indeed the case because each $x\in X\cup\bar X$ can reach $m$ at time $\atf{R}{m}\leq\atf{S}{v}$, and by \Cref{inv: the invariant} there is an $m$-$v$ path consisting only of edges that are tight at time $\atf{S}{v}$.
    Thus, $X\cup\bar X$ is $\atf{S}{v}$-close to~$v$.
\end{proof}

Before we can prove the main result of this section (\Cref{lem: inv holds until first accident}), we need to establish one technical detail about $\bar X$. Namely, that at time $\atf{R}{v}$ the terminal set $\bar X$ is contained in a single active set.

\begin{lemma}\label{lem: wlog merge X bar is less than tRv}
    If $\lr{S, v, Q, m, X, \bar X}$ fulfills \Cref{inv: the invariant} and $\bar X$ is non-empty, then all vertices of $\bar X$ are contained in the same active set at time $\atf{R}{v}$ (and thus also at any later time).
\end{lemma}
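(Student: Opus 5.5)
The plan is a proof by contradiction: compare the merge-time bound that the strictly $\gamma$-good merge plan imposes on two separated terminals of $\bar X$ with what the closeness of $\bar X$ to $m$ and the expensiveness of $\bar X$ allow. I have not found a choice of bounds that closes the argument using only the stated properties of the tuple. The likely fallback is to strengthen \Cref{inv: the invariant} itself, as described in the last paragraph.

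Suppose, for contradiction, that at time $\atf{R}{v}$ the set $\bar X$ meets at least two active sets. Let $Y$ be the part of $\bar X$ lying in one active set at time $\atf{R}{v}$, and let $Y'=\bar X\setminus Y\neq\emptyset$.

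\textbf{Step 1: large local value.} Applying \Cref{lem: drop of X union X bar} with $t^*=\atf{R}{v}$ and the active set containing $Y$ gives
\[
\val{\bar X}\ \geq\ \val{Y}+\val{Y'}+\atf{R}{v}\ \geq\ \atf{R}{v}\ \geq\ \atf{R}{m},
\]
where the last step uses \Cref{obs: larger sets arrive no later}.

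\textbf{Step 2: small local value.} \Cref{inv: the invariant} makes $\bar X$ late-merged, so by \Cref{obs: connecting late-merged set is expensive} it is strictly $\gamma$-expensive. Its $\atf{R}{m}$-closeness to $m$ gives $\conn{\bar X}\leq\conn{\bar X,m}\leq \atf{R}{m}+\frac{8-8\gamma}{7-5\gamma}\val{\bar X}$. Combining the two yields
\[
\val{\bar X}\ <\ \tfrac{7-5\gamma}{4-4\gamma}\,\atf{R}{m}.
\]

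\textbf{Step 3: distance versus merge time.} Pick $x\in Y$ and $y\in Y'$. They are unmerged before $\atf{R}{v}$, so \eqref{ieq: strict general upper bound for gamma good} gives $\dist{x,y}>\frac{12}{7-5\gamma}\atf{R}{v}$. From \eqref{ieq: t close dist} at $m$ we get the upper bound
\[
\dist{x,y}\ \leq\ 2\atf{R}{m}+\tfrac{4-12\gamma}{7-5\gamma}\val{\bar X}.
\]
I would then substitute Step 2 into this and compare with the lower bound, using $\atf{R}{m}\le \atf{R}{v}$ from Step 1, aiming for $\frac{12}{7-5\gamma}\atf{R}{v}<\frac{12}{7-5\gamma}\atf{R}{v}$.

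\textbf{Main obstacle.} I am not confident Steps 2–3 produce a contradiction; my rough estimates point to slack. In particular, Steps 1 and 2 together only force $\atf{R}{m}>\frac{4-4\gamma}{7-5\gamma}\atf{R}{v}$, which is consistent. Closing the gap would need an extra relation tying $\atf{R}{v}$ to $\atf{R}{m}$, such as \eqref{ieq: gamma over three bound} with $\atf{R}{v}\le\atf{S}{v}$, and I would first check whether that suffices after a careful computation.

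If it does not, the fallback is structural. Every tuple satisfying the invariant in case \ref{item: help case} should arise in the induction with $\bar X\subseteq\bar S$. Here $\bar S\in\cS$ is the set from \Cref{lem: there is another set reaching mP}, which actively reaches $m$ at time $\atf{R}{m}\le\atf{R}{v}$. It is active at that time, so $\bar X$ lies in one active set from $\atf{R}{m}$ on, and by refinement of the partitions also at $\atf{R}{v}$. Making this rigorous would mean strengthening \Cref{inv: the invariant} with the condition "$\bar X$ lies in a single active set at time $\atf{R}{m}$". One would then verify that the construction in the proof of \Cref{lem: inv holds until first accident}, including the recursive choice of the bifurcation vertex, preserves this condition.
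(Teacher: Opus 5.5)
There is a genuine gap: neither of your two routes proves the lemma.

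\textbf{Why Steps 1--3 cannot work.} They use only information about $\bar X$: its closeness to $m$, its late-mergedness, and the merge-time bound inside $\bar X$. These constraints are simultaneously satisfiable. For $\gamma=0$, take $\atf{R}{m}=\atf{R}{v}=\val{\bar X}=1$ and $\dist{x,y}=2$. Then $\val{\bar X}\ge\atf{R}{v}$, $\frac47\val{\bar X}<\atf{R}{m}$, and $\frac{12}{7}\atf{R}{v}<\dist{x,y}\le\min\{2\atf{R}{m}+\frac47\val{\bar X},\frac{18}{7}\atf{R}{m}\}$ all hold, so no contradiction can come out of them. Separately, the last inequality of Step 1, $\atf{R}{v}\ge\atf{R}{m}$, does not follow from \Cref{obs: larger sets arrive no later}. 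That observation compares nested terminal sets at the same vertex, and there is no general reason why $v$ is first reached after $m$.

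\textbf{Why the fallback fails.} Look at Case 2 of the proof of \Cref{lem: inv holds until first accident}, with the tuple $\lr{\bar S, m_P, \bar Q, \bar m, \bar X, \hat X}$. When $\hat X$ is disjoint from $S$, the new help set is $\hat X\cup\bar X$. Here $\bar X\subseteq\bar S$ but $\hat X\subseteq R\setminus\bar S$, and $\bar S$ is active at time $\atf{R}{m_P}$. So this help set is not contained in $\bar S$, and it is split by $\bar S$ at time $\atf{R}{m}$ with $m=m_P$. When $\hat X\subseteq S$, the bifurcation vertex becomes $\bar m$, which $\bar S$ need not actively reach. Your strengthened invariant is therefore false for tuples the induction must produce. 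In any case, the lemma is stated for arbitrary tuples fulfilling \Cref{inv: the invariant}.

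\textbf{The missing idea.} You need to use $S$, $X$ and the path $Q$, in two steps.

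First, apply \Cref{lem: connect bound} to $X\cup\bar X$ rather than to $\bar X$ alone. Because $S$ separates $X$ from $\bar X$ until time $\atf{S}{v}$, this brings in the extra term $\atf{S}{v}$. Combining it with $\val{\bar X}\ge\atf{R}{v}$ (your Step 1), $\atf{S}{v}\ge\atf{R}{v}$ and $\gamma\le\frac15$ gives
\[
2\atf{R}{m}\ >\ \tfrac{12-12\gamma}{7-5\gamma}\atf{S}{v}+\tfrac{4-4\gamma}{7-5\gamma}\atf{R}{v}\ \ge\ \atf{S}{v}+\atf{R}{v}.
\]
Then \eqref{ieq: inv dist m v} yields $c\lr{Q}<\atf{S}{v}-\atf{R}{m}$.

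Second, bound $c\lr{Q}$ from below. Since $X\subseteq S$ reaches $m$ at time $\atf{R}{m}$, we have $\atf{S}{m}=\atf{R}{m}$. At every time in $(\atf{S}{m},\atf{S}{v}]$, some edge $(a,b)$ of $Q$ has $a$ reachable from $S$ but not $b$. The active subset of $S$ containing a terminal that reaches $a$ therefore contributes to $(a,b)$. The total contribution to an edge never exceeds its cost, so $c\lr{Q}\ge\atf{S}{v}-\atf{S}{m}$. This contradicts the upper bound from the first step.
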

\begin{proof}
    Suppose that at time $\atf{R}{v}$ the set $\bar X$ is not contained in a single active set. 
    Then in particular $\val{\bar X}\geq\atf{R}{v}$ (by \Cref{lem: drop of X union X bar} or directly by the definition of value). 
    Thus, by \Cref{lem: connect bound}, we have 
    \begin{align*}
        2\atf{R}{m}
        \ >\ \frac{12-12\gamma}{7-5\gamma}\,\atf{S}{v}+\frac{4-4\gamma}{7-5\gamma}\lr{\val{X}+\val{\bar X}}
        \ \geq\ \frac{12-12\gamma}{7-5\gamma}\,\atf{S}{v}+\frac{4-4\gamma}{7-5\gamma}\,\atf{R}{v}.
    \end{align*}
    Using $\atf{S}{v}\geq\atf{R}{v}$ (by \Cref{obs: larger sets arrive no later}), this yields
    \begin{align}
        2\atf{R}{m}\ >\ \frac{9-11\gamma}{7-5\gamma}\,\atf{S}{v}+\atf{R}{v}\ \geq\ \atf{S}{v}+\atf{R}{v}\,,
    \end{align}
    where we used $\gamma\leq\frac15$. 
    Thus, the length of the $m$-$v$ path $Q$ can be bounded as follows:
    \begin{align}
        \label{ieq: too good upper bound on length of Q}
        c\lr{Q}
        \ \leq\ 2\atf{S}{v}+\atf{R}{v}-3\atf{R}{m}
        \ <\ \atf{S}{v}-\atf{R}{m}
        \ =\ \atf{S}{v}-\atf{S}{m},
    \end{align}
    where in the last step we used that $X\subseteq S$ is $\atf{R}{m}$-close to $m$, which in particular implies that $X$ (and thus also $S$) can reach $m$ at time $\atf{R}{m}$.
    
    To derive a contradiction, we show that \eqref{ieq: too good upper bound on length of Q} cannot be true. Namely, we show that $c\lr{Q}\geq\atf{S}{v}-\atf{S}{m}$.
    Indeed, for every time $t\in\left[\atf{S}{m}, \atf{S}{v}\right)$, there is at least one edge $e=\lr{a, b}$ on the directed $m$-$v$ path $Q$ for which $a$ is reachable by $S$ but $b$ is not. Hence, at every time $t\in\left(\atf{S}{m}, \atf{S}{v}\right]$ at least one subset of $S$ contributes to an edge of $Q$.
    Thus, in total, the subsets of $S$ contribute at least $\max\lrb{0,\  \atf{S}{v}-\atf{S}{m}}$ to the edges of the directed path $Q$, 
    and hence 
    \begin{align}
        \atf{S}{v}-\atf{S}{m}
        \ \leq\ \max\lrb{0,\ \atf{S}{v}-\atf{S}{m}}
        \ \leq\ c\lr{Q}\,.
    \end{align}
    This contradicts \eqref{ieq: too good upper bound on length of Q}.
\end{proof}

Finally, we are ready to prove the main result of this section.

\lemInvUntilFirstAccident*
\begin{proof}
    We do a proof by induction over time. 
    Note that it suffices to consider the finite set of points in time at which an edge becomes tight, or the partition of active sets changes.
    We work with a slightly stronger induction hypothesis and additionally show that if a set $S\in\cS$ reaches a vertex $v\in V$ via a safe path $P$, then we can choose $\bar X$ (see \Cref{inv: the invariant}) such that every vertex $x\in\bar X$ can reach some inner vertex $w$ of the path $P$ at time $\atf{R}{w}$.

    At time $t=0$, each terminal set $S\in \cS^t$ can only reach itself.
    For every $v\in S$, the tuple $\lr{S, v, \emptyset, v, \lrb{v}, \emptyset}$ satisfies \Cref{inv: the invariant}.

    Now let $v\in V$ be a vertex and $S\in\cS$ a set that actively reaches $v$ at time $t\coloneqq\atf{S}{v}>0$ via a safe path $P$.
    Further, assume that all active paths that existed strictly before time $\atf{S}{v}$ are safe.
    
    We distinguish two cases, based on whether or not the path $P$ has a proper meeting point.

    \paragraph{Case 1: $\mathbf{P}$ has no proper meeting point.}
    Let $e=\lr{w,v}$ be the last edge of $P$. 
    Then $e$ is tight at time $\atf{S}{v}$ and we have $\atf{S}{w}\leq \atf{S}{v}$.
    Because $P$ has no proper meeting point, we have $\atf{R\setminus S}{w}>\atf{S}{w}$.
    This implies that $\atf{S}{v}$ is strictly larger than $\atf{S}{w}$, as otherwise no set could have contributed to $e$ before it became tight.
    Hence, $S$ or some subset of $S$ contributed to $e$ throughout the time interval $(\atf{S}{w},\atf{S}{v}]$.
    Because $S$ (and thus non of its strict subsets) is active at time $\atf{S}{v}$ and because $\atf{R\setminus S}{w}>\atf{S}{w}$, property \ref{itm: uniform edges} of well-subdivided instances (\Cref{def:well-subdivided}) implies that $S$ is the only set that ever contributes to $e$.
    Hence, $c\lr{e}=\atf{S}{v}-\atf{S}{w}$ and $t'\coloneqq\atf{S}{w}<\atf{S}{v}=t$.
    
    By \Cref{lem: active reaching subset}, there is a subset $S'\in\cS$ of $S$ that actively reaches $w$ at time $\atf{S}{w}$ and for which the path $P'$ resulting from $P$ by removing $e$ is an $S'$-active path. 
    All edges of $P'$ were already tight at time $t'<t$. Hence, this path is $S'$-safe by assumption. 
    By induction, we get a tuple $\lr{S', w, Q', m', X', \bar X'}$ that fulfills \Cref{inv: the invariant}.

    The path $P$ has no proper meeting point, but every vertex $x\in\bar X'$ can reach $w$ at time $\atf{S'}{w}=\atf{S}{w}$. Thus, the set $\bar X'$ has to be disjoint from $R\setminus S$, implying $X'\cup\bar X'\subseteq S$. 
    By \Cref{lem: X and X bar atfRm close to m implies X union X bar atfSv close to v}, the set $X'\cup\bar X'$ is $\atf{S}{w}$-close to $w$. By \Cref{lem: from close to w to close to v} it is also $\atf{S}{v}$-close to $v$. 
    Further, $X'\cup\bar X'$ is late-merged by \Cref{lem: inv and m not v implies that X union X bar is late-merged}. 
    All together this implies that $\lr{S, v, \emptyset, v, X'\cup\bar X', \emptyset}$ satisfies \Cref{inv: the invariant}.

    \paragraph{Case 2: $\mathbf{P}$ has a proper meeting point.}
    By \Cref{lem: there is another set reaching mP}, there is a set $\bar S\subseteq R\setminus S$ that actively reaches $m_P$ at time $\atf{R}{m_P}$, which by \Cref{lem: proper meeting point is earlier} is strictly earlier than time $\atf{S}{v}$. 
    Thus, we can use the induction hypothesis for $\bar S$ and $m_P$, to obtain a tuple $\lr{\bar S, m_P, \bar Q, \bar m, \bar X, \hat X}$ satisfying \Cref{inv: the invariant}.
    By \Cref{lem: wlog merge X bar is less than tRv}, $\hat X$ is either a non-empty subset of $S$, or disjoint from $S$. 
    
    If $\hat X$ is a non-empty subset of $S$, then $\lr{S, v, \bar Q + P_{\lre{m_P,v}}, \bar m, \hat X, \bar X}$ fulfills \Cref{inv: the invariant}. To verify this, note that all edges of $\bar Q + P_{\lre{m_P,v}}$ are tight at time $\atf{S}{v}$.
    Moreover, by \Cref{lem: proper meeting point is earlier}, we have $\atf{\bar S}{m_P} = \atf{R}{m_P}$, and thus by \Cref{lem: length from m_p to v}, 
    \begin{align*}
        c\lr{\bar Q + P_{\lre{m_P,v}}}
        \ &=\ c\lr{\bar Q} + c\lr{P_{\lre{m_P,v}}}\\
        &\leq\ 2\atf{\bar S}{m_P}+\atf{R}{m_P}-3\atf{R}{\bar m}+2\atf{S}{v}+\atf{R}{v}-3\atf{R}{m_P}\\
        &=\ 2\atf{S}{v}+\atf{R}{v}-3\atf{R}{\bar m}.
    \end{align*}
    Furthermore, the additional condition in our strengthened induction hypothesis is satisfied because each vertex $x\in\bar X$ can reach the inner vertex $m_P$ of $P$ at time $\atf{R}{m_P}$.

    Otherwise, $\hat X$ and $S$ are disjoint.
    Then $\bar X\cup\hat X$ is a subset of $R\setminus S$ and, by \Cref{lem: X and X bar atfRm close to m implies X union X bar atfSv close to v}, is $\atf{R}{m_P}$-close to $m_P$.
    By \Cref{lem: active reaching subset}, there is a subset $S'\subseteq S$ for which $P_{\lre{S,m_P}}$ is an $S'$-active path. 
    Applying the induction hypothesis to $S'$ and $m_P$, we obtain a tuple $\lr{S',m_P, Q', m', X', \bar X'}$ fulfilling \Cref{inv: the invariant} and our extra condition, namely that all $x\in\bar X'$ can reach some inner vertex $w$ of $P_{\lre{S, m_P}}$ at time $\atf{R}{w}$. By the choice of $m_P$, the latter implies that $\bar X'\subseteq S$ (otherwise we would have chosen the meeting point $m_P$ closer to $S$ on $P$; see \Cref{def: meeting point}). 
    By \Cref{lem: X and X bar atfRm close to m implies X union X bar atfSv close to v}, $X'\cup \bar X'$ is $\atf{S'}{m_P}$-close to $m_P$ and hence it is also $\atf{R}{m_P}$-close to $m_P$ (because $\atf{S'}{m_P} = \atf{S}{m_P} = \atf{R}{m_P}$ by \Cref{lem: active reaching subset,lem: proper meeting point is earlier}). 
    We conclude that the tuple $\lr{S, v, P_{\lre{m_P, v}}, m_P, X'\cup\bar X', \hat X\cup\bar X}$ fulfills \Cref{inv: the invariant}:
    Both sets $X'\cup\bar X'$ and $\hat X\cup\bar X$ are $\atf{R}{m_P}$-close to $m_P$, and $\atf{R}{m_P}<\atf{S}{v}$ (\Cref{lem: proper meeting point is earlier}). Furthermore, all edges of the path $P_{\lre{m_P, v}}$ are tight at time $\atf{S}{v}$ and by \Cref{lem: length from m_p to v} its length is at most $2\atf{S}{v}+\atf{R}{v}-3\atf{R}{m_p}$. By the $\atf{R}{m_P}$-closeness of $\hat X\cup\bar X$ to the inner vertex $m_P$ of $P$, the additional condition in our strengthened induction hypothesis is satisfied as well.
\end{proof}

\subsection{All Active Paths are Safe}\label{sec: all active paths are safe}

In this section, we will complete the proof of \Cref{thm: gamma good merge plan feasible} by showing that there is never an active path that fails to be safe:

\lemAllActivePathsAreSafe*

The overall strategy will be a proof by contradiction.
We consider the first time when there is a non-$S$-safe edge on an $S$-active path (for some set $S\in \cS$).
Then, we will construct a component $K$ from three parts, each of whose cost we will upper bound using that \Cref{inv: the invariant} is still satisfied at this time (by \Cref{lem: inv holds until first accident}).
Additionally, we will show that $K$ connects a \emph{late-merged} (and thus strictly $\gamma$-expensive) set of terminals, which yields a lower bound on the cost of $K$. 
The lower bound we obtain will be strictly greater than our upper bound on the cost of $K$, leading to the desired contradiction.

Suppose that \Cref{lem: all active paths are safe} is not true and let $t$ be the first time at which there is an $S$-active path (for some set $S\in \cS$) that is not $S$-safe. 
(Note that such a first time exists because it suffices to consider the finite set of times at which some edge becomes tight.)
Further, let $P+e$ be an inclusion-wise minimal such path that exists at time $t$, and let $S_1\in \cS$ such that $P+e$ is $S_1$-active, but not $S_1$-safe. Let $e=\lr{v,w}$.
Then, by   \Cref{lem: active reaching subset}, there is a set $S_1'\subseteq S_1$ such that $P$ is $S_1'$-active and $\atf{S_1'}{v}=\atf{S_1}{v}$. By minimality of the path $P+e$, the path $P$ is not only $S_1'$-active, but also $S_1'$-safe.
By the definition of safe edges (\Cref{def: safe path}) and because $S_1' \subseteq S_1$, every $S_1'$-safe edge is also $S_1$-safe. 
This implies that all edges of $P$ are $S_1$-safe. 
Hence, the edge $e=\lr{v, w}$ is not $S_1$-safe.

According to the definition of safe edges (\Cref{def: safe path}), there are two possible reasons why $e$ could be non-$S_1$-safe: 
(i) there are three sets that contributed to $e$ or $\lr{w, v}$ before $e$ got tight, or 
(ii) there are two subsets of $R\setminus S_1$ that contributed to $e=(v,w)$ or $\lr{w, v}$ before $e$ became tight. 
We next show that (ii) is always the case.

\begin{lemma}
    There exist two subsets of $R\setminus S_1$ that each contributed to one of the directed edges $e=(v,w)$ or $\lr{w, v}$ before the time at which $e$ became tight.
\end{lemma}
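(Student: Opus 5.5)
Since $e$ is not $S_1$-safe, one of the two conditions in \Cref{def: safe path} fails. If at most one subset of $R\setminus S_1$ contributed to $e$ or $\eback$, then there must be at least three contributing sets, so at least two of them are not subsets of $R\setminus S_1$. By \Cref{lem: contributing to same undir edge implies disjoint}, all contributing sets are pairwise disjoint members of the laminar family $\cS$. A set in $\cS$ that is not a subset of $R\setminus S_1$ intersects $S_1$, so it is either a subset or a superset of $S_1$. The plan is to rule out two such sets, which forces two subsets of $R\setminus S_1$ and proves the statement.

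The first step is to exclude proper supersets of $S_1$. The path $P+e$ is $S_1$-active, so $S_1$ is active at time $\atf{S_1}{w}$. Moreover $e$ becomes tight no later than $\atf{S_1}{w}\le d^{S_1}$. Contributions happen only while a set is active, so a strict superset of $S_1$ (active only after $d^{S_1}$) cannot contribute before $e$ becomes tight. Hence every contributing set meeting $S_1$ is a subset of $S_1$.

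The second step is to show that two disjoint subsets $A,B\subseteq S_1$ cannot both contribute. By \Cref{obs: contribution and atfs} and property \ref{itm: uniform edges} of \Cref{def:well-subdivided}, all sets contributing to one orientation do so during the same interval $(t_1,t_2]$. A set that contributes to $(v,w)$ reaches $v$ before reaching $w$, and for $(w,v)$ the order is reversed.

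Suppose first that $A$ contributes to $(v,w)$ and $B$ to $(w,v)$. Both then reach both endpoints by the time the respective orientations are tight. I would derive that $A$ and $B$ are merged only after these times, so the deactivation times of $A$ and $B$ are ordered relative to $\atf{S_1}{v}$ and $\atf{S_1}{w}$. Combined with \Cref{obs: larger sets arrive no later}, this contradicts $S_1$ first reaching $w$ via $e$ as an $S_1$-tight edge with $\atf{S_1}{v}\le \atf{S_1}{w}$. This step is the one I am least sure of.

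Suppose instead that both $A$ and $B$ contribute to the same orientation. Then both are active on $(t_1,t_2]$. Any subset of $S_1$ that is active while $S_1$'s own path edge $e$ is growing must be the unique active subset containing the first vertex of $P$, by \Cref{lem: active reaching subset}. That uniqueness rules out two different ones.

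I expect the main obstacle to be the mixed-orientation case. The fallback there is to use the first-accident minimality of $t$, through \Cref{lem: inv holds until first accident}. Applying the invariant to $A$ and $B$ reaching $v$ and $w$ would give that $A\cup B$ is close to $v$. The lower bound on the merge time of $A$ and $B$ from \Cref{lem: drop of X union X bar}, together with strict $\gamma$-goodness, then yields a contradiction by the same kind of computation as in \Cref{lem: dist bound}.
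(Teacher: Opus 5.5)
Your reduction is sound. By \Cref{lem: contributing to same undir edge implies disjoint} and laminarity, it suffices to show that at most one contributing set meets $S_1$. Your exclusion of strict supersets of $S_1$ is correct; it is in fact automatic, because two disjoint sets meeting $S_1$ must both be subsets of $S_1$.

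\textbf{The gap} is in the core step, excluding subsets of $S_1$. In the mixed-orientation case you give no actual argument. Your fallback via \Cref{lem: inv holds until first accident} cannot work: two disjoint sets both actively reaching $v$ is exactly what happens at ordinary meeting points, so no cost or distance bound of the \Cref{lem: dist bound} type can rule it out. In the same-orientation case you misuse \Cref{lem: active reaching subset}. Its uniqueness only identifies which subset of $S_1$ makes the specific path $P+e$ active (here $S_1$ itself). Nothing forces a contributing set $A\subsetneq S_1$ to contain the start vertex of $P$. Moreover, if both sets contribute to $(w,v)$, the ``while $e$ is growing'' reasoning does not apply at all.

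\textbf{The missing idea} is that a single offending subset already contradicts $P+e$ being $S_1$-active, so no pairing of two subsets is needed.
\begin{itemize}
\item Suppose a proper subset $S'\subsetneq S_1$ contributes to $e=(v,w)$. By \ref{itm: uniform edges} of \Cref{def:well-subdivided} and \Cref{obs: contribution and atfs}, $S'$ is active when $e$ becomes tight and reaches $w$ then. Hence $\atf{S_1}{w}\le\atf{S'}{w}\le d^{S'}\le a^{S_1}$. Since $S_1$ is not a singleton, it is not active at $\atf{S_1}{w}$, contradicting that $S_1$ actively reaches $w$.
\item Suppose any subset $S'\subseteq S_1$ (even $S_1$ itself) contributes to $(w,v)$ before $e$ is tight. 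Then $e$ is not yet tight at time $\atf{S'}{w}$. By \Cref{obs: larger sets arrive no later}, $\atf{S_1}{w}\le\atf{S'}{w}$, so $e$ is not tight at $\atf{S_1}{w}$ either, contradicting that $e$ is $S_1$-tight.
\end{itemize}
So $S_1$ is the only subset of $S_1$ that can contribute, and your counting argument then finishes the proof.
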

\begin{proof}
   We first show that no proper subset of $S_1$ can contribute to $e$ or $\lr{w, v}$ before $e$ became tight.
   Suppose that some proper subset $S'\subsetneq S_1$ contributed to $e$, then $S'$ actively reaches $w$ (by \ref{itm: uniform edges} of the definition of well-subdivided instances; see \Cref{def:well-subdivided}).
   This implies that $S_1$ cannot actively reach $w$, and thus contradicts $P+e$ being $S_1$-active.
   
    Now suppose instead that some proper subset $S'\subsetneq S_1$ contributed to $\lr{w, v}$ before $e=(v,w)$ became tight.
    Then $e$ is not yet tight at time $\atf{S'}{w}$ and thus also not tight at time $\atf{S_1}{w}\leq \atf{S'}{w}$ (where we used \Cref{obs: larger sets arrive no later}).
    This contradicts the fact that $P+e$ is an $S_1$-tight path.

    Hence, there is at most one subset of $S_1$ (namely $S_1$ itself) that could have contributed to $e$ or $\lr{w, v}$ before $e$ got tight. Thus, if there are three or more sets that contributed to $e$ or $\lr{w, v}$ before $e$ got tight, at least two of them are subsets of $R\setminus S$ (by laminarity of $\cS$).
\end{proof}

Let $S_2, S_3\in\cS$ be two subsets of $R\setminus S_1$ that contributed to $e$ or $\lr{w, v}$ before $e$ became tight.
By \Cref{lem: active reaching subset}, there are subsets $S_2', S_3'\in\cS$ of $S_2$ and $S_3$, respectively, that actively reach $v$ at the same time as $S_2$ and $S_3$, respectively. In particular, we have 
\begin{align}
    \atf{S_1'}{v}=\atf{S_1}{v},\qquad
    \atf{S_2'}{v}=\atf{S_2}{v},\qquad
    \atf{S_3'}{v}=\atf{S_3}{v}\,.\label{eq: atfs are identical for S and Sprime}
\end{align}

By \Cref{lem: contributing to same undir edge implies disjoint}, the sets $S_1, S_2, S_3$ are pairwise disjoint. 
Thus, also their subsets $S_1', S_2', S_3'$ are pairwise disjoint.
The reason that we use $S_i'$ instead of $S_i$ is a technical detail only needed to ensure that \Cref{inv: the invariant} applies (as $S_i$ might not \emph{actively} reach $v$).
For intuition, one can think of $S_i$ and $S_i'$ as being equal (and this is indeed the case if $S_i$ actively reaches $v$).

\begin{lemma}\label{claim:active_reachability_of_non_safe_edge}
Each set $S_i'$ (for $i=1,2,3$) actively reaches the vertex $v$ via an $S_i'$-safe path. 
Moreover, at time $\atf{S_1}{v}$, each set $S_i$ (with $i=1,2,3$) can reach the vertex $v$.
\end{lemma}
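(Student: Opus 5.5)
For $i=1$ the first claim is immediate from the setup: $P$ is an $S_1'$-active path to $v$ and, by minimality of $P+e$, it is $S_1'$-safe. Also $S_1$ reaches $v$ at time $\atf{S_1}{v}$ by definition. So the work is for $i\in\{2,3\}$.

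Fix $i\in\{2,3\}$. The set $S_i$ contributed to $e=(v,w)$ or to $\eback=(w,v)$ before $e$ became tight. First I show that $S_i$ reaches $v$ no later than time $\atf{S_1}{v}$.

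If $S_i$ contributed to $(v,w)$, then by \Cref{obs: contribution and atfs} it reached $v$ before $e$ became tight. Since $P+e$ is $S_1$-tight, $e$ is tight at time $\atf{S_1}{w}$. If $\atf{S_1}{w}=\atf{S_1}{v}$, then $e$ is tight already at $\atf{S_1}{v}$, and the contribution interval $(\atf{S_i}{v},\cdot]$ ends by then, so $\atf{S_i}{v}<\atf{S_1}{v}$. If $\atf{S_1}{w}>\atf{S_1}{v}$, then $S_1$ itself contributes to $e$ on $(\atf{S_1}{v},\atf{S_1}{w}]$. By property \ref{itm: uniform edges} of \Cref{def:well-subdivided}, all contributors to $(v,w)$ share the same interval $(t_1,t_2]$. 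Hence $\atf{S_i}{v}=t_1=\atf{S_1}{v}$.

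If $S_i$ contributed to $(w,v)$ before $e$ became tight, then $(w,v)$ was not tight when $S_1$ reached $v$. Both endpoints of $\{v,w\}$ are reachable once $S_i$ reaches $w$. Property \ref{itm: reachability of both endpoints} of \Cref{def:well-subdivided} then forces one orientation of the edge to be tight. I would like to argue that $(w,v)$ becomes tight no later than $e$, and hence no later than $\atf{S_1}{w}$, via the uniform-interval structure. This yields $\atf{S_i}{v}\le \atf{S_1}{v}$. I expect this subcase to need a careful comparison of $\atf{R}{v}$ and $\atf{R}{w}$ using \Cref{lem:first_implies_first_at_neighbor_or_tight_incoming_edge}. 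Taking $X$ to be the earliest set at $v$ or at $w$, that lemma shows that the first of $v,w$ to be reached determines which orientation gets contributions. Combining this with $S_1$-tightness of $e$ should give the claimed inequality. This is the step I regard as the main obstacle.

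Next, choose an $S_i$-tight $S_i$-$v$ path by \Cref{lem: tight paths always exist}. Here $\atf{S_i}{v}\le d^{S_i}$, because $S_i$ is active while contributing, and this contribution happens at or after its reaching $v$ (or $w$, with $v$ reached at the same time via the tight orientation). Then \Cref{lem: active reaching subset} gives $S_i'$ with an $S_i'$-active path and $\atf{S_i'}{v}=\atf{S_i}{v}$, matching \eqref{eq: atfs are identical for S and Sprime}.

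For safety: if $\atf{S_i'}{v}<t$, where $t$ is the first time a non-safe active path exists, the path is safe by the choice of $t$. If $\atf{S_i'}{v}=t$, I argue as follows. We have $t\ge\atf{S_1}{w}>\atf{S_1}{v}\ge\atf{S_i}{v}$ in the relevant case, so equality is impossible. In the degenerate case where all these times coincide, I instead pick an inclusion-minimal active path and use the minimality of $P+e$, exactly as for $S_1$. This reduces the remaining claim to the time inequality above.
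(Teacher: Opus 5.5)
Your treatment of $i=1$ and of the subcase in which $S_i$ contributed to $e=(v,w)$ is correct, and more explicit than the paper's. The real content of the lemma, however, is the subcase in which $S_i$ contributed to $(w,v)$. You leave this unproved, and your sketch for it rests on a false premise.

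First, $(w,v)$ is \emph{not} untight when $S_1$ reaches $v$: it becomes tight exactly at time $\atf{R}{v}\le\atf{S_1}{v}$. Second, even showing that $(w,v)$ becomes tight by time $\atf{S_1}{w}$ would only give $\atf{S_i}{v}\le\atf{S_1}{w}$, not $\atf{S_i}{v}\le\atf{S_1}{v}$.

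The missing argument runs as follows. Since $S_i$ contributes to $(w,v)$ on $(\atf{S_i}{w},\atf{S_i}{v}]$ before $e$ is tight, $e$ is not tight at time $\atf{R}{w}\le\atf{S_i}{w}$. Let $S\in\cS$ actively reach $w$ at time $\atf{R}{w}$. Applying \Cref{lem:first_implies_first_at_neighbor_or_tight_incoming_edge} to the edge $e=(v,w)$ with $X=S$ gives $\atf{S}{v}=\atf{R}{v}>\atf{R}{w}=\atf{S}{w}$. Hence $S$, or a superset $\hat S$ of it, contributes to $(w,v)$, and $\atf{\hat S}{v}=\atf{R}{v}$. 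By \ref{itm: uniform edges} of \Cref{def:well-subdivided} together with \Cref{obs: contribution and atfs}, all contributors to $(w,v)$ share the same interval. So $\atf{S_i}{v}=\atf{\hat S}{v}=\atf{R}{v}\le\atf{S_1}{v}$. Your remark about ``the first of $v,w$ to be reached'' points in this direction, but you never reach the conclusion $\atf{S_i}{v}=\atf{R}{v}$, and $S_1$-tightness of $e$ plays no role in it.

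Your fallback for the ``degenerate'' case $\atf{S_i'}{v}=t$ is also invalid. Inclusion-minimality of $P+e$ only tells you that proper subpaths of $P+e$ are safe. It says nothing about a different $S_i'$-active path to $v$, which could itself be a minimal non-safe path at time $t$.

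What you need instead is that this case never occurs. Let $\tau_e\le t$ be the time at which $e$ becomes tight.
In the $(v,w)$ subcase, $S_i$ contributes to $e$ at some time in $(\atf{S_i}{v},\tau_e]$, so $\atf{S_i}{v}<\tau_e$.
In the $(w,v)$ subcase, every contribution to $e$ happens strictly after $v$ is first reached, and $c(e)>0$. Hence $\atf{S_i}{v}=\atf{R}{v}<\tau_e$.

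Thus $\atf{S_i'}{v}=\atf{S_i}{v}<t$ always, and safety follows from the choice of $t$ alone.

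A minor point: in the $(w,v)$ subcase, $\atf{S_i}{v}\le d^{S_i}$ holds because, by \Cref{obs: contribution and atfs}, $S_i$ still contributes at time $\atf{S_i}{v}$ itself. It is not because $v$ is reached ``at the same time'' as $w$; it is reached only at the end of the contribution interval.
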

\begin{proof}
For $S_1'$, this is the case because $P$ is an $S_1'$-safe path to $v$. 
Let now $i\in \{2,3\}$.
We show that $S_i$ (and thus by \eqref{eq: atfs are identical for S and Sprime} also $S_i'$) can reach $v$ strictly before time $t$, which by our choice of $t$ implies that the $S_i'$-active path via which $S_i'$ reaches $v$ is $S_i'$-safe. 

If the set $S_i$ contributed to $e$, it reached $v$ before $e$ became tight, and thus strictly before time~$t$ and no later than time $\atf{S_1}{v}$.
Otherwise, $S_i$ contributed to $\lr{w, v}$ before $e$ became tight.
 Then in particular, the edge $e$ is not tight at time $\atf{R}{w}\leq\atf{S_i}{w}$. 
 Consider a set $S\in\cS$ that actively reaches $w$ at time $\atf{R}{w}$. 
 By \Cref{lem:first_implies_first_at_neighbor_or_tight_incoming_edge}, we have $\atf{S}{v}=\atf{R}{v}>\atf{R}{w}=\atf{S}{w}$. Hence, $S$ or a super-set of $S$ contributed to $\lr{w, v}$. 
Recall that the set $S_i$ also contributed to $\lr{w, v}$. 
Thus, by \ref{itm: uniform edges} of \Cref{def:well-subdivided} (well-subdivided instances), the sets $S_i$ and $S$ reach $v$ at the same time, namely at time $\atf{S}{v} = \atf{R}{v}$. 
Before time $\atf{R}{v}$ no set can contribute to $e$ and thus $\atf{R}{v} < t$.
We conclude that the set $S_i$ indeed reaches $v$ strictly before time $t$ and no later than time $\atf{S_1}{v}$ (because $\atf{S_1}{v} \geq \atf{R}{v}$).
\end{proof}

By \Cref{claim:active_reachability_of_non_safe_edge}, we may assume without loss of generality that $\atf{S_1}{v}\geq\atf{S_2}{v}\geq\atf{S_3}{v}$.
Moreover, by \Cref{claim:active_reachability_of_non_safe_edge,lem: inv holds until first accident}, for all $i\in \{1, 2, 3\}$ we can apply \Cref{inv: the invariant} for $S_i'$ and $v$, that is, there are tuples $\lr{S_i', v, Q_i, m_i, X_i, \bar X_i}$ fulfilling \Cref{inv: the invariant}.\\
Let $i, j\in\lrb{1,2,3}$. By \Cref{lem: wlog merge X bar is less than tRv}, for any time $t'\geq\atf{R}{v}$, all vertices of $\bar X_i$ are contained in the same active set (that is the same set from the partition $\cS^{t'}$). 
Because $S_j'$ is active at time $\atf{S_j'}{v}\geq\atf{R}{v}$, the set $\bar X_i$ is either a non-empty subset of $S_j'$ or it is disjoint from $S_j'$.
\Cref{fig: accident example case 1,fig: accident example case 2} provide some examples and illustrate the tuples $\lr{S_i', v, Q_i, m_i, X_i, \bar X_i}$ for two different scenarios.

To derive the desired contradiction, we will to choose three (non-empty) sets out of $X_1$, $\bar X_1$, $X_2$, $\bar X_2$, $X_3,$ $\bar X_3$. We then do two things:
First, we upper bound the cost (for connecting) the union of these three sets, and second, we show that the union of these three sets is late-merged and thus strictly $\gamma$-expensive (\Cref{obs: connecting late-merged set is expensive}).

By the definition of late-merged sets (\Cref{def: late-merged}), the union of the three sets is late-merged if each union of two of the sets is late-merged:
\begin{observation}\label{obs: pairwise late-merged implies late-merged}
    Let $A, B, C\subseteq R$ be three sets, such that $A\cup B$, $B\cup C$, and $C\cup A$ are all late-merged, then also $A\cup B\cup C$ is late-merged.
\end{observation}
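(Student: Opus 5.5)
The plan is to unfold \Cref{def: late-merged} directly: late-mergedness is a condition on pairs of distinct terminals, so it is closed under unions whose pairs are all covered. Concretely, I would take two distinct terminals $x,y\in A\cup B\cup C$ and show that they satisfy \eqref{ieq: late-merged condition}, namely $\merge{}{x,y}\geq\frac{7-5\gamma}{18-14\gamma}\,\dist{x,y}$.

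The key step is a case distinction on where $x$ and $y$ lie. Each of them belongs to at least one of $A$, $B$, $C$. If both lie in the same one of these three sets, say both in $A$, then both lie in $A\cup B$. If they lie in two different sets, say $x\in A$ and $y\in B$ (all other combinations are symmetric), then again both lie in $A\cup B$. In every case, therefore, some pairwise union among $A\cup B$, $B\cup C$, $C\cup A$ contains both $x$ and $y$.

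That pairwise union is late-merged by hypothesis, so \eqref{ieq: late-merged condition} holds for the pair $x\ne y$. Since the pair was arbitrary, $A\cup B\cup C$ is late-merged.

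No real obstacle is expected here. The only point to check is that the definition quantifies over pairs of distinct terminals and involves nothing global, such as $\val{\cdot}$ or $\conn{\cdot}$, that could fail to be inherited by the union. This is indeed the case, since only merge times and distances of the pair $x,y$ appear.
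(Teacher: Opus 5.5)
Your proposal is correct and matches the paper's reasoning: the paper gives no separate proof and justifies the observation directly by \Cref{def: late-merged}, noting that late-mergedness is a condition on pairs of distinct terminals. Your case check that every such pair in $A\cup B\cup C$ lies in one of the three pairwise unions is exactly that argument, written out explicitly.
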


We will always choose at least two of $X_1$, $X_2$, and $X_3$. Hence, we in particular need that the union of two of these sets is late-merged:

\begin{observation}\label{obs: Xa union Xb late-merged}
    For $a,b\in\lrb{1, 2, 3}$ the set $X_a\cup X_b$ is late-merged.
\end{observation}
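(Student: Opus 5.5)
The plan is to verify the defining inequality \eqref{ieq: late-merged condition} of late-merged sets directly for every pair of distinct terminals in $X_a\cup X_b$. If $a=b$, then $X_a\cup X_b=X_a$, and $X_a$ is late-merged because the tuple $\lr{S_a', v, Q_a, m_a, X_a, \bar X_a}$ fulfills \Cref{inv: the invariant}. So assume $a\neq b$. Pairs lying inside $X_a$, or inside $X_b$, are handled by the same argument. It remains to treat $x\in X_a$ and $y\in X_b$; such $x,y$ are distinct because $X_a\subseteq S_a'$ and $X_b\subseteq S_b'$, and $S_a',S_b'$ are disjoint.

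The first step is a lower bound on $\merge{}{x,y}$. Write $t_a\coloneqq\atf{S_a'}{v}$ and $t_b\coloneqq\atf{S_b'}{v}$. By \Cref{claim:active_reachability_of_non_safe_edge}, $S_a'$ actively reaches $v$, so $S_a'$ is active at time $t_a$. At that time $x\in S_a'$ and $y\notin S_a'$, so $x$ and $y$ lie in different parts of $\cS^{t_a}$, which gives $\merge{}{x,y}\geq t_a$. The symmetric argument with $S_b'$ at time $t_b$ gives $\merge{}{x,y}\geq t_b$. Hence
\[
\merge{}{x,y}\ \geq\ \max\lrb{t_a,t_b}.
\]
Using the maximum of the two times, rather than the minimum, is the one point that needs care; it holds because each of the two sets is individually active at its own reaching time and separates $x$ from $y$.

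The second step is an upper bound on the distance. By \Cref{lem: X and X bar atfRm close to m implies X union X bar atfSv close to v}, $X_a\cup\bar X_a$ is $t_a$-close to $v$. The second term of the minimum in \eqref{ieq: t close dist} then gives $\dist{x,v}\leq\frac{9-7\gamma}{7-5\gamma}\,t_a$, and likewise $\dist{y,v}\leq\frac{9-7\gamma}{7-5\gamma}\,t_b$. By the triangle inequality,
\[
\dist{x,y}\ \leq\ \frac{9-7\gamma}{7-5\gamma}\lr{t_a+t_b}\ \leq\ \frac{18-14\gamma}{7-5\gamma}\,\max\lrb{t_a,t_b}\ \leq\ \frac{18-14\gamma}{7-5\gamma}\,\merge{}{x,y}.
\]
This is exactly \eqref{ieq: late-merged condition}, so $X_a\cup X_b$ is late-merged.

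I expect no step to be a real obstacle. The only subtle point is the merge-time bound: one must use that both $S_a'$ and $S_b'$ actively reach $v$, which \Cref{claim:active_reachability_of_non_safe_edge} provides. Bounding only by the smaller of $t_a,t_b$ would be too weak, since the distance bound involves $t_a+t_b$.
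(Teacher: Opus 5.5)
Your proof is correct and matches the paper's argument. The paper assumes without loss of generality $\atf{S_a'}{v}\leq\atf{S_b'}{v}$, uses that both $X_a\cup\bar X_a$ and $X_b\cup\bar X_b$ are then $\atf{S_b'}{v}$-close to $v$, and combines the resulting bound $\dist{x,y}\leq\frac{18-14\gamma}{7-5\gamma}\atf{S_b'}{v}$ with $\merge{}{x,y}\geq\atf{S_b'}{v}$; this is exactly your $\max\lrb{t_a,t_b}$ argument, and your treatment of pairs inside $X_a$ or $X_b$ and of $a=b$ only spells out details the paper leaves implicit.
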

\begin{proof}
    Assume without loss of generality that $\atf{S_a'}{v}\leq\atf{S_b'}{v}$. 
    By \Cref{lem: X and X bar atfRm close to m implies X union X bar atfSv close to v}, both $X_a\cup\bar X_a$ and $X_b\cup\bar X_b$ are $\atf{S_b'}{v}$-close to $v$. 
    Hence, \eqref{ieq: t close dist} from the definition of $t$-closeness implies that for any $x\in X_a$ and $y\in X_b$,
    \begin{align*}
        \dist{x, y}\ \leq\ \dist{x, v}+\dist{y, v}
        \ \leq\ 2\cdot\frac{9-7\gamma}{7-5\gamma}\cdot\atf{S_b'}{v}
        \ \leq\ \frac{18-14\gamma}{7-5\gamma}\cdot\merge{}{x, y}.
    \end{align*}
\end{proof}

Which three sets we choose depends on which of $\bar X_1, \bar X_2, \bar X_3$ are empty. If all of them are empty, our only option is to choose $X_1, X_2, X_3$. However, in general, it could be better to choose, for example, $X_1, X_2, \bar X_2$ because we can connect them without using the path $Q_3$, which leads to a better upper bound on the cost. (See the example in \Cref{fig: accident example case 2}.)
We distinguish the following two cases:

\paragraph{Case 1: Both $\mathbf{\bar X_2}$ and $\mathbf{\bar X_3}$ are empty. }\label{par: Xbar2 and Xbar3 are empty}
\hfill

\begin{figure}
    \centering
    \resizebox{0.8\textwidth}{!}{\begin{tikzpicture}
    \Large
    \useasboundingbox (-9,-9) rectangle (9,6.5);
    \node[steiner, label=above:{$v$}] (V) at ( 0,-1) {};
    \node[steiner, label=below:{$m_1$}] (M1) at ( 3, 1) {};

    \node[smallterminal, draw = green!60!black] (X11) at (-2, 5) {};
    \node[smallterminal, draw = green!60!black] (X12) at (-3, 3) {};
    \node[smallterminal, draw = green!60!black] (X13) at (-3, 1) {};
    
    \node[smallterminal, gray] (Xb1) at ( 5, 5) {};
    \node[smallterminal, gray] (Xb2) at ( 7, 5) {};
    \node[smallterminal, gray] (Xb3) at ( 9, 3) {};
    \node[smallterminal, gray] (Xb4) at ( 9, 1) {};
    
    \node[smallterminal, draw = red] (X21) at ( 7,-5) {};
    \node[smallterminal, draw = red] (X24) at ( 5,-7) {};
    
    \node[smallterminal, draw = blue] (X31) at (-7,-3) {};
    \node[smallterminal, draw = blue] (X32) at (-7,-5) {};
    \node[smallterminal, draw = blue] (X33) at (-5,-7) {};

    \draw[activeset, blue]           (-5.5,-5) ellipse [x radius=3, y radius=3];
    \draw[activeset, red]            ( 5.5,-5.5) ellipse [x radius=3, y radius=3];
    \draw[activeset, green!60!black] (-2, 3) ellipse [x radius=3, y radius=3];

    \draw[mstedge, green!60!black] (X11) -- ( 0, 3) -- (-2, 2) -- (X12) -- (-2, 2) -- (X13) -- (-2, 2) -- ( 0, 3) -- (M1);
    \draw[mstedge, gray] (Xb1) -- ( 6, 4) -- (Xb2) -- ( 6, 4) -- ( 6, 2) -- ( 8, 2) -- (Xb3) -- ( 8, 2) -- (Xb4) -- ( 8, 2) -- ( 6, 2) -- (M1);
    \draw[mstedge, red] (X21) -- ( 4,-4) -- (X24) -- ( 4,-4) -- (V);
    \draw[mstedge, blue] (X31) -- (-6,-4) -- (X32) -- (-6,-4) -- (-4,-4) -- (X33) -- (-4,-4) -- (V);
    
    \draw[mstedge, curlyHalfA, green!60!black] (M1) -- (V);
    \draw[mstedge, curlyHalfB, red!50!white] (M1) -- (V);
    
    \node[edgelabel, text = green!60!black] (S1) at (0.3, 5) {$S_1=S_1'$};
    \node[edgelabel, text = red] (S2) at ( 6,-8.4) {$\ \ S_2=S_2'\ \ $};
    \node[edgelabel, text = blue] (S3) at (-7.3,-7.3) {$S_3=S_3'$};
    
    \node[edgelabel] (Q1) at (  2,-0.3) {$Q_1$};
    
    \node[text = green!60!black] (X1) at (-3.5, 4) {$X_1$};
    \node[text = gray] (Xbar1) at (8,4) {$\bar X_1$};
    \node[text = red] (X2) at (6,-6) {$X_2$};
    \node[text = blue] (X3) at (-5.8,-5.8) {$X_3$};
 
\end{tikzpicture} }
    \caption{An abstract drawing of three disjoint sets $S_1$, $S_2$, and $S_3$ that all (actively) reach the same vertex $v$ as such that \Cref{inv: the invariant} applies. 
    The shown configuration belongs to \hyperref[par: Xbar2 and Xbar3 are empty]{Case 1} of our proof. The tuple $\lr{S_1,v,Q_1,m_1,X_1,\bar X_1}$ satisfies case \ref{item: help case} of \Cref{inv: the invariant} and for $i=2,3$, the tuple $\lr{S_i, v, \emptyset, v, X_i,\emptyset}$ satisfies case \ref{item: no help case} of \Cref{inv: the invariant}. That is, $X_2$ and $X_3$ are close to $v$, the set $X_1$ is close to $m_1$ and we can upper bound the length of the $m_1$-$v$ path $Q_1$.
    Using that $X_1$, $X_2$, and $X_3$ are separated by the active sets $S_1$, $S_2$, and $S_3$, we can lower bound $\val{X_1\cup X_2\cup X_3}$ and show that $X_1\cup X_2\cup X_3$ is late-merged (and thus strictly $\gamma$-expensive).
    To derive a contradiction, we prove that $X_1\cup X_2\cup X_3$ can be connected in a cheap way (using the path $Q_1$ and all shown components except for the gray one).
    The set $\bar X_2$ could also be contained in $S_2$, or in $S_3$. 
    However, for our proof in \hyperref[par: Xbar2 and Xbar3 are empty]{Case 1} this is not relevant, as we will anyways not connect the terminals in $\bar X_1$.}
    \label{fig: accident example case 1}
\end{figure}
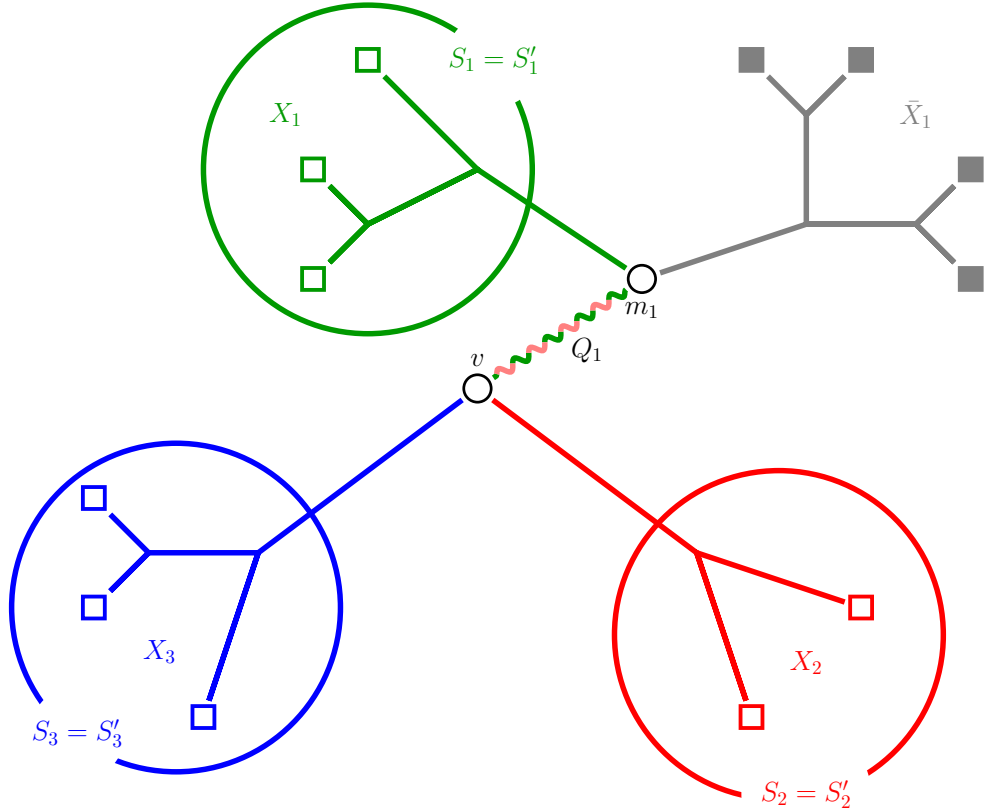

An example is shown in \Cref{fig: accident example case 1}.
In this case, \ref{item: no help case} of \Cref{inv: the invariant} applies, and thus both $X_2$ and $X_3$ are $\atf{S_2}{v}$-close respectively $\atf{S_3}{v}$-close to $v$, where we used \eqref{eq: atfs are identical for S and Sprime}.
The set $X_1$ might not be $\atf{S_1}{v}$-close to $v$, but we can still bound to cost of connecting it to $v$ by using $Q_1$ and the closeness to $m_1$. 
More precisely, if $\bar X_1 \neq \emptyset$, then \ref{item: help case} of \Cref{inv: the invariant} applies and yields
\begin{align*}
   \conn{X_1, v} \ \leq&\  \atf{R}{m_1} + \frac{8-8\gamma}{7-5\gamma}\, \val{X_1}  + c(Q_1) \\
   \leq&\  \atf{R}{m_1} + \frac{8-8\gamma}{7-5\gamma}\, \val{X_1} + 2\atf{S_1}{v}+\atf{R}{v}-3\atf{R}{m_1} \\
   \leq&\ \atf{R}{m_1} + \frac{8-8\gamma}{7-5\gamma}\, \val{X_1} + 3\atf{S_1}{v}-3\atf{R}{m_1} \\
   =&\ \atf{S_1}{v} + \frac{8-8\gamma}{7-5\gamma}\, \val{X_1} + 2\lr{\atf{S_1}{v}-\atf{R}{m_1}}.
\end{align*}
By \Cref{cor: lin comb of dist and connect bound}, this implies
\begin{align*}
   \conn{X_1, v} \ \leq&\ \atf{S_1}{v} + \frac{8-8\gamma}{7-5\gamma}\, \val{X_1} + \frac{2-6\gamma}{7-5\gamma}\, \atf{S_1}{v}.
\end{align*}
If $\bar X_1 = \emptyset$, then \ref{item: no help case} of \Cref{inv: the invariant} applies, which also yields
\begin{align*}
   \conn{X_1, v} \ \leq&\  \atf{S_1}{v} + \frac{8-8\gamma}{7-5\gamma}\, \val{X_1} \
   \leq\ \atf{S_1}{v} + \frac{8-8\gamma}{7-5\gamma}\, \val{X_1} +  \frac{2-6\gamma}{7-5\gamma}\, \atf{S_1}{v}.
\end{align*}
We conclude that no matter whether $\bar X_1$ is empty or not, we have
\begin{align*}
    \conn{X_1\cup X_2\cup X_3}
    \leq&\ \conn{X_1, v}+\conn{X_2, v}+\conn{X_3, v} \\[2mm]
    \leq&\ \frac{8-8\gamma}{7-5\gamma}\lr{\val{X_1}+\val{X_2}+\val{X_3}}\\
    &\ +\atf{S_1}{v}+\atf{S_2}{v}+\atf{S_3}{v} 
       +\frac{2-6\gamma}{7-5\gamma}\, \atf{S_1}{v}.   
\end{align*}
Because $\atf{S_1}{v}\geq\atf{S_2}{v}\geq\atf{S_3}{v}$, and  using $1+\frac{2-6\gamma}{7-5\gamma}<\frac{12-12\gamma}{7-5\gamma}$ and $3+\frac{2-6\gamma}{7-5\gamma}<\frac{24-24\gamma}{7-5\gamma}$, we can simplify our upper bound to
\begin{align*}
    \conn{X_1\cup X_2\cup X_3}
    &<\frac{12-12\gamma}{7-5\gamma}\, \lr{\val{X_1}+\val{X_2}+\val{X_3}+\atf{S_1}{v}+\atf{S_2}{v}}\\
    &\leq\frac{12-12\gamma}{7-5\gamma}\ \val{X_1\cup X_2\cup X_3}\,,\defaulttag
    \label{ieq: final bound first case}
\end{align*}
where the second inequality follows from \Cref{lem: drop of X union X bar} (applied twice).

By \Cref{obs: Xa union Xb late-merged,obs: pairwise late-merged implies late-merged}, the set $X_1\cup X_2\cup X_3$ is late-merged, contradicting \eqref{ieq: final bound first case}.

\paragraph{Case 2: At least one of $\mathbf{\bar X_2}$ and $\mathbf{\bar X_3}$ is non-empty.}\label{par: one of Xbar2 and Xbar3 is non-empty}
\hfill

An example of this case is shown in \Cref{fig: accident example case 2}.

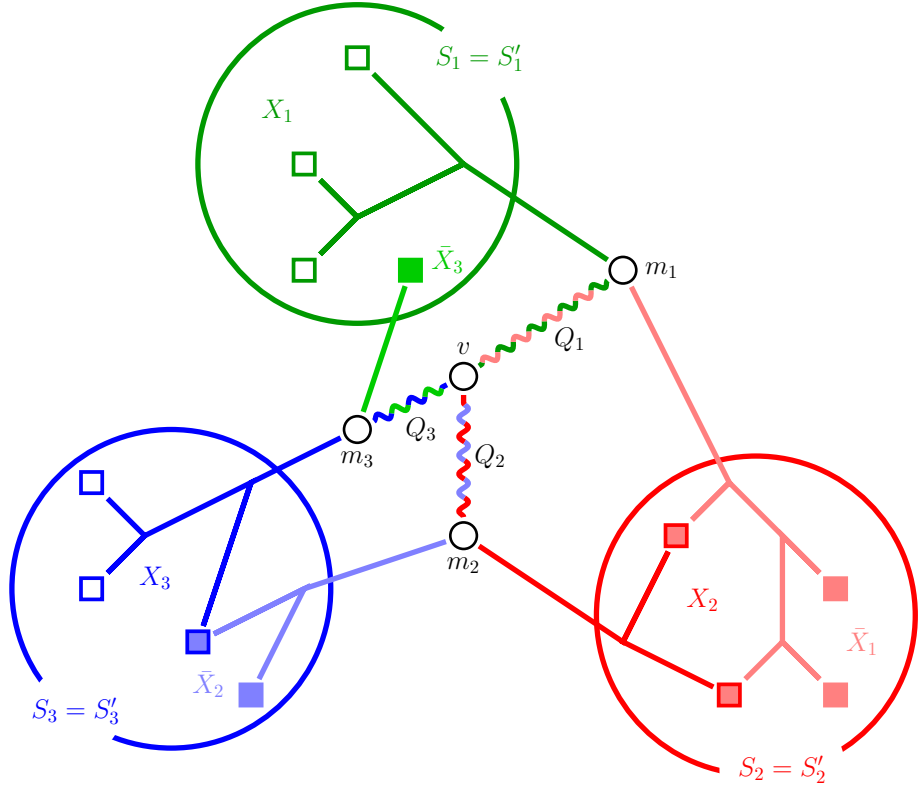
\begin{figure}
    \centering
    \resizebox{0.75\textwidth}{!}{\begin{tikzpicture}
    \Large
    \useasboundingbox (-8.7,-8.7) rectangle (8.7,6.2);
    \node[steiner, label=above:{$v$}] (V) at ( 0,-1) {};
    \node[steiner, label=right:{$m_1$}] (M1) at ( 3, 1) {};
    \node[steiner, label=below:{$m_2$}] (M2) at ( 0, -4) {};
    \node[steiner, label=below:{$m_3$}] (M3) at (-2, -2) {};

    \node[smallterminal, draw = red, fill = red!50!white] (X21) at ( 5,-7) {};
    \node[smallterminal,                    red!50!white] (X22) at ( 7,-7) {};
    \node[smallterminal,                    red!50!white] (X23) at ( 7,-5) {};
    \node[smallterminal, draw = red, fill = red!50!white] (X24) at ( 4,-4) {};
    
    \node[smallterminal, draw = blue                      ] (X31) at (-7,-3) {};
    \node[smallterminal, draw = blue                      ] (X32) at (-7,-5) {};
    \node[smallterminal, draw = blue, fill = blue!50!white] (X33) at (-5,-6) {};
    \node[smallterminal,                     blue!50!white] (X34) at (-4,-7) {};

    \node[smallterminal, draw = green!60!black] (X11) at (-2, 5) {};
    \node[smallterminal, draw = green!60!black] (X12) at (-3, 3) {};
    \node[smallterminal, draw = green!60!black] (X13) at (-3, 1) {};
    \node[smallterminal,        green!80!black] (X14) at (-1, 1) {};
    
    \draw[activeset, blue]           (-5.5,-5) ellipse [x radius=3, y radius=3];
    \draw[activeset, red]            ( 5.5,-5.5) ellipse [x radius=3, y radius=3];
    \draw[activeset, green!60!black] (-2, 3) ellipse [x radius=3, y radius=3];

    \draw[mstedge, green!60!black] (X11) -- ( 0, 3) -- (-2, 2) -- (X12) -- (-2, 2) -- (X13) -- (-2, 2) -- ( 0, 3) -- (M1);
    \draw[mstedge, red!50!white] (X21) -- ( 6,-6) -- (X22) -- ( 6,-6) -- ( 6,-4) -- (X23) -- ( 6,-4) -- ( 5,-3) -- (X24) -- ( 5,-3) -- (M1);
    \draw[mstedge, red] (X21) -- ( 3,-6) -- (X24) -- ( 3,-6) -- (M2);
    \draw[mstedge, blue!50!white] (X34) -- (-3,-5) -- (X33) -- (-3,-5) -- (M2);
    \draw[mstedge, blue] (X31) -- (-6,-4) -- (X32) -- (-6,-4) -- (-4,-3) -- (X33) -- (-4,-3) -- (M3);
    \draw[mstedge, green!80!black] (X14) -- (M3);

    \draw[mstedge, curlyHalfA, green!60!black] (M1) -- (V);
    \draw[mstedge, curlyHalfB, red!50!white] (M1) -- (V);

    \draw[mstedge, curlyHalfA, red] (M2) -- (V);
    \draw[mstedge, curlyHalfB, blue!50!white] (M2) -- (V);
    
    \draw[mstedge, curlyHalfA, blue] (M3) -- (V);
    \draw[mstedge, curlyHalfB, green!80!black] (M3) -- (V);

    \node[edgelabel, text = green!60!black] (S1) at (0.3, 5) {$S_1=S_1'$};
    \node[edgelabel, text = red] (S2) at ( 6,-8.4) {$\ \ S_2=S_2'\ \ $};
    \node[edgelabel, text = blue] (S3) at (-7.3,-7.3) {$S_3=S_3'$};
    
    \node[edgelabel] (Q1) at (  2,-0.3) {$Q_1$};
    \node[edgelabel] (Q2) at (0.5,-2.5) {$Q_2$};
    \node[edgelabel] (Q3) at (-0.8,-2) {$Q_3$};
    
    \node[text = green!60!black] (X1) at (-3.5, 4) {$X_1$};
    \node[text = red!50!white] (Xbar1) at (7.5,-6) {$\bar X_1$};
    \node[text = red] (X2) at (4.5,-5.2) {$X_2$};
    \node[text = blue!50!white] (Xbar2) at (-4.8,-6.8) {$\bar X_2$};
    \node[text = blue] (X3) at (-5.8,-4.8) {$X_3$};
    \node[text = green!80!black] (Xbar3) at (-0.3, 1.2) {$\bar X_3$};

\end{tikzpicture} }
    \caption{An abstract drawing of three disjoint sets $S_1$, $S_2$, and $S_3$ that all (actively) reach the same vertex $v$ as such that \Cref{inv: the invariant} holds. 
    The shown configuration belongs to \hyperref[par: one of Xbar2 and Xbar3 is non-empty]{Case 2} of our proof. 
    Here, for $i\in\lrb{1,2,3}$, the tuple $\lr{S_i,v,Q_i,m_i,X_i,\bar X_i}$ satisfies case \ref{item: help case} of \Cref{inv: the invariant}. 
    In particular, both sets $X_i$ and $\bar X_i$ are close to $m_i$, and we can upper bound the length of the $m_i$-$v$ path~$Q_i$.
    \newline
    To derive a contradiction, we choose three sets among $X_1,\bar X_1,X_2,\bar X_2,X_3,\bar X_3$ whose union is strictly $\gamma$-expensive, and we show that connecting all terminals of these three sets to $v$ is too cheap. 
    There are two aspects to consider for our choice: we want to choose three sets for which we (i) know how to cheaply connect them (using the components and paths shown in the figure) and (ii) know that their union has large value (which implies connecting them must be expensive). 
To argue that the union of the three sets has large value, we will choose them such that they are separated by (two of) the sets $S_1,S_2,S_3$.
    (Then \Cref{lem: drop of X union X bar} will yield a good lower bound on the value of the union of our three sets.)
    In this example, this means that we want to choose one of the green sets $X_1$ and $\bar X_3$, one of the red sets $X_2$ and $\bar X_1$, and one of the blue sets $X_3$ and $\bar X_2$.
    \newline
    Choosing $X_1$, $X_2$, and $X_3$ would be the easiest choice to ensure (ii). However, the cheapest way of connecting them might include all three path $Q_1$, $Q_2$, and $Q_3$ and the tree components connecting $X_i$ to $m_i$. In this example, we could instead choose $X_1$, $X_2$ and $\bar X_2$. Connecting them requires three components (two connecting $X_2$ and $\bar X_2$ to $m_2$ and one connecting $X_1$ to $m_1$) but only two of the three paths~$Q_i$.
    \newline
    In general, in \hyperref[par: one of Xbar2 and Xbar3 is non-empty]{Case 2} of our proof, the set $\bar X_2$ could also be contained in $S_1$ or it could be disjoint from all $S_1$, $S_2$, and $S_3$. But, by \Cref{lem: wlog merge X bar is less than tRv}, we know that $\bar X_2$ cannot contain elements of both $S_1$ and $S_3$ simultaneously. Hence, choosing $\bar X_1$ cannot simultaneously prevent us from choosing $X_1$ and from choosing $X_3$.
    (It is also possible that $\bar X_2$ is empty, in which case we use $X_3$ and $\bar X_3$ instead of $X_2$ and $\bar X_2$.)
    }
    \label{fig: accident example case 2}
\end{figure}

Let $j\in\lrb{2, 3}$ be an index such that $\bar X_j\neq\emptyset$. Our strategy will be to choose $X_j$ and $\bar X_j$ as two of the three sets. 
As the third set we will choose one of the sets $X_1, X_2, X_3$ (but not $X_j$ again).
We write $X_k$ to denote this third set, where $k\in \{1,2,3\} \setminus \{j\}$. 
Let $k\neq j$ such that exactly one of $X_k$ and $\bar X_j$ is contained in $S_1$
(which is possible because $X_1 \subseteq S_1$, but $X_2$ and $X_3$ are disjoint from $S_1$).
Recall that (by \Cref{lem: wlog merge X bar is less than tRv}) $\bar X_j$ is either contained in $S_1$ or disjoint from $S_1$.
Thus, our choice of $k$ guarantees that the terminals from $X_k$ and $\bar X_j$ are not merged before time $d^{S_1}\geq\atf{S_1}{v}$ (where the inequality follows from $P+e$ being an $S_1$-active path with $e=(v,w)$).

We will now use this property to prove that $X_j\cup\bar X_j\cup X_k$ is late-merged. 
We already know that $X_j\cup\bar X_j$ is late-merged (\Cref{lem: inv and m not v implies that X union X bar is late-merged}) and that $X_j\cup X_k$ is late-merged (\Cref{obs: Xa union Xb late-merged}).
Thus, by \Cref{obs: pairwise late-merged implies late-merged}, it suffices to show that $X_k\cup\bar X_j$ is late merged.

\begin{observation}\label{obs: Xjbar union Xk late-merged}
    $\bar X_j\cup X_k$ is late-merged.
\end{observation}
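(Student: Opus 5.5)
The plan is to verify condition \eqref{ieq: late-merged condition} directly, in the same way as in the proof of \Cref{obs: Xa union Xb late-merged}. Each of the two sets is already late-merged on its own: $\bar X_j$ and $X_k$ are late-merged by \Cref{inv: the invariant} applied to the tuples $\lr{S_j', v, Q_j, m_j, X_j, \bar X_j}$ and $\lr{S_k', v, Q_k, m_k, X_k, \bar X_k}$. So it remains to check pairs $x\in\bar X_j$, $y\in X_k$. (If some $x$ lies in both sets, that $x$ would be in $S_1$ and not in $S_1$ at once by the choice of $k$, which is impossible; so all such pairs have $x\neq y$.)

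For the upper bound on the distance, I use \Cref{lem: X and X bar atfRm close to m implies X union X bar atfSv close to v}. It says that $X_j\cup\bar X_j$ is $\atf{S_j'}{v}$-close to $v$ and that $X_k\cup\bar X_k$ is $\atf{S_k'}{v}$-close to $v$. By \eqref{eq: atfs are identical for S and Sprime} and the ordering $\atf{S_1}{v}\geq\atf{S_2}{v}\geq\atf{S_3}{v}$, both of these times are at most $\atf{S_1}{v}$. The second term of the minimum in \eqref{ieq: t close dist} then gives
\[
\dist{x, y}\ \leq\ \dist{x, v}+\dist{y, v}\ \leq\ \frac{9-7\gamma}{7-5\gamma}\lr{\atf{S_j'}{v}+\atf{S_k'}{v}}\ \leq\ \frac{18-14\gamma}{7-5\gamma}\,\atf{S_1}{v}.
\]

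For the lower bound on the merge time, I use the choice of $k$: exactly one of $X_k$ and $\bar X_j$ is contained in $S_1$. By \Cref{lem: wlog merge X bar is less than tRv}, the other set is disjoint from $S_1$. Since $S_1$ is active until $d^{S_1}$, the terminals $x$ and $y$ lie in different parts of $\cS^t$ for all $t\le d^{S_1}$. Hence
\[
\merge{}{x,y}\ \geq\ d^{S_1}\ \geq\ \atf{S_1}{v},
\]
where the last inequality holds because $P+e$ is $S_1$-active. Combining the two bounds gives $\merge{}{x,y}\geq\frac{7-5\gamma}{18-14\gamma}\dist{x,y}$, which is \eqref{ieq: late-merged condition}.

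The only delicate point is justifying that both closeness times are bounded by $\atf{S_1}{v}$. This needs the identification of $S_i'$ with $S_i$ in \eqref{eq: atfs are identical for S and Sprime} and the chosen ordering of the three sets. A second subtlety is to use the $\frac{9-7\gamma}{7-5\gamma}t$ part of the distance bound rather than the part involving $\val{\cdot}$, since the latter is not controlled here.
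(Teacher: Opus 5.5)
Your proof is correct and matches the paper's argument. Both proofs get closeness of $X_j\cup\bar X_j$ and $X_k\cup\bar X_k$ to $v$ from \Cref{lem: X and X bar atfRm close to m implies X union X bar atfSv close to v}, apply the $\frac{9-7\gamma}{7-5\gamma}\,t$ part of \eqref{ieq: t close dist}, and use $\merge{}{x,y}\geq d^{S_1}\geq\atf{S_1}{v}$ from the choice of $k$; you additionally make explicit that $\bar X_j$ and $X_k$ are each late-merged by \Cref{inv: the invariant}, which the paper leaves implicit. One small citation slip: when the set not contained in $S_1$ is $X_k$ with $k\in\{2,3\}$, its disjointness from $S_1$ follows from $X_k\subseteq S_k'$ and the pairwise disjointness of $S_1,S_2,S_3$, not from \Cref{lem: wlog merge X bar is less than tRv}.
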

\begin{proof}
    By our choice of $k$, exactly one of $X_k$ and $\bar X_j$ is contained in $S_1$ and the other set is disjoint from $S_1$. 
    Further, by \Cref{claim:active_reachability_of_non_safe_edge}, $\atf{S_1}{v}\geq\max\lrb{\atf{S_2}{v},\atf{S_3}{v}}$, and hence, $\atf{S_1'}{v}\geq\max\lrb{\atf{S_j'}{v}, \atf{S_k'}{v}}$ by \eqref{eq: atfs are identical for S and Sprime}. 
    Thus, by \Cref{lem: X and X bar atfRm close to m implies X union X bar atfSv close to v}, both $X_j\cup\bar X_j$ and $X_k\cup\bar X_k$ are $\atf{S_1'}{v}$-close to~$v$. 
    Hence, using \eqref{ieq: t close dist} from the definition of $\atf{S_1'}{v}$-closeness, we get for all $x\in\bar X_j$ and $y\in X_k$,
    \begin{align*}
        \dist{x, y}\ \leq\ \dist{x, v}+\dist{y, v}
        \ \leq\ 2\cdot\frac{9-7\gamma}{7-5\gamma}\cdot\atf{S_1'}{v}
        \ \leq\ \frac{18-14\gamma}{7-5\gamma}\cdot\merge{}{x, y}.
    \end{align*}
\end{proof}

We conclude that $X_j\cup\bar X_j\cup X_k$ is late-merged and thus strictly $\gamma$-expensive (\Cref{obs: connecting late-merged set is expensive}).
To derive the desired contradiction, we will now establish an upper bound 
on the cost of connecting $X_j\cup\bar X_j\cup X_k$.

We know that both $X_j$ and $\bar X_j$ are $\atf{R}{m_j}$-close to $m_j$ and $X_k$ is $\atf{R}{m_k}$-close to $m_k$. 
Thus, using the $m_j$-$v$ path $Q_j$ and the $m_k$-$v$ path $Q_k$, we get
\begin{align*}
    &\conn{X_j\cup\bar X_j\cup X_k}\\
    \leq&\ \conn{X_j, m_j}+\conn{\bar X_j, m_j}+\conn{X_k, m_k}+c\lr{Q_j}+c\lr{Q_k}\\
    \leq&\ \frac{8-8\gamma}{7-5\gamma}\lr{\val{X_j}+\val{\bar X_j}+\val{X_k}}+2\atf{R}{m_j}+\atf{R}{m_k}+c\lr{Q_j}+c\lr{Q_k}\defaulttag\label{ieq: bound on Xj barXj Xk}\,.
\end{align*}
Using \Cref{cor: lin comb of dist and connect bound}, we can bound the length of the path $Q_j$ as follows:
\begin{align}
    c\lr{Q_j}\ 
    &\leq\ 2\atf{S_j}{v}+\atf{R}{v}-3\atf{R}{m_j}
    \ <\ \lr{1+\frac{1-3\gamma}{7-5\gamma}}\,\atf{S_j}{v}+\atf{R}{v}-2\atf{R}{m_j}\,.
    \label{ieq: bound on Qj}
\end{align}
If $\bar X_k$ is non-empty, we can similarly bound $c\lr{Q_k}$, using again \Cref{cor: lin comb of dist and connect bound}:
\begin{align}
    c\lr{Q_k}
    \ &\leq\ 2\atf{S_k}{v}+\atf{R}{v}-3\atf{R}{m_k}
    \ \leq\ \frac{2-6\gamma}{7-5\gamma}\ \atf{S_k}{v}+\atf{R}{v}-\atf{R}{m_k}\,.
    \label{ieq: bound on Qk}
\end{align}
If $\bar X_k$ is empty, then $v=m_k$ and $c\lr{Q_k}=0$, and thus \eqref{ieq: bound on Qk} is trivially satisfied.

Combining \eqref{ieq: bound on Xj barXj Xk}, \eqref{ieq: bound on Qj}, and \eqref{ieq: bound on Qk} yields
\begin{align*}
    \conn{X_j\cup\bar X_j\cup X_k}
    &<\frac{8-8\gamma}{7-5\gamma}\lr{\val{X_j}+\val{\bar X_j}+\val{X_k}}\\
    &\qquad+\frac{8-8\gamma}{7-5\gamma}\atf{S_j}{v}+\frac{2-6\gamma}{7-5\gamma}\atf{S_k}{v}+2\atf{R}{v}.
\end{align*}
Using $2\atf{R}{v}=\frac{14-10\gamma}{7-5\gamma}\,\atf{R}{v}\leq\frac{4-4\gamma}{7-5\gamma}\,\atf{S_j}{v}+\frac{10-6\gamma}{7-5\gamma}\,\atf{S_k}{v}$ (by \Cref{obs: larger sets arrive no later}), this implies
\begin{align*}
    \conn{X_j\cup\bar X_j\cup X_k}\
       <&\ \frac{12-12\gamma}{7-5\gamma}\lr{\val{X_j}+\val{\bar X_j}+\val{X_k}+\atf{S_j}{v}+\atf{S_k}{v}}\\
    \leq&\ \frac{12-12\gamma}{7-5\gamma}\lr{\val{X_j\cup\bar X_j}+\val{X_k}+\atf{S_k}{v}}\\
    \leq&\ \frac{12-12\gamma}{7-5\gamma}\val{X_j\cup\bar X_j\cup X_k},
\end{align*}
where the last two inequalities follow from \Cref{lem: drop of X union X bar}:
first, applied to the set $S_j'$, which is active at time $\atf{S_j}{v}$ and separates $X_j$ and $\bar X_j$, and second, applied to the set $S_k'$, which is active at time $\atf{S_k}{v}$ and separates $X_j\cup\bar X_j$ and $X_k$.
This contradicts the fact that $X_j\cup\bar X_j\cup X_k$ is strictly $\gamma$-expensive.
 \section{Existence of Good Merge Plans}
\label{sec:existence_merge_plan}

The goal of this section is to prove the existence of $\gamma$-good merge plans whose value is large whenever the cost $\opt$ of a cheapest Steiner tree is large.
In particular, we will prove \Cref{thm: there is a gamma good merge plan with high value} and show how our upper bound on the integrality gap of \ref{eq:bcr} (\Cref{thm:main_gap}) follows from \Cref{cor: gamma good merge plan gives large dual}.

To this end, we will consider the Steiner tree resulting from Zelikovsky's relative greedy algorithm~\cite{ImprovedRelGreedy}, whose cost yields an upper bound on $\opt$ (see \Cref{sec: bounding opt}).
Moreover, for every $\gamma \in [0,\frac{1}{5}]$, we will construct a $\gamma$-good merge plan  $\cM_{\gamma}$. 
This will lead to the following theorem.

\begin{theorem}\label{thm: there is a good merge plan with rho}
    For every instance $\cI$ of the Steiner tree problem, there is a non-increasing function $\rho_{\cI}:\lre{0, \frac12}\to\lre{0, 1}$ such that for every $\gamma \in [0,\frac{1}{5}]$, there exists a $\gamma$-good merge plan $\cM_{\gamma}$ with
    \begin{equation}\label{eq:lower_bound_value_using_rho}
        \frac{\val{M_{\gamma}}}{\tmst} \ \geq\ \frac{7-5\gamma}{12}\lr{1-\rho_{\cI}\lr\gamma\,\frac{3-7\gamma}{9-7\gamma}}
    \end{equation}
    and
    \begin{equation}\label{eq:upper_bound_opt_using_rho}
     \frac{\opt}{\tmst}\ \leq\ 1-\int_0^{\frac12}\rho_{\cI}\lr\gamma \ d\gamma.
    \end{equation}
\end{theorem}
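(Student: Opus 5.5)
Our plan is to run Zelikovsky's relative greedy procedure as a continuous contraction process, and to read both $\rho_{\cI}$ and the merge plans $\cM_\gamma$ off this single run. Start with $G_0=G$ and repeatedly pick a full component $K$ (connecting terminals $X$) that maximizes the ratio $\dropG{X}/c(K)$ in the current contracted graph, then contract $X$. Index the process by the current best ratio $1/(1-\gamma)$. So at "level" $\gamma$ we contract components whose cost is at most $(1-\gamma)$ times their drop in the current graph. The best available ratio only decreases over time, because contraction lowers drops by \Cref{obs:drop reduces under contraction}. Define $\rho_{\cI}(\gamma)$ as the fraction of $\tmst$ that has been contracted away by the moment all $\gamma$-improving components are gone. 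Formally, $\rho_{\cI}(\gamma)=(\tmst-\tmst_{G_\gamma})/\tmst$, where $G_\gamma$ is the graph after contracting every component of ratio better than $1/(1-\gamma)$. This is non-increasing in $\gamma$ and lies in $[0,1]$.

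\textbf{Bounding $\opt$.} The Steiner tree we obtain is the union of the contracted components together with a terminal MST of the final contracted graph. A component contracted at level $\gamma$ removes some drop $d$ and costs at most $(1-\gamma)d$. Integrating over the level gives
\[
c(\text{tree}) \le \tmst-\int \gamma\, d\big(\tmst-\tmst_{G_\gamma}\big).
\]
Integrating by parts turns this into $\tmst-\int_0^{1/2}\rho_{\cI}(\gamma)\,d\gamma\cdot\tmst$. Levels above $\frac12$ are irrelevant: a ratio better than $2$ is impossible, since $\dropG{X}\le 2\cost(X)$ by the metric 2-approximation. This proves \eqref{eq:upper_bound_opt_using_rho}. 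If needed, the discrete choices can be handled by a limiting or Riemann-sum argument.

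\textbf{Constructing $\cM_\gamma$.} Take $u(x,y)=\frac{7-5\gamma}{12}\dist[G_\gamma]{x,y}$ in the contracted graph $G_\gamma$. On the terminals of a common contracted set, instead use the smaller bound $\frac{7-5\gamma}{18-14\gamma}$ times the original distances along the contracted components. Let $\cM_\gamma=\cM_u$ as in \Cref{lem: max merge plan has value mst and there are defining paths}. Condition \eqref{ieq: general upper bound for gamma good} holds since $u\le\frac{7-5\gamma}{12}\dist{x,y}$; here we use that $\frac{7-5\gamma}{18-14\gamma}\le\frac{7-5\gamma}{12}$ and that contraction only shortens distances. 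For goodness, let $Y$ be $\gamma$-cheap. If $Y$ contains two terminals of a common contracted part, then \eqref{ieq: stronger upper bound for gamma good} holds for them. Otherwise $Y$ stays intact in $G_\gamma$, and we use \Cref{lem: drop in terms of canonical merge plan} with the scaled canonical plan of $G_\gamma$, together with \Cref{lem: Mu is preserved under contraction}. This gives $\val{\cM_\gamma,Y}\ge\frac{7-5\gamma}{12}\drop[G_\gamma]{Y}$. Then $\gamma$-cheapness would produce a $\gamma$-improving component in $G_\gamma$, contradicting the stopping rule. A gap remains here: $\cost$ is measured in $G$ rather than $G_\gamma$. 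We would use that the contracted components have cost at most $(1-\gamma)$ times their drop and can be absorbed; this is delicate.

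\textbf{Value bound, the main obstacle.} By \Cref{lem: max merge plan has value mst and there are defining paths}, $\val{\cM_\gamma}$ is the MST of $R$ with respect to $u$. The non-contracted part contributes $\frac{7-5\gamma}{12}\tmst_{G_\gamma}=\frac{7-5\gamma}{12}(1-\rho_{\cI})\tmst$. The contracted part must contribute roughly $\frac{7-5\gamma}{18-14\gamma}$ times its MST-in-$G$ share $\rho_{\cI}\tmst$, which is exactly what \eqref{eq:lower_bound_value_using_rho} requires, since
\[
\frac{7-5\gamma}{12}\Big(1-\rho\,\frac{3-7\gamma}{9-7\gamma}\Big)=\frac{7-5\gamma}{12}(1-\rho)+\frac{7-5\gamma}{18-14\gamma}\,\rho .
\]
The hard step is to show that the MST for $u$ decomposes into these two pieces: the edges inside contracted parts, priced at the reduced scale, must account for the full drop $\tmst-\tmst_{G_\gamma}$ and not less. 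We would argue via a Kruskal-type exchange comparing the $u$-MST to the concatenation of a terminal MST of $G_\gamma$ with the MST edges dropped by the contractions.
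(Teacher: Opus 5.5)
\textbf{The construction of $\cM_\gamma$ fails.} Your definition of $\rho_{\cI}$ and your bound on $\opt$ match the paper: the cost saving $\sum_i \gamma_i\,\dropi_i(X_i)$ equals $\tmst\int_0^{1/2}\rho_{\cI}$, and levels above $\frac12$ are empty. The problem is your upper bound $u$. The value inequality you call the main obstacle is not merely hard to prove; it is false for your $u$.

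Pricing a cross-part pair by $\frac{7-5\gamma}{12}\dist[G_\gamma]{x,y}$ is the culprit. In $G_\gamma$, a terminal close to one member of a contracted part is equally close to every member. So the $u$-MST can attach all members of a part through an outside terminal and never pay for a within-part edge.

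Concretely, take a Steiner vertex $m$ joined to terminals $a,b,c$ by edges of cost $5$, and a terminal $x$ joined to $a$ by an edge of cost $1$.
\begin{itemize}
\item $\tmst=21$. The greedy contracts $\{a,b,c\}$ with ratio $15/20$ and then stops.
\item Hence for all $\gamma\in[0,\frac15]$ we have $\tmst_{G_\gamma}=1$ and $\rho_{\cI}(\gamma)=\frac{20}{21}$.
\item \eqref{eq:lower_bound_value_using_rho} therefore demands $\val{\cM_\gamma}\ge \frac{7-5\gamma}{12}+20\cdot\frac{7-5\gamma}{18-14\gamma}$, which is about $8.36$ at $\gamma=0$.
\item Your $u$ gives $u(x,a)=u(x,b)=u(x,c)=\frac{7-5\gamma}{12}$. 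The star at $x$ then shows $\val{\cM_u}\le\frac{7-5\gamma}{4}$.
\end{itemize}
No Kruskal exchange can fix this; the choice of $u$ itself must change.

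\textbf{The goodness argument also has two issues.}
\begin{itemize}
\item To refute $\gamma$-cheapness you need the upper bound $\val{\cM_\gamma,Y}\le\frac{7-5\gamma}{12}\drop[G_\gamma]{Y}$, not the lower bound you wrote.
\item With your $u$, that upper bound does not follow from \Cref{obs:monotonicity_value}. $\merge{\cM_u}{x,y}$ can exceed $\frac{7-5\gamma}{6}\merge{G_\gamma}{x,y}$ when the bottleneck path in $G_\gamma$ passes through a contracted part whose entry and exit terminals are far apart in $G$. Such within-part pairs are priced positively by $u$ but cost $0$ in $G_\gamma$.
\end{itemize}
The issue you did flag is harmless: contraction only lowers connection costs, so $\cost_{G_\gamma}(Y)\le\cost(Y)$, which is exactly the direction needed.

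\textbf{The paper's fix.} It avoids the case split by parts and takes a pointwise maximum of scaled canonical merge times of $G$ and $G'=G_\gamma$:
\[
u_\gamma(x,y)=\max\Big\{\tfrac{7-5\gamma}{6}\,\merge{G'}{x,y},\ \tfrac{7-5\gamma}{9-7\gamma}\,\merge{G}{x,y}\Big\}.
\]
This buys three things.
\begin{itemize}
\item Every pair pays at least $\frac{7-5\gamma}{9-7\gamma}\merge{G}{x,y}$. By \Cref{obs: MSTs have same cost wrt u and to merge times}, this immediately gives the term $\frac{7-5\gamma}{18-14\gamma}\tmst$.
\item Goodness is one line. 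If $\val{\cM,X}>\frac{7-5\gamma}{6}\val{G',X}$, monotonicity yields a pair $x,y\in X$ whose merge time exceeds the first term of the maximum. The second term then forces \eqref{ieq: stronger upper bound for gamma good}.
\item The remaining $\big(\frac{7-5\gamma}{12}-\frac{7-5\gamma}{18-14\gamma}\big)\tmst_{G'}$ comes from a lexicographic $(u_\gamma,\textnormal{merge}_G)$-minimum spanning tree. Such a tree contains a spanning tree of $G'$ consisting of edges with $\merge{G}{x,y}=\merge{G'}{x,y}$ (\Cref{lem: existence of special MST wrt c bar}).
\end{itemize}
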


Observe that the lower bound \eqref{eq:lower_bound_value_using_rho} on the value of our merge plan gets weaker, if $\rho_{\cI}(\gamma)$ is large.
However, if this is the case for all (or many) values of $\gamma$, then \eqref{eq:upper_bound_opt_using_rho} is a strong upper bound on $\opt$.
In fact, \Cref{thm: there is a good merge plan with rho} is a strengthening of \Cref{thm: there is a gamma good merge plan with high value}.
Indeed, by \eqref{eq:upper_bound_opt_using_rho}, for every $\gamma^* \in [0,\frac{1}{5}]$, we have
\[
\tmst-\opt \ \geq \ \tmst \cdot \int_0^{\frac12}\rho_{\cI}\lr{\gamma}\ d\gamma
\ \geq\  \tmst \cdot \int_0^{\gamma^*}\rho_{\cI}\lr{\gamma}\ d\gamma \ \geq\ \gamma^* \cdot \rho_{\cI}(\gamma^*) \cdot \tmst,
\]
which together with \eqref{eq:lower_bound_value_using_rho} implies
\[
 \val{\cM_{\gamma^*}}\ \geq\ \frac{7-5\gamma^*}{12}\lr{\tmst - \frac{\tmst -\opt}{\gamma^*} \cdot \frac{3-7\gamma^*}{9-7\gamma^*}}\,.
\]
As explained in \Cref{sec:outline_merge_plan}, combining \Cref{thm: there is a gamma good merge plan with high value} with \Cref{cor: gamma good merge plan gives large dual} readily implies an upper bound on the integrality gap of \ref{eq:bcr} that is smaller than $2$. 
However, to prove our  upper bound of $\finalratio$ on the integrality gap, we will directly work with \Cref{thm: there is a good merge plan with rho} instead of \Cref{thm: there is a gamma good merge plan with high value}.

We will prove \Cref{thm: there is a good merge plan with rho} in \Cref{sec: bounding opt,sec: constructing large gamma good merge plan}. 
First, in \Cref{sec: bounding opt}, we use Zelikovsky's relative greedy algorithm~\cite{ImprovedRelGreedy} to  define a function $\rho_{\cI}$ satisfying \eqref{eq:upper_bound_opt_using_rho}.
Then, in \Cref{sec: constructing large gamma good merge plan}, we construct a $\gamma$-good merge plan satisfying \eqref{eq:lower_bound_value_using_rho}. 
Finally, in \Cref{sec: 1.9 in general} we use \Cref{thm: there is a good merge plan with rho} to prove our upper bound on the integrality gap of \ref{eq:bcr} (\Cref{thm:main_gap}) follows from \Cref{thm: there is a good merge plan with rho}.

\subsection{Defining the Function $\rho_{\cI}$ and Bounding \opt}\label{sec: bounding opt}

We consider a fixed instance $\cI$, for which we will define the function $\rho_{\cI}\lr{\gamma}$ and show that it satisfies~\eqref{eq:upper_bound_opt_using_rho}.
Because the instance $\cI$ is fixed, we omit the index $\cI$ in the following.

We construct a Steiner tree by an (exponential-time) relative greedy algorithm.
Starting with our initial graph $G\eqqcolon G_1$, we choose a terminal set $X_1$ and a full component $K_1$ that connects $X_1$ such that $\frac{c\lr{K_1}}{\drop[G_1]{X_1}}$ is minimized. We contract $X_1$ to obtain a new graph $G_2$. We iterate this until we obtain an MST-optimal instance. This yields sets $X_1, \dots, X_j$, components $K_1, \dots, K_j$, and graphs $G_1, \dots, G_{j+1}$.
This algorithm is essentially the same as Zelikovsky's algorithm~\cite{ImprovedRelGreedy}, 
but because our goal is an upper bound on the integrality gap rather than an algorithmic result, we do not need to restrict ourselves to components of constant size.

For $i\in \{1,\dots, j\}$ and $X\subseteq R$, we write $\dropi_i(X) \coloneqq \dropi_{G_i}(X)$.
Then we observe that the ratios $\frac{c\lr{K_i}}{\drop[G_i]{X_i}}$ are non-decreasing and always between $\frac{1}{2}$ and $1$.

\begin{observation}\label{obs:goodness_decreasing}
    $$
        \frac12\leq\frac{c\lr{K_1}}{\dropi_1(X_1)}\leq\dots\leq\frac{c\lr{K_j}}{\dropi_j(X_j)}<1
    $$
\end{observation}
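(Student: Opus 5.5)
The statement has three parts: the lower bound $\frac12$, the upper bound $<1$, and monotonicity of the ratios. I plan to treat each separately.

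\emph{Upper bound.} This follows from the stopping rule. The algorithm picks $X_i,K_i$ only while $G_i$ is not MST-optimal, i.e.\ while some improving component exists. By definition, an improving component $K$ connecting $X$ satisfies $c(K)<\dropi_i(X)$. Any component can be pruned to a tree, and a tree splits into full components $K^1,\dots,K^\ell$ connecting sets $Y^1,\dots,Y^\ell$ with $\sum_\ell \dropi_i(Y^\ell)\ge \dropi_i(X)$. This inequality holds because contracting the $Y^\ell$ one after another contracts $X$, and by \Cref{obs:drop reduces under contraction} each successive drop is at most the drop in $G_i$. Hence some full component has ratio $<1$, and so the minimizing ratio satisfies $\frac{c(K_i)}{\dropi_i(X_i)}<1$.

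\emph{Lower bound.} For any full component $K$ connecting $X$, I will show $\dropi_i(X)\le 2c(K)$. Doubling the tree $K$ and shortcutting gives a spanning tree on $X$ of cost at most $2c(K)$ in the metric closure. Adding this tree to an MST of $G_i/X$ yields a terminal spanning tree of $G_i$. Therefore $\tmst_{G_i}\le \tmst_{G_i/X}+2c(K)$, which is exactly the claim.

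\emph{Monotonicity.} This is the main step. I need $\frac{c(K_{i+1})}{\dropi_{i+1}(X_{i+1})}\ge \frac{c(K_i)}{\dropi_i(X_i)}$. Let $r_i$ denote the ratio at step $i$, and suppose $r_{i+1}<r_i$. I would use a drop-additivity identity:
\[
\dropi_i(X_i\cup X_{i+1})=\dropi_i(X_i)+\dropi_{i+1}(X_{i+1}),
\]
since $G_{i+1}=G_i/X_i$ and contracting $X_{i+1}$ afterwards equals contracting $X_i\cup X_{i+1}$ (after identifying $X_i$ with a single vertex). By \Cref{obs:drop reduces under contraction}, $\dropi_{i+1}(X_{i+1})\le\dropi_i(X_{i+1})$. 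If $K_{i+1}$ is also a full component in $G_i$, i.e.\ it avoids the contracted vertex, then its ratio in $G_i$ is at most $r_{i+1}<r_i$, contradicting the minimality of $r_i$.

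The real obstacle is the case where $X_{i+1}$ contains the contracted node $X_i$. Then $K_{i+1}$ lifts in $G_i$ to a component $K'$ connecting $X_{i+1}'\cup\{x\}$ for some $x\in X_i$. I would then consider $K_i\cup K'$, which connects $X_i\cup X'_{i+1}$ and has cost $c(K_i)+c(K_{i+1})$. Its drop is $\dropi_i(X_i)+\dropi_{i+1}(X_{i+1})$ by the identity above, so its ratio is a mediant of $r_i$ and $r_{i+1}$, hence strictly less than $r_i$. This union need not be a full component, but I would split it into full components at terminals. By the splitting argument from the upper-bound part, together with subadditivity of drop over this split (again via \Cref{obs:drop reduces under contraction}), one of these pieces has ratio $<r_i$ in $G_i$. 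That contradicts the minimality of $r_i$ and gives monotonicity.
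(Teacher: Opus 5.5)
Your proof is correct. The bounds $\frac12$ and $<1$ are argued as in the paper: doubling and shortcutting for the lower bound, and for the upper bound you inline the splitting argument that the paper proves separately as \Cref{obs: improving full comp}. Your first monotonicity case is also the paper's argument. Your second case, however, takes an unnecessary detour.

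The lift $K'$ of $K_{i+1}$ is already a full component of $G_i$. The contracted node is a terminal, so it is a leaf of $K_{i+1}$ with a single incident edge, and in $G_i$ that edge ends at some $x\in X_i$. Moreover, $K'$ has the same cost as $K_{i+1}$. Its terminal set $Y=X'_{i+1}\cup\{x\}$ becomes $X_{i+1}$ after contracting $X_i$, so by \Cref{obs:drop reduces under contraction}
\[
\dropi_i(Y)\ \ge\ \dropi_{G_i/X_i}(Y)\ =\ \dropi_{i+1}(X_{i+1}).
\]
Hence the one-line argument of your first case applies verbatim and gives $r_i\le c(K_{i+1})/\dropi_i(Y)\le r_{i+1}$ directly. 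This is how the paper treats both cases at once: every full component of $G_{i+1}$ already exists in $G$ with the same cost, and drop only decreases under contraction.

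Your mediant route is also valid: take the union with $K_i$, use the additivity $\dropi_i(X_i\cup X'_{i+1})=\dropi_i(X_i)+\dropi_{i+1}(X_{i+1})$, then split into full pieces. But it buys nothing here. It would only be needed if the lifted component could fail to be full, which cannot happen because terminals are always leaves of full components.
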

\begin{proof}
Each \emph{full} component that exists in some $G_i$, already existed in $G$ and has the same cost in both graphs. Furthermore, by \Cref{obs:drop reduces under contraction}, the drop of a terminal set can only decrease by contracting components.
Thus, by the choice of our components and terminal sets, for all $i\in \{1,\dots, j-1\}$, we have
$$
\frac{c\lr{K_i}}{\dropi_{i}\lr{X_i}}
    \ \leq\ \frac{c\lr{K_{i+1}}}{\dropi_{i}\lr{X_{i+1}}}
    \ \leq\ \frac{c\lr{K_{i+1}}}{\dropi_{{i+1}}\lr{X_{i+1}}}.
$$
Moreover, the component $K_1$ connecting terminals $X_1$ cannot be cheaper than half the cost of a spanning tree on $X_1$ in the metric closure (because we obtain such a spanning tree by doubling all edges of $K_1$ and short-cutting an Eulerian walk in the resulting graph; similar to the proof of \Cref{prop:simple_lb}). 
The cost of such a spanning tree is at least $\drop[1]{X_1}$. 
Thus, the ratio $\frac{c\lr{K_1}}{\drop[1]{X_1}}$ is at least $\frac12$. Because we stop the algorithm when the remaining graph is MST-optimal, the last component $K_j$ for which we contracted the set $X_j$ was an improving component and thus $\frac{c\lr{K_j}}{\dropi_j(X_j)}<1$.
\end{proof}

\Cref{obs:goodness_decreasing} implies that for every $\gamma \in [0,\frac{1}{5}]$, there is an index $j(\gamma)$ such that the component $K_i$ is $\gamma$-improving with respect to $G_i$ if and only if $i\leq j(\gamma)$.
We define $\rho\lr{\gamma}$ to be the fraction of $\tmst$ that vanished due to contracting these $\gamma$-improving components. 
More precisely, $j(\gamma)$ is the last index such that $\frac{c\lr{K_{j({\gamma})}}}{\dropi_{j(\gamma)}(X_{j(\gamma)})} < 1-\gamma$ and we define
\begin{equation}\label{eq:definition_rho}
    \rho\lr{\gamma}\ \coloneqq\ \frac{\tmst - \tmst_{G_{j(\gamma)}}}{\tmst} \ =\ \frac1{\tmst}\sum_{1\leq i\leq j({\gamma})}\dropi_i(X_i). 
\end{equation}

The function $\rho$ is non-increasing. 
Moreover, it allows us to bound the cost of an optimal Steiner tree, as we show next.
\begin{lemma}\label{lem: opt with integral}
    The function $\rho$ satisfies \eqref{eq:upper_bound_opt_using_rho}.
\end{lemma}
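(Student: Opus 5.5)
We need to show $\opt \le \tmst - \tmst\int_0^{1/2}\rho(\gamma)\,d\gamma$. The plan is to bound $\opt$ by the cost of the Steiner tree produced by the relative greedy procedure, then rewrite that cost via a layer-cake identity.

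The Steiner tree is the union of $K_1,\dots,K_j$ together with a terminal MST of the final MST-optimal graph $G_{j+1}$. By the contraction observation from the overview (union of a component connecting $X$ and a terminal MST of $G/X$), iterating gives
\[
\opt \ \le\ \sum_{i=1}^{j} c(K_i) + \tmst_{G_{j+1}}.
\]
Telescoping $\tmst_{G_{i+1}} = \tmst_{G_i} - \dropi_i(X_i)$ gives $\tmst_{G_{j+1}} = \tmst - \sum_{i=1}^j \dropi_i(X_i)$. Writing $r_i \coloneqq c(K_i)/\dropi_i(X_i)$, we therefore get
\[
\opt \ \le\ \tmst - \sum_{i=1}^j (1-r_i)\,\dropi_i(X_i).
\]

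It remains to express $\sum_i (1-r_i)\dropi_i(X_i)$ as $\tmst\int_0^{1/2}\rho(\gamma)\,d\gamma$. By \Cref{obs:goodness_decreasing}, $r_i\in[\frac12,1)$, so $1-r_i\in(0,\frac12]$ and
\[
1-r_i=\int_0^{1/2}\mathbbm{1}[\gamma<1-r_i]\,d\gamma=\int_0^{1/2}\mathbbm{1}[r_i<1-\gamma]\,d\gamma.
\]
Because the $r_i$ are non-decreasing, $r_i<1-\gamma$ holds exactly when $i\le j(\gamma)$, where $j(\gamma)$ is the last index with $r_{j(\gamma)}<1-\gamma$, or $0$ if none exists. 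Swapping the finite sum and the integral then yields
\[
\sum_i (1-r_i)\dropi_i(X_i)=\int_0^{1/2}\sum_{i\le j(\gamma)}\dropi_i(X_i)\,d\gamma=\tmst\int_0^{1/2}\rho(\gamma)\,d\gamma,
\]
using the second expression in \eqref{eq:definition_rho}. Combining this with the bound on $\opt$ gives \eqref{eq:upper_bound_opt_using_rho}.

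The main obstacle is bookkeeping rather than substance. First, $\rho$ must be read for all $\gamma\in[0,\frac12]$, not only $[0,\frac15]$; \Cref{obs:goodness_decreasing} makes $j(\gamma)$ well defined there, with $j(\gamma)=0$ and $\rho=0$ when no component qualifies. Second, the middle expression in \eqref{eq:definition_rho} should read $G_{j(\gamma)+1}$ so that it agrees with the sum; I will use the sum form. Finally, the union of the $K_i$ with the final MST, lifted back to $G$, must be a connected subgraph spanning $R$. This holds by induction on contractions, since each $K_i$ is a full component in $G$ and costs are unchanged.
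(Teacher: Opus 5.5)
Your proposal is correct and follows the paper's proof essentially step by step. You bound $\opt$ by the relative-greedy Steiner tree, telescope the drops to get $\opt \le \tmst - \sum_i (1-r_i)\,\dropi_i(X_i)$, write $1-r_i=\int_0^{1/2}\mathbbm{1}[\gamma<1-r_i]\,d\gamma$ using $r_i\ge\frac12$, and swap sum and integral. Your side remark is also right: the middle expression in \eqref{eq:definition_rho} should involve $G_{j(\gamma)+1}$, consistent with the later choice $G'=G_{j(\gamma)+1}$, and the proof only ever uses the sum form.
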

\begin{proof}
For $\alpha, \gamma \in \mathbb{R}$, we write $\mathbbm{1}[\gamma < \alpha]$ to denote the indicator variable that is equal to $1$ if $\gamma < \alpha$ and equal to $0$ otherwise.
The union of a terminal MST in the final graph $G_{j+1}$ together with the contracted components $K_1,\dots, K_j$ is a Steiner tree for the original instance $\cI$.
Therefore,
\begin{align*}
    \opt&\leq\tmst_{G_{j+1}}+\sum_{1\leq i\leq j}c\lr{K_i}\\
    &=\tmst_{G_{j+1}}+\sum_{1\leq i\leq j}\Big(\dropi_i(X_i)-\big(\dropi_i(X_i)-c\lr{K_i}\big)\Big)\\
    &=\tmst_{G_{j+1}}+\sum_{1\leq i\leq j}\lr{\tmst_{G_i}-\tmst_{G_{i+1}}}-\sum_{1\leq i\leq j}\big(\dropi_i(X_i)-c\lr{K_i}\big)\\
    &=\tmst-\sum_{1\leq i\leq j}\lr{1-\frac{c\lr{K_i}}{\dropi_i(X_i)}}\cdot\dropi_i(X_i)\\
    &=\tmst-\sum_{1\leq i\leq j}\int_0^{\frac12}\mathbbm{1}\left[\gamma<1-\frac{c\lr{K_i}}{\dropi_i(X_i)}\right]\,d\gamma\cdot\dropi_i(X_i)\\
    &=\tmst-\int_0^{\frac12}\sum_{1\leq i\leq j}\mathbbm{1}\left[\gamma<1-\frac{c\lr{K_i}}{\dropi_i(X_i)}\right]\cdot\dropi_i(X_i)\ d\gamma\\
    &=\tmst-\int_0^{\frac12}\sum_{1\leq i\leq j\lr\gamma}\dropi_i(X_i)\ d\gamma\\
    &=\tmst-\int_0^{\frac12}\tmst\cdot\rho\lr\gamma\,d\gamma,
\end{align*}
where we used \eqref{eq:definition_rho} and the definition of the index $j(\gamma)$ in the last equations.
\end{proof}

To prove \Cref{thm: there is a good merge plan with rho}, it remains to show that for every $\gamma\in [0,\frac{1}{5}]$, there exists a $\gamma$-good merge plan with sufficiently large value, that is, a merge plan satisfying \eqref{eq:lower_bound_value_using_rho}.

\subsection{Existence of a $\gamma$-Good Merge Plan}\label{sec: constructing large gamma good merge plan}

In this section, we complete the proof of \Cref{thm: there is a good merge plan with rho} by showing that there exists a merge plan that is $\gamma$-good and has a large value. We do this using the tools from \Cref{sec:merge_plan}. Namely, we will construct an upper bound $u_{\gamma}$ and show that every merge plan that is upper bounded by $u_{\gamma}$ is $\gamma$-good (\Cref{lem: what we construct is actually gamma good}), and the cost of a terminal MST with respect to $u_{\gamma}$ is large~(\Cref{lem: MST wrt u is large}). By \Cref{lem: max merge plan has value mst and there are defining paths}, the latter implies the existence of a merge plan that is upper bounded by $u_{\gamma}$ and has large value.

We fix $\gamma \in [0,\frac{1}{5}]$.
By construction, the graph $G'\coloneqq G_{j(\gamma)+1}$ does not contain any $\gamma$-improving components.
This will allow us to establish a sufficient condition for a merge plan (for $G$) being $\gamma$-good.
To this end, we use the following notation: if $\cM_G$ is the canonical merge plan of $G$, we define
\begin{align*}
    \merge{G}{x,y}\coloneqq\merge{\cM_G}{x,y},\quad
    \val{G; X}\coloneqq\val{\cM_G, X},\quad
    \val{G}=\val{\cM_G},
\end{align*}
for $x,y\in R$ and $X\subseteq R$. The values $\merge{G'}{x,y}$, $\val{G', X}$, and $\val{G'}$ are defined analogously.

We define the upper bound $u_{\gamma}:\binom{R}{2}\to\R+$ by
\begin{align}
    u_{\gamma}\lr{\lrb{x, y}}=\max\lrb{\frac{7-5\gamma}{6}\,\merge{G'}{x, y},\ \frac{7-5\gamma}{9-7\gamma}\,\merge{G}{x, y}}\qquad\text{for all }x, y\in R\,.\label{ieq: definition of upper bound for gamma good merge plan}
\end{align}

This upper bound is chosen such that it implies $\gamma$-goodness of any merge plan upper bounded by $u_{\gamma}$.
To prove this, we need the following observation.

\begin{observation}\label{obs:monotonicity_value}
    Let $\cM_1$ and $\cM_2$ be two merge plans, let $X$ be a terminal set, and let $\alpha\in\R_{\geq 0}$. 
    If $\merge{\cM_1}{x,y}\leq\alpha\cdot\merge{\cM_2}{x, y}$ for all $x,y\in X$, then $\val{\cM_1, X}\leq\alpha\cdot\val{\cM_2,X}$.
\end{observation}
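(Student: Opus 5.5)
The plan is to compare the two integrands of the local values pointwise in time, after first reducing to the merge plans ``restricted'' to $X$. For a merge plan $\cM=\lr{\cS^t}_{t\geq0}$ and a nonempty terminal set $X$, the quantity $f_{\cM}(t)\coloneqq\lrv{\lrb{S\in\cS^t : S\cap X\neq\emptyset}}$ depends only on the merge times $\merge{\cM}{x,y}$ for $x,y\in X$. Indeed, two terminals $x,y\in X$ lie in the same part of $\cS^t$ if and only if $t>\merge{\cM}{x,y}$. So $f_{\cM}(t)$ is the number of classes of the equivalence relation on $X$ given by ``$t>\merge{\cM}{x,y}$''. By \Cref{def: merge plan} this relation is transitive, since it is the restriction of the partition $\cS^t$ to $X$.

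The key step is to show $f_{\cM_1}(\alpha t)\leq f_{\cM_2}(t)$ for all $t\geq 0$ (when $\alpha>0$). Suppose $x,y\in X$ are in the same part of $\cS_2^t$, so $t>\merge{\cM_2}{x,y}$. Then $\alpha t>\alpha\,\merge{\cM_2}{x,y}\geq\merge{\cM_1}{x,y}$, so $x$ and $y$ are in the same part of $\cS_1^{\alpha t}$. Hence the partition of $X$ induced by $\cS_2^t$ refines the partition induced by $\cS_1^{\alpha t}$, and the latter has at most as many classes. This gives the pointwise inequality.

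It remains to integrate, using the substitution $s=\alpha t$:
\[
\val{\cM_1,X}=\int_0^\infty \lr{f_{\cM_1}(s)-1}\,ds=\alpha\int_0^\infty\lr{f_{\cM_1}(\alpha t)-1}\,dt\leq\alpha\int_0^\infty\lr{f_{\cM_2}(t)-1}\,dt=\alpha\,\val{\cM_2,X}.
\]
The edge case $\alpha=0$ is handled separately. Then all merge times within $X$ under $\cM_1$ are $0$, so $f_{\cM_1}(s)=1$ for all $s>0$, and $\val{\cM_1,X}=0$.

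I do not expect a real obstacle here. The only point needing care is justifying that $f_{\cM}(t)$ equals the number of classes of the induced equivalence on $X$. This uses the strict-inequality convention in \Cref{def: merge plan}: $x$ and $y$ are separated if and only if $t\leq\merge{\cM}{x,y}$. I would track strict versus non-strict inequalities carefully there, although they only affect a measure-zero set of times anyway.
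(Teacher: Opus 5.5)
Your proof is correct and is essentially the paper's argument: the partition of $X$ at time $t$ under $\cM_2$ refines the one at time $\alpha t$ under $\cM_1$, so the integrands compare pointwise after rescaling time by $\alpha$. Your version is slightly more careful than the paper's in two places. The paper asserts the refinement for the full partitions $\cS_2^t$ and $\cS_1^{\alpha t}$, although the hypothesis only concerns pairs in $X$; you restrict to $X$. The paper also does not treat $\alpha=0$ separately, and you do.
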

\begin{proof}
    Let $\cM_1=\lr{\cS_1^t}_{t\geq0}$ and $\cM_2=\lr{\cS_2^t}_{t\geq0}$.
    If two terminals $x$ and $y$ are merged in $\cM_2$ at time $t$, then they are merged  in $\cM_1$ no later than at time $\alpha\cdot t$.
    Hence, for any $t\geq0$ the partition $\cS_2^t$ is a refinement of the partition $\cS_1^{\,\alpha \cdot t}$.
    Consequently, the set $X$ cannot intersect more parts of $\cS_1^{\alpha\cdot t}$ than it intersects parts of $\cS_2^t$. By the definition of the local value (\Cref{def: local value of merge plan}), this implies
    \begin{align*}
        \val{\cM_1, X}\leq\alpha\cdot\val{\cM_2, X}\,.
    \end{align*}
\end{proof}

\begin{lemma}\label{lem: what we construct is actually gamma good}
    Every merge plan that is upper bounded by $u_{\gamma}$ is $\gamma$-good.
\end{lemma}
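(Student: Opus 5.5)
Let $\cM$ be a merge plan upper bounded by $u_\gamma$. We verify the two conditions of \Cref{def: gamma good} separately.

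\emph{Condition \eqref{ieq: general upper bound for gamma good}.} We bound each term in the maximum \eqref{ieq: definition of upper bound for gamma good merge plan}. The canonical merge plans satisfy $\merge{G}{x,y}\le\frac12\dist[G]{x,y}$ and $\merge{G'}{x,y}\le\frac12\dist[G']{x,y}\le\frac12\dist[G]{x,y}$. The second inequality holds because $G'$ arises from $G$ by contraction, so distances can only shrink. Hence the first term is at most $\frac{7-5\gamma}{12}\dist{x,y}$. The second term is at most $\frac{7-5\gamma}{2(9-7\gamma)}\dist{x,y}\le\frac{7-5\gamma}{12}\dist{x,y}$, using $9-7\gamma\ge 6$.

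\emph{Condition \eqref{ieq: stronger upper bound for gamma good}.} Let $X\subseteq R$ be $\gamma$-cheap with respect to $\cM$, i.e.\ $\cost(X)<\frac{12-12\gamma}{7-5\gamma}\val{\cM,X}$. We argue by contrapositive: assume every pair $x\neq y$ in $X$ has $\merge{\cM}{x,y}>\frac{7-5\gamma}{18-14\gamma}\dist{x,y}\ge\frac{7-5\gamma}{9-7\gamma}\merge{G}{x,y}$. Since $\cM$ is upper bounded by $u_\gamma$, this forces the maximum in $u_\gamma$ to be attained by the first term, so $\merge{\cM}{x,y}\le\frac{7-5\gamma}{6}\merge{G'}{x,y}$ for all $x,y\in X$. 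By \Cref{obs:monotonicity_value} with $\cM_1=\cM$, $\cM_2=\cM_{G'}$, we get $\val{\cM,X}\le\frac{7-5\gamma}{6}\val{G';X}$.

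By \Cref{lem: drop in terms of canonical merge plan} applied in $G'$, this equals $\frac{7-5\gamma}{12}\drop[G']{X}$ (with $X$ read as its image in $G'$). Thus $\cost(X)<(1-\gamma)\drop[G']{X}$. A cheapest component connecting $X$ in $G$ maps to a component connecting the image of $X$ in $G'$ of no larger cost, so $\cost_{G'}(X)<(1-\gamma)\drop[G']{X}$. This yields a $\gamma$-improving component in $G'$.

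The main obstacle is that the construction in \Cref{sec: bounding opt} forbids only $\gamma$-improving \emph{full} components in $G'=G_{j(\gamma)+1}$, whereas we have produced an arbitrary component. We reduce to the full case by decomposing the improving component into full components and contracting them one at a time. Drops are subadditive along this sequence, i.e.\ $\drop[G']{X}\le\sum_i \drop[G'/(\cdot)]{Y_i}$ by telescoping together with \Cref{obs:drop reduces under contraction}, while the costs add up. Hence some full component $K_i$ satisfies $c(K_i)<(1-\gamma)\drop{Y_i}$ in an intermediate contraction, and then also $c(K_i)<(1-\gamma)\drop[G']{Y_i}$. This contradicts the stopping rule defining $j(\gamma)$, which uses the minimum ratio over full components. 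A secondary point is the case where $X$ intersects contracted sets, so that its image in $G'$ has fewer terminals: if two terminals of $X$ were identified, then $\merge{G'}{x,y}=0$ and hence $\merge{\cM}{x,y}\le u_\gamma=\frac{7-5\gamma}{9-7\gamma}\merge{G}{x,y}$, so the desired pair already exists.
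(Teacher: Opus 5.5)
Your proof is correct and is essentially the paper's argument in contrapositive form. The paper splits on whether $\val{\cM, X}\leq\frac{7-5\gamma}{12}\,\dropGp{X}$, and in the other case uses \Cref{obs:monotonicity_value} together with the upper bound $u_{\gamma}$ to extract a pair $x,y$ with $\merge{\cM}{x,y}\leq\frac{7-5\gamma}{18-14\gamma}\,\dist{x,y}$; these are exactly your steps, and you additionally check condition \eqref{ieq: general upper bound for gamma good} and the reduction from arbitrary to full $\gamma$-improving components in $G'$ (as in \Cref{obs: improving full comp}), both of which the paper's proof leaves implicit.
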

\begin{proof}
    Let $\cM$ be a merge plan that is upper bounded by $u_{\gamma}$.
    Let $K$ be a component that connects a set $X$ of terminals. 
    By the choice of $j(\gamma)$ and the greedy choice of components, 
    we have $c\lr{K}\geq\lr{1-\gamma}\dropGp{X}$. 
    We distinguish two cases: if $\val{\cM, X}\leq\frac{7-5\gamma}{12}\,\drop[G']{X}$, then
    $$
        c\lr{K}\geq\lr{1-\gamma}\dropGp{X}\geq\lr{1-\gamma}\frac{12}{7-5\gamma}\val{\cM, X}\,,
    $$
    and thus $K$ is $\gamma$-expensive with respect to $\cM$.\\
    Otherwise, if $\val{\cM, X}>\frac{7-5\gamma}{12}\,\drop[G']{X}=\frac{7-5\gamma}{6}\,\val{G', X}$ by \Cref{lem: drop in terms of canonical merge plan}.
    Thus, by \Cref{obs:monotonicity_value}, there are terminals $x, y\in X$ with $\merge{\cM}{x, y}>\frac{7-5\gamma}{6}\merge{G'}{x, y}$.
    Using that $\cM$ is upper bounded by $u_{\gamma}$, this implies
    $$
        \merge{\cM}{x, y}\ \leq\ \frac{7-5\gamma}{9-7\gamma}\,\merge{G}{x, y}\ \leq\ \frac{7-5\gamma}{18-14\gamma}\,\dist{x, y}\,.
    $$
    We conclude that $\cM$ is $\gamma$-good.
\end{proof}

\begin{lemma}\label{lem: MST wrt u is large}
    Let $\tmst_{u_{\gamma}}$ be the value of a terminal MST with respect to $u_{\gamma}$. Then
    $$
        \tmst_{u_{\gamma}}\geq\lr{\frac{7-5\gamma}{12}-\frac{7-5\gamma}{18-14\gamma}}\tmst_{G'}+\frac{7-5\gamma}{18-14\gamma}\tmst_G\,.\label{eq:mst_inequality_to_prove}
    $$
\end{lemma}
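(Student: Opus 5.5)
Write $\alpha\coloneqq\frac{7-5\gamma}{6}$ and $\beta\coloneqq\frac{7-5\gamma}{9-7\gamma}$. The plan is to bound the MST with respect to $u_\gamma$ using the max in \eqref{ieq: definition of upper bound for gamma good merge plan}, which dominates any convex combination of its two terms. Hence $u_\gamma \geq \lambda\,\alpha\,\merge{G'}{\cdot} + (1-\lambda)\,\beta\,\merge{G}{\cdot}$ pointwise for every $\lambda\in[0,1]$. The MST cost is monotone in the edge costs, and the MST of a sum of cost functions is at least the sum of the separate MSTs, because the optimal tree for the sum is feasible for each summand. Therefore
\[
\tmst_{u_\gamma}\ \geq\ \lambda\alpha\,\mst_{\merge{G'}{\cdot}} + (1-\lambda)\beta\,\mst_{\merge{G}{\cdot}}.
\]

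Next I identify the two MSTs using \Cref{obs: MSTs have same cost wrt u and to merge times}. For $u=\frac12\dist[G]{\cdot}$, the MST with respect to $\merge{G}{\cdot}$ has cost $\val{\cM_G}=\frac12\tmst_G$ by \Cref{cor: canonical merge plan}. The same argument for $G'$ needs care, since $G'$ is a contracted graph and the terminal set shrinks. I would view $\merge{G'}{x,y}$ as a function on pairs of original terminals, namely the merge time of the images of $x$ and $y$, which is $0$ if both lie in the same contracted set. By \Cref{lem: Mu is preserved under contraction}, $\cM_{G'}$ is a contraction of $\cM_G$. An MST on $R$ with respect to these lifted merge times then consists of zero-cost edges inside each contracted class plus an MST of the contracted instance, so its cost is $\val{\cM_{G'}}=\frac12\tmst_{G'}$. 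This gives
\[
\tmst_{u_\gamma}\ \geq\ \frac{\lambda\alpha}{2}\tmst_{G'}+\frac{(1-\lambda)\beta}{2}\tmst_G.
\]

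Finally I choose $\lambda$ to match the target coefficients. The coefficient of $\tmst_G$ should be $\frac{7-5\gamma}{18-14\gamma}=\beta/2$, which forces $\lambda=0$ and is useless. So the target is really an affine combination with a nonnegative weight on $\tmst_{G'}$, and a convex combination cannot produce it. A pointwise sum is needed instead: $u_\gamma \geq \beta\,\merge{G}{\cdot} + (\alpha-\beta)\,\merge{G'}{\cdot}$, with $\alpha\ge\beta$ since $9-7\gamma\ge 6$ for $\gamma\le\frac15$. This fails pointwise whenever $\merge{G'}{}>0$, so a finer argument is required. I would use the threshold representation $\mst_u=\int_0^\infty(\#\text{components of } \{e: u(e)<\tau\} - 1)\,d\tau$, which is the value formula in \Cref{lem: max merge plan has value mst and there are defining paths}. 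The graph $\{u_\gamma<\tau\}$ equals $\{\merge{G'}{}<\tau/\alpha\}\cap\{\merge{G}{}<\tau/\beta\}$. Its number of components is at least the number of components of either graph alone. It is also at least the number of components of $\{\merge{G}{}<\tau/\beta\}$ plus the number of components of $\{\merge{G'}{}<\tau/\alpha\}$ minus the number of components of $\{\merge{G'}{}<\tau/\beta\}$. This is a submodularity or refinement count: the classes for $G'$ are coarsenings of those for $G$, and by \Cref{lem: Mu is preserved under contraction} the classes for $G$ are the $G'$ classes refined within contracted groups. Integrating over $\tau$ gives $\beta\val{G}+\alpha\val{G'}-\beta\val{G'}$. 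Since $\val{G}=\frac12\tmst_G$, this equals exactly $\frac{\beta}{2}\tmst_G+\frac{\alpha-\beta}{2}\tmst_{G'}$, which is the claim.

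The main obstacle is verifying this component-count inequality rigorously, i.e. that refining by the $G$-partition adds at least as many parts as it adds at the coarser $G'$-level.
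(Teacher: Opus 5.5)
Your threshold argument is correct and takes a genuinely different route from the paper's. The inequality you flag as the main obstacle is true and closes quickly, but it rests on one structural identity rather than on a generic refinement principle.

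\textbf{Closing the step you flagged.} Write $m_G(x,y)=\merge{G}{x,y}$ and $m_{G'}(x,y)=\merge{G'}{x,y}$, the latter lifted to $R$ with value $0$ inside contracted classes. Let $A_\tau$, $B_\tau$, $C_\tau$ be the partitions of $R$ induced by $\{m_G<\tau/\beta\}$, $\{m_{G'}<\tau/\alpha\}$ and $\{m_{G'}<\tau/\beta\}$. Merge times are ultrametrics, so these are equivalence relations, and $\{u_\gamma<\tau\}$ induces $A_\tau\wedge B_\tau$.
\begin{itemize}
\item By \Cref{lem: Mu is preserved under contraction}, the lifted $G'$-partition at time $s$ is the $G$-partition at time $s$ joined with the partition $\Pi$ of $R$ into contracted classes.
\item Since $\alpha\geq\beta$, the $G$-partition at time $\tau/\alpha$ refines $A_\tau$.
\item Hence $C_\tau=A_\tau\vee B_\tau$.
\end{itemize}
Your inequality is therefore $|A\wedge B|+|A\vee B|\geq|A|+|B|$, which holds for any two partitions. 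To see it, build the bipartite multigraph whose vertices are the parts of $A$ and of $B$, with one edge for each nonempty $a\cap b$. It has $|A|+|B|$ vertices, $|A\wedge B|$ edges and $|A\vee B|$ connected components, and any graph has at least (vertices minus components) edges.

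Your informal version --- ``refining by the $G$-partition adds at least as many parts at the finer level as at the coarser level'' --- is false for arbitrary nested partitions $B$ finer than $C$; take $A$ to be all singletons. What makes it work here is exactly the identity $C_\tau=A_\tau\vee B_\tau$. With it, integrating gives $\frac{\beta}{2}\tmst_G+\frac{\alpha-\beta}{2}\tmst_{G'}$, as you state.

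\textbf{Comparison with the paper.} The paper argues edge by edge.
\begin{itemize}
\item Using the lexicographic cost $c^*=M\,u_\gamma+m_G$ and cycle-property arguments (\Cref{lem: existence of special MST wrt c bar}), it picks a $u_\gamma$-MST $T$ containing a spanning tree $T'$ of $G'$ all of whose edges satisfy $m_G=m_{G'}$.
\item On these edges, the pointwise bound $u_\gamma\geq\beta m_G+(\alpha-\beta)m_{G'}$ that you discarded does hold. Your remark that it fails whenever $m_{G'}>0$ is too strong: it holds whenever $m_{G'}\in\{0,m_G\}$.
\item On $T\setminus T'$ it uses only $u_\gamma\geq\beta m_G$.
\item Both resulting sums are then bounded via \Cref{obs: MSTs have same cost wrt u and to merge times}.
\end{itemize}
Your layer-cake and submodularity argument dispenses with the special tree and the tie-breaking lemmas. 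It also proves the bound more generally, for any merge plan coarsened by a fixed partition. The paper's argument instead produces an explicit certificate tree and shows which edges earn the larger factor $\frac{7-5\gamma}{6}$.
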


Before providing the proof of \Cref{lem: MST wrt u is large}, we first observe that it implies \Cref{thm: there is a good merge plan with rho}.

\begin{proof}[Proof of \Cref{thm: there is a good merge plan with rho}]
We define the function $\rho$ as in \eqref{eq:definition_rho}.
Then by \Cref{lem: opt with integral}, it satisfies \eqref{eq:upper_bound_opt_using_rho}.
Let $\cM_{\gamma}$ be the merge plan with maximum value that is upper bounded by $u_{\gamma}$.
 \Cref{lem: what we construct is actually gamma good} implies that $\cM_{\gamma}$ is $\gamma$-good. 
 Together, \Cref{lem: max merge plan has value mst and there are defining paths} and \Cref{lem: MST wrt u is large} imply
\begin{align*}
    \val{\cM_{\gamma}}=\tmst_{u_{\gamma}}\geq\lr{\frac{7-5\gamma}{12}-\frac{7-5\gamma}{18-14\gamma}}\tmst_{G'}+\frac{7-5\gamma}{18-14\gamma}\,\tmst_G\,.
\end{align*}
Moreover, by the definition~\eqref{eq:definition_rho} of the function $\rho$, we have 
\[
\tmst_{G'} = \lr{1-\rho\lr\gamma} \cdot \tmst_G,
\]
and hence the merge plan $\cM_\gamma$ satisfies
\begin{align*}
    \frac{\val{\cM_{\gamma}}}{\tmst_G}\geq\lr{\frac{7-5\gamma}{12}-\frac{7-5\gamma}{18-14\gamma}}\lr{1-\rho\lr\gamma}+\frac{7-5\gamma}{18-14\gamma}=\frac{7-5\gamma}{12}\lr{1-\rho\lr\gamma\frac{3-7\gamma}{9-7\gamma}}\,,
\end{align*}
which shows \eqref{eq:lower_bound_value_using_rho}.
\end{proof}

The intuition behind our proof of \Cref{lem: MST wrt u is large} is the following.
Let $T$ be a terminal MST with respect to $u_{\gamma}$.
Then by construction, we have $u_{\gamma}\lr{T}\geq\frac{7-5\gamma}{9-7\gamma}\,\val{G}=\frac{7-5\gamma}{18-14\gamma}\,\tmst_{G, c}$ (by \Cref{cor: canonical merge plan}). 
Whenever an edge $e=\lr{x, y}\in T$ is not contracted in $G'$, we want to improve on this simple bound using that $u_{\gamma}\lr{e}$ is not only at least $\frac{7-5\gamma}{9-7\gamma}\merge{G}{x, y}$ but at least $\frac{7-5\gamma}{6}\merge{G}{x, y}$. 
To prove that this strategy indeed yields \Cref{lem: MST wrt u is large}, we need the following key lemma.

\begin{lemma}\label{lem: existence of special MST wrt c bar}
    There is a terminal minimum spanning tree $T$ in $G$ with respect to $u_{\gamma}$ such that $T$ contains a spanning tree $T'$ for $G'$ consisting only of edges $\lrb{x, y}$ whose merge time in $G$ is the same as in $G'$, that is $\merge{G}{x, y}=\merge{G'}{x, y}$ for all $\{x,y\}\in T'$.
\end{lemma}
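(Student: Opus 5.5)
Approach: combine Kruskal's algorithm with a concrete description of how the canonical merge plan changes under contraction. The key object is a spanning tree $T'$ of the terminals of $G'$ built only from edges with $\merge{G}{x,y}=\merge{G'}{x,y}$; we then extend it to a $u_\gamma$-MST $T$ of $G$.

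First, I describe the canonical merge plans concretely. By \Cref{lem: max merge plan has value mst and there are defining paths}, $\cM_G=\cM_{u}$ with $u=\frac12\dist{\cdot,\cdot}$. Contracting the sets $X_1,\dots,X_{j(\gamma)}$ turns $\cM_G$ into $\cM_G/X$ (iterating \Cref{lem: Mu is preserved under contraction}). Since terminal distances in $G'$ are exactly the contracted bounds $u_X$, this gives $\cM_{G'}=\cM_G/X_1/\cdots$. In particular, $\merge{G'}{x,y}\le\merge{G}{x,y}$ for all $x,y$.

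Next, I take a terminal MST $T_G$ of $G$ with respect to $\frac12 c$. Its edges realise the merge times: an edge $\{x,y\}\in T_G$ has $\merge{G}{x,y}=\frac12\dist{x,y}$, and merges in $\cM_G$ happen exactly along tree edges. Contracting the $X_i$ corresponds to contracting some edges of $T_G$ and deleting the then-redundant heaviest edges on fundamental cycles. Concretely, $T_G/X$ contains a spanning tree $T'$ of $G'$ consisting of original edges. An edge $\{x,y\}\in T_G$ that survives in this Kruskal-type contraction process is never the maximum on a cycle closed through contracted parts. I would argue that for such an edge $\merge{G'}{x,y}=\merge{G}{x,y}=\frac12 c(\{x,y\})$: $x$ and $y$ are only joined in $\cM_{G'}$ when that edge's time is reached, since any earlier connection through the contracted sets would have made the edge redundant. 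This is the step I expect to be the main obstacle. It needs care with ties and with the fact that contracted vertices have merge time $0$ among themselves. I would handle it by choosing $T'$ as the Kruskal tree of $G'$ restricted to the edges of $T_G$ under a fixed tie-breaking order.

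Finally, I extend $T'$ to a $u_\gamma$-MST of $G$. On each edge of $T'$ we have $u_\gamma=\frac{7-5\gamma}{6}\merge{G}{x,y}$, since this term dominates the second one. I then run Kruskal for $u_\gamma$ on $\binom{R}{2}$, with ties broken in favour of $T'$ edges. To show no $T'$ edge is rejected, I would show that each $e=\{x,y\}\in T'$ is a minimum-$u_\gamma$ edge across a suitable cut. Every pair $a,b$ separated by the fundamental cut of $e$ in $T'$ has $\merge{G'}{a,b}\ge\merge{G'}{x,y}$ and $\merge{G}{a,b}\ge\merge{G'}{a,b}$. Hence $u_\gamma(\{a,b\})\ge\frac{7-5\gamma}{6}\merge{G'}{a,b}\ge u_\gamma(e)$, using $\merge{G}{x,y}=\merge{G'}{x,y}$ for $e$. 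Here the cut is taken in $R$, with contracted classes kept together; edges inside a contracted class are never across it. By the cut optimality condition, some $u_\gamma$-MST $T$ contains all of $T'$, which proves the lemma.
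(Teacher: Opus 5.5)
Your proposal is correct, but it is organized differently from the paper's proof.

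\textbf{The paper's route.} The paper fixes one lexicographic cost $c^*$: first $u_\gamma$, then $\merge{G}{\cdot}$. It takes $T$ to be a $c^*$-minimum spanning tree and $T'$ a $c^*$-minimum spanning tree of $G'$ inside $T$. It then argues by contradiction: if some $\{x,y\}\in T'$ had $\merge{G'}{x,y}<\merge{G}{x,y}$, the min--max path realizing $\merge{G'}{x,y}$ has edges that are no worse in $u_\gamma$ and strictly better in $\merge{G}{\cdot}$. The cycle property of $T$ reroutes this path inside $T$, contradicting the choice of $T'$.

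\textbf{Your route.} You build $T'$ first, as a Kruskal tree of $T_G/X$ for a distance-MST $T_G$. Only afterwards do you pull $T'$ into a $u_\gamma$-MST via the cut property. This uses that $u_\gamma=\frac{7-5\gamma}{6}\,\merge{G'}{\cdot}$ on $T'$, while $u_\gamma\ge\frac{7-5\gamma}{6}\,\merge{G'}{\cdot}$ everywhere.

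\textbf{Closing the step you flagged.} It closes with the same rerouting idea as \Cref{lem: path with only cheap edges exists} and \Cref{cor: if path with only cheap edges exists it can be chosen in T}, but measured in $\dist[G]{\cdot}$ instead of $c^*$. Suppose $\merge{G'}{a,b}<\frac12\dist[G]{x,y}$ for some $a,b$ on opposite sides of the fundamental cut of $e=\{x,y\}$ in $T'$. Take the min--max path (with suitable representatives) and replace each of its edges by the corresponding $T_G$-path. This gives a walk in $T_G/X$ whose edges are all strictly lighter than $e$. Kruskal processes all of them before $e$, so $[a]$ and $[b]$ would already be joined in $T'-e$, a contradiction.
\begin{itemize}
\item Taking $a=x$, $b=y$ gives $\merge{G'}{x,y}=\merge{G}{x,y}$.
\item General $a,b$ give exactly the inequality $\merge{G'}{a,b}\ge\merge{G'}{x,y}$ that your cut argument needs.
\item Because the inequalities are strict, tie-breaking is irrelevant.
\end{itemize}

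\textbf{What each approach buys.} Your $T'$ depends only on $G$ and $G'$, not on the specific form of $u_\gamma$. It works for any bound $\max\{\alpha\,\merge{G'}{\cdot},\beta\,\merge{G}{\cdot}\}$ with $\alpha\ge\beta$, and it is additionally a minimum spanning tree of $G'$ with respect to its merge times. The paper's route obtains both trees from a single minimization and avoids the separate contraction-of-MST reasoning and the cut-property extension.

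\textbf{Two small inaccuracies, neither affecting correctness.}
\begin{itemize}
\item Terminal distances in $G'$ are not literally the contracted bounds $u_X$, since a shortest path in $G'$ may pass through a contracted vertex. The min--max values still coincide, so $\cM_{G'}=\cM_G/X$ holds.
\item The sets $X_i$ need not be connected in $T_G$, so ``contracting some edges of $T_G$'' is not accurate. Your Kruskal formulation on $T_G/X$ is what the argument actually uses, and it is fine.
\end{itemize}
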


Given \Cref{lem: existence of special MST wrt c bar}, we can complete the proof of \Cref{lem: MST wrt u is large} as follows.

\begin{proof}[Proof of \Cref{lem: MST wrt u is large}]
Let $T$ and $T'$ be as in \Cref{lem: existence of special MST wrt c bar}. 
    Then for every edge $\{x,y\} \in T \setminus T'$, we have
    $u_{\gamma} \geq \frac{7-5\gamma}{9-7\gamma} \, \merge{G}{x, y}$ by the definition of $u_{\gamma}$.
    Moreover, every edge $\{x,y\} \in T'$ satisfies $\merge{G}{x,y}=\merge{G'}{x, y}$. Using $\frac{7-5\gamma}{6} > \frac{7-5\gamma}{9-7\gamma}$, this implies
    \[
    u_{\gamma}(\{x,y\}) \ =\ \frac{7-5\gamma}{6}\,\merge{G'}{x, y} 
    \ =\ \frac{7-5\gamma}{9-7\gamma} \, \merge{G}{x, y} + \lr{\frac{7-5\gamma}{6}-\frac{7-5\gamma}{9-7\gamma}} \, \merge{G'}{x, y}.
    \]
    Therefore, the total cost $u_{\gamma}\lr{T}$ is at least
    \begin{align}
        \tmst_{u_{\gamma}}\ \geq\ \frac{7-5\gamma}{9-7\gamma}\sum_{\lrb{x,y}\in T}\merge{G}{x, y}+\lr{\frac{7-5\gamma}{6}-\frac{7-5\gamma}{9-7\gamma}}\sum_{\lrb{x,y}\in T'}\merge{G'}{x, y}\,.\label{ieq: tmst wrt u geq weighted sums of merge times}
    \end{align}
    By \Cref{obs: MSTs have same cost wrt u and to merge times}, the two sums in \eqref{ieq: tmst wrt u geq weighted sums of merge times} can be lower bounded by $\frac12\,\tmst_G$ and $\frac12\,\tmst_G'$ respectively.
\end{proof}

It remains to prove \Cref{lem: existence of special MST wrt c bar}.

\paragraph{Proving \Cref{lem: existence of special MST wrt c bar}}

We define a cost function $c^*\colon \binom{R}{2} \to \mathbb{R}_{\geq 0}$ so that $c^*(x,y) < c^*(x',y')$ if and only if  the ordered pair $(u_{\gamma}(x,y), \merge{G}{x, y})$ is lexicographically smaller than the pair $(u_{\gamma}(x', y'), \merge{G}{x', y'})$.
For example, we can choose $c^*$ as
\[
c^*(x,y) \ \coloneqq\ M \cdot u_{\gamma}(x,y) + \merge{G}{x, y}
\]
for a sufficiently large number $M$.
Let $T$ be a spanning tree in $G$ that minimizes $c^*$.
Note that $T$ also minimizes $u_{\gamma}$.
Then, let $T' \subseteq T$ be a spanning tree for $G'$ that minimizes $c^*$ among all such trees.

To prove \Cref{lem: existence of special MST wrt c bar}, it remains to show $\merge{G'}{x, y}=\merge{G}{x, y}$ for all $\{x,y\}\in T'$.
Note that we always have $\merge{G'}{x, y} \leq \merge{G}{x, y}$
because $G'$ arises from $G$ by contractions. 
Hence, it remains to show that $T'$ does not contain any edge $\{x,y\}$ with $ \merge{G'}{x, y} < \merge{G}{x, y}$.

\begin{lemma}\label{lem: path with only cheap edges exists}
    Let $e=\lrb{x, y} \in E\lr{G'}$ be an edge with $\merge{G'}{x, y}<\merge{G}{x, y}$.
    Then, there exists an $x$-$y$ path in $G'$ consisting only of edges $f$ with $c^*\lr{f}<c^*\lr{e}$.
\end{lemma}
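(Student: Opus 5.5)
The plan is to exploit the fact that the canonical merge plan of $G'$ is the canonical merge plan of $G$ with the contracted terminal groups merged. This holds by \Cref{lem: Mu is preserved under contraction}, applied once for each contracted set $X_1,\dots,X_{j(\gamma)}$. Contracting terminal sets in $G$ turns distances into exactly the contracted upper bound $u_X$ up to MST-equivalent changes, so the canonical merge plan is preserved, and iterating over the sets is harmless. Set $t'\coloneqq\merge{G'}{x,y}$ and $t\coloneqq\merge{G}{x,y}$, so that $t'<t$.

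\textbf{Step 1: extract a cheap path.} By the definition of the contracted merge plan, at any time $s\in(t',t)$ the terminals $x$ and $y$ lie in the same part of $\cS_{G'}^{s}$ but in different parts of $\cS_G^{s}$. Exactly as in the proof of \Cref{lem: Mu is preserved under contraction}, we use \Cref{lem: max merge plan has value mst and there are defining paths} with the upper bound $u=\merge{G}{\cdot}$, for which $\cM_u=\cM_G$. This yields a sequence of terminal pairs in $R$ forming an $x$-$y$ walk in $G'$ after identifying contracted terminals. Every pair $f$ in the walk satisfies $\merge{G}{f}\le t'$. Pairs whose endpoints become identical in $G'$ are discarded, leaving an $x$-$y$ path in $G'$ whose edges $f$ all satisfy $\merge{G}{f}\le t'$.

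\textbf{Step 2: compare $c^*$.} For each remaining edge $f$ we have $\merge{G'}{f}\le\merge{G}{f}\le t'$, because contraction never delays merges. Hence
\[
u_\gamma(f)\ \le\ \max\lrb{\tfrac{7-5\gamma}{6}\,t',\ \tfrac{7-5\gamma}{9-7\gamma}\,t'}\ \le\ \max\lrb{\tfrac{7-5\gamma}{6}\,t',\ \tfrac{7-5\gamma}{9-7\gamma}\,t}\ =\ u_\gamma(e),
\]
and in addition $\merge{G}{f}\le t'<t=\merge{G}{e}$. The pair $(u_\gamma(f),\merge{G}{f})$ is therefore lexicographically smaller than $(u_\gamma(e),\merge{G}{e})$: either the first coordinate is strictly smaller, or it ties and the second is strictly smaller. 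This gives $c^*(f)<c^*(e)$ for every edge of the path.

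The main obstacle is Step 1, which requires a precise account of how the canonical merge plan of $G'$ relates to that of $G$ under the repeated contractions. One must justify that an early merge in $G'$ is witnessed by a chain of pairs merged by time $t'$ in $G$ together with zero-cost jumps inside contracted groups. I would prove this by induction on the number of contractions, each time invoking \Cref{lem: Mu is preserved under contraction} and the path characterization from \Cref{lem: max merge plan has value mst and there are defining paths}. Step 2 is then a routine comparison.
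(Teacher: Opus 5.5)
Your proof is correct. Step 2 is exactly the paper's comparison. The difference is in how you get the intermediate fact that some $x$-$y$ path in $G'$ has all its edges $f$ satisfying $\merge{G}{f}\le t'$.

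The paper works with distances inside $G'$. It takes a min-max path $P'$ for $\frac12\dist[G']{\cdot}$ from \Cref{lem: max merge plan has value mst and there are defining paths}. It then lifts each edge of $P'$ to a representative pair $\{a,b\}\subseteq R$ with $\dist[G']{a,b}=\dist[G]{a,b}$, which gives $\merge{G}{a,b}\le\frac12\dist[G]{a,b}\le\merge{G'}{x,y}$.

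You never look at distances in $G'$. Instead you use that $\cM_{G'}$ is the iterated contraction of $\cM_G$, and read the path off the partition $\cS_G^s$ just after $t'$, chaining parts through contracted groups.

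Each route has a cost:
\begin{itemize}
\item Yours sidesteps the representative-lifting step, which tacitly requires choosing $P'$ so that no shortest $G'$-path realizing one of its edges passes through a third contracted vertex.
\item In exchange, yours relies on the iterated form of \Cref{lem: Mu is preserved under contraction}. It also needs the identification of the canonical merge plan of $G_i/X_i$ with $\cM_{G_i}/X_i$, which the paper uses in the proof of \Cref{lem: drop in terms of canonical merge plan} without spelling it out.
\end{itemize}

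Two small points on your Step 1:
\begin{itemize}
\item The detour through \Cref{lem: max merge plan has value mst and there are defining paths} with $u=\merge{G}{\cdot}$ is unnecessary. Two terminals in a common part of $\cS_G^s$ already have $\merge{G}{\cdot}<s$ by definition.
\item To get $\merge{G}{f}\le t'$ rather than merely $\merge{G}{f}<s$, you cannot take an arbitrary $s\in(t',t)$. Take $s$ in $(t',\tau]$, where $\tau$ is the next merge time of $\cM_G$ after $t'$.
\end{itemize}
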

\begin{proof}
    By \Cref{lem: max merge plan has value mst and there are defining paths}, there is an $x$-$y$ path $P'$ in $G'$ such that
    \begin{align*}
        \merge{G'}{x, y}=\max_{\lrb{a, b}\in P'}\frac{\dist[G']{a, b}}{2}\,.
    \end{align*}
    We represent each edge of $P'$ by a a corresponding edge $\{a,b\}$ from $G$.
    By choosing suitable representatives $a, b\in R$ for each edge, we can assume that $\dist[G']{a, b}=\dist[G]{a, b}$. Thus, for every edge $f=\lrb{a, b}$ of $P'$, we have
    \begin{align*}
        \merge{G'}{a, b}\ \leq\ \merge{G}{a, b}\ \leq\ \frac{\dist[G]{a, b}}{2}\ \leq\ \merge{G'}{x, y}\ <\ \merge{G}{x, y}\,.
    \end{align*}
    This implies 
    \begin{align*}
        u_{\gamma}\lr{f} \
        &=\ \max\lrb{\frac{7-5\gamma}{6}\,\merge{G'}{a, b},\ \frac{7-5\gamma}{9-7\gamma}\,\merge{G}{a, b}}\\
        &\leq\ \max\lrb{\frac{7-5\gamma}{6}\,\merge{G'}{x, y},\ \frac{7-5\gamma}{9-7\gamma}\,\merge{G}{x, y}}\ =\ u_{\gamma}\lr{e}\,.
    \end{align*}
    Together with $\merge{G}{a, b}<\merge{G}{x, y}$ we obtain $c^*\lr{f}<c^*\lr{e}$.
\end{proof}

Next, we show that we may assume that the path we obtain from \Cref{lem: path with only cheap edges exists} consists only of edges from the tree $T$.
To this end, we use that $T$ is a minimum spanning tree in $(G, c^*)$, and hence it satisfies the following property.
\begin{observation}\label{obs: replace edge by path in T}
    Let $e=\lrb{x, y}\in E\lr{G}$ be an edge of $G$. Then, the $x$-$y$ path $P$ in the tree $T$ satisfies $c^*\lr{f}\leq c^*\lr{e}$ for all $f\in P$.
\end{observation}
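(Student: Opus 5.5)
This is the standard cycle (path) property of minimum spanning trees, applied to $T$ as a minimizer of $c^*$ over spanning trees of the complete graph on $R$. The plan is a direct exchange argument by contradiction.

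Let $e=\{x,y\}$ be any edge of the complete graph on $R$, and let $P$ be the unique $x$-$y$ path in $T$. If $e\in T$, then $P$ consists of $e$ alone and the claim holds trivially. So assume $e\notin T$, and suppose for contradiction that some edge $f\in P$ satisfies $c^*(f)>c^*(e)$.

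Removing $f$ from $T$ splits $T$ into two components. One contains $x$ and the other contains $y$, because $f$ lies on the unique $x$-$y$ path in $T$. Adding $e$ reconnects these two components, so $T-f+e$ is again a spanning tree on $R$. Its cost is
\[
c^*(T-f+e)\ =\ c^*(T)-c^*(f)+c^*(e)\ <\ c^*(T),
\]
which contradicts the choice of $T$ as a spanning tree minimizing $c^*$. Hence $c^*(f)\le c^*(e)$ for all $f\in P$.

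There is no real obstacle. The only point to check is that the objects match: $T$ was chosen as a $c^*$-minimal spanning tree over all terminal pairs $\binom{R}{2}$, so every edge $e=\{x,y\}$ with $x,y\in R$ is an admissible exchange candidate. If one also wants to allow $x$ or $y$ to be contracted vertices of $G'$, first fix representatives in $R$ exactly as in the proof of \Cref{lem: path with only cheap edges exists}, and then apply the argument above.
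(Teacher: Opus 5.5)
Your proof is correct and matches the paper's reasoning: the paper states this observation simply as a consequence of $T$ being a $c^*$-minimum spanning tree (the cycle property), and your exchange argument $T-f+e$ is the standard proof of that property. Your remark about choosing representatives in $R$ for contracted vertices is also how the paper applies the observation in the subsequent corollary.
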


\begin{corollary}\label{cor: if path with only cheap edges exists it can be chosen in T}
    Let $e=\lrb{x, y}\in E\lr{G'}$ be an edge of $G'$ such that there is an $x$-$y$ path in $G'$ consisting only of edges $f$ with $c^*\lr{f}<c^*\lr{e}$.
    Then there also exists such a path whose edges all belong to $T$.
\end{corollary}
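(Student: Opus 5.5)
The plan is to take an arbitrary $x$-$y$ path $P$ in $G'$ whose edges $f$ all satisfy $c^*(f)<c^*(e)$, and to replace each edge of $P$ by the corresponding path in the tree $T$. Edges of $G'$ are edges between vertices of the contracted terminal set. As in the proof of \Cref{lem: path with only cheap edges exists}, we represent each edge $f$ of $P$ by an edge $\{a,b\}$ of $G$, where $a$ and $b$ are original terminals lying in the contracted vertices that are the endpoints of $f$. We choose the representative so that $c^*$ of the $G'$-edge equals $c^*(\{a,b\})$. This is possible if we interpret $c^*$ on $G'$ through a minimizing representative, which is consistent with how $u_\gamma$ and the merge times are defined in $G$.

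For each such representative $\{a,b\}$, \Cref{obs: replace edge by path in T} gives an $a$-$b$ path $P_f$ in $T$ all of whose edges $g$ satisfy $c^*(g)\le c^*(\{a,b\})<c^*(e)$. The walk obtained by concatenating the paths $P_f$ in $G$ does not quite connect $x$ to $y$, since consecutive representatives may use different original terminals inside the same contracted vertex of $G'$. After contraction, however, these terminals coincide. Hence the image of the concatenation in $G'$ is an $x$-$y$ walk that uses only edges of $T$: those not contracted away, together with loops that we discard. Every remaining edge has $c^*<c^*(e)$. Shortcutting this walk to a path gives the claim, because passing to a subset of edges preserves the strict bound.

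The main point needing care is the bookkeeping of what "an edge of $G'$ belongs to $T$" and "$c^*$ on $G'$" mean after contraction. An edge of $T$ whose endpoints lie in the same contracted class becomes a loop and is dropped. An edge of $T$ with endpoints in different classes is identified with the corresponding $G'$-edge and keeps its $c^*$-value. Once this convention is fixed (it is the one used to define $T'\subseteq T$), the argument is just the triangle-style replacement above, and the strictness $c^*(f)<c^*(e)$ is inherited from the hypothesis through \Cref{obs: replace edge by path in T}.
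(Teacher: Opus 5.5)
Your proof is correct and follows the paper's argument: replace each cheap edge of the given path by the $T$-path between its endpoints via \Cref{obs: replace edge by path in T}, concatenate, and inherit strictness from $c^*(f)<c^*(e)$. Your explicit treatment of representatives, discarded loops, and shortcutting the walk spells out bookkeeping that the paper leaves implicit. The ``minimizing representative'' convention is unnecessary, since each edge of $G'$ is already an original terminal pair with its own $c^*$-value, but it does no harm.
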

\begin{proof}
Call an edge $f$ \emph{cheap} if it satisfies $c^*\lr{f}<c^*\lr{x,y}$.
Let $P'$ be an $x$-$y$ path in $G'$ consisting only of cheap edges.
Consider an edge $f\in P'$, and let $u,v \in V(G')$ be the endpoints of the edge~$f$.
By \Cref{obs: replace edge by path in T}, there is an $u$-$v$ path $P_f$ in $T$ whose edges are all cheap.
Concatenating the paths $P_f$ for all $f\in P'$ yields an $x$-$y$-walk in $T$ consisting only of cheap edges.
\end{proof}

Now suppose for the sake of deriving a contradiction that there is an edge $e=\lrb{x, y}\in T'$ with $\merge{G'}{x, y}<\merge{G}{x, y}$. 
Then by \Cref{lem: path with only cheap edges exists}, there is a path $P$ in $G'$ with $c^*\lr{f}<c^*\lr{e}$ for all $f\in P$.
By \Cref{cor: if path with only cheap edges exists it can be chosen in T}, we may assume $P\subseteq T$.
In other words, the graph $G'$ restricted to edges of $T$ contains an $x$-$y$ path consisting only of edges strictly cheaper than $\{x,y\}$.
Because $T'$ is a minimum spanning tree (with respect to $c^*$) in the graph $G'$ restricted to edges of $T$, this contradicts the fact that $\{x,y\}\in T'$.
This concludes the proof of \Cref{lem: existence of special MST wrt c bar}.

\subsection{Calculating the Upper Bound on the Integrality Gap}\label{sec: 1.9 in general}

Next, we prove that the integrality gap of \ref{eq:bcr} is at most $\finalratio$, that is we prove \Cref{thm:main_gap}. 
We will use the following well-known lower bound on \ref{eq:bcr}.

\begin{proposition}\label{prop:simple_lb}
    For every instance of the Steiner tree problem, we have $\bcr \geq \frac{\tmst}{2}$.
\end{proposition}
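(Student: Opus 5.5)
Two natural routes are available here: the dual route and the primal route. I would take the dual route, because it fits the machinery of this paper most directly. By \Cref{lem: merge plan mst optimal}'s proof idea, or more simply by \Cref{cor: canonical merge plan}, the canonical merge plan $\cM$ is upper bounded by $u(\{x,y\})=\frac12\dist{x,y}$ and has value $\frac12\tmst$. So it suffices to check that running \Cref{algo:dual_growth_with_merge_plan} on the bidirected graph with this merge plan yields a feasible solution to \eqref{eq:dual_bcr} of value $\val{\cM}$. In other words, I would show that $\cM$ is feasible in the sense that a growing set $U_S$ never contains $r$ unless $r\in S$. Dual feasibility of the output is automatic, since the algorithm only raises variables of sets $U_S$ whose outgoing edges are not tight. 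Once feasibility is established, weak duality gives $\bcr\geq\val{\cM}=\frac12\tmst$.

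For feasibility with the canonical (unscaled) merge plan, I would prove a simple distance invariant. Whenever $S\in\cS^t$ reaches $v$ at time $t$, some terminal $s\in S$ satisfies $\dist{s,v}\le t$. I would prove this by induction over the finitely many event times, in the style of \Cref{lem: tight paths always exist}. The argument goes as follows. Along an $S$-tight path, each edge $(a,b)$ that becomes tight while contributions come from subsets of $S$ gets length at most the time elapsed. The difficulty is that disjoint sets may also contribute to the same directed edge, which speeds it up. That obstacle is exactly why the paper's general analysis is involved. For $\frac12$ it should be avoidable, because two terminals $x,y$ that are still separated at time $t$ satisfy $\dist{x,y}>2t$. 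If the balls of radius $t$ around their sets intersected, we would get a contradiction. I expect this step to be the main obstacle, and I am not fully sure the induction closes cleanly in the bidirected setting.

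Because of that uncertainty, I would instead give the standard primal argument, which is self-contained and avoids the moat analysis entirely. Take an optimal solution $x$ of \ref{eq:bcr}. For any root choice, $x$ supports one unit of flow from each terminal to $r$. Forgetting orientations, set $z_{\{u,w\}}=x_{(u,w)}+x_{(w,u)}$. This $z$ is feasible for the undirected cut relaxation with every terminal-separating cut having value at least $1$, and it has the same cost. Then I would apply the known fact that the undirected cut LP has value at least $\frac12$ of a terminal MST. The cleanest proof of that fact is the undirected dual: run the classical moat growing (the canonical merge plan on the undirected graph), which gives a feasible undirected dual of value $\frac12\tmst$ (\Cref{cor: canonical merge plan}) by the standard Goemans–Williamson argument. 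Undirected feasibility is easy, because every edge is tight exactly when the two moats touch at distance $2t$, matching the merge time $\frac12\dist{x,y}$. Alternatively, one can use the splitting-off/doubling argument: doubling $z$ yields a $2$-edge-connected fractional solution on $R$, which dominates a spanning tree polytope point, and metric shortcutting preserves cost. Either way $\bcr\ge$ (undirected LP value) $\ge\frac12\tmst$.
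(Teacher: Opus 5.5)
Your final primal argument is correct, but it takes a different route from the paper.

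\textbf{The paper's route.} The paper also symmetrizes an optimal $x$, but keeps the result directed. The vector $x'_{(v,w)}=x_{(v,w)}+x_{(w,v)}$ is an Eulerian solution of \ref{eq:bcr} of cost $2\bcr$, so the factor $2$ is paid up front. Directed splitting-off then removes all Steiner vertices without increasing cost. Edmonds' theorem (BCR is integral when every vertex is a terminal) gives $\tmst\le 2\bcr$ with no further loss. Given those tools this is three lines, and the same template yields $2\opt\ge\tmst$ and $2\hyp\ge\tmst$.

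\textbf{Your route.} You project $x$ to an undirected $z$ of the \emph{same} cost, and pay the factor $2$ inside the classical bound for the undirected cut relaxation. This makes explicit that $\frac12\tmst$ already lower-bounds the weaker undirected LP. Your Goemans--Williamson proof of that bound is also elementary: it needs neither a splitting-off theorem nor Edmonds' branching theorem.

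\textbf{Your alternative via $2z$.} This is essentially the undirected shadow of the paper's proof. Passing from $2z$ on $G$ to a vector on $R$ needs Mader's undirected splitting-off, which the phrase ``metric shortcutting'' glosses over. Fulkerson's description of the spanning-tree dominant then plays the role of Edmonds' theorem.

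\textbf{Your abandoned dual route.} You were right not to rely on the bidirected moat argument from your first paragraph. The paper's feasibility result covers only $\gamma$-good merge plans, and the canonical plan need not be $\gamma$-good for any $\gamma\in[0,\frac15]$. For example, three terminals joined by a Steiner star with unit edges form a $\gamma$-cheap set, yet all their merge times equal $\frac12\dist{x,y}$. That route is in any case unnecessary. Take a feasible undirected dual and replace every set $S\ni r$ by $V\setminus S$. This gives a feasible solution of \eqref{eq:dual_bcr} of the same value, so your undirected moat-growing dual already certifies $\bcr\ge\frac12\tmst$ directly.
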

\begin{proof}
    We apply a well-known proof strategy that can also be used to show $2\cdot\opt\geq\tmst$ and $2\cdot\hyp\geq\tmst$ (see e.g.\ Lemma~5 in \cite{ln4directedcomponent}):
    Let $x$ be an optimal (primal) solution to \ref{eq:bcr}. 
    Define $x'_{\lr{v, w}}\coloneqq x_{\lr{v, w}}+x_{\lr{w, v}}$ for every undirected edge $\{v,w\}$. 
    This yields an Eulerian solution $x'$ to \ref{eq:bcr}.
    By directed splitting-off (see e.g.~Theorem 3.3 in \cite{traub_approximation_2024}), we obtain a solution $x''$ to \ref{eq:bcr} that does not use edges incident to Steiner vertices and satisfies $c(x'')\leq c(x')$. 
    Because \ref{eq:bcr} is integral on instances without Steiner vertices (Theorem~2 in \cite{EdmondsBranching}), the cost of $x''$ cannot be smaller than the cost of a terminal MST. 
    Hence,
    \begin{align*}
        \tmst\ \leq\ c\lr{x''}\ \leq\ c\lr{x'}\ =\ 2\cdot c\lr{x}\ =\ 2\cdot \bcr\,.
    \end{align*}
\end{proof}

\Cref{cor: gamma good merge plan gives large dual} and \eqref{eq:lower_bound_value_using_rho} from \Cref{thm: there is a good merge plan with rho} imply that for every $\gamma\in\lre{0,\frac15}$, we have
\begin{align}
    \frac{\bcr}{\tmst}\ \geq\ \underbrace{\frac{7-5\gamma}{12}}_{\eqqcolon a\lr{\gamma}}\,-\,\rho\lr\gamma\, \underbrace{\frac{\lr{7-5\gamma}\lr{3-7\gamma}}{12\lr{9-7\gamma}}}_{\eqqcolon b\lr{\gamma}}\,.\label{ieq: definition of a and b}
\end{align}

We choose the value $\gamma$ so that this lower bound on the value of \ref{eq:bcr} is as large as possible. 
Namely, let $\gamma^*\in\lre{0,\frac15}$ such that $a\lr{\gamma^*}-b\lr{\gamma^*}\rho\lr{\gamma^*}$ is maximal. 
Then, using \Cref{prop:simple_lb}, we conclude
$$
    \frac\bcr\tmst\geq\max\lrb{a\lr{\gamma^*}-b\lr{\gamma^*}\rho\lr{\gamma^*}, \frac12}\,.
$$
By the choice of $\gamma^*$, we have for all $\gamma\in\lre{0, \frac15}$
\begin{align*}\rho\lr\gamma
    \ \geq\ \frac{a\lr\gamma-\lr{a\lr{\gamma^*}-b\lr{\gamma^*}\rho\lr{\gamma^*}}}{b\lr\gamma}
    \ \geq\ \frac{a\lr\gamma-\max\lrb{a\lr{\gamma^*}-b\lr{\gamma^*}\rho\lr{\gamma^*}, \frac12}}{b\lr\gamma}\,.
\end{align*}
Hence, by \eqref{eq:upper_bound_opt_using_rho} from \Cref{thm: there is a good merge plan with rho}, we have
\begin{align*}
     \frac\opt\tmst
    &\leq\ 1-\int_{0}^{\frac12}\rho\lr\gamma\,d\gamma\\
    &\leq\ 1-\int_{0}^{\frac15}\rho\lr\gamma\,d\gamma\\
    &\leq\
    1-\int_{0}^{\frac15}\frac{a\lr\gamma-\max\lrb{a\lr{\gamma^*}-b\lr{\gamma^*}\rho\lr{\gamma^*}, \frac12}}{b\lr\gamma}\ d\gamma . 
\end{align*}
Therefore, we can upper-bound the integrality gap by
\begin{align*}
    \frac\opt\bcr
    &\leq\ \frac{1-\int_{0}^{\frac15}\frac{a\lr\gamma-\max\lrb{a\lr{\gamma^*}-b\lr{\gamma^*}\rho\lr{\gamma^*}, \frac12}}{b\lr\gamma}\,d\gamma}{\max\lrb{a\lr{\gamma^*}-b\lr{\gamma^*}\rho\lr{\gamma^*}, \frac12}}\\[2mm]
    &=\ \frac{1-\int_0^{\frac15}\frac{a\lr\gamma}{b\lr\gamma}\,d\gamma}{\max\lrb{a\lr{\gamma^*}-b\lr{\gamma^*}\rho\lr{\gamma^*}, \frac12}}+\int_0^{\frac15}\frac1{b\lr\gamma}\\[2mm]
    &\leq\ 2 \cdot \Big( 1-\int_0^{\frac15}\frac{a\lr\gamma}{b\lr\gamma}\,d\gamma\Big)+\int_0^{\frac15}\frac1{b\lr\gamma}\\[2mm]
    &=\ 2\lr{1-\int_0^{\frac15}\frac{a\lr\gamma-\frac12}{b\lr\gamma}\,d\gamma}\,,
\end{align*}
where we used that for our choice of the functions $a$ and $b$ we have $\int_0^{\frac15}\frac{a\lr\gamma}{b\lr\gamma}\,d\gamma = \int_0^{\frac15}  \frac{9-7\gamma}{3-7\gamma} \,d\gamma<1$, as well as $\int_0^{\frac15}\frac{1}{b\lr\gamma}\,d\gamma<\infty$.
Plugging in the definition of the functions $a$ and $b$ from \eqref{ieq: definition of a and b} and evaluating the integral yields
$$
    \frac{\opt}{\bcr}\ \leq\ 2 \cdot \lr{1-0.051}
    \ =\ 2-0.102 \ =\ 1.898\,,
$$
which proves \Cref{thm:main_gap}.
 \section{Lower Bound for Merge Plan Based Dual Solutions}\label{sec:lower_bound}

The goal of this section is to prove \Cref{thm: merge plan algos cannot be better than 7 over 12 mst}, which we restate here.
\thmMergePlanAlgoLowerBound*

\begin{definition}[merge plan based dual solution]
    Given an instance $\cI$ of the Steiner tree problem, we call a solution $y$ of the dual program \eqref{eq:dual_bcr} \emph{merge plan based}, if there is a merge plan $\cM$ such that \Cref{algo:dual_growth_with_merge_plan} outputs $y$ for the input $\lr{\cI, \cM}$.
\end{definition}

We remark that on the instance we construct to prove \Cref{thm: merge plan algos cannot be better than 7 over 12 mst}, the (relevant) behavior of \Cref{algo:dual_growth_with_merge_plan} does not change under subdivision of edges.
In particular, an analysis analogous to our proof can be applied for any subdivision of the constructed instance.
Thus, considering dual solutions arising from \Cref{algo:dual_growth_with_merge_plan} applied to any subdivision of the original instance (as we did in the proofs of \Cref{thm:main_gap} and \Cref{thm:gap_mst_optimal}; see \Cref{sec: assumptions}), will not yield strong enough lower bounds to prove an integrality gap below $\frac{12}{7}$.

The key ingredient of our construction is a Steiner tree instance that will allow us to impose an upper bound on the merge time of two terminals $s$ and $s'$ in any merge plan that leads to a dual solution with high value.
\Cref{lem: if we dont merge x and y everybody reaches v}, which we state next, captures the important properties of this instance.
To prove \Cref{thm: merge plan algos cannot be better than 7 over 12 mst}, we will then take one copy of this instance for every pair of terminals and combine these copies to a single new instance by identifying the terminal sets.

\begin{lemma}\label{lem: if we dont merge x and y everybody reaches v}
    Let $\varepsilon>0$, let $R$ be a set of terminals, and let $s, s'\in R$. 
    Then there are an MST-optimal Steiner tree instance $\cI=(G,c, R)$ and a vertex $v$ of $G$ such that
    \begin{enumerate}[label=(\roman*)]
        \item \label{item:pairwise_distance1} 
        $\dist{x, y}\geq 2$ for all $x,y\in R$,
        \item \label{item:pairwise_distance2} 
        $\dist{x, s}=2$ for all $x\in R\setminus\{s\}$,
        \item \label{item:distance_v} $\dist{x, v}\leq 4$ for all $x\in R$, and 
        \item \label{item:reachability_v} for every merge plan $\cM$ with $\merge{\cM}{s, s'}\geq\frac76+\varepsilon$, at time $\frac76+\varepsilon$ all active sets can reach~$v$.
    \end{enumerate}
    We call $v$ a \emph{central} vertex of $G$.
\end{lemma}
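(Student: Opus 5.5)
I would build the instance explicitly as a gadget based on the example in \Cref{fig:accident}, generalized so that every terminal of $R$ is attached. The starting point is the star: $s$ is joined to a Steiner vertex $c$ by an edge of length $1$, and every $x\in R\setminus\{s\}$ is joined to $c$ by an edge of length $1$. This gives $\dist{x,s}=2$ and $\dist{x,y}=2$ for all pairs, which settles \ref{item:pairwise_distance1} and \ref{item:pairwise_distance2}.

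This star is an improving component once $|R|\ge3$ (it costs $|R|$, while a terminal MST costs $2(|R|-1)$). So I would not use the full star. Instead I would follow \Cref{fig:accident}: each terminal $x\neq s$ gets its own Steiner vertex $c_x$, with edges $\{x,c_x\}$ and $\{s,c_x\}$ of length $1$. On top of these I would place a layered zig-zag structure as in \Cref{fig:zig_zag_example}. Its purpose is that the pair $s,x$ meeting at $c_x$ at time $1$ can then advance at effective speed $3$ (\Cref{lem: length from m_p to v}). By time $\frac76$ this would carry it a further distance $\frac12$ toward a common vertex $v$ placed at distance about $\frac32$ beyond the $c_x$'s.

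Next I would verify MST-optimality. A terminal MST is any star around $s$ with cost $2(|R|-1)$, and by \Cref{lem: drop in terms of canonical merge plan} the drop of a set $X$ equals $2(|X|-1)$, or $2|X|$ when $s\notin X$. I would check that no component connecting $X$ costs less than this. Every terminal other than $s$ is at distance at least $1$ from every Steiner vertex, and any two such terminals connect only through $s$ or through paths of length at least $2$. Together these should give $\conn{X}\ge 2(|X|-1)$, with equality attained by paths through $s$. The short zig-zag edges must be chosen with lengths large enough that routing between different $c_x$ or toward $v$ never creates a cheaper connection. Property \ref{item:distance_v} then follows because every terminal reaches some $c_x$ within distance $2$ and $v$ within distance $2$ further.

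The main obstacle is \ref{item:reachability_v}: it must hold for every merge plan with $\merge{\cM}{s,s'}\ge\frac76+\varepsilon$, while all other merges are arbitrary. My plan is to argue by monotonicity in the spirit of \Cref{lem: length from m_p to v}. Before time $1$ each active set grows at least along its own edges of length $1$, regardless of the merge plan. After that, on the zig-zag region, the contribution of the set containing $s$ together with that of any set disjoint from it (in particular the one containing $s'$) makes edges tight at rate $2$ in one direction and $1$ in the other. I would design the gadget so that this worst-case speed-$3$ progress is guaranteed as long as $s$ and $s'$ remain separated. Then $v$ is reached at time $\frac76$. Every other active set reaches its $c_x$, and hence $v$, through tight edges created by the $s$-side, since $s$ is adjacent to all the $c_x$'s.

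Merging other pairs early should only enlarge reachable sets and never slow down the $s$ versus $s'$ interaction. I expect this step to be the hardest to make rigorous. I would handle it by subdividing so that each relevant edge receives contributions from exactly two disjoint sets, and then computing reachability times directly. The $\varepsilon$ slack absorbs boundary effects.
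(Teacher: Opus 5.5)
\textbf{There is a genuine gap, and it is structural rather than a matter of tuning edge lengths.}

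Your mechanism for \ref{item:reachability_v} is purely two-set. Each $x\neq s$ meets $s$ at $c_x$ at time $1$ and then advances along a path to which only two disjoint sets contribute, i.e.\ a safe path in the sense of \Cref{def: safe path}. By the telescoping argument of \Cref{lem: length from m_p to v}, which you invoke yourself, such progress has speed at most $3$.

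\begin{itemize}
\item Take the merge plan that keeps all terminals separate until $\frac76+\varepsilon$. Any $v$ reached by $x$ at that time satisfies $\dist{x,v}\le 1+3(\frac16+\varepsilon)=\frac32+3\varepsilon$. Your own numbers already conflict here: advancing $\frac12$ beyond $c_x$ does not reach a $v$ placed $\frac32$ beyond $c_x$.
\item Properties \ref{item:pairwise_distance1} and \ref{item:pairwise_distance2} force every terminal to be at distance exactly $2$ from $s$ and at least $2$ from all others. Hence $\drop{X}=2(|X|-1)$ for every nonempty $X$. It is not $2|X|$ when $s\notin X$, as you wrote.
\item The union of shortest paths to $v$ from any five terminals is therefore a component of cost at most $5(\frac32+3\varepsilon)<8=\drop{X}$ for small $\varepsilon$.
\end{itemize}

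So whenever every terminal reaches $v$ only through two-set interactions, $v$ is too close to everybody and the instance cannot be MST-optimal. Your MST-optimality sketch looks only at connections through $s$ and at pairwise distances, so it misses exactly these components through $v$.

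\textbf{The missing idea is a non-safe, three-set interaction.} This is the phenomenon in the right part of \Cref{fig:k_interaction_example}, and the paper realizes it with the jump-gadget (\Cref{lem: jump gadget}). Your zig-zag is essentially the paper's $3\times$-gadget (\Cref{lem: 3x gadget}), but the paper uses it only as a building block. For each $s^*\in R\setminus\{s,s'\}$ the construction is:

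\begin{itemize}
\item A long road $x_0,\dots,x_{30k},w$ with edges of length $\frac1{6k}$, of total length more than $5$.
\item $s$ is attached to the odd road vertices by edges of length $\frac76-\frac1{6k}$.
\item $s'$ reaches the even road vertices at time $\frac76$ via $3\times$-gadgets.
\item $s^*$ reaches $x_0$ at time $\frac76$ via its own $3\times$-gadget.
\end{itemize}

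Since $s$ and $s'$ are guaranteed to lie in different active sets until $\frac76+\varepsilon$, together they tighten both orientations of the whole road by time $\frac76+\frac1{6k}$. The terminal $s^*$ then jumps along the entire road to $w=v$ by that time. Yet the only $s^*$-$v$ path avoiding other terminals costs more than $6$.

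This cost bound is what drives MST-optimality. Deleting that path from a minimal improving component saves at least $6$ while losing at most three terminals, which produces a cheaper improving component and hence a contradiction.

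\textbf{Secondary issue.} Your claim that early merges ``only enlarge reachable sets'' is false in general. Merging $x$ into the set of $s$ destroys the very interaction you rely on. Property \ref{item:reachability_v} needs an explicit case distinction instead. An active set either contains $s$ or $s'$, or contains some $s^*$ merged with neither. In each case the gadget guarantees arrival at $v$ by $\frac76+\frac1{6k}\le\frac76+\varepsilon$.
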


Before we prove \Cref{lem: if we dont merge x and y everybody reaches v}, we first show that it indeed implies \Cref{thm: merge plan algos cannot be better than 7 over 12 mst}. To this end, we need a well-known fact about full components.
\begin{observation}\label{obs: improving full comp}
    Every improving component contains a full component that is also improving.
\end{observation}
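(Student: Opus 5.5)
Let $K$ be an improving component connecting a terminal set $X$, so $c(K) < \dropG{X}$. The plan is to decompose $K$ into full components and show that one of them already has a drop exceeding its cost. First we clean up $K$. We may assume $K$ is a tree, by deleting edges on cycles, which only lowers the cost. We then delete Steiner leaves repeatedly. The result still connects $X$, and its cost is still below $\dropG{X}$. We may also replace $X$ by the set $X'$ of all terminals that $K$ touches. Since $X' \supseteq X$, contracting $X'$ contracts at least as much as contracting $X$, so $\tmst_{G/X'} \le \tmst_{G/X}$ and hence $\dropG{X'} \ge \dropG{X}$. Thus $K$ is improving for $X'$ as well.

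Next, split the tree $K$ at every terminal of $X'$ that is not a leaf. This gives full components $K_1,\dots,K_\ell$ connecting terminal sets $X_1,\dots,X_\ell$, with $\sum_i c(K_i) = c(K)$. The sets $X_i$ form a tree-like hypergraph structure: contracting them one after another, $X_1, X_2, \dots$, ending with $X_\ell$, contracts exactly $X'$. This holds because the union of the $X_i$ is connected through shared terminals.

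The key step is to write the drop as a telescoping sum. Set $G^{(0)} = G$ and $G^{(i)} = G^{(i-1)}/X_i$. Then
\[
\dropG{X'} \;=\; \sum_{i=1}^{\ell}\Bigl(\tmst_{G^{(i-1)}} - \tmst_{G^{(i)}}\Bigr) \;=\; \sum_{i=1}^{\ell} \drop[G^{(i-1)}]{X_i} \;\le\; \sum_{i=1}^{\ell} \dropG{X_i}.
\]
The last inequality is \Cref{obs:drop reduces under contraction}. Here $X_i$ means the image of $X_i$ in the contracted graph, and the drop of an image equals the drop of the original set in the contracted graph.

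Combining, $\sum_i c(K_i) = c(K) < \dropG{X'} \le \sum_i \dropG{X_i}$. Hence some $i$ satisfies $c(K_i) < \dropG{X_i}$, and $K_i$ is an improving full component.

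The only delicate point is the bookkeeping that sequential contraction of the $X_i$ yields $G/X'$. This follows from connectivity of $K$. I expect no real obstacle here.
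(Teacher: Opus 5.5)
Your proof is correct and follows the paper's argument: decompose $K$ into maximal full components, telescope $\dropG{X}$ along the sequential contractions, bound each term via \Cref{obs:drop reduces under contraction}, and conclude by averaging. Your preliminary step of replacing $X$ by the set $X'$ of all terminals touched by $K$ (using $\dropG{X'}\ge\dropG{X}$) is a worthwhile addition. It makes explicit that the sequential contractions end in $G/X'$, a point the paper's telescoping identity passes over silently.
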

\begin{proof}
    Let $K$ be an improving component connecting terminals $X$. Decompose $K$ into maximal full components $K_1, \dots, K_k$, each connecting terminals $X_i$. 
    Define a sequence $G_1,\dots, G_{k+1}$ of graphs by $G_1 \coloneqq G$ and $G_{i+1} \coloneqq G_i / X_i$.
    Then by \Cref{obs:drop reduces under contraction}, we have
    \begin{align*}
        \drop{X} \ =&\ \tmst_G - \tmst_{G/X} 
        \ =\ \sum_{i=1}^k \big( \tmst_{G_i} - \tmst_{G_{i+1}} \big) \
        =\ \sum_{i=1}^k \dropi_{G_i}(X_i) \\[-2mm]
        \leq&\ \sum_{i=1}^k \drop{X_i}\,,
    \end{align*}
    where $\dropi_{G_i}(X_i)$ denotes the drop of $X_i$ in the graph $G_i$.
    We conclude,
    \begin{align*}
        \min_{1\leq i\leq k}\ \frac{c\lr{K_i}}{\drop{X_i}}
        \ \leq\ \frac{\sum_{1\leq i\leq k}c\lr{K_i}}{\sum_{1\leq i\leq k}\drop{X_i}}
        \ \leq\ \frac{c\lr{K}}{\drop{X}}<1\,.
    \end{align*}
\end{proof}

\begin{proof}[Proof of \Cref{thm: merge plan algos cannot be better than 7 over 12 mst}]
    Let $R$ be a set of terminals with $\lrv{R}\geq\frac4{\varepsilon}+1$. For every pair $\lr{s, s'}\in R \times R$ we consider the graph $G_{\lr{s, s'}}$ (with its edge costs) that we obtain from \Cref{lem: if we dont merge x and y everybody reaches v} and its central vertex $v_{\lr{s, s'}}$. We combine all these graphs $G_{\lr{s, s'}}$ to a single graph $G$.
    The vertex set of $G$ consists of the terminal set $R$ and the disjoint union of the Steiner vertices of the different graphs, that is
    $$
        V\lr{G}\ =\ R\,\dot\cup\ \dot\bigcup_{\lr{s, s'}\in R \times R}\ \Big(V\lr{G_{\lr{s, s'}}}\setminus R\Big)\,.
    $$
    The edge set of $G$ is the disjoint union of the edge sets of the different graphs, that is
    $$
        E\lr{G}\ =\ \dot\bigcup_{\lr{s, s'}\in R\times R}\ E\lr{G_{\lr{s, s'}}}\,.
    $$

    The instance $\lr{G, c, R}$ is MST-optimal: Suppose not, then there is an improving component. 
    By \Cref{obs: improving full comp}, there has to be a \emph{full} improving component. 
    Because, the graphs $G_{\lr{s, s'}}$ we obtained from \Cref{lem: if we dont merge x and y everybody reaches v} only overlap in the terminals, such a full component would have to be contained in a single graph $G_{\lr{s, s'}}$. 
    This cannot be the case, because by \Cref{lem: if we dont merge x and y everybody reaches v} the graph $G_{\lr{s, s'}}$ is MST-optimal.
    Hence, every Steiner Tree in $G$ costs at least $\opt=\tmst=2\lr{\lrv{R}-1}$, where the second equality follows from \ref{item:pairwise_distance1} and \ref{item:pairwise_distance2} of \Cref{lem: if we dont merge x and y everybody reaches v}.

    Fix a merge plan $\cM=\lr{\cS^t}_{t\geq0}$. The value of the dual solution we obtain by running \Cref{algo:dual_growth_with_merge_plan} with merge plan $\cM$ is
    $$
        \dual\lr{\cM}=\int_0^{\infty}\lrv{\lrb{S\in\cS^t\,:\,S\text{ cannot reach }r\text{ at time }t}} \,dt\,,
    $$
    where $r\in R$ is the (arbitrarily chosen) root vertex.
    By \ref{item:pairwise_distance1} and \ref{item:pairwise_distance2} of \Cref{lem: if we dont merge x and y everybody reaches v}, in the graph $G$ we have $\dist{x, y}=2$ for all $x, y\in R$.
    Thus, any two terminals that are not merged at time $t=2$ can reach each other at time $t=2$. 
    Hence, we can rewrite and upper bound the value of our dual solution $\dual\lr{\cM}$ as follows
    \begin{align*}
        &\,\int_0^2\lrv{\lrb{S\in\cS^t\,:\,S\text{ cannot reach }r\text{ at time }t}} \,dt
        \\
        \leq&\ \lr{\frac76+\varepsilon}\lr{\lrv{R}-1}+\int_{\frac76+\varepsilon}^2\lrv{\lrb{S\in\cS^t\,:\,S\text{ cannot reach }r\text{ at time }t}} \,dt\,.\defaulttag\label{ieq: cost bound with the integral that is at most 4}
    \end{align*}

    We will prove that the integral on the right hand-side of \eqref{ieq: cost bound with the integral that is at most 4} has value at most $4$.
    If all terminals are merged already at time $\frac76+\varepsilon$, this trivially holds. 
    Otherwise, choose $s, s'\in R$ with $\merge{M}{s, s'}\geq\frac76+\varepsilon$ and consider the subgraph $G_{\lr{s, s'}}$ and its central vertex $v\coloneqq v_{\lr{s, s'}}$. 
    By \Cref{lem: if we dont merge x and y everybody reaches v} \ref{item:distance_v}, we have $\dist{v, r}\leq4$  and from time $\frac76+\varepsilon$ onward $v$ is reachable from all active sets. 
    Hence, from time $\frac76+\varepsilon$ onward, all active sets that do not reach the root contribute to all shortest $v$-$r$-paths. 
    This shows
    \begin{align}\label{ieq: int after time 7over6 is at most 4}
        \int_{\frac76+\varepsilon}^2\lrv{\lrb{S\in\cS^t\,:\,S\text{ cannot reach }r\text{ at time }t}}\,dt\ \leq\ \dist{v, r}\leq4\,.
    \end{align}
    Combining \eqref{ieq: int after time 7over6 is at most 4} and \eqref{ieq: cost bound with the integral that is at most 4} yields  
    $$
        \frac{\dual\lr{M}}{\tmst} 
        \ \leq\ \frac{\lr{\frac76+\varepsilon}\lr{\lrv{R}-1}+4}{2\lr{\lrv{R}-1}} \ =\ \lr{\frac7{12}+\frac{\varepsilon}2}+\frac{2}{\lrv{R}-1}
        \ \leq\ \frac7{12}+\frac{\varepsilon} 2 + \frac{\varepsilon}2
        \ =\ \frac7{12}+\varepsilon\,.
    $$
\end{proof}

We are left with proving \Cref{lem: if we dont merge x and y everybody reaches v}. 
From now on fix $\varepsilon$, $s$, $s'$, and $R$ as in \Cref{lem: if we dont merge x and y everybody reaches v} and let $\cM$ be a merge plan with $\merge{\cM}{s, s'}\geq\frac76+\varepsilon$. Further we choose $k\in\N$ with $\frac{1}{6k}\leq\varepsilon$.
 
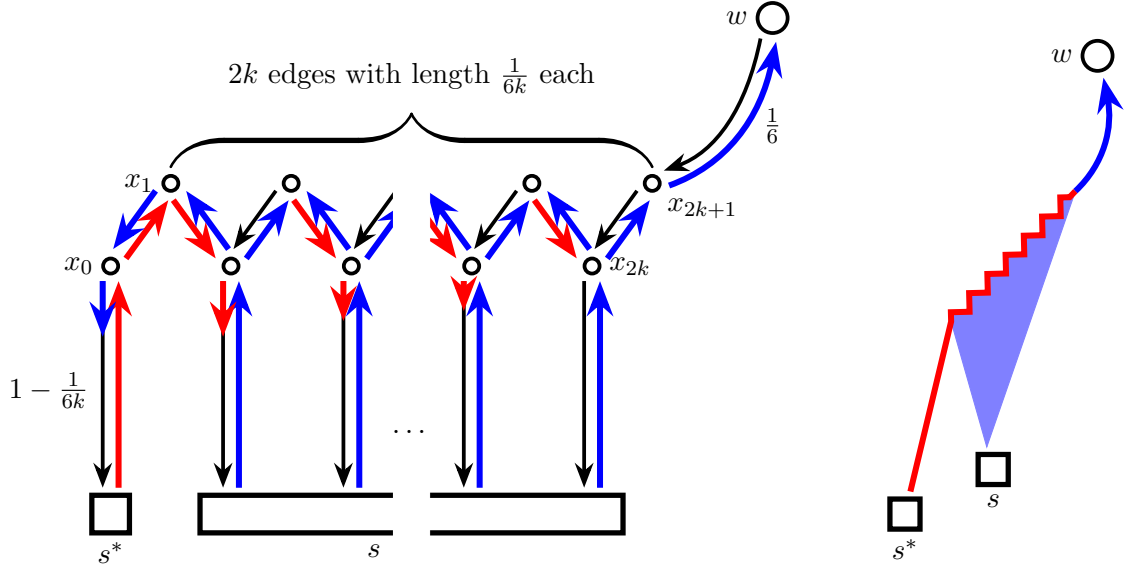
\begin{figure}
    \centering
    \clearpage{}
\begin{tikzpicture}[xscale=0.8, yscale=1.1]
    \useasboundingbox (-1.8,0.3) rectangle ++(19,6.5);
    \node[terminal, label=below:{$s^*$}] (S1) at ( 0, 1) {};
    \node[terminal, transparent] (S21) at ( 2, 1) {};
    \node[terminal, transparent] (S22) at ( 4, 1) {};
    \node[terminal, transparent] (S23) at ( 6, 1) {};
    \node[terminal, transparent] (S24) at ( 8, 1) {};
    \node[terminal,minimum width=160pt, label=below left:{$s$}] (S2) at ( 5, 1) {};

    \node[tinysteiner, label=left:{$x_0$}] (A1) at ( 0, 4) {};
    \node[tinysteiner, label=left:{$x_1$}] (A2) at ( 1, 5) {};
    \node[tinysteiner] (A3) at ( 2, 4) {};
    \node[tinysteiner] (A4) at ( 3, 5) {};
    \node[tinysteiner] (A5) at ( 4, 4) {};
    \node[tinysteiner] (A6) at ( 5, 5) {};
    \node[tinysteiner] (A7) at ( 6, 4) {};
    \node[tinysteiner] (A8) at ( 7, 5) {};
    \node[tinysteiner, label=right:{$x_{2k}$}] (A9) at ( 8, 4) {};
    \node[tinysteiner, label=below right:{$x_{2k+1}$}] (A10)at ( 9, 5) {};
    
    \node[smallsteiner, label=left:{$w$}] (C) at (11, 7) {};
    
    \draw[edgedirprogress, red] (S1) -- (A1);
    \draw[edgedirprogress] (S21) -- (A3);
    \draw[edgedirprogress] (S22) -- (A5);
    \draw[edgedirprogress] (S23) -- (A7);
    \draw[edgedirprogress] (S24) -- (A9);
    \draw[diredge] (A1) -- (S1);
    \draw[edgedirprogress] (A1) -- ($(A1)!2/7!(S1)$);
    \draw[diredge] (A3) -- (S21);
    \draw[edgedirprogress, red] (A3) -- ($(A3)!2/7!(S21)$);
    \draw[diredge] (A5) -- (S22);
    \draw[edgedirprogress, red] (A5) -- ($(A5)!2/9!(S22)$);
    \draw[diredge] (A7) -- (S23);
    \draw[edgedirprogress, red] (A7) -- ($(A7)!2/11!(S23)$);
    
    \draw[edgedirprogress, red] (A1) -- (A2);
    \draw[edgedirprogress, red] (A2) -- (A3);
    \draw[edgedirprogress] (A3) -- (A4);
    \draw[edgedirprogress, red] (A4) -- (A5);
    \draw[edgedirprogress] (A5) -- (A6);
    \draw[edgedirprogress, red] (A6) -- (A7);
    \draw[edgedirprogress] (A7) -- (A8);
    \draw[edgedirprogress, red] (A8) -- (A9);
    \draw[edgedirprogress] (A9) -- (A10);
    
    \draw[edgedirprogress] (A9) -- (A8);
    \draw[diredge] (A8) -- (A7);
    \draw[edgedirprogress] (A7) -- (A6);
    \draw[diredge] (A6) -- (A5);
    \draw[edgedirprogress] (A5) -- (A4);
    \draw[diredge] (A4) -- (A3);
    \draw[edgedirprogress] (A3) -- (A2);
    \draw[edgedirprogress] (A2) -- (A1);

    \draw[edgedirprogress] (A10) edge[bend right = 30] (C);
    \draw[diredge] (C) edge[bend left = 30] (A10);
    \draw[diredge] (A9) -- (S24);
    \draw[diredge] (A10) -- (A9);

    \draw[white, fill=white] (4.7, -1) rectangle (5.3,6);
    \node[] at (5,2) {$\dots$};
    
    \node[left] at  (-0.2,2.5) {$1-\frac{1}{6k}$};

    \draw[line width = 2pt,decoration={calligraphic  brace,amplitude=20pt},decorate] (1,5.2) -- (9,5.2);
    \node at (5,6.3) {$2k$ edges with length $\frac{1}{6k}$ each};

\node at (11,5.7) {$\frac16$};

    \begin{scope}[shift={(13.2,1)}, rotate=60, xscale=0.8]    
        \node[smallterminal, label=below:{$s^*$}] (S1) at ( 0, 0) {};
        \node[smallterminal, label=below:{$s$}] (S2) at ( 1.5,-1) {};
        \node[smallsteiner, label=left:{$w$}] (C) at (8, 0) {};

        \draw[blue!50!white, fill = blue!50!white] (S2)
        -- ($(S1)+(3,1/2)$)
        decorate[decoration=zigzag]{--++(3,-1)}
        -- cycle;

        \draw[edgeprogress, red, offset=0pt] (S1)
         -- (3,0.5) 
        decorate[decoration=zigzag]  { -- ++(3,-1) }
        edge[edgedirprogress,offset=0pt,bend right = 30] (C);
    \end{scope}

\end{tikzpicture}
\clearpage{}
    \caption{\label{fig:3x_gadget}
    The left part of the figure illustrates the $3\times$-gadget from \Cref{lem: 3x gadget}.
    The colors represent the contributions of the terminal sets containing $s^*$ (red) and $s$ (blue) at time $t=\frac{7}{6}$ (assuming that $s$ and $s^*$ are not already merged before this time $t$).
    The right part shows an abstract drawing of the same gadget that we will use in all further figures to represent the $3\times$-gadget.
    }
\end{figure}

To obtain a corresponding graph, we will first construct some smaller gadgets.
Our first gadget is depicted in \Cref{fig:3x_gadget}.
The next lemma states the properties of this gadget that we will later need to prove \Cref{lem: if we dont merge x and y everybody reaches v}.
We call the graph from \Cref{lem: 3x gadget} a \emph{$3\times$-gadget} from $s^*$ to $w$.
The reason for the name is that the construction is similar to the example from \Cref{fig:zig_zag_example}, where the distance of reachable vertices can increase effectively at speed $3$.

\begin{lemma}[$3\times$-gadget]\label{lem: 3x gadget}
    For every $s^*\in R\setminus\lrb{s}$, there is a connected graph, called the $3\times$-gadget, with the following properties:
    \begin{itemize}
        \item The vertex set of the gadget consists of the terminals $s$, $s^*$, a distinguished Steiner vertex $w$, and possibly further Steiner vertices.
        \item Suppose we run \Cref{algo:dual_growth_with_merge_plan} with merge plan $\cM$ on a graph containing the $3\times$-gadget as a subgraph. 
        If before time $t=\frac{7}{6}$, every tight path from some terminal to a vertex of the gadget is completely contained in the gadget itself, then
        $\atf{s}{w}=\frac76$ and $\min\lrb{\atf{s^*}{w}, \merge{\cM}{s^*, s}}\leq\frac76$.
        \item We have $\dist{s, w}=\frac76$, $\dist{s^*,w}=\frac32$, and $\dist{s, s^*}=2$.
        \item Every component connecting $s$, $s^*$ and $w$ has cost at least $\frac52-\frac{1}{6k}$.
    \end{itemize}
\end{lemma}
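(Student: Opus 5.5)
Take the graph drawn in \Cref{fig:3x_gadget}. Steiner vertices $x_0,x_1,\dots,x_{2k+1}$ form a path of $2k+1$ edges, each of length $\frac{1}{6k}$. The terminal $s^*$ is joined to $x_0$ by an edge of length $1-\frac{1}{6k}$. For $i=1,\dots,k$ the terminal $s$ is joined to each even vertex $x_{2i}$ by an edge of length $1$, and $x_{2k+1}$ is joined to $w$ by an edge of length $\frac16$. The two zig-zag examples (\Cref{fig:zig_zag_example} and \Cref{fig:simple_interaction_example}) suggest that the edges $s$--$x_{2i}$ should all have the same length. I would check the cost claims before the dynamics.

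\textbf{Distances.} The shortest $s$--$w$ route is $s\to x_{2k}\to x_{2k+1}\to w$, of length $1+\frac{1}{6k}+\frac16$. That is not $\frac76$, so I would shorten the last edge to $\frac16-\frac{1}{6k}$, or equivalently have $s$ attach to the odd vertices. I would fix the lengths so that the three stated distances hold exactly and then verify them directly. For $\dist{s^*,w}$, take the path along the chain: $1-\frac{1}{6k}$, plus the chain, plus the last edge, totalling $\frac32$. Any shortcut through $s$ costs at least $1+\frac76>\frac32$. For $\dist{s,s^*}$, take $1+\frac{1}{6k}\cdot 2\cdot(\text{index})+1-\frac{1}{6k}$ through the nearest attachment, which should come to $2$. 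For the component bound, a tree connecting $s,s^*,w$ is a Steiner tree in a small graph. It must use one edge at $s$ (length $\ge1$), the edge at $s^*$ (length $1-\frac{1}{6k}$), the edge at $w$, and chain edges joining these. Minimizing over the attachment point of $s$ gives $\frac52-\frac{1}{6k}$.

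\textbf{Dynamics.} Assume no tight path enters the gadget from outside before $\frac76$. Then on $[0,\frac76]$ the only sets growing into the gadget are the active set $A$ containing $s$ and the active set $B$ containing $s^*$. If $\merge{\cM}{s^*,s}\le\frac76$ the second claim is immediate, so assume $A\neq B$ throughout. The set $A$ reaches every $x_{2i}$ at time $1$. From then on $A$ tightens the short chain edges in both directions from each even vertex. Each odd $x_{2i+1}$ lies between two even vertices at distance $\frac{1}{6k}$, so $A$ reaches all odd vertices at time $1+\frac{1}{6k}$. Adjusting lengths as above, $A$ then reaches $w$ at exactly $\frac76$. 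Upper-bounding $\atf{s}{w}$ is easy: contributions only make edges tight sooner. For the lower bound $\atf{s}{w}\ge\frac76$, note that $B$ can reach the chain only at time $1-\frac{1}{6k}$ and only at $x_0$. So $B$ can speed up $A$ only near $x_0$, which is far from $w$. Any tight path from $s$ to $w$ ends with the $w$-edge, to which only sets already at $x_{2k+1}$ contribute, and nothing reaches $x_{2k+1}$ before $1+\frac{1}{6k}$.

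\textbf{The main obstacle} is $\atf{s^*}{w}\le\frac76$, which is the zig-zag speed-$3$ effect. $B$ reaches $x_0$ at $1-\frac{1}{6k}$ and moves along the chain with help from $A$. $A$ has already tightened each backward edge $(x_{j+1},x_j)$ and each forward edge $(x_{2i},x_{2i+1})$ by time $1+\frac{1}{6k}$, while $B$ pays for the remaining forward edges $(x_{2i-1},x_{2i})$. I would tabulate the contributions edge by edge, as in \Cref{fig:3x_gadget}, to show that $B$ reaches $x_{2i}$ by time about $1+\frac{i}{6k}\cdot\frac13\cdot(\ldots)$ and hence $x_{2k+1}$ by $\frac76$. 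At that point $(x_{2k+1},w)$ has already been made tight by $A$, giving $\atf{s^*}{w}\le\frac76$. This is where I expect to need to tune the edge lengths, and possibly the number of teeth, precisely.
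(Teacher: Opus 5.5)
Your gadget has the right shape (it is the one in \Cref{fig:3x_gadget}), but this lemma is really about calibrating the edge lengths exactly. Neither of your two proposed calibrations satisfies all four bullets, and the two are not equivalent.

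\textbf{Your first variant.} Take spokes $\{s,x_{2i}\}$ of length $1$ and shorten the tail to $\frac16-\frac1{6k}$. The dynamics then work: $B$ reaches $x_{2k+1}$ exactly at $\frac76$. But $\dist{s,s^*}=1+\frac{2}{6k}+1-\frac1{6k}=2+\frac1{6k}$, so your claim that this ``should come to $2$'' is false. Also $\dist{s^*,w}=\frac32-\frac1{6k}$, so the third bullet fails.

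\textbf{Your second variant.} Take spokes of length $1$ at the odd vertices and keep the tail at $\frac16$. The distances are now correct, but the dynamics fail. $A$ and $B$ both arrive at $x_1$ at time $1$ and jointly tighten $(x_1,x_2)$ by $1+\frac1{12k}$. From then on $B$ pays alone for every edge $(x_{2i},x_{2i+1})$. It reaches $x_{2k+1}$ only at $\frac76+\frac1{12k}$, so $\atf{s^*}{w}>\frac76$. The second bullet therefore fails whenever $\merge{\cM}{s^*,s}>\frac76$.

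\textbf{The missing calibration.} Give the spokes to $x_2,\dots,x_{2k}$ the same length $1-\frac1{6k}$ as $\{s^*,x_0\}$, and keep the tail at $\frac16$.
\begin{itemize}
\item The three distances are then immediate.
\item Both sets hit the chain at time $1-\frac1{6k}$.
\item By time $1$, $A$ has tightened $(x_{2j},x_{2j\pm1})$ for all $j\ge1$. So $A$ sits at $x_{2k+1}$ at time $1$ and alone makes the tail tight at $\frac76$.
\item By time $1$, $B$ has tightened $(x_0,x_1)$.
\item $A$ reaches $x_{2j}$ before $x_{2j-1}$, so it never contributes to $(x_{2j-1},x_{2j})$. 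Hence $B$ pays $\frac1{6k}$ for each of these edges alone and crosses $(x_{2j},x_{2j+1})$ for free.
\item Consequently $B$ reaches $x_{2j+1}$ at exactly $1+\frac{j}{6k}$, in particular $x_{2k+1}$ at exactly $\frac76$.
\end{itemize}

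\textbf{Why the timeline matters twice.} This explicit timeline is what your proposal leaves as ``about $1+\frac{i}{6k}\cdot\frac13\cdot(\ldots)$''. It is also what the lower bound $\atf{s}{w}\ge\frac76$ needs. Your argument that $B$ ``can speed up $A$ only near $x_0$'' does not hold: the whole point of the gadget is that $B$ travels to the far end. You must show that $B$ arrives at $x_{2k+1}$ no earlier than $\frac76$, so that it never contributes to the tail before then.

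\textbf{Early-merge case.} You also need $\atf{s}{w}=\frac76$ when $s$ and $s^*$ merge before $\frac76$; you only handle the second bullet in that case. It holds because, before $\frac76$, the edges of the path $s,x_{2k},x_{2k+1},w$ receive contributions only from the single active set containing $s$, at rate one.
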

\begin{proof}
    We construct a gadget as illustrated in \Cref{fig:3x_gadget}.
    The graph we construct has vertices $s$, $s^*$, $w$, as well as further Steiner vertices $x_0, \dots, x_{2k+1}$.
    The edges are the following:
    \begin{itemize}
        \item $\lrb{x_j,x_{j+1}}$ for $j=0, \dots, 2k$ each with length $\frac{1}{6k}$,
        \item $\lrb{s^*,x_0}$ and $\lrb{s,x_{2j}}$ for $j=1,\dots, k$ each with length $1-\frac{1}{6k}$, and
        \item $\lrb{x_{2k+1},w}$ with length $\frac16$.
    \end{itemize}

    We now show that this gadget has the claimed properties.
    The first property regarding the vertex set of the gadget is satisfied by construction.

    Next, we consider a run of \Cref{algo:dual_growth_with_merge_plan} on a graph containing the $3\times$-gadget as a subgraph.
    First suppose that the merge time of $s$ and $s^*$ is at least $\frac{7}{6}$.
    At time $1-\frac1{6k}$, all edges leaving the terminals are tight. At time $1$ both terminals meet in $x_1$. At this time, each edge $\lrb{x_j,x_{j+1}}$ has one tight orientation. 
    Afterwards, every $\frac{1}{6k}$ time units, the contribution of the set $S^*\in \cS^t$ containing $s^*$ makes an edge $\lr{x_{2j-1},x_{2j}}$ tight; the edge $\lr{x_{2j},x_{2j+1}}$ is already tight (since time $1$). This pattern continues for $k$ steps until time $\frac76$. 
    Meanwhile the contribution of the set $S\in \cS^t$ containing $s$ makes the edge $\lrb{x_{2k+1}, w}$ tight.
    Thus, we have $\atf{s}{w}=\frac{7}{6}$ and $\atf{s^*}{w}=\frac{7}{6}$.
    If the merge time of $s$ and $s^*$ is at most $\frac{7}{6}$, then it suffices to prove $\atf{s}{w}=\frac{7}{6}$. 
    This is indeed the case because until time $t=\frac{7}{6}$, the set $S\in \cS^t$ containing $s$ contributes to the $s$-$w$ path of length $\frac{7}{6}$ and no other set does so.
    
    Now, we observe $\dist{s, w}=\frac76$, $\dist{w, s^*}=\frac32$, and $\dist{s, s^*}=2$.
    Indeed, the shortest $s$-$w$ path has vertices $s_1$, $x_{2k}$, $x_{2k+1}$, and $w$. Its length is $\lr{1-\frac{1}{6k}}+\frac{1}{6k}+\frac16=\frac76$.
    The shortest $s^*$-$w$ path goes through all Steiner vertices $x_0$ to $x_{2k+1}$ and has length $\lr{1-\frac{1}{6k}}+\frac{2k+1}{6k}+\frac16=\frac32$.
    The shortest $s^*$-$s$ path has vertices $s^*$, $x_0$, $x_1$, $x_2$, and $s$ and length $\lr{1-\frac1{6k}}+\frac1{6k}+\frac1{6k}+\lr{1-\frac1{6k}}=2$.

    Finally, we determine the length of a shortest component connecting $s$, $s^*$, and $w$.
    If the $s$-$w$ path in the component contains $s$, then the component has length (at least) $\dist{s^*,s} +\dist{s,w} = 3 +\frac{1}{7}$.
    Otherwise, because the shortest $s^*$-$w$-path is also the only $s^*$-$w$-path that does not contain~$s$, the component consists of the shortest $s^*$-$w$-path together with one of the edges incident to $s$. 
    This component has cost $\frac32+\lr{1-\frac{1}{6k}}=\frac52-\frac{1}{6k}$.
\end{proof}

\begin{figure}
    \centering
    \clearpage{}
\newcommand{\ThreeXGadget}[2]{
    \draw[blue!50!white, fill = blue!50!white] ($(D#1)+(0.5,0.35)$) 
        -- ($(C#1)+(1/3, 2)$)
        decorate[decoration=zigzag]{--++(-2/3, 2)}
        -- cycle;
    \draw[edgeprogress, #2, offset=0pt] (C#1)
        -- ++(1/3, 2) 
        decorate[decoration=zigzag]  { -- ++(-2/3, 2) }
        edge[edgedirprogress,offset=0pt,bend right = 20, draw=blue] (A#1);
}

\begin{tikzpicture}[xscale=0.9]
\useasboundingbox (-2,0) rectangle ++(16,9);
    \node[smallsteiner, label=left:{$x_0$}] (A0) at (-1, 7) {};
    \node[smallsteiner] (A1) at ( 1, 7) {};
    \node[smallsteiner] (A2) at ( 3, 7) {};
    \node[smallsteiner] (A3) at ( 5, 7) {};
    \node[smallsteiner] (A4) at ( 7, 7) {};
    \node[smallsteiner, label=right:{$x_{30k}$}] (A5) at ( 9, 7) {};
    \node[smallsteiner] (B1) at ( 0, 6) {};
    \node[smallsteiner] (B2) at ( 2, 6) {};
    \node[smallsteiner] (B3) at ( 4, 6) {};
    \node[smallsteiner] (B4) at ( 6, 6) {};
    \node[smallsteiner] (B5) at ( 8, 6) {};
    \node[smallsteiner, label=right:{$w$}] (B6) at (10, 6) {};
    \node[terminal, transparent] (C1) at ( 1, 1) {};
    \node[terminal, transparent] (C2) at ( 3, 1) {};
    \node[terminal, transparent] (C3) at ( 5, 1) {};
    \node[terminal, transparent] (C4) at ( 7, 1) {};
    \node[terminal, transparent] (C5) at ( 9, 1) {};
    \node[terminal,minimum width=240pt, label=right:{$s'$}] (C) at ( 5, 1) {};
    \node[terminal, transparent] (D1) at ( 0, 2) {};
    \node[terminal, transparent] (D2) at ( 2, 2) {};
    \node[terminal, transparent] (D3) at ( 4, 2) {};
    \node[terminal, transparent] (D4) at ( 6, 2) {};
    \node[terminal, transparent] (D5) at ( 8, 2) {};
    \node[terminal,minimum width=240pt, label=right:{$s$}] (D) at ( 4, 2) {};
    \node[terminal, label=below:{$s^*$}] (E) at (-1, 1) {};

    \ThreeXGadget{1}{red}
    \ThreeXGadget{2}{red}
    \ThreeXGadget{3}{red}
    \ThreeXGadget{4}{red}

    \draw[blue!50!white, fill = blue!50!white] ($(D1)+(-0.5,0.35)$) 
        -- ($(E)+(-1/3, 2)$)
        decorate[decoration=zigzag]{--++(2/3, 2)}
        -- cycle;
    \draw[edgeprogress, green!70!black, offset=0pt] (E)
        -- ++(-1/3, 2) 
        decorate[decoration=zigzag]  { -- ++(2/3, 2) }
        edge[edgedirprogress,offset=0pt,bend left = 20, draw=blue] (A0);
    \draw[edgedirprogress,offset=0pt] (D1) -- (B1);
    \draw[edgedirprogress,offset=0pt] (D2) -- (B2);
    \draw[edgedirprogress,offset=0pt] (D3) -- (B3);
    \draw[edgedirprogress,offset=0pt] (D4) -- (B4);
    \draw[edgedirprogress,offset=0pt] (D5) -- (B5);
    
    \draw[edgedirprogress] (B1) -- (A0);
    \draw[edgedirprogress, green!70!black] (A0) -- (B1);
    \draw[edgedirprogress] (B1) -- (A1);
    \draw[edgedirprogress, red] (A1) -- (B1);
    \draw[edgedirprogress] (B2) -- (A1);
    \draw[edgedirprogress, red] (A1) -- (B2);
    \draw[edgedirprogress] (B2) -- (A2);
    \draw[edgedirprogress, red] (A2) -- (B2);
    \draw[edgedirprogress] (B3) -- (A2);
    \draw[edgedirprogress, red] (A2) -- (B3);
    \draw[edgedirprogress] (B3) -- (A3);
    \draw[edgedirprogress, red] (A3) -- (B3);
    \draw[edgedirprogress] (B4) -- (A3);
    \draw[edgedirprogress, red] (A3) -- (B4);
    \draw[edgedirprogress] (B4) -- (A4);
    \draw[edgedirprogress, red] (A4) -- (B4);
    \draw[edgedirprogress] (B5) -- (A4);
    \draw[edgedirprogress, red] (A4) -- (B5);
    \draw[edgedirprogress, offset=0pt] (B5) -- (A5);
    \draw[edgedirprogress, offset=0pt] (A5) -- (B6);

    \draw[white, fill=white] (3.75, -1) rectangle (5.39,8);
    \node[] at (4.6,4) {$\dots$};

\draw[line width = 2pt,decoration={calligraphic  brace,amplitude=25pt},decorate] (-1,7.4) -- (9,7.4);
    \node at (5,8.6) {$30k$ edges with length $\frac{1}{6k}$ each};

    \begin{scope}[shift={(13.5,1)}, rotate=90, xscale=0.7]    
        \node[terminal, label=below:{$s^*$}] (S1) at ( 0, 0) {};
        \node[smallsteiner, label=above:{$w$}] (V) at (9, 0) {};

        \draw[edgeprogress, green!70!black, offset=0pt] (S1)
         -- ++(2,1/3) 
        decorate[decoration=zigzag]  { -- ++(2,-2/3) }
        edge[edgeprogress,offset=0pt,bend left = 30] ++(2,1/3);
        \draw[edgedirprogress,offset=0pt,decorate,decoration={zigzag,post length=5mm}, green!70!black] (6,0) -- (V);
        \draw[edgeprogress,offset=2pt,decorate,decoration={zigzag,post length=5mm},arrows = {-Stealth[harpoon]},shorten >=2.5pt] (6,0) -- (V);
        \draw[edgeprogress,red,offset=-2pt,decorate,decoration={zigzag,post length=5mm},arrows = {-Stealth[harpoon,swap]},shorten >=2.5pt] (6,0) -- (V);
    \end{scope}
\end{tikzpicture}\clearpage{}
    \caption{\label{fig:jump_gadget}
    The left part of the figure illustrates the jump-gadget from \Cref{lem: jump gadget}.
    The colors represent the contributions of the terminal sets containing $s^*$ (green), $s$ (blue) and $s'$ (red) at time $t=\frac{7}{6} + \frac{1}{6k}$ (assuming that none of the terminals are merged before this time $t$).
    The right part shows an abstract drawing of the same gadget that we will use in the next figure to represent the jump-gadget.
    }
\end{figure}

Next, we use the $3\times$-gadget from \Cref{lem: 3x gadget} to construct another gadget, which is illustrated in \Cref{fig:jump_gadget}.
\Cref{lem: jump gadget} captures the properties of this gadget that we will need in the proof of \Cref{lem: if we dont merge x and y everybody reaches v}.
We will call the gadget from \Cref{lem: jump gadget} a \emph{jump-gadget} from $s^*$ to $v$.
Recall that the terminals $s$ and $s'$ are fixed throughout and the merge plan $\cM$ is such that $\merge{\cM}{s,s'} \geq \frac{7}{6}+\epsilon$.

\begin{lemma}[jump-gadget]\label{lem: jump gadget}
    For every $s^*\in R\setminus\lrb{s, s'}$, there is a connected graph, called the jump-gadget with the following properties:
    \begin{itemize}
        \item The vertex set of the gadget consists of the terminals $s$, $s'$, $s^*$, a distinguished Steiner vertex $w$, and possibly further Steiner vertices.
        \item Suppose we run \Cref{algo:dual_growth_with_merge_plan} with merge plan $\cM$ on a graph containing the jump-gadget as a subgraph. 
        If before time $t=\frac{7}{6}+\frac{1}{6k}$, every tight path from a terminal to a vertex of the gadget is completely contained in the gadget itself, then we have
        $\atf{s}{w}=\atf{s'}{w}=\frac76+\frac{1}{6k}$ and $\min\lrb{\atf{s^*}{w}, \merge{M}{s^*, s}}\leq\frac76+\frac{1}{6k}$. 
        \item The pairwise distances of any two terminals in the gadget is at least $2$ and $\dist{s^*,s}= \dist{s',s}=2$.
        \item We have $\dist{s, w}=\frac76+\frac{1}{6k}$, $\dist{s', w}=\frac32+\frac{3}{6k}$, and $\dist{s^*, w}=2+\frac76+\frac{1}{6k}$.
        \item The only $s^*$-$w$ path that does not go through another terminal costs at least $6$.
        \item Any component connecting $s$, $s'$, and $s^*$ has cost at least $4$.
    \end{itemize}
\end{lemma}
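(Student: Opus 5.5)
We will build the jump-gadget explicitly, following \Cref{fig:jump_gadget}. It consists of a ladder of Steiner vertices $x_0,\dots,x_{30k}$ joined by edges $\{x_j,x_{j+1}\}$ of length $\frac{1}{6k}$. Every even-indexed ladder vertex $x_{2j}$ is joined to $s$ by an edge of length $1-\frac{1}{6k}$, or more precisely to the vertices $y_j$ drawn in the figure, which are attached to $s$ by such edges. The lengths are tuned so that $s$ reaches each rung at time exactly $1$, and $s$ reaches $w$ via $x_{30k-1},x_{30k}$ at time $\frac76+\frac1{6k}$.

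For each of the intermediate rungs $x_{2j-1}$ we attach a $3\times$-gadget (\Cref{lem: 3x gadget}) from $s'$ whose distinguished vertex is identified with that rung. Similarly, we attach a $3\times$-gadget from $s^*$ to $x_0$, with roles chosen as in the figure. The point is that $s'$ (respectively $s^*$) reaches these rungs at time $\frac76$, or has already merged with $s$, which is excluded for $s'$ since $\merge{\cM}{s,s'}\ge\frac76+\varepsilon$. In the figure, $s^*$ takes the role of the terminal that is merged-or-arrives.

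The key steps are as follows.

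\emph{Step 1: distances.} We compute them by inspecting shortest paths, using the distances inside each $3\times$-gadget. The distances $\dist{s,s'}=2$ and $\dist{s^*,s}=2$ come from the gadgets. The value $\dist{s',w}=\frac32+\frac3{6k}$ comes from the last $3\times$-gadget plus a couple of ladder edges. The value $\dist{s^*,w}$ comes from going $s^*\to s\to w$, which gives $2+\frac76+\frac1{6k}$, and we check that the ladder route is longer. This is why the ladder has $30k$ edges: it has length $5$, so any $s^*$-$w$ path avoiding other terminals costs at least $1+5\ge6$. For the pairwise lower bound of $2$ on terminal distances, note that every terminal-to-ladder edge has length close to $1$.

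\emph{Step 2: dynamics.} We simulate \Cref{algo:dual_growth_with_merge_plan}, using the hypothesis that tight paths stay inside the gadget. By \Cref{lem: 3x gadget}, $s'$ reaches each attached rung at time $\frac76$. On the interval $[\frac76,\frac76+\frac1{6k}]$, the sets containing $s$ and $s'$ both contribute to the short ladder edges from the rungs. This makes the final segment to $w$ tight at time $\frac76+\frac1{6k}$ for both $s$ and $s'$, giving $\atf{s}{w}=\atf{s'}{w}=\frac76+\frac1{6k}$. For $s^*$: either it has merged with $s$ by time $\frac76+\frac1{6k}$, or the $3\times$-gadget delivers it to $x_0$ at time $\frac76$. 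In the latter case, the ladder, whose edges each already have one tight orientation from $s$ and $s'$ since time $1$, lets it propagate. The alternating orientations then let $s^*$ reach $w$ by time $\frac76+\frac1{6k}$.

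\emph{Step 3: component lower bound.} A component connecting $s,s',s^*$ costs at least $4$. We prove this by cases on whether it uses paths through the ladder, using the component bound of \Cref{lem: 3x gadget} ($\frac52-\frac1{6k}$) together with the edges of length about $1$ into the terminals.

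We expect Step 2 to be the main obstacle. The orientation bookkeeping on the ladder, that is, which direction of each edge gets tight at which time and which set contributes, must be done carefully to get exactly the times $\frac76+\frac1{6k}$. We would handle it edge by edge, as in the proof of \Cref{lem: 3x gadget}, tracking the contributions of the three sets. If the simulation fails to match, we would adjust which rungs the $3\times$-gadgets attach to.
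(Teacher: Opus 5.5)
There is a genuine gap: your ladder-and-gadget architecture matches the paper's, but the edge lengths you give for $s$ are wrong, and the lemma is entirely about precise timing. If $s$ is joined to the even ladder vertices by edges of length $1-\frac{1}{6k}$, then $\dist{s,w}\le 1+\frac{2}{6k}$. For $k\ge 2$ this is less than the required $\frac76+\frac1{6k}$. Your claim that $s$ reaches $w$ via $x_{30k-1},x_{30k}$ at time $\frac76+\frac1{6k}$ contradicts your own lengths.

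More seriously, $s$ then reaches every attachment vertex of a $3\times$-gadget through the ladder at about time $1<\frac76$. This violates the hypothesis of \Cref{lem: 3x gadget} that no tight path enters the gadget from outside before time $\frac76$. So you cannot conclude that $s'$ and $s^*$ arrive at their rungs at time $\frac76$, and in fact they do not. Once $s$ reaches the distinguished vertex of a $3\times$-gadget from outside, it has both endpoints of the gadget's final edge of length $\frac16$ and stops contributing to it. Then $s'$ must make that edge tight on its own and no longer arrives at time $\frac76$. Your Step~2 claim that one orientation of every ladder edge is tight from time $1$ has the same defect. (There are also no auxiliary vertices $y_j$; the lower row in the figure consists of the odd ladder vertices themselves.) Your fallback of adjusting which rungs the gadgets attach to does not address this; the lengths are what must change.

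The missing ingredient is the tuning. In the paper, $s$ is joined to the odd vertices $x_{2j-1}$ ($j=1,\dots,15k$) by edges of length $\frac76-\frac{1}{6k}$. The $3\times$-gadgets from $s'$ sit at $x_{2j}$ ($j=1,\dots,15k-1$), and the one from $s^*$ sits at $x_0$. Then $s$ reaches the odd vertices at time $\frac76-\frac1{6k}$ and the even ones exactly at time $\frac76$. Hence no tight path enters a $3\times$-gadget from outside before time $\frac76$, and \Cref{lem: 3x gadget} applies, using $\merge{\cM}{s,s'}\ge\frac76+\varepsilon$ for $s'$.

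During $(\frac76,\frac76+\frac1{6k}]$ three things happen:
\begin{itemize}
\item the set containing $s'$ makes the orientations $(x_{2j},x_{2j\pm1})$ tight;
\item the set containing $s^*$, unless merged with $s$, makes $(x_0,x_1)$ tight;
\item the set containing $s$ makes $(x_{30k},w)$ tight.
\end{itemize}
This gives $\atf{s}{w}=\atf{s'}{w}=\frac76+\frac1{6k}$ and the claim for $s^*$. It also gives $\dist{s,w}=\frac76+\frac1{6k}$ and $\dist{s',w}=\frac32+\frac3{6k}$. With these lengths, your Steps~1 and~3 go through essentially as you sketch them.
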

\begin{proof}
    We construct the gadget as shown in \Cref{fig:jump_gadget}. 
    To construct the graph we include the vertices $s$, $s'$, $s^*$, and $w$, as well as additional Steiner vertices $x_0, \dots, x_{30k}$.
Moreover, the graph contains the following edges:
    \begin{itemize}
        \item $\lrb{x_j,x_{j+1}}$ for $j=0, \dots, 30k-1$ and $\lrb{x_{30k}, w}$ each with length $\frac{1}{6k}$, and
        \item $\lrb{s,x_{2j-1}}$ for $j=1,\dots, 15k$ each with length $\frac76-\frac{1}{6k}$.
    \end{itemize}
    Additionally, we add a $3\times$-gadget from $s^*$ to $x_0$ and $3\times$-gadgets from $s'$ to $x_{2j}$ for $j=1, \dots, 15k-1$.

    We now show that this gadget has the desired properties.
    The first property regarding the vertex set of the gadget is satisfied by construction.

    Next, we consider a run of \Cref{algo:dual_growth_with_merge_plan} with merge plan $\cM$ on a graph containing the $3\times$-gadget as a subgraph.
    Recall that we chose the merge plan $\cM$ such that $\merge{\cM}{s, s'}\geq\frac76+\varepsilon \geq \frac76+\frac{1}{6k}$.
    
    First suppose that $\merge{M}{s^*, s} \geq \frac{7}{6} +\frac{1}{6k}$.
    Then at time $t=\frac76$, the terminal $s^*$ reaches $x_0$ and $s'$ reaches $x_{2j}$ for $j=1, \dots, 3k$ via paths of tight edges (by \Cref{lem: 3x gadget}).
    The terminal $s$ reaches all vertices $x_j$ and the set $S\in \cS^t$ containing $s$ has already made one orientation of each $\lrb{x_j, x_{j+1}}$-edge tight, namely the direction from odd index to even index (the short blue edges in \Cref{fig:jump_gadget}, except for the edge to $w$).
    During the next $\frac{1}{6k}$ time units, the terminal set from $\cS^t$ containing $s^*$ makes the edge $\lr{x_0, x_1}$ tight (short green edge in \Cref{fig:jump_gadget}) and the terminal set from $\cS^t$ containing $s'$ makes all remaining orientations of edges $\lrb{x_j, x_{j+1}}$ tight (short red edges in \Cref{fig:jump_gadget}). 
    Meanwhile, the terminal set from $\cS^t$ containing $s$ makes the edge $\lr{x_{30k}, w}$ tight. 
    Hence, we have $\atf{s}{w}=\atf{s'}{w}=\atf{s^*}{w}=\frac{7}{6}+\frac{1}{6k}$, as desired.

    If $\merge{M}{s^*, s}\leq \frac{7}{6} +\frac{1}{6k}$, then it suffices to prove $\atf{s}{w}=\atf{s'}{w}=\frac{7}{6}+\frac{1}{6k}$.
    This is indeed the case because the edges of the shortest $s$-$w$ path and the shortest $s'$-$w$ path become tight at exactly the same time as in the first case.
    (The contributions of sets containing $s^*$ but neither $s$ nor $s'$, which are shown in green in \Cref{fig:jump_gadget}, do not affect this part of the gadget before time $\frac{7}{6}+\frac{1}{6k}$.)

    Using \Cref{lem: 3x gadget}, we conclude that the pairwise distance between any two terminals is at least $2$ and $\dist{s^*,s}= \dist{s',s}=2$.
    Moreover, we next observe $\dist{s, w}=\frac76+\frac{1}{6k}$, $\dist{s', w}=\frac32+\frac{3}{6k}$, and $\dist{s^*, w}=2+\frac76+\frac{1}{6k}$.
    Indeed, a shortest $s$-$w$ path has vertices $s$, $x_{30k-1}$, $x_{30k}$, $w$ and has length 
    $\lr{\frac76-\frac{1}{6k}}+\frac{1}{6k}+\frac{1}{6k}=\frac76+\frac{1}{6k}$.
    The shortest $s'$-$w$ path uses the $3\times$-gadget from $s'$ to $x_{30k-2}$ and edges $\lrb{x_{30k-2},x_{30k-1}}$, $\lrb{x_{30k-1},x_{30k}}$, and $\lrb{x_{30k},w}$. By \Cref{lem: 3x gadget}, its length is $\frac32+3\cdot\frac{1}{6k}$.
    The only $s^*$-$w$ path that does not go through another terminal has cost $\frac32+\frac{30k}{6k}+\frac{1}{6k}>6$.
    Thus, a shortest $s^*$-$w$ path consists of a shortest $s^*$-$s$ path inside a $3\times$-gadget, which has length $2$ by \Cref{lem: 3x gadget}, and a shortest $s$-$w$ path, which has length $\frac{7}{6}+\frac{1}{6k}$.

    It remains to prove that every component connecting $s$, $s'$, and $s^*$ has cost at least $4$.
    Because the three terminals have pairwise at least $2$ from each other, every component consisting of two disjoint paths between terminals has cost at least $4$.
    The only other possibility to construct a component connecting the three terminals is to connect $s$ to some $x_i$ and either  $s'$ or $s^*$ inside some $3\times$-gadget and then connect the remaining terminal via a path to $x_i$.
    By \Cref{lem: 3x gadget}, connecting $s$ to $x_i$ and $s'$ or $s^*$ inside a $3\times$-gadget has cost at least $\frac{5}{2} - \frac{1}{6k}$. 
    However, then the shortest path from the remaining terminal ($s^*$ or $s'$) to the vertex $x_i$ has length at least $\frac{3}{2}+\frac{2}{6k}$ (where we again used \Cref{lem: 3x gadget}).
    Hence, also in this case the component connecting the three terminals has length at least $4$.
\end{proof}

\begin{figure}
    \centering
    \clearpage{}

\begin{tikzpicture}[scale =0.4]

\foreach \x in {1,...,8}
{
\begin{scope}[rotate=\x*40+180]    
        \node[smallterminal, label=left:{$s_{\x}$}] (S1) at (-9, 0) {};
        \node[smallsteiner] (V) at (0, 0) {$v$};

        \draw[edgeprogress, green!70!black, offset=0pt] (S1)
         -- ++(2,1/3) 
        decorate[decoration=zigzag]  { -- ++(2,-2/3) }
        edge[edgeprogress,offset=0pt,bend left = 30] ++(2,1/3);
        \draw[edgedirprogress,offset=0pt,decorate,decoration={zigzag,post length=5mm}, green!70!black] (-3,0) -- (V);
        \draw[edgeprogress,offset=2pt,decorate,decoration={zigzag,post length=5mm},arrows = {-Stealth[harpoon]},shorten >=2.5pt] (-3,0) -- (V);
        \draw[edgeprogress,red,offset=-2pt,decorate,decoration={zigzag,post length=5mm},arrows = {-Stealth[harpoon,swap]},shorten >=2.5pt] (-3,0) -- (V);
    \end{scope}
}
\begin{scope}[rotate=190, draw opacity=0.3]    
        \node[smallterminal, fill=black, opacity=1, label=right:{$r$}] (R) at (-9, 0) {};
        \node[smallsteiner] (V) at (0, 0) {$v$};

        \draw[edgeprogress, green!70!black, offset=0pt] (R)
         -- ++(2,1/3) 
        decorate[decoration=zigzag]  { -- ++(2,-2/3) }
        edge[edgeprogress,offset=0pt,bend left = 30] ++(2,1/3);
        \draw[edgedirprogress,offset=0pt,decorate,decoration={zigzag,post length=5mm}, green!70!black] (-3,0) -- (V);
        \draw[edgeprogress,offset=2pt,decorate,decoration={zigzag,post length=5mm},arrows = {-Stealth[harpoon]},shorten >=2.5pt] (-3,0) -- (V);
        \draw[edgeprogress,red,offset=-2pt,decorate,decoration={zigzag,post length=5mm},arrows = {-Stealth[harpoon,swap]},shorten >=2.5pt] (-3,0) -- (V);
\end{scope}
    \node[smallterminal, label=below:{$s$}] (S1) at ( 8.5,-2.5) {};
    \node[smallterminal, label=below:{$s'$}] (S2) at ( 12,-2) {};
    \draw[line width = 2pt,decoration={calligraphic  brace,amplitude=20pt},decorate] (R) -- node[below=28pt, right=-5pt] {$\leq 4$} (V);
\end{tikzpicture}
\clearpage{}
    \caption{\label{fig:tight_example}
    An illustration of the graph $G$ that we construct in the proof of \Cref{lem: if we dont merge x and y everybody reaches v}, where the jump-gadgets are drawn as explained in \Cref{fig:jump_gadget}.
    Note that each of the jump-gadgets also has edges incident to $s$ and $s'$, although this is not explicitly visible in this abstract picture.
    The graph contains a jump-gadget from $r$ to $v$, but for our proof it will only be relevant that the distance from $v$ to $r$ is at most $4$. 
    }
\end{figure}
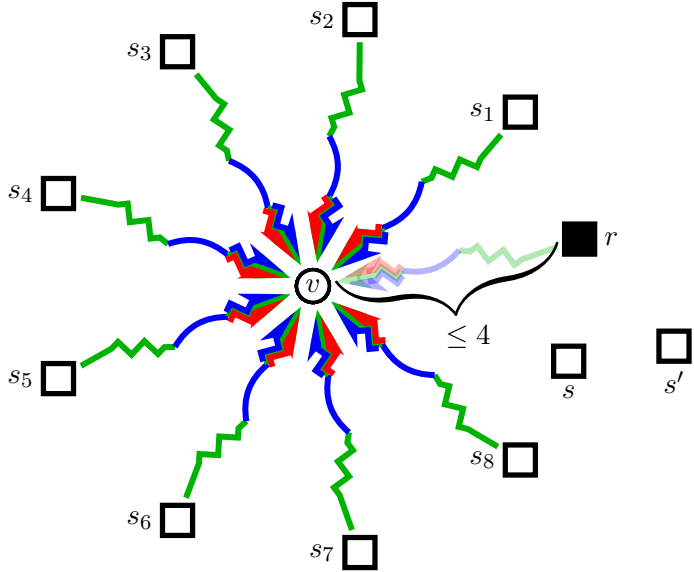

Finally, we use the jump-gadget (\Cref{lem: jump gadget}) to prove \Cref{lem: if we dont merge x and y everybody reaches v}.

\begin{proof}[Proof of \Cref{lem: if we dont merge x and y everybody reaches v}]
    To obtain $G$, we introduce a Steiner vertex $v$ and insert jump-gadgets $G_{s^*}$ from each $s^*\in R\setminus\lrb{s, s'}$ to $v$. 
    See \Cref{fig:tight_example} for an illustration.
    Note that each of the jump-gadgets also introduces edges incident to $s$ and $s'$.
    It remains to show that $G$ has the properties listed in \Cref{lem: if we dont merge x and y everybody reaches v}.

    Property~\ref{item:pairwise_distance1} follows from the bounds on pairwise distances of vertices in a jump segment from \Cref{lem: jump gadget}.
    Similarly, also the other distance restrictions \ref{item:pairwise_distance2} and \ref{item:distance_v}, namely, $\dist{x, s}=2$ for all $x\in R\setminus\{s\}$ and $\dist{x, v}\leq 4$ for all $x\in R$ follow directly from \Cref{lem: jump gadget}.

    Next, we prove that \ref{item:reachability_v} of \Cref{lem: if we dont merge x and y everybody reaches v} is satisfied.
    Our goal is to prove that at time $t=\frac{7}{6}+\varepsilon$ all sets in $\cS^t$ can reach the central vertex $v$.
    By \Cref{lem: jump gadget}, both $s$ and $s'$ can reach $v$ at time $\frac76+\frac{1}{6k}$.
    Now, consider a terminal $s^*\in R\setminus\lrb{s, s'}$ that was merged neither with $s$ nor with $s'$ at time $\frac76+\frac{1}{6k}$. 
    Then, by \Cref{lem: jump gadget}, $s^*$ can reach $v$ at time $\frac76+\frac{1}{6k}$. 
    Hence, from time $\frac76+\frac{1}{6k}\leq\frac76+\varepsilon$ onward, all sets in $\cS^t$ can reach the vertex $v$.

    Finally, we have to show that the Steiner tree instance we constructed is indeed MST-optimal.
    Connecting all other terminals directly to $s$ yields a terminal MST with cost $2\lr{\lrv{R}-1}$. 
    Then, for each $X\subseteq R$, we have $\drop{X} = 2\cdot (|X|-1)$ and thus a component connecting $\ell$ terminals is improving if and only if its cost is strictly less than $2(\ell-1)$.
    Suppose for the sake of deriving a contradiction that the Steiner tree instance is not MST-optimal. 
    Then let $K$ be an improving component with minimum cost, and let $X$ be the set of terminals connected by $K$.
    Then $c(K) < 2\cdot (|X| -1)$.
    Because $K$ is a \emph{minimal} improving component (and thus by \Cref{obs: improving full comp} a full component), every terminal that is connected by $K$ has degree $1$ in $K$.
    Every improving component connects at least three terminals and thus $X$ contains some terminal $s^*\neq s, s'$. 
    By \Cref{lem: jump gadget}, no jump-gadget contains an improving component.
    Thus, the component $K$ has to contain an $s^*$-$v$ path that does not go through another terminal, and by \Cref{lem: jump gadget} there is only one such path. 
    Removing this path from the component $K$ leaves us with a component $K'$ that connects at least all terminals in $X\setminus\lrb{s, s', s^*}$ and has cost at least $6$ less than $K$. 
    Moreover, we have
    $c(K') \leq c(K) - 6 < 2 \cdot (|X|-1) - 6 \leq 2\cdot (|X|-3 -1)$.
    Recall that every component connecting at least $\ell$ terminals with cost strictly less than $2(\ell-1)$ is improving.
    For $\ell = |X|-3$, this implies  that the component $K'$ is improving, which contradicts the minimality of $c(K)$.
\end{proof}

 \section{BCR and the Hypergraphic Relaxation}\label{sec:hypergraphic}

In this section, we show that the maximum possible ratio between the Hypergraphic Relaxation and \ref{eq:bcr} is equal to the integrality gap of \ref{eq:bcr} restricted to MST-optimal instances.

The Hypergraphic Relaxation has a variable for every nonempty set $X$ of terminals and every vertex $v\in X$, where we think of $X$ as the vertex set of a directed component with root $v$.
We write
\[
\vec{\mathcal{X}} \coloneqq \left\{ (X,v) \colon X\subseteq R, v\in X \right\}.
\]
For a terminal set $S\subseteq R$, we denote by 
\[
 \delta^+_{\vec{\cX}}(S) \coloneqq \left\{ (X,v)\in \vec{\cX} \colon X\cap S \ne \emptyset,\  v\notin S \right\}
\]
the elements of $\vec{\cX}$ that leave the set $S$.
Then the Hypergraphic Relaxation can be stated as follows:
\begin{equation}\label{eq:hypergraphic_lp}
\begin{aligned}
\min \sum_{(X,v)\in\vec{\cX}} \cost(X) \cdot x_{(X,v)} \\
\sum_{(X,v)\in  \delta^+_{\vec{\cX}}(S)} x_{(X,v)} \ \geq&\ 1 &\forall\; \emptyset \ne S\subseteq R\setminus\{r\}\\
x_{(X,v)} \ \ge&\ 0 &\forall\;(X,v)\in\vec{\cX}.
\end{aligned}
\end{equation}
We write $\hyp(\cI)$ to denote the value of \eqref{eq:hypergraphic_lp} for a Steiner Tree instance $\cI$.
The goal of this section is to prove \Cref{thm:ratios_same}, which we restate here.
\HypRatiosSame*

We first show that the left hand-side of \eqref{eq:ratios_same} is at least the right hand-side of \eqref{eq:ratios_same}.
To this end, we use the ``bridge lemma'' from \cite{ln4directedcomponent}. Recall that $\dropG{X}=\mst\lr{G}-\mst\lr{G/X}$.

\begin{lemma}[Lemma~11 in \cite{ln4directedcomponent}]\label{lem:bridge}
Let $(G,c,R)$ be an instance of Steiner Tree, and let $T$ be a terminal minimum spanning tree.
Then for every solution $x$ to the Hypergraphic Relaxation, we have
\begin{equation*}
c(T) \le \sum_{(X,v)\in\vec{\cX}} x_{(X,v)} \cdot \dropG{X}.
 \end{equation*}
 
\end{lemma}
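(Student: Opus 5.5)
The plan is to reduce \Cref{lem:bridge} to the case of a single hyperedge and then argue by LP duality on a spanning-tree polytope. First, for each $(X,v)\in\vec{\cX}$ with $x_{(X,v)}>0$, fix a set $B_X$ of $|X|-1$ edges of the terminal MST $T$ such that $T\setminus B_X$ together with a spanning tree on $X$ forms a spanning tree. Such a set exists by the exchange property: choose $B_X$ to be the maximal-cost edges on the $T$-paths between terminals of $X$, taken along a suitable spanning tree on $X$. Standard matroid facts then give $c(B_X)=\dropG{X}$. Indeed, $\tmst_{G/X}=c(T)-c(B_X)$, because contracting $X$ corresponds to giving the spanning-tree edges on $X$ cost zero. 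Re-running Kruskal's algorithm on that modified instance removes exactly the heaviest edges of the cycles created, which are the bridges $B_X$.

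With this in hand, it suffices to show
\[
\sum_{(X,v)} x_{(X,v)}\, c(B_X)\ \ge\ c(T).
\]
I would write $c(T)=\int_0^\infty |\{e\in T: c(e)\ge \theta\}|\,d\theta$ and check the inequality threshold by threshold. For a fixed $\theta$, the edges of $T$ with cost below $\theta$ form components $C_1,\dots,C_p$, and the number of $T$-edges of cost at least $\theta$ is $p-1$. Since the bridges are the heaviest edges, $B_X$ contains at least $|\{i: X\cap C_i\neq\emptyset\}|-1$ edges of cost at least $\theta$. So the claim at threshold $\theta$ reduces to
\[
\sum_{(X,v)} x_{(X,v)}\bigl(|\{i:X\cap C_i\ne\emptyset\}|-1\bigr)\ \ge\ p-1 .
\]

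This last inequality is the partition-type bound for the Hypergraphic Relaxation. I would derive it from the cut constraints. Orient everything towards the component $C_{i_0}$ containing the root $r$, and sum the constraints $\sum_{\delta^+(C_i)} x\ge 1$ over the $p-1$ sets $C_i$ with $i\ne i_0$, each of which is a valid $S\subseteq R\setminus\{r\}$. A pair $(X,v)$ lies in $\delta^+(C_i)$ only if $X$ meets $C_i$ and $v\notin C_i$. For a fixed $(X,v)$, exactly one touched part contains $v$, so $(X,v)$ is counted at most $|\{i:X\cap C_i\ne\emptyset\}|-1$ times. That gives the inequality.

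I expect the main obstacle to be the first step, namely identifying $\dropG{X}$ with the cost of a set of $|X|-1$ bridge edges and showing that these edges are heavy enough at every threshold. I would handle it directly with merge plans instead. By \Cref{lem: drop in terms of canonical merge plan}, $\dropG{X}=2\,\val{\cM_G,X}$, and $\val{\cM_G,X}$ integrates $|\{S\in\cS^t:S\cap X\ne\emptyset\}|-1$. By the construction in \Cref{lem: max merge plan has value mst and there are defining paths}, the partition $\cS^t$ is exactly the set of components of $T$ restricted to edges of $u$-cost below $t$. Hence $\dropG{X}=\int_0^\infty(|\{i:X\cap C_i^{\theta}\ne\emptyset\}|-1)\,d\theta$ after the change of variables $\theta=2t$, and $c(T)=\int_0^\infty(p_\theta-1)\,d\theta$. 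Integrating the threshold inequality over $\theta$ then finishes the proof, and the bridge construction becomes unnecessary.
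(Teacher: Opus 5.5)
Your final argument is correct, and it takes a genuinely different route from the one the paper relies on. The paper does not reprove this lemma; it cites \cite{ln4directedcomponent}, whose proof goes through bridges. There, each $(X,v)$ gets a spanning tree on $X$ whose edges are matched one-to-one with the $|X|-1$ bridge edges of $T$ and weighted by their costs, so the tree has weight $\dropG{X}$. Orienting these trees towards $v$ and summing them with weights $x_{(X,v)}$ gives a fractional spanning arborescence on $R$. Edmonds' integrality theorem for arborescences, together with $T$ being a minimum spanning tree for the resulting bottleneck weights, then yields the bound.

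You instead integrate over a threshold $\theta$. Writing $k_X(\theta)$ for the number of parts $C_i^\theta$ met by $X$, you have $c(T)=\int_0^\infty(p_\theta-1)\,d\theta$ and $\dropG{X}=\int_0^\infty(k_X(\theta)-1)\,d\theta$. The latter is correctly read off from \Cref{lem: drop in terms of canonical merge plan} and the construction in \Cref{lem: max merge plan has value mst and there are defining paths}; uniqueness of $\cM_u$ is what lets you use the given $T$ in that construction. At each threshold, summing the cut constraints of the $p_\theta-1$ parts avoiding $r$ gives the partition inequality, because each $(X,v)$ is counted at most $k_X(\theta)-1$ times. This is in effect a hands-on proof of exactly the special case of arborescence-LP integrality that is needed, so you need neither Edmonds' theorem nor any analysis of bridges. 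The first paragraph of your proposal, whose heaviness claim for $B_X$ is only asserted, can therefore be dropped.

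The bridge proof gives an explicit correspondence between each hyperedge and specific edges of $T$. Yours is shorter and self-contained, given the merge-plan machinery this paper has already developed.
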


Using the bridge lemma, we now show that for MST-optimal instances $\cI$, we have $\hyp(\cI) = \opt(\cI)$, which implies one of the two directions of \Cref{thm:ratios_same}.

\begin{lemma}\label{lem:ineq_bridge_lemma}
We have
    \begin{equation*}
       \begin{aligned}
    &\ \sup \left\{ \frac{\hyp(\cI)}{\bcr(\cI)} \colon \cI \text{ instance of Steiner Tree} \right\} \\[2mm]
    \ \geq &\
    \sup \left\{ \frac{\opt(\cI)}{\bcr(\cI)} \colon \cI \text{ MST-optimal instance of Steiner Tree} \right\}.
    \end{aligned} 
    \end{equation*}  
\end{lemma}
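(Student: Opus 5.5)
The plan is to show that every MST-optimal instance $\cI$ satisfies $\hyp(\cI)=\opt(\cI)$. Once this holds, $\frac{\opt(\cI)}{\bcr(\cI)}=\frac{\hyp(\cI)}{\bcr(\cI)}$ for each MST-optimal $\cI$. That quantity is bounded by the left-hand supremum, since the left-hand side ranges over all instances. Taking the supremum over MST-optimal instances then gives the claimed inequality.

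For $\hyp(\cI)\le\opt(\cI)$, I would use that \eqref{eq:hypergraphic_lp} is a relaxation. Take an optimal Steiner tree and split it into full components. Orient each component toward the root and put $x_{(X,v)}=1$ for each resulting pair, where $X$ is the component's terminal set and $v$ its sink. This vector satisfies every cut constraint, because every terminal set avoiding $r$ is left by some directed component. Its cost is at most $\opt$, since each full component connecting $X$ has cost at least $\cost(X)$. In fact only $\hyp\ge\opt$ is needed below, but this direction confirms the equality.

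For $\hyp(\cI)\ge\opt(\cI)$, let $x$ be an optimal solution of \eqref{eq:hypergraphic_lp}. By MST-optimality there is no improving component, so $\cost(X)\ge\dropG{X}$ for every $X\subseteq R$. Together with the bridge lemma (\Cref{lem:bridge}) this gives
\[
\hyp(\cI)=\sum_{(X,v)\in\vec{\cX}}\cost(X)\,x_{(X,v)}\ \ge\ \sum_{(X,v)\in\vec{\cX}}\dropG{X}\,x_{(X,v)}\ \ge\ \tmst=\opt(\cI).
\]
Singletons $X$ cause no trouble: for them both $\cost(X)$ and $\dropG{X}$ are $0$.

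The main point to check is the claim that MST-optimality means $\cost(X)\ge\dropG{X}$ for all $X$. This is just the definition of an improving component, read through the paper's remark that an instance is MST-optimal exactly when no improving component exists. A minor point is that \eqref{eq:hypergraphic_lp} attains its optimum, which holds because it is a feasible bounded LP, or one can argue directly with near-optimal solutions. Everything else is a direct substitution.
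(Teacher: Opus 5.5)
Your proposal is correct and matches the paper's proof. Both arguments reduce the claim to $\hyp(\cI)=\opt(\cI)$ on MST-optimal instances by combining the bridge lemma with $\cost(X)\ge\dropG{X}$. The paper phrases this as a contradiction from assuming $\hyp(\cI)<\opt(\cI)$, while you write the inequality chain directly and also spell out the easy direction $\hyp(\cI)\le\opt(\cI)$, which the paper uses only implicitly.
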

\begin{proof}
Let $\cI= (G,c,R)$ be an MST-optimal instance of Steiner Tree and let $T$ be a terminal minimum spanning tree.
It suffices to show $\hyp(\cI) = \opt(\cI)$.

To prove this, let $x$ be an optimal solution to the Hypergraphic Relaxation for $\cI$ and suppose that $\hyp(\cI) < \opt(\cI)$.
Then by \Cref{lem:bridge}, we have
\begin{equation*}
\sum_{(X,v)\in\vec{\cX}} x_{(X,v)} \cdot \cost(X) \ =\ \hyp(\cI) 
\ < \ \opt(\cI) \ =\ c(T) \ \leq\ \sum_{(X,v)\in\vec{\cX}} x_{(X,v)} \cdot \dropG{X}.
\end{equation*}
Hence, there exists some $X\subseteq R$ such that $\cost(X) < \dropG{X}$, contradicting the MST-optimality of $\cI$.
We conclude that indeed $\hyp(\cI) = \opt(\cI)$.
\end{proof}

To complete the proof of \Cref{thm:ratios_same}, we consider the dual of the Hypergraphic Relaxation:
\begin{equation}\label{eq:hypergraphic_dual}
\begin{aligned}
\max \sum_{\emptyset \ne S\subseteq R\setminus\{r\}} y_U \\
\sum_{S\subseteq R: (X,v)\in  \delta^+_{\vec{\cX}}(S)} y_S \ \leq&\ \cost(X) &\forall\; (X,v)\in\vec{\cX} \\
y_S \ \ge&\ 0 &\forall\; \emptyset \ne S\subseteq R\setminus\{r\}
\end{aligned}
\end{equation}

Using standard uncrossing techniques, we can show that \eqref{eq:hypergraphic_dual} always has an optimum solution with laminar support.
That is, there exists an optimum dual solution $y$ such that for any $U,W\subseteq R\setminus\{r\}$ with $y_U >0$ and $y_W > 0$, we have $U\subseteq W$, or $W\subseteq U$, or $U\cap W=\emptyset$.

\begin{lemma}\label{lem:hypergraphic_dual_laminar}
For every instance of the Steiner tree problem, the dual \eqref{eq:hypergraphic_dual} of the Hypergraphic Relaxation has an optimum solution with laminar support.
\end{lemma}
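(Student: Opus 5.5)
Among all optimal solutions of \eqref{eq:hypergraphic_dual}, I would pick one that minimizes a strictly convex potential, for instance $\Phi(y)=\sum_{S} y_S\,|S|\,(|R|-|S|)$. A minimizer exists because the optimal face is a nonempty compact polytope: it is bounded, since every singleton $\{s\}$ with $s\neq r$ has $(\{s,r\},r)\in\delta^+_{\vec\cX}(S)$ whenever $s\in S$, $r\notin S$, which bounds each $y_S$ by $\cost(\{s,r\})$. I then claim this minimizer has laminar support. Suppose not. Then there are crossing sets $U,W\subseteq R\setminus\{r\}$ with $y_U,y_W>0$ and $U\cap W\neq\emptyset$, $U\not\subseteq W$, $W\not\subseteq U$. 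Set $\epsilon=\min\{y_U,y_W\}$, decrease $y_U$ and $y_W$ by $\epsilon$, and increase $y_{U\cap W}$ and $y_{U\cup W}$ by $\epsilon$. Both new sets are nonempty and avoid $r$, so they are legitimate variables, and the objective value is unchanged.

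The key step is to show that this uncrossing keeps feasibility. I need, for every hyperedge $(X,v)\in\vec\cX$,
\[
\mathbbm{1}[(X,v)\in\delta^+(U\cap W)]+\mathbbm{1}[(X,v)\in\delta^+(U\cup W)]\ \leq\ \mathbbm{1}[(X,v)\in\delta^+(U)]+\mathbbm{1}[(X,v)\in\delta^+(W)],
\]
where $\delta^+$ means $\delta^+_{\vec\cX}$. This is a submodularity-type inequality for the hypergraphic out-cut function. I would check it by a direct case analysis on where $v$ lies (in $U\cap W$, $U\setminus W$, $W\setminus U$, or outside $U\cup W$) and on which of these regions $X$ meets.
\begin{itemize}
\item If $v\notin U\cup W$: the left side counts $[X\cap U\cap W\neq\emptyset]+[X\cap(U\cup W)\neq\emptyset]$, and the right side counts $[X\cap U\neq\emptyset]+[X\cap W\neq\emptyset]$. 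The inequality holds because $X\cap U\cap W\neq\emptyset$ forces both right-hand terms to be $1$, and $X\cap(U\cup W)\neq\emptyset$ forces at least one of them to be $1$.
\item If $v\in U\cap W$: the left side is $0$.
\item If $v\in U\setminus W$: $(X,v)$ does not leave $U\cup W$, so the left side is at most $[X\cap U\cap W\neq\emptyset]$. This is at most $[X\cap W\neq\emptyset]$, which is exactly $W$'s term because $v\notin W$. The case $v\in W\setminus U$ is symmetric.
\end{itemize}
Summing the inequality against $y$ shows that every dual constraint stays satisfied.

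Finally, the potential strictly decreases. The function $f(S)=|S|(|R|-|S|)$ is concave in $|S|$ and satisfies $|U\cap W|+|U\cup W|=|U|+|W|$. For a crossing pair, the two sizes $|U\cap W|<\min(|U|,|W|)$ and $|U\cup W|>\max(|U|,|W|)$ are strictly more spread out than $|U|,|W|$, so by strict concavity $f(U\cap W)+f(U\cup W)<f(U)+f(W)$. Hence $\Phi$ drops, contradicting minimality, and the minimizer is laminar.

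I expect the main obstacle to be the feasibility case analysis, in particular making sure that a hyperedge counted twice on the left is always counted twice on the right. If strict concavity gave trouble, I would fall back on the standard alternative potential $\sum_S y_S|S|^2$, which is strictly increased by uncrossing, and maximize it instead.
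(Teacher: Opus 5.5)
Your proof is correct and is essentially the paper's uncrossing argument: the paper picks an optimal dual solution maximizing $\sum_S y_S|S|^2$, and since $|U\cap W|+|U\cup W|=|U|+|W|$, this potential changes under an uncrossing step by exactly the negative of the change in your $\Phi$. The paper checks feasibility with the same inequality you verify case by case, stated more briefly, and your compactness argument for the existence of the extremal optimal solution fills in a point the paper leaves implicit. One small wording fix: the property you actually use is strict concavity of $k\mapsto k(|R|-k)$, not convexity of $\Phi$, which is linear in $y$.
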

\begin{proof}
    Among all optimal dual solutions let $y$ be one that maximizes
    \begin{align*}
        \sum_{S\subseteq R\setminus\lrb{r}}y_S\cdot\lrv{S}^2\,.
    \end{align*}
    We claim that $y$ has laminar support. 
    Suppose not. Then there are sets $S_1, S_2\subseteq R\setminus\lrb{r}$ with $y_{S_1},y_{S_2}>0$ that are neither disjoint, nor is one of $S_1$ and $S_2$ a subset of the other one. 
    In particular, none of the sets $S_1$, $S_2$, $S_1\cap S_2$, and $S_1\cup S_2$ is empty.
    Moreover,
\begin{align}
        \lrv{S_1\cup S_2}^2+\lrv{S_1\cap S_2}^2\ >\ \lrv{S_1}^2+\lrv{S_2}^2\,.
        \label{ieq:squares increase during uncrossing}
    \end{align}
    Let $0<\varepsilon\leq y_{S_1}, y_{S_2}$, and define a dual solution $y'$ by $y'_{S_1\cup S_2}\coloneqq y_{S_1\cup S_2}+\varepsilon$, $y'_{S_1\cap S_2}\coloneqq y_{S_1\cap S_2}+\varepsilon$, $y'_{S_1}\coloneqq y_{S_1}-\varepsilon$, $y'_{S_2}\coloneqq y_{S_2}-\varepsilon$, and $y'_S\coloneqq y_S$ otherwise.
    
    By construction, $y$ and $y'$ have the same objective value 
    \[
    \sum_{\emptyset \ne S\subseteq R\setminus\{r\}} y'_U\ =\ 2\varepsilon-2\varepsilon+\sum_{\emptyset \ne S\subseteq R\setminus\{r\}} y_U.
    \]
Furthermore, by \eqref{ieq:squares increase during uncrossing}, we have
    \begin{align*}
        \sum_{S\subseteq R\setminus\lrb{r}}y'_S\cdot\lrv{S}^2
        \ >\ 
        \sum_{S\subseteq R\setminus\lrb{r}}y_S\cdot\lrv{S}^2\,.
    \end{align*}
    Thus, to derive a contradiction to our choice of $y$, it remains to show that $y'$ is a \emph{feasible} dual solution.
    Indeed, by our choice of $\varepsilon$, the dual solution $y'$ is non-negative.
    If a directed component $\lr{X, v}$ is contained in $\delta^+_{\vec{\cX}}(S_1\cap S_2)$ or in $\delta^+_{\vec{\cX}}(S_1\cup S_2)$, it is also contained in at least one of $\delta^+_{\vec{\cX}}(S_1)$ and $\delta^+_{\vec{\cX}}(S_2)$.
    If a directed component $\lr{X, v}$ is contained in both $\delta^+_{\vec{\cX}}(S_1\cap S_2)$ and $\delta^+_{\vec{\cX}}(S_1\cup S_2)$, it is also contained in both $\delta^+_{\vec{\cX}}(S_1)$ and $\delta^+_{\vec{\cX}}(S_2)$.
    Thus, for every directed component $\lr{X, v}$:
    \begin{align*}
        \sum_{S\subseteq R: (X,v)\in  \delta^+_{\vec{\cX}}(S)} y'_S \ 
        \leq\ \sum_{S\subseteq R: (X,v)\in  \delta^+_{\vec{\cX}}(S)} y_S \ 
        \leq\ \cost(X)\,,
    \end{align*}
    which implies that $y'$ is feasible, contradicting our choice of $y$.
\end{proof}

Next, we use \Cref{lem:hypergraphic_dual_laminar} to complete the proof of \Cref{thm:ratios_same}, which  follows directly from \Cref{lem:ineq_bridge_lemma,lem:complete_same_ratio}.

\begin{lemma}\label{lem:complete_same_ratio}
We have
    \begin{equation*}
       \begin{aligned}
    &\ \sup \left\{ \frac{\hyp(\cI)}{\bcr(\cI)} \colon \cI \text{ instance of Steiner Tree} \right\} \\[2mm]
    \ \leq &\
    \sup \left\{ \frac{\opt(\cI)}{\bcr(\cI)} \colon \cI \text{ MST-optimal instance of Steiner Tree} \right\}.
    \end{aligned} 
    \end{equation*}  
\end{lemma}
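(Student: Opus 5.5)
Given an arbitrary instance $\cI=(G,c,R)$, I will build an MST-optimal instance $\cI'$ with
\[
\frac{\opt(\cI')}{\bcr(\cI')}\ \geq\ \frac{\hyp(\cI)}{\bcr(\cI)}.
\]
The idea is to make every component ``not improving'' by raising the costs of terminal pairs, without moving $\hyp$. Let $y$ be an optimal solution of \eqref{eq:hypergraphic_dual} with laminar support $\cL$, which exists by \Cref{lem:hypergraphic_dual_laminar}; its value is $\hyp(\cI)$. Treat the laminar family $\cL\cup\{\{x\}:x\in R\}\cup\{R\}$ as a moat structure. For terminals $x\ne y$, define $u(\{x,y\})\coloneqq\sum_{S\in\cL:\,|S\cap\{x,y\}|=1}y_S$. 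This is twice the ``moat distance'' of the undirected moat picture, and it is an ultrametric-like tree metric.

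Form $\cI'$ by adding, for every pair $x,y\in R$, a new direct edge $\{x,y\}$ of cost $u(\{x,y\})$. The key inequality to check is $u(\{x,y\})\leq\cost(\{x,y\})$, so these edges do not shortcut anything essential. It follows from dual feasibility applied to $(\{x,y\},y)$ and $(\{x,y\},x)$: a set $S$ separating $x$ from $y$ leaves through one of the two orientations. Summing both constraints would only give $2\cost$, so I would instead use that in a laminar family the separating sets containing $x$ form a chain, and similarly for $y$. Choosing the root orientation appropriately gives $u\le 2\cost(\{x,y\})$ in general. That is enough, because only $\bcr(\cI')\le\bcr(\cI)$ matters, and adding edges can only decrease $\bcr$. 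So any costs work for the upper bound on $\bcr$. The real constraints are the other two:
\begin{itemize}
\item[(a)] $\cI'$ is MST-optimal;
\item[(b)] $\opt(\cI')\geq\hyp(\cI)$.
\end{itemize}

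For (b), I would show that $y$ stays feasible for \eqref{eq:hypergraphic_dual} in $\cI'$, giving $\opt(\cI')\ge\hyp(\cI')\ge\hyp(\cI)$. A component in $\cI'$ mixes old edges with new edges $\{a,b\}$. I replace each new edge by its cost $u$ and bound the load $\sum_{S:(X,v)\in\delta^+(S)}y_S$ by splitting $X$ along the new edges into old full components plus a tree of terminal pairs. Loads are subadditive over such a decomposition, and each terminal pair $\{a,b\}$ carries load at most $u(\{a,b\})$ by the definition of $u$.

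For (a), I would compute $\tmst$ in $\cI'$. The moat tree metric $u$ gives $\tmst_{u}=\sum_{S}y_S\cdot(\#\text{children merges})$, which is $\le\sum_{S\in\cL}y_S\cdot(\text{number of parts})$. I then need $\tmst(\cI')=\hyp(\cI)$, i.e.\ $\tmst(\cI')\le\sum_S y_S$. With the naive $u$ this fails, since a set with many children is counted many times. This is the main obstacle. The fix I would try first is to define $u$ as the merge times of a merge plan read off the laminar family (sets merging at the levels of $\cL$), so that \Cref{lem: max merge plan has value mst and there are defining paths} gives $\tmst_u=\val{\cM_u}$. I would then compare $\val{\cM_u,X}$ with the dual load on $(X,v)$, in the same spirit as \Cref{lem: drop in terms of canonical merge plan}. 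Once $\tmst(\cI')\le\hyp(\cI)\le\opt(\cI')\le\tmst(\cI')$, equality holds throughout, which gives both (a) and (b), and the ratio inequality follows.
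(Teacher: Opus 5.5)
Your overall frame is the paper's: take a laminar optimal $y$ for \eqref{eq:hypergraphic_dual}, add cheap terminal edges that keep $y$ feasible (so $\opt(\cI')\ge\hyp(\cI')\ge\hyp(\cI)$), use $\bcr(\cI')\le\bcr(\cI)$, and close the chain $\tmst(\cI')\le\hyp(\cI)\le\opt(\cI')\le\tmst(\cI')$. Your part (b) is fine.

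\textbf{The gap.} The one nontrivial step, $\tmst(\cI')\le\hyp(\cI)$, is left open, and the fix you sketch cannot work as stated. Write $\lambda(a,b)\coloneqq\sum_{S:\,a\in S,\,b\notin S}y_S$ for the load of the one-edge component $\{a,b\}$ oriented towards $b$. Because you add an edge for \emph{every} pair, argument (b) needs $u(\{a,b\})\ge\max\{\lambda(a,b),\lambda(b,a)\}$ for all pairs. Merge times, however, form an ultrametric, and these two requirements can clash with $\mst_u\le\hyp(\cI)$. Take terminals $a,b,r$ with $c(\{a,b\})=5$ and $c(\{b,r\})=3$. Then $y_{\{a\}}=5$, $y_{\{a,b\}}=3$ is an optimal laminar dual with value $\hyp=8$. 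It forces $u(\{a,b\})\ge 5$, $u(\{b,r\})\ge 3$ and $u(\{a,r\})\ge 8$. Every ultrametric with these lower bounds has $\mst_u\ge 3+8=11$. So no merge plan read off $\cL$ gives what you need, and comparing $\val{\cM_u,X}$ with dual loads will not repair this.

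\textbf{The missing idea.} Add only the $|R|-1$ edges of one specific tree.
\begin{itemize}
\item For each $L$ in $\cL$ (the support together with all singletons and $R$), pick $\rt(L)\in L$ minimizing $\depth^L(v)=\sum_{S\subsetneq L,\,v\in S}y_S$. Use consistent tie-breaking favouring $r$, so that $\rt(L)$ is also the root of the child containing it.
\item For every other child $L_i$ of $L$, join $\rt(L_i)$ to $\rt(L)$.
\item The result is an arborescence $\vec T$ towards $r$ in which every support set has exactly one leaving arc. With $\ell(\vec e)\coloneqq\sum_{L:\,\vec e\in\delta^+(L)}y_L$ this gives $\ell(T)=\sum_L y_L=\hyp(\cI)$.
\item For an arc $(v,w)$ with $w=\rt(L)$, the forward load is $\lambda(v,w)=\depth^L(v)=\ell(\vec e)$. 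The backward load is $\lambda(w,v)=\depth^L(w)\le\depth^L(v)$ by the choice of the root.
\end{itemize}
Hence giving these edges cost $\ell$ keeps $y$ feasible, and your decomposition argument from (b) handles general components.

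In your own language: $u=\max\{\lambda(a,b),\lambda(b,a)\}$ is the right cost function, and this min-depth-root tree is exactly the witness that $\mst_u\le\hyp(\cI)$. With it, your chain of inequalities goes through.
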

\begin{proof}
Let $\cI$ be an instance of Steiner Tree and let $y$ be an optimum solution to \eqref{eq:hypergraphic_dual} with laminar support, which exists by \Cref{lem:hypergraphic_dual_laminar}.
We will construct a terminal spanning tree $T\subseteq \binom{R}{2}$ and a length function
$\ell \colon T \to \mathbb{R}_{\geq 0}$ such that
\begin{enumerate}[label = (\roman*)]
 \item\label{item:cost_equal_hyp} $\ell(T) = \hyp(\cI)$, and
 \item\label{item:maintain_dual_feasible} $y$ remains a feasible dual solution when decreasing the cost of every edge $e\in T$ to $\ell(e)$ (or adding the edge $e$ with cost $\ell(e)$ to the instance $\cI$).
\end{enumerate}
Then let $\cI'=(G',c', R)$ be the instance arising from $\cI$ by decreasing the cost of every edge $e\in T$ to $\ell(e)$, where we do not change the cost if it is already at most $\ell(e)$ and we add the edge $e$ (with cost $\ell(e)$) if it is not yet part of $G$.
Because $y$ is a feasible solution to \eqref{eq:hypergraphic_dual} for $\cI'$ by \ref{item:maintain_dual_feasible}, we have $\hyp(\cI') \geq \hyp(\cI)$.
Hence, we have $c'(T) \leq \ell(T) = \hyp(\cI) \leq \hyp(\cI') \leq \opt(\cI')$, which implies that the instance $\cI'$ is MST-optimal.
Moreover, $\bcr(\cI') \leq \bcr(\cI)$ because we only decreased the cost of edges in the construction of $\cI'$, and thus
\[
\frac{\hyp(\cI)}{\bcr(\cI)} \ =\ \frac{c'(T)}{\bcr(\cI)} \ \leq\ \frac{c'(T)}{\bcr(\cI')}
\ =\ \frac{\opt(\cI')}{\bcr(\cI')}.
\]
It remains to prove the existence of the terminal spanning tree $T$ and the function $\ell$ satisfying \ref{item:cost_equal_hyp} and \ref{item:maintain_dual_feasible}.
Let 
\[
\cL \coloneqq \big\{ S\subseteq R\setminus\lrb{r} : y_S > 0\big\} \cup \big\{R\big\} \cup \big\{ \{v\} \mid v\in R \big\}
\]
be the support of $y$, together with all singleton sets and the terminal set $R$.
Then $\cL$ is a laminar family and every non-singleton element of $\cL$ is the disjoint union of its children (its maximal strict subsets in $\cL$).
For a $v\in R$ and a set $L\in \cL$, we define the depth of $v$ inside $L$ as 
\[
\depth^L(v) \ \coloneqq\ \sum_{S\subsetneq L: v\in S} y_S,
\]
where we define $y_S \coloneqq 0$ for sets $S$ containing $r$ (which do not have a corresponding variable in the dual LP).

We construct a directed spanning tree $\vec{T} \subseteq R\times R$ oriented towards the root $r$ as follows.
For every set $L\in \cL$ we choose a root vertex $\rt(L) \in L$ as a vertex $v\in L$ of minimal $\depth^L(v)$.
In case of equal depth, we use consistent tie-breaking and give preference to the root $r$. In other words, we number the vertices as $v_1=r, v_2, \dots, v_n$ and in case of several vertices with the same depth, we choose $\rt(L)$ among those vertices as the vertex $v_i$ with smallest index $i$.

To construct the oriented tree $\vec{T}$, we construct a tree $\vec{T}_L \subseteq L\times L$ spanning $L$ and oriented towards $\rt(L)$ for each set $L\in \cL$.
When constructing these trees, we proceed from smaller to larger sets $L\in\cL$ and in the end $\vec{T}$ will be the tree $\vec{T}_R$, which is oriented towards $\rt(R)=r$.

For the sets $L\in \cL$ with $|L|=1$, the tree $\vec{T}_L=\emptyset$ has no edges.
Now consider a set $L\in \cL$ with $|L|\geq 2$.
Because of the order, in which we consider the sets from $\cL$, we have already constructed trees $\vec{T}_1, \dots, \vec{T}_k$ for the children $L_1,\dots, L_k$ of $L$ in $\cL$, where we may assume $\rt(L) \in L_1$.
Then we define
\[
\vec{T}_{L} \ \coloneqq\ \bigcup_{i=1}^k  \vec{T}_i \cup \bigcup_{i=2}^k \big\{ \big(\rt(L_i), \rt(L)\big) \big\}.
\]
Because $L$ is the disjoint union of its children in $\cL$, this is indeed a tree spanning $L$ and oriented towards $\rt(L) = \rt(L_1)$, where we used the consistent tie breaking in the choice of the roots.

For a directed edge $\vec{e}=(v,w)\in \vec{T} = \vec{T}_R$, we define 
\[
\ell(\vec{e}) \ \coloneqq\ \sum_{\substack{L\in \cL:\\ \vec{e} \in \delta^+(L)}} y_L.
\]
We let $T$ be the terminal spanning tree obtained from $\vec{T}$ by omitting the orientation, and for an edge $e$ of $T$, we define $\ell(e) \coloneqq \ell(\vec{e})$ for the directed edge $\vec{e} \in \vec{T}$ corresponding to $e$.
\bigskip

It remains to prove \ref{item:cost_equal_hyp} and \ref{item:maintain_dual_feasible}.
To prove \ref{item:cost_equal_hyp}, we observe that the oriented terminal minimum spanning tree $\vec{T}$ is constructed such that every set $L\in \cL\setminus \{R\}$ has exactly one outgoing edge in $\vec{T}$.
In particular, every set $S$ in the support of $y$ has exactly one outgoing edge in $\vec{T}$, implying
\[
\ell(T) \ =\ \ell(\vec{T}) \ =\ \sum_{L\in \cL} y_L \cdot \big|\vec{T} \cap \delta^+(L)\big| \ =\ \sum_{L\in \cL} y_L \ =\ \hyp(\cI).
\]

Finally, we prove \ref{item:maintain_dual_feasible}.
We observe that it suffices to prove that the dual constraints are satisfied for any $(X,v) \in \vec{\cX}$ for which the cost is determined by a full Steiner tree, that is there exists a tree of length $\cost(X)$ whose set of leaves is exactly $X$.
Indeed, otherwise, there is a tree of length $\cost(X)$ connecting $X$ and we consider the maximal full components $\vec{K}_1, \dots \vec{K}_m$ in this tree oriented towards $v$.
Let $(X_1,v_1),\dots (X_m, v_m)$ be the elements of $\vec{\cX}$ corresponding to these full components. 
Then
\[
\sum_{S\subseteq R: (X,v)\in  \delta^+_{\vec{\cX}}(S)} y_S \ \leq\ \sum_{i=1}^m \sum_{S\subseteq R: (X_i,v_i)\in  \delta^+_{\vec{\cX}}(S)} y_S  \ \leq\ \sum_{i=1}^m \cost(X_i) \ =\ \cost(X).
\]

Thus, it suffices to show that $y$ satisfies the constraints of \eqref{eq:hypergraphic_dual} for the instance $\cI'$ for every $(X,v)$ whose cost is determined by a full component.
The only full components whose cost differs between the instances $\cI$ and $\cI'$ are the cost of full components consisting of a single edge $e\in T$.
Thus, it suffices to show that for each edge $e=\{v,w\}\in T$, we have
\begin{equation}\label{eq:dual_feasibility_hyp1}
\sum_{S\subseteq R: (v,w) \in \delta^+(S)} y_S \ \leq\ c'(\{v,w\})
\end{equation}
and
\begin{equation}\label{eq:dual_feasibility_hyp2}
\sum_{S\subseteq R: (w,v) \in \delta^+(S)} y_S \ \leq\ c'(\{v,w\}),
\end{equation}
where $c'(\{v,w\})$ denotes the cost of the full component connecting $v$ and $w$ by a single edge.
By definition of $c'$, we either have $c'(\{v,w\}) = c(\{v,w\})$ or $c'(\{v,w\}) = \ell(\{v,w\})$.
If $c'(\{v,w\}) = c(\{v,w\})$, \eqref{eq:dual_feasibility_hyp1} and \eqref{eq:dual_feasibility_hyp2} follow from dual feasibility of $y$ for the original instance $(G,c,R)$.
Thus, we may assume $c'(\{v,w\}) = \ell(\{v,w\})$.
Consider an edge $e\in T$, where without loss of generality $\vec{e} =(v,w)\in \vec{T}$.
By definition of the function $\ell$, we have
\begin{equation*}
\sum_{S\subseteq R: (v,w) \in \delta^+(S)} y_S \ =\  \ell(\vec{e}) \ = c'(e).
\end{equation*}
Now let $L\in \cL$ be such that $\vec{e}\in \vec{T}_L$, but $\vec{e}$ is not contained in $\vec{T}_{L_i}$ for any child $L_i$ of $L$ in $\cL$.
Then $w= \rt(L)$.
By the choice of $\rt(L)$ as a vertex $u\in L$ with minimal $\depth^L(u)$, we have $\depth^L(w) \leq \depth^L(v)$, implying
\begin{equation*}
\sum_{S\subseteq R: (w,v) \in \delta^+(S)} y_S \ =\ \depth^L(w) \ =\ \depth^L(v) \ =\ \ell(\vec{e}) \ \leq\ c'(e).
\end{equation*}
\end{proof}
 
\newpage

\begin{sloppypar}\tolerance 900
\printbibliography
\end{sloppypar}

\clearpage

\appendix
\section{Invariance under Subdivision}\label{sec:invariance_under_subdivision}

\InvarianceSubdivision*

\begin{proof}
To construct the desired subdivision of $G=(V,E)$, we consider a variant of \Cref{algo:dual_growth_with_merge_plan}.
In this variant, we will have explicit variables $C(S,\vec{e})$ representing the contribution of a set $S\subseteq R$ to a directed edge $\vec{e}$.
Whenever we increase a dual variable $y_{U_S}$, the set $S$ will not only contribute to all edges in $\delta^+(U_S)$.
It will additionally contribute to all edges $(u,w)$, for which an edge $(u,v)$ arising from a suitable subdivision of $\{u,w\}$ would be contained in $\delta^+(U_S)$.\\
Given a solution $y$ to \eqref{eq:dual_bcr}, we say that a directed edge $\vec{e}=(v,w) \in \vec{E}$ is \emph{directed-tight} if 
\[
\sum_{S\subseteq R} C(S,(v,w)) \ =\ c(\vec{e}).
\]
Moreover, we say that an undirected edge $e=\{v,w\} \in E$ is \emph{undirected-tight} if 
\[
\sum_{S\subseteq R} C(S,(v,w)) + \sum_{S\subseteq R} C(S,(w,v))\ \geq\ c(e).
\]
Then the continuous dual-growth algorithm is given by \Cref{algo:continuous_dual_growth}.

\begin{algorithm}[h]
\caption{Continuous Growth with Merge Plan \label{algo:continuous_dual_growth}}
\KwIn{an instance $(G,c,R)$ of Steiner Tree and a merge plan $\cM = \lr{\cS^t}_{t\geq0}$ }
\KwOut{a solution $y$ to \eqref{eq:dual_bcr}}
\vspace*{2mm}
Initialize $t= 0$ and $y_U=0$ for all $U\subseteq V$.\;
Let $(V, \Vec{E})$ result from $G$ by bidirecting all edges.\;
\While{ $\lrv{\cS^t}>1$}{
For $S\in \cS^t$, define
$U_S \ =\ \{v\in V : v\text{ is reachable from $S$ by a path of directed-tight edges in $(V, \Vec{E})$}\}$.\;
For $S\in \cS^t$, let 
$C_S \ \coloneqq \delta^+(U_S) \cup \{ (v,w) \in \vec{E} \colon v\in U_S \text{ and $\{v,w\}$ is not undirected-tight} \}$.\;
Increase the time $t$, the dual variables $y_{U_S}$ for $S\in \cS^t$, and the the variables $C(S,\vec{e})$ with $S\in \cS^t$ and $\vec{e} \in C_S$ at unit rate until the partition $\cS^t$ changes, some edge $e\in E$ becomes undirected tight, or some edge $\vec{e}\in \vec{E}$ becomes directed tight.
}
\Return{y}
\end{algorithm}

We observe that for every edge $\vec{e}\in \vec{E}$, we have  
\[
\sum_{U: \vec{e}\in \delta^+(U)} y_U \ \leq\ \sum_{S\subseteq R} C(S,\vec{e}) \ \leq\ c(\vec{e}),
\]
implying that $y$ is indeed a feasible dual solution.
Next, we observe that \Cref{algo:continuous_dual_growth} is invariant under subdivision. 

\paragraph{Invariance of \Cref{algo:continuous_dual_growth} under subdivision.}
Suppose that we subdivide an edge $\{u,w\}$ of $G$ by a vertex $v$ resulting in a graph $G'$. 
Define $c_u \coloneqq c(\{u,v\})$ and $c_w \coloneqq c(\{v,w\})$, where $c_u +c_w = c(\{u,w\})$.
Then we claim that at any time of the algorithm, the contributions $C$ in \Cref{algo:continuous_dual_growth} applied to $G$ and the contributions $C'$ in \Cref{algo:continuous_dual_growth} applied to $G'$ satisfy for every set $S\subseteq R$
\begin{equation}\label{eq:subdivision_1}
\begin{aligned}
    C'\big(S, (u,v)\big) \ =&\ \min \Big\{C\big(S,(u,w)\big),\ c_u\Big\} \\
    C'\big(S, (v,w)\big) \ =&\ \max \Big\{C\big(S,(u,w)\big) - c_u,\ 0\Big\}
\end{aligned}
\end{equation}
and symmetrically
\begin{equation}\label{eq:subdivision_2}
\begin{aligned}
    C'\big(S, (w,v)\big) \ =&\ \min \Big\{C\big(S,(w,u)\big),\ c_w\Big\} \\
    C'\big(S, (v,u)\big) \ =&\ \max \Big\{C\big(S,(w,u)\big) - c_w,\ 0\Big\},
\end{aligned}
\end{equation}
where for all other edges $\vec{e}$ and sets $S$ contributions $C(S,\vec{e)}$ and $C'(S,\vec{e})$ are identical.
In particular, $C(S,(u,w)) = C'(S,(u,v)) + C'(S, (v,w))$ and
$C(S,(w,u)) = C'(S,(w,v)) + C'(S, (v,u))$.

We prove that \eqref{eq:subdivision_1} and \eqref{eq:subdivision_2} are maintained throughout the algorithm.
So suppose they are true at the current time $t$.
Then we show that for every $S\in \cS^t$, the sets of edges for which the contributions $C(S,\vec{e})$ or $C'(S, \vec{e})$ are increased are such that
\eqref{eq:subdivision_1} and \eqref{eq:subdivision_2} are maintained.

Note that to show \eqref{eq:subdivision_1} and \eqref{eq:subdivision_2} are preserved, it suffices to consider the contributions on $\lrb{u, v}$, $\lrb{v, w}$, $\lrb{u, w}$. Each of these edges is contained in exactly one of $G$ and $G'$.
Thus, the notion of (directed and undirected) tightness will in the following always refer to the run of \Cref{algo:continuous_dual_growth} for the graph ($G$ or $G'$) containing this edge.

We distinguish the following cases:

\textbf{Case 1:} $v\notin U_S$ \\
If none of $u$ and $w$ are in $U_S$, then because $v$ is not in $U_S$, none of the contributions appearing in \eqref{eq:subdivision_1} and  \eqref{eq:subdivision_2} changes.

If both $u$ and $w$ are in $U_S$, none of the edges $(u,v)$ and $(w,v)$ is directed-tight, which by \eqref{eq:subdivision_1} and \eqref{eq:subdivision_2} implies that the edge $\{v,w\}$ is not undirected-tight.
Then $C'(S,(u,v))$ and $C'(S,(w,v))$, as well as $C(S,(u,w))$ and $C(S,(w,u))$ are increased, which maintains \eqref{eq:subdivision_1} and \eqref{eq:subdivision_2}.

Otherwise only one of $u$ and $w$ is contained in $U_S$, say $u\in U_S$ and $w\notin U_S$.
Because also $v\notin U_S$, none of the contributions appearing in \eqref{eq:subdivision_2} changes, implying that \eqref{eq:subdivision_2} is maintained.
Moreover, because $v,w\notin U_S$, the edges $(u,v)$ and $(u,w)$ are not directed-tight and the $C'(S,(u,v))$ and $C(S,(u,w))$ increase, but $C(S,(v,w))$ does not change. This implies that \eqref{eq:subdivision_1} is maintained.

\textbf{Case 2:} $v\in U_S$ \\
Then we may assume without loss of generality $(u,v)$ is directed-tight (and thus also undirected tight) and $u\in U_S$.
If $w\notin U_S$, then $C(S, (u,w))$ and $C'(S,(v,w))$ increase, but all other contributions appearing in \eqref{eq:subdivision_1} or \eqref{eq:subdivision_2} remains unchanged, implying that \eqref{eq:subdivision_1} and \eqref{eq:subdivision_2} are maintained.
Otherwise $w\in U_S$. Then by \eqref{eq:subdivision_1} and \eqref{eq:subdivision_2}, the edge $\{u,w\}$ is undirected tight if and only if $\{v,w\}$ is undirected tight.
If both of these edges are undirected tight, none of the contributions appearing in \eqref{eq:subdivision_1} or \eqref{eq:subdivision_2} change.
If none of these edges is undirected-tight then $C(S,(u,w))$ and $C(S,(w,u))$, as well as $C'(S,(v,w))$ and $C'(S,(w,v))$ are increased, whicle the other contributions in \eqref{eq:subdivision_1} and \eqref{eq:subdivision_2} remain unchanged.
Thus, in all cases \eqref{eq:subdivision_1} and \eqref{eq:subdivision_2} are maintained.

\paragraph{Constructing the desired subdivision of $G$.}

We call an instance \emph{nice}, if \Cref{algo:continuous_dual_growth} on an instance that satisfies \ref{itm: reachability of both endpoints} of \Cref{def:well-subdivided}.
That is, whenever both endpoints of an edge are reachable (for \Cref{algo:continuous_dual_growth}), one of the orientations of the edge is already directed tight.

Observe that for any nice instance, \Cref{algo:continuous_dual_growth} and \Cref{algo:dual_growth_with_merge_plan} produce exactly the same dual solution, and moreover, an edge is directed-tight in \Cref{algo:continuous_dual_growth} if and only if it is also tight in \Cref{algo:dual_growth_with_merge_plan}.
This is because, for nice instances, we have throughout \Cref{algo:continuous_dual_growth} that $C_S = \delta^+(U_S)$ for all $S\in \cS^t$.

Because of \eqref{eq:subdivision_1} and \eqref{eq:subdivision_2},
every instance arising from a nice instance by subdivision of edges, is again a nice instance. 
Thus, because \Cref{algo:continuous_dual_growth} is invariant under subdivision of edges, \Cref{algo:dual_growth_with_merge_plan} is also invariant under subdivision of edges on nice instances.

Hence, it suffices to prove that for every instance $(G,c,R)$ and a fixed merge plan $\cM$, there is a subdivision $G'$ of $G$ such that $(G',c,R)$ (with merge plan $\cM$) is nice.
To this end, for every edge $\{u,w\}$ of $G$, we consider the contributions $C$ in \Cref{algo:continuous_dual_growth} at the time when $\{u,w\}$ became undirected tight.
Then we subdivide the edge by a vertex $v$ and define
\begin{align*}
    c(\{u,v\}) \ &\coloneqq\ \sum_{S\subseteq R} C(S, (u,w)) \\
    c(\{v,w\}) \ &\coloneqq\ \sum_{S\subseteq R} C(S, (w,u))\,.
\end{align*}
(If this would create an edge of length $0$, we do not subdivide the edge. In such cases no subdivision is needed anyways.)
Because we consider the contributions at the time where $\{u,w\}$ became undirected tight, these two costs indeed sum up to $c(\{u,w\})$.
Edges $\lrb{u, w}$ that never become undirected tight are subdivided by a vertex $v$ such that 
\begin{align*}
    c(\{u,v\}) \ &>\ \sum_{S\subseteq R} C(S, (u,w)) \\
    c(\{v,w\}) \ &<\ \sum_{S\subseteq R} C(S, (w,u))\,.
\end{align*}
and $c\lr{\lrb{u, w}}=c\lr{\lrb{u, v}}+c\lr{\lrb{v, w}}$.
Using the fact that the contributions after subdivision change according to \eqref{eq:subdivision_1} and \eqref{eq:subdivision_2} (and they do not change when subdividing further edges), we conclude that the resulting instance is indeed nice:

In the new instance, each edge has one original vertex as an endpoint and one newly introduced vertex as the other endpoint. Each newly introduced vertex $v$ has two incident edges. When $v$ is first reachable, both its incident edges have a tight orientation (the one pointing towards $v$). Note that the new vertices we introduced on edges that never got undirected tight will never be reached, meaning that \ref{itm: reachability of both endpoints} of \Cref{def:well-subdivided} trivially holds for their adjacent edges.
\end{proof}
 
\end{document}